
%
\documentclass[a4paper, oneside, english,11pt]{report}
\usepackage[utf8]{inputenc}
\usepackage[T1]{fontenc}
\usepackage{graphicx}
\usepackage{hyperref}
\usepackage{booktabs,caption}
\usepackage[normalem]{ulem}
\usepackage{placeins}
\usepackage{amsmath}
\usepackage{amssymb}

\usepackage[left=2.cm, right=2.cm, top=2.5cm, bottom=2.cm, head=15pt]{geometry} 
\usepackage{setspace}
\usepackage{pdfpages}
\onehalfspacing
\usepackage{fancyhdr}
\pagestyle{fancy}
\fancyhf{}
\rhead{\chaptername}
\fancyhead[RO]{\textsl{\leftmark}}
\fancyfoot[C]{\thepage}

\usepackage[numbib,notlof,notlot,nottoc]{tocbibind}

\usepackage{mathptmx}
\usepackage{subcaption}
\captionsetup{compatibility=false}
\usepackage[backend=bibtex,style=ieee,natbib=true,isbn=false,doi=false]{biblatex}
\usepackage{adjustbox}
\usepackage[autostyle=true]{csquotes} 

\DeclareFieldFormat{pages}{#1}

\usepackage{floatrow}
\floatsetup[table]{capposition=top}
\usepackage{amsmath}	
\usepackage{amssymb}	
\usepackage{multicol}        
\usepackage{bm}		
\usepackage{pdflscape}	
\usepackage{multirow}
\usepackage{verbatim}
\usepackage[outdir=./]{epstopdf}
\usepackage{braket}

\usepackage{xcolor}
\definecolor{blue(ryb)}{rgb}{0.01, 0.28, 1.0}
\definecolor{cobalt}{rgb}{0.0, 0.28, 0.67}
\definecolor{brickred}{rgb}{0.8, 0.25, 0.33}
\usepackage{ae,aecompl}
\hypersetup{pdfpagemode={UseOutlines},
bookmarksopen=true,
bookmarksopenlevel=0,
hypertexnames=true,
colorlinks=true,
citecolor=magenta,
pdfstartview={FitV},
unicode,
breaklinks=true,
filecolor=cobalt,
linkcolor=blue(ryb),
urlcolor=brickred
}
\renewbibmacro{in:}{%
  \ifboolexpr{%
     test {\ifentrytype{article}}%
     or
     test {\ifentrytype{inproceedings}}%
  }{}{\printtext{\bibstring{in}\intitlepunct}}%
}

\usepackage{amsthm}

\newtheorem{theorem}{Theorem}

\newtheorem{lemma}[theorem]{Lemma}
\newtheorem{definition}[theorem]{Definition}
\newtheorem{remark}[theorem]{Remark}
\newtheorem{corollary}[theorem]{Corollary}
\newtheorem{conjecture}[theorem]{Conjecture}
\newtheorem{proposition}[theorem]{Proposition}

\usepackage{pifont}
\def\club{\ding{168}}
\def\diamond{\color{red}\ding{169}}
\def\heart{\color{red}\ding{170}}
\def\spade{\ding{171}}
\def\flower{\color{blue}\ding{95}}
\def\star{\color{blue}\ding{87}}

\usepackage{tikz}

\usepackage{mathtools}

\usepackage{array}
\newcommand{\PreserveBackslash}[1]{\let\temp=\\#1\let\\=\temp}
\newcolumntype{C}[1]{>{\PreserveBackslash\centering}p{#1}}
\newcolumntype{R}[1]{>{\PreserveBackslash\raggedleft}p{#1}}
\newcolumntype{L}[1]{>{\PreserveBackslash\raggedright}p{#1}}

\usepackage{makecell}

\usepackage[toc,page]{appendix}

\usepackage{pbox}

\usepackage{diagbox}

\renewcommand\bra[1]{{\langle{#1}|}}
\renewcommand\ket[1]{{|{#1}\rangle}}

\usetikzlibrary{quotes,angles}

\usepackage{CJKutf8}
\newenvironment{Japanese}{%
  \CJKfamily{min}%
  \CJKtilde
  \CJKnospace}{}
  
\usepackage{centernot}
\usepackage{mathtools}
\usepackage{ stmaryrd }

\usetikzlibrary{arrows}

\def\blankpage{%
      \clearpage%
      \thispagestyle{empty}%
      \addtocounter{page}{-1}%
      \null%
      \clearpage}
\usepackage[autostyle=true]{csquotes} 

\addbibresource{references.bib} 

\title{Quantum mappings and designs}

\date{September, 2020}


\begin{document}

\input{Template/Title.tex}
\blankpage{}
\thispagestyle{empty}%
\topskip0pt
\vspace*{\fill}
\begin{center}
    \textit{Dedicated to my wife}\\
  \textit{for making me who I am today.}
\end{center}
\vspace*{\fill}
\blankpage{}
\pagenumbering{roman}
\vspace{-1cm}
\begin{center}
    \section*{Abstract}
\end{center}
\label{sec:abstract}
Quantum information emerged in the $20^{\mathrm{th}}$ century, fostering the rapid development of technologies that use the description of microscopic systems provided by quantum mechanics.
The progress of this domain of science is a result of the efforts devoted to the study of phenomena not existing in the macroscopic domain.
This includes the research of quantum protocols that use superposition, as well as quantum entanglement.
Both of these properties of quantum systems make them compelling from the point of view of the creation of quantum devices that outperform the classical ones.

Employing these characteristics of the quantum domain, scientists and inventors have the ultimate goal of creating a quantum computer of a size that shall make it useful for computations that are not feasible on the classical computers that apply the Boolean logic. 
Nonetheless, to apply the physical principles that could lead experimentalists to succeed in their search for a groundbreaking quantum device, we need to understand the intricacies of the theoretical description.
Therefore, the main goal of this thesis is to provide novel constructions useful for the comprehension of quantum mechanics from the perspective of mappings and designs.

The thesis is designed as follows.
First, in the preliminary chapter, we introduce the necessary concepts from the field of quantum information, including quantum designs and mappings.
In the subsequent part of the thesis, we investigate two instances of quantum mappings that were developed by the author and his collaborators.

The first one concerns the unistochasticity problem, which relates bistochastic and unitary matrices, both useful in the classical and the quantum domain, respectively.
This fragment of the thesis dwells on the characterization of the unistochastic set, with a presentation of the algorithm that allows determining whether a given bistochastic matrix of size 4 is unistochastic.
It contains also the proof that the simple bracelet condition is sufficient to decide the unistochasticity of a circulant matrix of size 4.
Furthermore, we investigate the unistochasticity of certain sets inside the Birkhoff polytope of bistochastic matrices of an arbitrary dimension $N$ and prove that the rays and counter-rays are unistochastic, provided there exists a robust Hadamard matrix of dimension $N$.

Moving on to the second instance of quantum mappings, we study the entangling power in the multipartite case.
The main achievement shown in this chapter is an explicit analytical formula for the average entangling power of a tripartite orthogonal gate.

Using the notion of entangling power, we develop in the subsequent chapter novel ideas concerning the search for an absolutely maximally entangled (AME) state of four quhexes, the existence of which was first shown by the author and his collaborators.
Several new methods in the search for AME states and other quantum designs are presented.
In particular, we derived the Hessian for the entangling power and the average singular entropy of a unitary quantum gate.
These can be used to evaluate the extremality of the solutions found.
Furthermore, our research revealed a curious block-like structure of the newly discovered AME state of 4 subsystems 6 levels each, which has the potential to disclose more general facts about other quantum designs.

The final chapter concerning new research is devoted to an extension of the recently established notion of quantum Sudoku (SudoQ) designs.
To characterize such objects we introduced the cardinality of a SudoQ as the number of different states forming the design.
We characterized the cardinality also as a measure of ``quantumness'' of quantum Latin squares.
Those of the highest cardinality yield families of quantum measurements of special properties.
We characterize the problem in the general case of size $N^2$.
A connection between SudoQ designs and mutually unbiased bases is demonstrated.

\clearpage

\vspace{-1cm}
\begin{center}
    \section*{Streszczenie}
\end{center}
\label{sec:pol_abstract}
Informacja kwantowa rozwinęła się w XX wieku, przyczyniając się do szybkiego postępu technologii wykorzystujących kwantowo-mechaniczny opis układów mikroskopowych.
Postęp tej dziedziny nauki jest związany z badaniem zjawisk nieistniejących w świecie makroskopowym takich jak protokoły kwantowe wykorzystujące superpozycję oraz splątanie kwantowe.
Oba zjawiska sprawiają, że układy kwantowe są atrakcyjne z perspektywy tworzenia urządzeń kwantowych dających przewagę nad odpowiednikami klasycznymi.

Wykorzystując świat kwantów, finalnym celem naukowców i wynalazców jest komputer kwantowy o rozmiarze, który uczyni go przydatnym do obliczeń niemożliwych do wykonania na klasycznych komputerach stosujących algebrę Boole'a.
Aby użyć fizyki do znalezienia przełomowego urządzenia kwantowego, trzeba zrozumieć zawiłości opisu teoretycznego.
Z tego powodu, celem tej rozprawy jest dostarczenie nowych konstrukcji przydatnych do zrozumienia mechaniki kwantowej z perspektywy odwzorowań i deseni.

Niniejsza rozprawa jest ułożona w następujący sposób.
W rozdziale wstępnym wprowadzono niezbędne pojęcia z dziedziny informacji kwantowej, w tym desenie kwantowe i odwzorowania.
W dalszej części pracy badano dwa przykłady odwzorowań kwantowych.

Pierwszy przypadek dotyczy problemu unistochastyczności łączącego macierze bistochastyczne i unitarne, przydatne odpowiednio w dziedzinie klasycznej i kwantowej.
Skupiono się na charakterystyce zbioru macierzy unistochastycznych wraz z przedstawieniem algorytmu pozwalającego określić, czy dana macierz bistochastyczna o wymiarze 4 jest unistochastyczna.
Ponadto udowodniono, iż prosty warunek łańcuszkowy jest wystarczający do unistochastyczności macierzy cyrkulantnej o rozmiarze 4.
Zbadano unistochastyczność podzbiorów wielościanu Birkhoffa macierzy bistochastycznych o dowolnym wymiarze $N$ i udowodniono, że promienie i przeciwpromienie są unistochastyczne, pod warunkiem istnienia stabilnej (\emph{robust}) macierzy Hadamarda w wymiarze $N$.

Kolejny rozdział rozprawy opisuje moc plączącą (\emph{entangling power}) w przypadku wielocząstkowym.
Głównym osiągnięciem przedstawionym w tym rozdziale jest jawny wzór analityczny na średnią moc plączącą trójcząstkowej bramki ortogonalnej.

Posługując się pojęciem mocy plączącej, w następnym rozdziale rozwijano nowatorskie pomysły dotyczące poszukiwania absolutnie maksymalnie splątanego (\emph{absolutely maximally entangled}, AME) stanu czterech układów z 6 poziomami, którego istnienie po raz pierwszy zostało wykazane przez autora i współpracowników.
Przedstawiono kilka nowych metod poszukiwania żądanego stanu, z których część może okazać się przydatna w badaniu innych układów kwantowych.
W szczególności, wyprowadzono hesjan dla mocy plączącej i sumy entropii,
a następnie wykorzystano go do badania ekstremalności znalezionych rozwiązań.
Co więcej, badania ujawniły blokową strukturę nowo odkrytego stanu AME, która może być przydatna przy szukaniu innych deseni kwantowych.

Ostatni rozdział dotyczy rozszerzenia niedawno wprowadzonego pojęcia kwantowych deseni Sudoku (SudoQ).
Aby je scharakteryzować, zdefiniowano kardynalność SudoQ, czyli liczbę różnych stanów tworzących deseń.
Kardynalność, jako miara „kwantowości”, może być istotna również przy badaniu kwantowych kwadratów łacińskich.
Takie układy o najwyższej kardynalności implikują rodziny pomiarów kwantowych o specjalnej symetrii.
Scharakteryzowano ogólny przypadek wymiaru $N^2$, a ponadto wykazano związek pomiędzy kwantowymi deseniami SudoQ a bazami wzajemnie nieobciążonymi (\emph{mutually unbiased bases}, MUBs).
\clearpage

\vspace{-1cm}
\begin{center}
    \section*{Declaration}
\end{center}
\label{sec:declaration}
The work described in this thesis was undertaken between April 2016 and April 2021 while the author was a research student under the supervision of Prof. Karol {\.Z}yczkowski at the Center for Theoretical Physics, Polish Academy of Sciences and completed his coursework between October 2018 and June 2021 at the Institute of Physics, Polish Academy of Sciences. 
No part of this thesis has been submitted for any other degree at the Center for Theoretical Physics, Polish Academy of Sciences or any other scientific institution.
\vspace{0.02\textwidth}

The thesis forms a monograph; nonetheless, some parts of the content included in the consecutive chapters of the thesis have appeared in the following papers:
\vspace{0.01\textwidth}

\begin{enumerate}
    \item Chapter~\ref{chapter_3}: \textbf{G. Rajchel}, A. G{\k{a}}siorowski, and K. \.Zyczkowski, \emph{Robust Hadamard matrices, unistochastic rays in Birkhoff polytope and equi-entangled bases in composite spaces}, Mathematics in Computer Science, vol. 12, pp. 473-490, 2018,
    
    \item Chapter~\ref{chapter_3}: \textbf{G. Rajchel-Mieldzio{\'c}}, K. Korzekwa, Z. Pucha{\l}a, and K. \.Zyczkowski, \emph{Algebraic and geometric structures inside the Birkhoff polytope}, arXiv: 2101.11288 [quant-ph], 2021,
    
    \item Chapter~\ref{chapter_4}: T. Linowski, \textbf{G. Rajchel-Mieldzio{\'c}}, and K. \.Zyczkowski, \emph{Entangling power of multipartite unitary gates}, Journal of Physics A: Mathematical and Theoretical, vol. 53, pp. 125-303, 2020,
    
    \item Chapter~\ref{chapter_6}: S. A. Rather, A. Burchardt, W. Bruzda, \textbf{G. Rajchel-Mieldzio{\'c}}, A. Lakshminarayan, and K. \.Zyczkowski, \emph{Thirty-six entangled officers of Euler}, arXiv: 2104.05122 [quant-ph], 2021,
    
    \item Chapter~\ref{chapter_7}: J. Paczos, M. Wierzbi{\'n}ski, \textbf{G. Rajchel-Mieldzio{\'c}}, A. Burchardt, and K. \.Zyczkowski, \emph{Genuinely quantum SudoQ and its cardinality}, arXiv: 2106.02967 [quant-ph], 2021. Provisionally accepted at Phys. Rev. A.
\end{enumerate}

\vspace{0.02\textwidth}
The work described in Chapters~\ref{chapter_3}-\ref{chapter_7} was performed in collaboration with the other coauthors listed above.
A detailed description of the contribution of the author is provided in Section~\ref{sec:structure_contribution}.

\vspace{0.02\textwidth}
In addition to the work presented in this thesis, the author has also worked on the following papers:
\begin{enumerate}
    \item[6.] P. T. Grochowski, \textbf{G. Rajchel}, F. Kia\l{}ka, and A. Dragan, \emph{Effect of relativistic acceleration on continuous variable quantum teleportation and dense coding}, Phys. Rev. D, vol. 95, 105005, 2017,
    \item[7.] M. Demianowicz, \textbf{G. Rajchel-Mieldzio{\'c}}, and R. Augusiak, \emph{Simple sufficient condition for subspace entanglement},  arXiv: 2107.07530 [quant-ph], 2021. Provisionally accepted at New J. Phys.
\end{enumerate}
\clearpage
\begin{center}
\section*{Acknowledgements}    
\end{center}
\label{sec:acknowledgement}

Throughout my PhD studies, many people helped me to finish my studies and finalize this thesis.
First of all, I would like to express my deep gratitude to my supervisor, Karol \.Zyczkowski.
Due to his constant mentorship, through my stipend at CFT and then PhD studies, I feel that I have gained a lot of insight into the world of science.
I can safely say that it would be extremely hard to finish the studies without his advice, both from the scientific and administrative perspectives.
At all times he was understanding and willing to help.  

Furthermore, I feel indebted to all the administrative team at CFT that helped me with numerous struggles to understand the policies needed to survive in the scientific community.
The other PhD students in room 304A made it a place that is worth coming -- the meeting of the quantum world and ``astrology'' provided a great environment.
Therefore, I would like to acknowledge the support of astrophysicists and cosmologists: Ishika Palit, Julius Sorbenta, Michele Grasso, and Suhani Gupta.

The thesis increased its quality due to numerous people that devoted their time to provide me with insights regarding corrections: Albert Rico, Arul Lakshimanarayan, Jerzy Paczos, my cousin Jim Borneman, Kamil Korzekwa, my sister Katarzyna, Marcin Wierzbi{\'n}ski, Tomasz Linowski, Vinayak Jagadish, and Wojciech Bruzda.
I express my gratitude to Konrad Szyma{\'n}ski for creating a 3D printout of the circulant set.

Having finished exactly 20 years of studies since the beginning of primary school, I feel grateful to my parents who always believed in me and provided every help they could.
It would be hard to express in words what their aid signifies to me. 

Finally, the person who has been my strongest supporter is my wife Paulina.
Her understanding and constant motivation meant everything to me during the research.
There are so many aspects in which she helped me that it would be impossible to mention them all.
Let me just mention the most visible support seen in this thesis, namely her artistic photographs that visualize scientific concepts.



%

\pagebreak
\clearpage
\thispagestyle{empty}%
\tableofcontents


\pagenumbering{gobble}
\pagebreak
\clearpage
\thispagestyle{empty}%
\pagebreak
\pagenumbering{arabic}
\chapter{Introduction}
\label{}
\vspace{-1cm}
\rule[0.5ex]{1.0\columnwidth}{1pt} \\[0.2\baselineskip]
\definecolor{airforceblue}{rgb}{0.36, 0.54, 0.66}
\definecolor{bleudefrance}{rgb}{0.19, 0.55, 0.91}


The basis for our understanding of microscopic physics was laid over one hundred years ago by researchers forming the old quantum theory.
These results, including Planck's law, the Sommerfeld rule, the Schr\"odinger equation, the matrix formulation of quantum mechanics, and many more, provided the community of physicists with a sufficient number of instruments to conduct calculations and explain previously inexplicable paradoxes.
Nonetheless, the new science brought along deep philosophical questions as well.
The solution to the Einstein-Podolsky-Rosen paradox has shown that entanglement is a crucial feature of quantum mechanics.
Then, the first efforts to use these properties of the microscopic world were proposed by theoreticians, while emerging technologies enabled experimentalists to verify them.

The present work is focused on broadening the understanding of theoretical principles governing the behavior of systems on the microscopic scale.
Therefore, we do not need to follow the historical discoveries that led to the present comprehension of nature.
Rather, we shall base our research on a solid mathematical background.
The reader is referred to textbooks that explain research in the broader historical context~\cite{Nielsen_Chuang,Bengtsson_Zyczkowski_geometry}.
If not specified otherwise, these two books provide references for all the principles of quantum information included in this chapter.

\section{Basics of quantum information}\label{sec:basics_QI}
Let us start with the basic definitions used throughout the domain of quantum information.
The most fundamental of them cover states of the quantum system, that can be described using a Hilbert space.

\begin{definition}[Hilbert space]
    A complete complex vector space $\mathcal{H}$ equipped with the inner product is called a Hilbert space.
\end{definition}

Then, any state about which we have exact information is called pure.

\begin{definition}[Pure state]
    A state $\ket{\psi}$ that can be described by a non-zero vector belonging to a Hilbert space, $\ket{\psi} \in \mathcal{H}$, is called pure.
\end{definition}

In this thesis we shall restrict to a finite, $N$-dimensional Hilbert spaces; hence, it is convenient to choose a special basis $\{\ket{i}\}_{i=1}^N$ that we will call the computational basis.
The choice of the basis usually depends on the underlying evolution provided by the Schr\"odinger equation.
However, since we are not limiting our considerations to any particular setup we shall not focus on it.

The dimension of the underlying Hilbert space, i.e.\ the number of states in a basis, determines the properties of the state.
If the Hilbert space is of the dimension $N$, we call it a qu$N$it system, e.g.\ a Hilbert space of dimension 2 describes qubits, dimension -- 6 quhexes, etc.
We shall denote the dimension of the Hilbert space by the superscript $\mathcal{H}^N$.
States describe physical entities that are compatible with the requirements of the theory of probability.
Therefore, we impose that each of them should be normalized using the inner product of the Hilbert space,
\begin{equation}\label{eq:normalization_pure_states}
    \braket{\psi|\psi} = 1.
\end{equation}

Alternative description of a pure state $\ket{\psi}$ is given by its density matrix $\ket{\psi}\bra{\psi}$.
Nonetheless, about some states we have only partial knowledge; thus, they cannot be depicted as pure states but as density matrices.
Then, to describe them it is beneficial to use statistical mixtures of matrices, in contrast to pure states which admit also the vector representation.

\begin{definition}[Mixed state]
    A state $\rho$ that cannot be written as a single projection matrix but only as a convex combination of them
    \begin{equation}
        \rho = \sum_i \alpha_i \ket{\psi_i}\bra{\psi_i}
    \end{equation}
    is called a mixed state.
\end{definition}

The normalization condition~(\ref{eq:normalization_pure_states}) in the language of density matrices transforms into the constraint on the coefficients $\alpha_i$, provided the states $\ket{\psi_i}$ are normalized and form a basis.
This constraint can be written using the trace, i.e.\ the sum of the diagonal elements of the density matrix, 
\begin{equation}
    \mathrm{Tr}\,\rho = \sum_i \alpha_i = 1.
\end{equation}

A composite system, consisting of subsystems $A$ and $B$ described by Hilbert spaces $\mathcal{H}^A$ and $\mathcal{H}^B$, is characterized by the tensor product $\mathcal{H}^A\otimes \mathcal{H}^B$.
If it is possible to avoid confusion, we shall omit the tensor symbol, denoting $\ket{\psi_1}\ket{\psi_2} \coloneqq \ket{\psi_1}\otimes\ket{\psi_2}$.
Sometimes, while considering the states from the computational basis, we shall also omit one of the kets $\ket{ij} \coloneqq \ket{i,j} =  \ket{i}\ket{j}$.
The last important definition in this section concerns quantum correlations between subsystems.

\begin{definition}[Separable state]
    A pure state $\ket{\psi}$ in a composite system $\mathcal{H} = \mathcal{H}^A\otimes \mathcal{H}^B$ is called separable if it can be written in a product form, $\ket{\psi} = \ket{\psi_A}\ket{\psi_B}$, where $\ket{\psi_A}\in \mathcal{H}^A$ and $\ket{\psi_B}\in \mathcal{H}^B$.
    Otherwise, it is called entangled.
\end{definition}

The degree of entanglement of a given state is quantified by measures of entanglement that shall be a topic of Section~\ref{sec:measures_of_entanglement}.
For now, it suffices to say that in the bipartite case there exist maximally entangled states, which are called the generalized Bell states, $\ket{\phi^+} = \frac{1}{\sqrt{N}}\sum_{i=1}^N\ket{i\,i}$.
In the subsequent section, to represent transformations on the states we recall the basic sets of matrices.
For a discussion of the last important postulate of quantum mechanics -- the measurement, we refer the reader to~\cite{Nielsen_Chuang}.

\section{Useful sets of matrices}\label{sec:sets_of_matrices}
Throughout this thesis, we shall consider only square matrices $M$, with complex conjugate denoted by $M^\dagger$ and transpose by $M^T$.
Let us recall the most basic matrix from the perspective of matrix analysis.

\begin{definition}[Identity matrix]
    An identity matrix\, $\mathbb{I}_N$ of dimension $N$ consists of ones at the diagonal, with every other element equal to zero.
    For brevity, if the dimension of the matrix is known from the context, we shall denote it by $\mathbb{I}$.
\end{definition}

Provided as an example to the reader, the identity matrix of size 3 reads,
\begin{equation}
    \mathbb{I}_3 = \begin{pmatrix}
    1 & 0 & 0 \\
    0 & 1 & 0 \\
    0 & 0 & 1
    \end{pmatrix}.
\end{equation}

The identity matrix plays a special role in quantum information since, after normalization, it describes a density matrix of a maximally mixed state, i.e.\ a state about which we do not possess any information and a unitary dynamics which does not alter the state.
Let us move on to sets of matrices important from the perspective of the evolution of quantum states.
Any transformation on states should preserve their normalization.
Therefore, a set of unitary matrices is especially convenient to quantum information scientists.

\begin{definition}[Unitary matrix]
    A complex matrix\, $U$ is unitary if all its rows and columns are orthogonal and normalized, $UU^\dagger = U^\dagger U =\mathbb{I}$.
    The set of all unitary matrices of dimension $N$ shall be denoted by $U(N)$.
\end{definition}

By a unitary transformation $U$ acting on a pure state $\ket{\psi}$ we shall mean $\ket{\psi} \mapsto U\ket{\psi}$.
Alternatively, using a density matrix $\rho$ the same transformation reads $\rho \mapsto U\rho U^\dagger$.
An especially important connection between unitary matrices of size $N$ and pure states in the system of dimension $N^2$ was named the Choi-Jamio\l{}kowski isomorphism
\begin{equation}
    U \leftrightarrow \ket{U} =  \big(U\otimes\, \mathbb{I}\big)\, \ket{\phi^+},
\end{equation}
where $\ket{\phi^+}$ is the generalized Bell state on $\mathcal{H}^N \otimes \mathcal{H}^N$.
This isomorphism shall be used throughout the thesis to study quantum states via properties of their unitary counterparts.
Due to its usefulness, as a special subset of unitary matrices, we distinguish the set of orthogonal ones.

\begin{definition}[Orthogonal matrix]
    A real matrix $\,O$ is orthogonal if it is unitary, $OO^T = O^T O =\mathbb{I}$.
    The set of all orthogonal matrices of dimension $N$ shall be denoted by $O(N)$.
\end{definition}

An interesting property of unitary and orthogonal matrices of a given dimension is that they form a Lie group.
From the perspective of this thesis, this mathematical characteristic is important only to provide derivatives on the manifold of matrices.
For more details, we refer the reader to a comprehensive textbook on Lie groups~\cite{Hall_2015}.
The underlying Lie algebra of unitary matrices is given by skew-Hermitian matrices.

\begin{definition}[Hermitian matrix]
    A complex matrix\, $H$ is Hermitian if it is equal to its complex conjugate, $H=H^\dagger$.
    Alternatively, it is called skew-Hermitian if $H = -H^\dagger$.
\end{definition}

The connection between Hermitian and skew-Hermitian matrices is straightforward.
It suffices to multiply one of them by the imaginary unit $i$ to obtain a matrix from the other set.
Therefore, any unitary matrix $U$ can be obtained by the exponentiation of a Hermitian matrix $H$ as $U = e^{iH}$.
Further importance of the set of Hermitian matrices for quantum information stems from their connection to density matrices.
A density matrix $\rho$, diagonal in one basis $\{\ket{\psi_i}\}$, will generally have off-diagonal terms in any other basis, e.g.\ $\ket{i}\bra{j}$ in the computational one. 
Consequently, it does not suffice that the trace of the density matrix is normalized, it is also necessary that the matrix is Hermitian and positive semi-definite.
Having finished defining infinite sets of matrices, let us move on to their finite subsets.

\begin{definition}[Permutation matrix]
    A binary matrix $P$ is a permutation matrix if in its every row and column all elements, apart from one of them, are equal to zero.
    Thus, in every row and every column, there is exactly one entry that reads 1.
\end{definition}

In dimension $N$, the number of different permutation matrices is $N!$, and some of the well-established matrices, such as the identity, belong to this class.
The swap matrix $S$, existing only in the square dimensions $N^2$, is an especially important permutation matrix, widely used in quantum information.
Its name stems from the fact that, while acting on bipartite systems, it exchanges the states of the subsystems,
\begin{equation}
    S\ket{\psi_1}\ket{\psi_2} = \ket{\psi_2}\ket{\psi_1}.
\end{equation}
The swap operator for a system composed of two qubits is represented by the following matrix of size 4,
\begin{equation}
    S_4 = \begin{pmatrix}
    1 & 0 & 0 & 0 \\
    0 & 0 & 1 & 0 \\
    0 & 1 & 0 & 0 \\
    0 & 0 & 0 & 1
    \end{pmatrix}.
\end{equation}

We are going to use also another notion, important from the perspective of quantum information and computer science.

\begin{definition}[Hadamard matrix]
    A complex matrix $H$ of size $N$ of unimodular entries, $|H_{ij}| = 1$, is called Hadamard if it is unitary up to rescaling, $HH^\dagger = N\,\mathbb{I}$.
\end{definition}

The above definition is not the usual one that admits only real numbers from the set $\{-1,1\}$ as the elements of the matrix; however, the complex case is more suitable to our considerations.
For every dimension $N$, it is possible to define a special Hadamard matrix namely, the Fourier matrix $F_N$, which element $(j,k)$ reads
\begin{equation}
    (F_N)_{jk} = e^{2\pi i jk/N} = \omega^{jk},
\end{equation}
where $\omega = e^{2\pi i/N}$ is the root of unity of order $N$.
It is particularly useful in quantum information to perform Quantum Fourier Transform while designing efficient algorithms.
As a final remark concerning the sets of matrices introduced in this section, we observe that all of them, up to rescaling, form subsets of the set of unitary matrices.
For a more comprehensive treatment of the above sets of matrices, we refer the reader to an invaluable book on matrix analysis by Horn and Johnson~\cite{Horn_Johnson_matrix_analysis}.


    

\section{Operations on matrices}\label{sec:operations_on_matrices}
Most of the operations on matrices that we will use, such as trace, transposition, and complex conjugation, are standard and can be found in any introductory textbook on matrix analysis.
However, some of the notions are less-known; therefore, we shall recall their definitions.

\begin{definition}[Partial trace]
    Any matrix $M$ of dimension $N_A N_B$ admits a decomposition into a sum of the tensor products,
    \begin{equation}\label{eq:tensor_decomposition}
        M = \sum_i A_i\otimes B_i,
    \end{equation}
    where $A_i$ and $B_i$ are matrices of sizes $N_A$ and $N_B$, respectively.
    The partial traces of a matrix $M$ are defined as
    \begin{equation}
        \mathrm{Tr}_A\, M = \sum_i \mathrm{Tr}(A_i)\, B_i
    \end{equation}
    and
    \begin{equation}
        \mathrm{Tr}_B\, M = \sum_i \mathrm{Tr}(B_i)\, A_i.
    \end{equation}
\end{definition}

As an example, consider partial trace $\mathrm{Tr}_B$ of a matrix $M$ of size 4 performed over its first subsystem $B$, which acts as the trace inside the $2\times 2$ blocks,
\begin{equation}
    M = \left(
    \begin{array}{cc|cc}
    M_{11} & M_{12} & M_{13} & M_{14} \\
    M_{21} & M_{22} & M_{23} & M_{24} \\ \hline
    M_{31} & M_{32} & M_{33} & M_{34} \\
    M_{41} & M_{42} & M_{43} & M_{44} 
    \end{array}\right)
    \,\,\,\,\,
    \xmapsto{\;\;\mathrm{Tr}_B\;\;}
    \,\,\,\,\,\mathrm{Tr}_B M = \left(
    \begin{array}{cc}
       M_{11}+M_{22}  & M_{13}+M_{24} \\
         M_{31} + M_{42}& M_{33}+M_{44}
    \end{array}\right).
\end{equation}


The partial trace of a density matrix, performed over a subsystem $A/B$, corresponds to our knowledge about the other subsystem $B/A$.
Thus, the reduced density matrix describes the state of a subsystem while discarding all the information available about the second one.
A closely related concept to the partial trace, also with applications in quantum information, is the partial transposition.

\begin{definition}[Partial transposition]
    The partial transposes of a matrix\, $M$ of size $N_AN_B$ are defined through its tensor decomposition, given by Eq.~(\ref{eq:tensor_decomposition}),
    \begin{equation}
        M^{\Gamma_A} = \sum_i A_i^T\otimes B_i
    \end{equation}
    and
    \begin{equation}
        M^{\Gamma_B} = \sum_i A_i \otimes B_i^T.
    \end{equation}
\end{definition} 

The partial transposition applied to a matrix $M$ of size 4, acting on the second subsystem $B$, is a transposition inside each of the $2\times 2$ blocks separately,
\begin{equation}
    M = \left(
    \begin{array}{cc|cc}
    M_{11} & M_{12} & M_{13} & M_{14} \\
    M_{21} & M_{22} & M_{23} & M_{24} \\ \hline
    M_{31} & M_{32} & M_{33} & M_{34} \\
    M_{41} & M_{42} & M_{43} & M_{44} 
    \end{array}\right)
    \,\,\,\,\,
    \xmapsto{\;\;\Gamma_B\;\;}
    \,\,\,\,\, M^{\Gamma_B} = \left(
    \begin{array}{cc|cc}
    M_{11} & M_{21} & M_{13} & M_{23} \\
    M_{12} & M_{22} & M_{14} & M_{24} \\ \hline
    M_{31} & M_{41} & M_{33} & M_{43} \\
    M_{32} & M_{42} & M_{34} & M_{44} 
    \end{array}\right).
\end{equation}

Partial transposition is especially useful in verifying entanglement, for example in the Peres-Horodecki separability criterion.
This test uses the negativity of the partial transpose of a density matrix to verify the existence of the entanglement.
If not stated otherwise, partial transposition implicitly acts on the second subsystem, $M^\Gamma \coloneqq M^{\Gamma_B}$.
Finally, the last important transformation on matrices that we use is reshuffling, sometimes also called realignment~\cite{Rather_2020}.

\begin{definition}[Reshuffling]
    A matrix $M$ of size $N^2$ admits a decomposition into the tensor product given by Eq.~(\ref{eq:tensor_decomposition}).
    Then, its elements can be written using a four-index notation, each ranging from 1 to $N$,
    \begin{equation}
        M_{klmn} = \sum_i (A_i)_{kl}\otimes (B_i)_{mn}.
    \end{equation}
    The reshuffling of this matrix changes the order of its entries,
    \begin{equation}
        M^R_{klmn} = M_{kmln}.
    \end{equation}
\end{definition}

An alternative description of reshuffling is that each $N\times N$ block of the initial matrix is treated as a consecutive row of the reshuffled matrix.
Applying reshuffling to a matrix $M$ of size 4, one obtains
\begin{equation}
    M = \left(
    \begin{array}{cc|cc}
    M_{11} & M_{12} & M_{13} & M_{14} \\
    M_{21} & M_{22} & M_{23} & M_{24} \\ \hline
    M_{31} & M_{32} & M_{33} & M_{34} \\
    M_{41} & M_{42} & M_{43} & M_{44} 
    \end{array}\right)
    \,\,\,\,\,
    \xmapsto{\;\;\; R \;\;\;}
    \,\,\,\,\, M^R = \left(
    \begin{array}{cc|cc}
    M_{11} & M_{12} &  M_{21} & M_{22}  \\
    M_{13} & M_{14} & M_{23} & M_{24} \\ \hline
    M_{31} & M_{32} & M_{41} & M_{42} \\
    M_{33} & M_{34} & M_{43} & M_{44} 
    \end{array}\right).
\end{equation}

What is more, using the notation of four indices we are also able to write down the partial transposes
\begin{equation}
    M^{\Gamma_A}_{klmn} = M_{lkmn}\,\,\,\,\, \mathrm{and} \,\,\,\,\, M^{\Gamma_B}_{klmn} = M_{klnm},
\end{equation}
as well as the partial traces
\begin{equation}
    \mathrm{Tr}_A\, M_{klmn} = \sum_{i} M_{iimn}\,\,\,\,\, \mathrm{and} \,\,\,\,\, \mathrm{Tr}_{B}\, M_{klmn} = \sum_{i} M_{klii}.
\end{equation}

The last notion of this section, of particular importance for the research presented in Chapter~\ref{chapter_6}, is a multiunitary matrix.
\begin{definition}[Multiunitary matrix]
    A unitary matrix\, $U$ of size $N^2$ is multiunitary if its reshuffle and partial transpose are also unitary, $U^R\in U(N^2)$ and $U^\Gamma \in U(N^2)$.
\end{definition}
In the notation of four indices, if a matrix $U$ is multiunitary then the tensor $U_{klmn}$ is called \emph{perfect}.
Perfect tensors have recently attracted a lot of attention in the community of quantum information and also in the field of condensed matter theory and quantum gravity~\cite{Pastawski_2015}.
Employing the above definition we will be able to explain the research on entanglement, which is the subject of the next section.

\section{Measures of entanglement}\label{sec:measures_of_entanglement}
Until now, we have discussed the existence of entanglement, as well as the existence of maximally entangled states in the bipartite case; however, in order to quantify entanglement, we need to explore the functions called entanglement measures.
The research of entanglement is already over 80 years old, dating to the 1930s and the Einstein-Podolsky-Rosen paradox~\cite{EPR_1935}.
Thus, it is natural that the qualitative description was followed by the quantitative one.
The research concluded in a plethora of entanglement measures, as well as in a formalization of the conditions that should be satisfied by such a measure.
Here, we shall follow the review paper of Plenio and Virmani~\cite{Plenio_2007}.

A function $\mathcal{F}$ acting on a bipartite Hilbert space is called an entanglement measure if it satisfies the following conditions:
\begin{itemize}
    \item its value on a separable state is zero,
    \item it does not increase under local operations and classical communication,
    \item there exist maximally entangled states, i.e.\ the function achieves the maximal value.
\end{itemize}

Out of many measures of entanglement, in this thesis, we shall only use one of them, namely the linear entropy of entanglement, under the alternative name of the generalized concurrence while working with different normalization.
The reader interested in other entanglement measures is referred to an invaluable textbook on quantum information by Nielsen and Chuang~\cite{Nielsen_Chuang}.

\begin{definition}[Linear entropy]
    The linear entropy of entanglement\, $E$ of a state $\ket{\psi} \in \mathcal{H}^A\otimes \mathcal{H}^B$ is defined as
\begin{equation}
    E(\ket{\psi}) \coloneqq 1 - \mathrm{Tr}_A\big(\mathrm{Tr}_B(\ket{\psi}\bra{\psi})\big)^2.
\end{equation}
\end{definition}

In the general case, admitting subsystems with different local dimensions, we shall be using linear entropy with an additional factor of 2, as well as we will always refer to the exact division into two subsystems of the Hilbert space, $\mathcal{H} = \mathcal{H}^A\otimes \mathcal{H}^B$.
Then, the measure is given an alternative name of the generalized concurrence $\tau_{A|B}$, and will be used in Chapter~\ref{chapter_4} while studying the properties of the multipartite unitary gates.
\begin{definition}[Generalized concurrence]
    The generalized concurrence of a bipartite state $\ket{\psi} \in \mathcal{H}^A\otimes \mathcal{H}^B$ reads
    \begin{equation}\label{eq:generalized_concurrence}
    \tau_{A|B}\big( \ket{\psi} \big) \coloneqq 2\bigg( 1 - \mathrm{Tr}_A\big( \mathrm{Tr}_{B} \ket{\psi}\bra{\psi} \big)^2\bigg).
\end{equation}
\end{definition}



Finally, let us move on to the study of the entanglement in the multipartite case.
For a system consisting of more than two subsystems, the notion of the maximally entangled state is not well defined, as it depends on the measure chosen.
Thus, in Chapter~\ref{chapter_6} we shall focus on entanglement through a reduction into bipartite systems.

To explain the research of Chapter~\ref{chapter_4}, it is beneficial to introduce another measure of entanglement.
In the case of the tripartite system, we define a measure of entanglement $\tau_1$ called one-tangle~\cite{Coffman_2000,Bengtsson_Zyczkowski_geometry}.

\begin{definition}[One-tangle]
    The one-tangle of a state in a tripartite system $\mathcal{H} = \mathcal{H}_1 \otimes \mathcal{H}_2 \otimes \mathcal{H}_3$, of local dimensions $\mathrm{dim}\;\mathcal{H}_i = d_i$, is defined as the average of generalized concurrences with respect to all possible splittings,
    \begin{equation}
    \tau_1 \big( \ket{\psi} \big) \coloneqq \frac{1}{3} \bigg( \tau_{12|3}\big( \ket{\psi} \big) + \tau_{13|2}\big( \ket{\psi} \big) + \tau_{23|1}\big( \ket{\psi} \big) \bigg).
\end{equation}
\end{definition}


The notion of one-tangle can be naturally extended to an $N$-partite setting using the generalized concurrence,
\begin{equation}
    \tau_1 (U) \coloneqq \frac{1}{2^{N-1}-1}\sum_{A|B} \tau_{A|B}(U),
\end{equation}
where the summation is understood over all possible partitions of $N$ subsystems into $A \cup B = \{1,...,N\}$ such that $A \cap B = \emptyset$.
The normalization factor $\frac{1}{2^{N-1}-1}$ is the inverse of the number of possible different splittings $A|B$.

\section{Quantum gates}\label{sec:gates_ent_power}
The entanglement is a useful notion in various quantum information protocols, ranging from dense coding and teleportation to quantum cryptography.
Therefore, it is of particular importance to investigate how much entanglement a given gate produces while acting on a separable state.
Instead of studying the statistical distribution of entanglement creation which might prove unfeasible, one can focus on a single aspect -- the average entanglement.

\begin{definition}[Entangling power]\label{def:entangling_power}
    The entangling power of a unitary gate\, $U$ acting on a Hilbert space of dimension $N^2$ is the mean entanglement created by gate $U$, averaged over the set of separable states $\ket{\psi_A}\ket{\psi_B}$ from $\mathcal{H}^N\otimes \mathcal{H}^N$ with respect to the uniform measure on two unit spheres of Hilbert spaces, given by
    \begin{equation}\label{eq:def_e_p}
        e_p(U) = \Bigg(\frac{N+1}{N-1}\Bigg)\,\,\big\langle E\big(U \ket{\psi_A}\ket{\psi_B}\big) \big\rangle_{\ket{\psi_A}\ket{\psi_B}},
    \end{equation}
    where the entanglement measure of choice $E$ is the linear entropy of entanglement, $E(\ket{\psi}) = 1 - \mathrm{Tr}_A\big([\mathrm{Tr}_B(\ket{\psi}\bra{\psi})]^2\big)$.
\end{definition}

It is convenient to normalize the entangling power $e_p$, such that it admits values from the unit interval $[0,1]$.
For brevity of the notation, any time we shall refer to the matrix of size $N^2$, which is an equivalent form of the gate, we will also denote it as $U$.
Definition~\ref{def:entangling_power} was first proposed by Zanardi et al.~\cite{Zanardi_2000}, while a more operational version was introduced by Clarisse et al.~\cite{Clarisse_2005}.
This expression for entangling power uses the matrix form of the unitary gate
\begin{equation}\label{eq:e_p_by_linear_entropy}
    e_p(U) = \frac{1}{E(S)} \big( E(U) + E(US) - E(S) \big),
\end{equation}
with $S$ denoting the swap matrix of the same dimension (see Section~\ref{sec:sets_of_matrices}) and the linear entropy for the matrix being defined by its 
Choi-Jamio{\l}kowski isomorphism: $U \mapsto \ket{U} = U\otimes \mathbb{I} \ket{\phi^+}$. 

To further simplify the calculations of the entangling power, it is possible to evaluate the average over the Haar measure utilizing reshuffling and partial transposition, as noted by Zanardi~\cite{Zanardi_2001}.
Using this result, we can rewrite the entangling power of a unitary matrix employing the operations introduced in Section~\ref{sec:operations_on_matrices}:
\begin{equation}\label{eq:e_p_using_singular_entropy}
    e_p(U) = \frac{N^2}{N^2-1}\bigg(\frac{N^2+1}{N^2}-\frac{\mathrm{Tr}\big(U^R U^{R\dagger}U^R U^{R\dagger}\big)}{N^4} - \frac{\mathrm{Tr}\big(U^\Gamma U^{\Gamma\dagger}U^\Gamma U^{\Gamma\dagger}\big)}{N^4}\bigg).
\end{equation}

Following the definition of the entangling power it is possible to introduce a second, closely related notion, which shall prove important in Section~\ref{sec:sum_of_entropies_AME} to study unitarity of certain matrices.
Thus, let us introduce the \emph{singular entropy}, denoted by $E_S(X)$.
This quantity is connected with the singular values of a general matrix $X$, which does not need to be unitary,
\begin{equation}\label{eq:definition_singular_entropy}
    E_S(X) = \frac{N}{N-1} \bigg(1 - \frac{\text{Tr}\big(XX^\dagger XX^\dagger \big)}{\big[\text{Tr}\big( XX^\dagger \big)\big]^2} \bigg).
\end{equation}
The singular entropy of any matrix $X$ is normalized such that $E_S(X) \in [0,1]$.
For our purposes, the most important property of this notion is that it achieves the maximal value of 1 if and only if the matrix $X$ is unitary.
Furthermore, in a special case in which all three matrices $U$, $U^R$, and $U^\Gamma$ are unitary, the expression for the entangling power can be written using the singular entropy:
\begin{equation}\label{eq:e_p_using_singular_entropy_only_for_AME}
    e_p (U) = E_S(U^R) + E_S(U^\Gamma) - 1,
\end{equation}
which will be a key tool to study multiunitary matrices in Section~\ref{sec:sum_of_entropies_AME}.
The two expressions for $e_p$, by the linear entropy (\ref{eq:e_p_by_linear_entropy}) and the singular one (\ref{eq:e_p_using_singular_entropy_only_for_AME}), do not coincide in general.
However, the fact that they have the same global maximum, namely any multiunitary matrix, provides a basis for two inequivalent approaches to the search for such a matrix.

A similar notion to the singular entropy was introduced while studying the vector of singular values of an operator by Roga et al.~\cite{Roga_2013}.
Note that there is a connection between the singular entropy and the linear entropy, exemplified by the common form 
\begin{equation}
    f(X) = 1 - \mathrm{Tr}\big((\ket{X}\bra{X})^2\big),
\end{equation}
where for the singular entropy we defined the transition from matrices to states via $X \mapsto \ket{X}\bra{X} = XX^\dagger / \mathrm{Tr}(XX^\dagger)$.
Nonetheless, this transformation is not the Choi-Jamio{\l}kowski isomorphism employed by the linear entropy; therefore, in general these definitions do not coincide.



In order to study the set of bipartite quantum gates, it is beneficial to consider a quantity complementary to the entanglement power.

\begin{definition}[Gate typicality]
    The gate typicality\, $g_t$ of a unitary gate\, $U$ of size $N^2$ is defined by
    \begin{equation}
        g_t(U) = \frac{1}{E(S)} \big( E(U) - E(US) + E(S) \big).
    \end{equation}
    The gate typicality takes values from the unit interval $[0,1]$, while its mean value averaged over the Haar measure on $U(N^2)$ reads $1/2$.
\end{definition}

All three notions introduced in this section will prove beneficial for further research, especially the entangling power that plays a significant role in quantum mappings and quantum designs.
We conclude this chapter with two sections providing the reader with the introduction to the topic of this thesis.

\section{Connection between maps and designs}\label{sec:quantum_designs}
The link between the classical and the quantum domain is under investigation since the advent of the quantum theory; nonetheless, several fundamental problems still remain unsolved.
To mitigate these issues, we tackle the transition between both worlds by consideration of classical maps in the quantum regime.
To clarify, quantum mappings from the perspective of this thesis are actions of unitary matrices on quantum systems.

In particular, Chapter~\ref{chapter_3} is devoted to the study of the role of bistochastic matrices in quantum mechanics through their connection to unitary matrices.
Bistochastic matrices emerge from the field of Markov chains~\cite{Gagniuc_Markov_chains} and are widely used for their practical applications to model various processes.
Then, as a slightly separated topic, we explore the entangling power of quantum mappings in the multipartite domain (see Chapter~\ref{chapter_4}).


The second, closely related quantum mechanical concept studied in this thesis, important from the theoretical as well as from the experimental perspective, is a \emph{quantum design}.
Let us start by motivating this particular line of research from the point of view of a theoretician.
Introduced by a seminal PhD thesis, written by Zauner in German in 1999 \cite{Zauner_German} with a subsequent translation into English \cite{Zauner_English}, a quantum design was defined to be a set of mutually orthogonal projection matrices. 
The thesis defined several subclasses of designs, which were later intensively investigated by other researchers, as well as extended to other concepts.
The examples of such notions include: unitary $t$-designs~\cite{Dankert_2009}, symmetric informationally complete positive-operator-valued measures (SIC-POVMs)~\cite{Renes_2004}, and mutually unbiased bases (MUBs)~\cite{Schwinger_1960}.

A quantum design in a broader sense of this thesis is a combinatorial quantum object with symmetry, particularly useful in quantum information.
Obviously, the above definition is a very broad one, but instead of a formal definition, one should understand this as more of a guideline.
This notion has a lot of applications across the wide field of quantum information science.
To study quantum designs we shall use the entangling power, showing a deep connection linking designs to quantum mappings.

In this thesis, we shall restrict to two particular examples of quantum designs, namely quantum Latin squares and quantum Sudoku.
Chapters~\ref{chapter_6} and \ref{chapter_7} will familiarize the reader with both notions as well as will try to explain their significance. 
Therefore, it is beneficial to introduce the common definitions for both of these chapters.

\begin{definition}[Latin square]
    A square array of size $N$, filled with\, $N$ different symbols, such that in every row and every column each element repeats exactly once, is called a Latin square.
\end{definition}

Quantum Latin squares, first introduced in 2016 by Musto and Vicary \cite{Musto_2016}, are generalizations of their classical counterparts.
\begin{definition}[QLS]
    A square array of size $N$, filled with $N^2$ quantum states (potentially non-unique), is called a quantum Latin square (QLS) if every row and column forms an orthogonal basis of the Hilbert space $\mathcal{H}^N$.
\end{definition}
The notion of QLS is useful and interesting as it combines combinatorics with quantum information, allowing one to solve certain experimental problems, see Chapter~\ref{chapter_7}.

\section{Structure of the thesis and the author's contribution}\label{sec:structure_contribution}
To summarize the introductory chapter, the primary topic of this thesis is the domain of classical to quantum transition, facilitated by the usage of quantum designs.

\vspace{0.5cm}
\emph{The thesis is organized as follows.}
\vspace{0.5cm}

Chapter~\ref{chapter_3} studies the transition between macro and microscopic domain in detail and provides new results on the unistochasticity problem.
Then, Chapter~\ref{chapter_4} describes novel results on the entangling power, a notion extensively used in Chapter~\ref{chapter_6} to construct an absolutely maximally entangled state of four quhexes.
This particular quantum design has no classical counterpart.
Finally, this example of a genuinely quantum design is generalized in Chapter~\ref{chapter_7}, facilitated by the usage of a quantum SudoQ.

\vspace{0.5cm}
\emph{The author's contribution to the results of each chapter is described below.}
\vspace{0.5cm}

In Chapter~\ref{chapter_3}, the first of the novel contributions is the algorithm allowing to determine whether a given bistochastic matrix of size 4 is unistochastic (Section~\ref{sec:Uffe_algorithm}).
The general idea has been proposed by the late Uffe Haagerup, while the details of the procedure, as well as its implementation in Mathematica~\cite{algorithm_Rajchel}, are sole contributions of the author.
Furthermore, the proof that the robust Hadamard matrices are sufficient for unistochasticity of the rays and the counter-rays of the Birkhoff polytope of certain dimensions
(Theorem~\ref{thm:robust_Hadamard_rays}) was also developed solely by the author, see Section~\ref{sec:rays_and_counter-rays}.
The notion of strongly complementary matrices was introduced by the author alone, together with the proof that certain triangles embedded inside the Birkhoff polytope of any even dimension are unistochastic (Section~\ref{sec:complementary_matrices}).
The last result of this section developed by the author exclusively is the proof of Theorem~\ref{thm:circulant_size4} in Section~\ref{sec:circulant_matrices}, which states that the bracelet conditions are sufficient to determine whether a given circulant bistochastic matrix of size 4 is unistochastic.


One of the two main contributions of the author in Chapter~\ref{chapter_4} is the determination of the orthogonal Weingarten functions, given by Eq.~(\ref{eq:Weingarten_final}).
Then, using these formulas, together with Tomasz Linowski, the author obtained the most important contribution of this chapter, i.e.\ the derivation of an analytical formula for the average entangling power of tripartite orthogonal gates, presented by the formula~(\ref{eq:average_e_p_tripartite_orthogonal}).

In the search for an absolutely maximally entangled (AME) state of four quhexes, the contributions of the author are as follows.
First, the author individually contributed by introducing the new family of matrices $G$ defined by Eq.~(\ref{eq:G_family}), as shown in Section~\ref{sec:introducing_G}.
Then, the sole contribution of the author is the characterization of the region provided by the $W$ family, as presented in Fig.~\ref{fig:region_W}, as well as the study of its extremality included in Section~\ref{sec:checking_W_family}.
Further, the derivation of the Hessian for the entangling power is a joint work of the author and Arul Lakshminarayan, included in Section~\ref{sec:local_maxima}, together with the subsequent, independent numerical calculations concerning eigenvalues of the Hessian.

Using the Hessian, the author introduced the algorithm concerning the matrices that reach the proximity of the $W$ matrix, with results shown in Table~\ref{tab:iteration_G_to_W}.
Another individual contribution of the author is the derivation of the formula for derivatives for the average singular entropy, as presented in Appendix~\ref{app:hessian_s_e}.
Then, employing these derivatives, the author verified the extremality of matrices $X_i$ and $X_d$ -- provided by Wojciech Bruzda.
Additionally, the author's unique involvement includes the discovery of the block-like structure of the numerically found absolutely maximally entangled state, as presented in Section~\ref{sec:block-like_AME}, with an extension in Appendix~\ref{app:block_like}.
Then, using this structure, the author alone conducted a search for an AME state, presented in Section~\ref{sec:search_block_structure}.

In the final chapter of the thesis, the individual addition of the author consists in proposing the notions of genuinely quantum Latin squares and its cardinality, as per Definitions~\ref{def:cardinality} and \ref{def:genuinely_quantum_QLS}.
Then, the author found the first example of a SudoQ of the maximal cardinality; therefore, possessing the highest degree of \emph{quantumness}, see Eq.~(\ref{eq:SudoQ_cardinality16}).
The other contribution of the author includes the work done with Jerzy Paczos and Marcin Wierzbi\'nski concerning the admissible cardinalities of the SudoQ $4\times 4$ (Theorem~\ref{thm:SudoQ_cardinalities4x4}), where the author's part involved corrections to the proof.
Then, a similar contribution of the author concerns the construction of a SudoQ of the maximal cardinality for any dimension, see Section~\ref{sec:sudoQ_maximal_cardinality_general} and Propositions~\ref{prop:maximal_cardinality1} and \ref{prop:maximal_cardinality2}.


\vspace{0.5cm}
\emph{To strengthen the idea that this thesis forms a monograph, here we list all of the results that have not been published before.}
\vspace{0.5cm}

First of them is the derivation of the average entangling power, included in Section~\ref{sec:average_e_p_tripartite_orthogonal}, which has never been made public and might be helpful for future researchers.
This is because the author feels that the evaluation of the analytical expression for the average entangling power of orthogonal tripartite gates, given by Eq.~(\ref{eq:average_e_p_tripartite_orthogonal}), deserved more treatment to explain mathematical intricacies than what was presented in the joint paper with Linowski and \.Zyczkowski~\cite{Rajchel_entangling_power}.


Furthermore, in Chapter~\ref{chapter_6}
the search for an absolutely maximally entangled state of four quhexes is not included in any publication.
In particular, this refers to the families of matrices $A$ (Section~\ref{sec:A_family}), $G$ (Section~\ref{sec:introducing_G}), and $W$ (Section~\ref{sec:introducing_W}).
Then, the same applies to the derivation of the Hessian for the entangling power (Section~\ref{sec:local_maxima}), as well as its eigenvalues presented in Table~\ref{tab:hessian_e_p_results}.
The steepest ascent algorithm (Section~\ref{sec:achieving_W_from_G}) with its results presented in Table~\ref{tab:iteration_G_to_W} has never been part of any publication.
Likewise, the results concerning the derivatives of the average singular entropy, involving the conjectured extremality of matrices presented in Table~\ref{table:4_categories} were never presented to the community of scientists.
Subsequent developments provided in Section~\ref{sec:region_W}, devoted to the study of the $W$ family, is also of a novel nature.
In addition, the block-like structure (Section~\ref{sec:block-like_AME}) and the ensuing search (Section~\ref{sec:search_block_structure}) were not included in any paper.

Finally, both Appendices have not yet been published.
Appendix~\ref{app:hessian_s_e} is dedicated to the derivation of the derivative for the average singular entropy of a matrix, while Appendix~\ref{app:block_like} covers the full explanation of the block-like structure of multiunitary matrices associated to a 4-party AME state.




\clearpage
\part{Quantum mappings}
\chapter{Unistochastic maps}
\label{chapter_3}
\vspace{-1cm}
\rule[0.5ex]{1.0\columnwidth}{1pt} \\[0.2\baselineskip]

\section{Introduction}\label{sec:introduction_unistochastic}
Unitary matrices $U(N)$ are of particular importance in quantum mechanics since they describe the evolution of the quantum states, as mentioned in Section~\ref{sec:sets_of_matrices},
\begin{equation}
    \ket{\psi(t)} = e^{-iHt/\hbar}\ket{\psi} = U \ket{\psi}.
\end{equation}

One of the valuable sets of matrices from the perspective of a physicist is the set of bistochastic matrices $\mathcal{B}_N$, i.e.\ those matrices with nonnegative entries that have rows and columns summing up to 1.
Their importance emerged during the research of modeling of various stochastic processes since they conserve the classical probability.
In particular, they are useful in the study of Markov chains -- the distribution between two time steps $p$ and $p'$ reads
\begin{equation}
    p' = Bp,
\end{equation}
where $B$ is a bistochastic matrix.
The link between bistochastic matrices and the set of unitary matrices is striking -- it suffices to take the absolute value squared of elements of a unitary matrix to obtain a bistochastic one
\begin{equation}
    |U_{ij}|^2 = B_{ij}.
\end{equation}


Nonetheless, the reverse statement is not true, that is not all bistochastic matrices have their unitary counterpart, i.e. a bistochastic matrix
\begin{equation}
    \frac{1}{2}\begin{pmatrix}
        0 & 1 & 1 \\ 1 & 0 & 1 \\ 1 & 1 & 0
\end{pmatrix}
\end{equation} 
cannot be transformed to a unitary matrix.
The problem of verifying whether a given bistochastic matrix can be converted to a unitary one is called the \emph{unistochasticity} problem.

The present chapter of this thesis shall be devoted to answering this problem in particular setups, as well as to applying these mathematical notions to certain examples in the field of quantum information, e.g.\ bases possessing a constant degree of entanglement.
A summary of some parts of this chapter, as well as an extension of the others, in which the author's involvement was less substantial, can be found in a joint paper~\cite{Rajchel_robust,Rajchel_algebraic_structures}.
If not specified differently, the author's contribution to the work covered by this chapter was significant.

\begin{figure}[H]
    \includegraphics[scale=0.25]{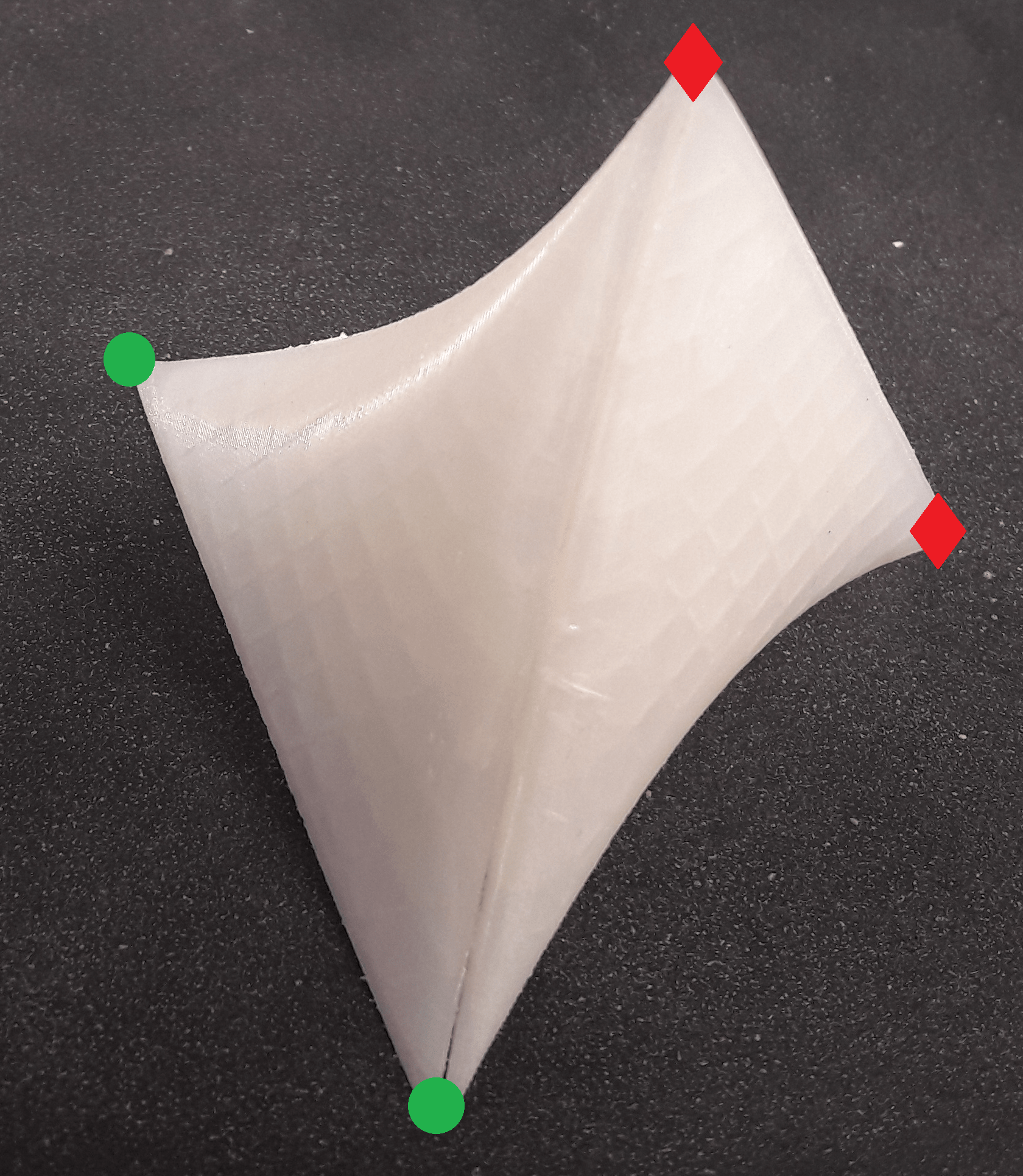}
    \caption{Unistochastic circulant matrices of dimension 4 are proved to form a non-convex subset of a tetrahedron containing all bistochastic circulant matrices.
    Two pairs of corners, green circles and red diamonds, given by permutation matrices, are distinguished.
    The author is grateful to Konrad Szyma{\'n}ski for creating this 3D printout.}
    \label{fig:model3D_unistochastic}
\end{figure}

\section{Bistochastic matrices}
We shall start by defining the core notions used throughout this chapter.

\begin{definition}[Bistochastic matrices]
    A real matrix $B$ composed of nonnegative entries is said to be bistochastic if the sum of its elements in each row and column equal 1, $\sum_i B_{ij} = 1$ and $\sum_j B_{ij} = 1$.
    We shall denote the set of bistochastic matrices of dimension $N$ by $\mathcal{B}_N$.
\end{definition}

The full set of bistochastic matrices of dimension $N$ is called the Birkhoff polytope, honoring the Birkhoff–von Neumann theorem.
This result states that any bistochastic matrix can be written as a convex combination of permutation matrices of an appropriate size.
Therefore, the Birkhoff polytope $\mathcal{B}_N$ is a convex hull of $N!$ permutation matrices, which are placed in the corners of the polytope~\cite{Birkhoff_1946}.
As a side-note for the reader wanting to compare the above definition to the other works, we remark that bistochastic matrices are also sometimes called doubly stochastic.
Out of bistochastic matrices of dimension $N$, we shall distinguish a matrix that is in some sense as central as possible.
To this end, we recall the matrix that was famously conjectured by van der Waerden to minimize permanent~\cite{van_der_Waerden_1926}.

\begin{definition}
    A bistochastic matrix of dimension $N$ composed only of entries $1/N$ is called the van der Waerden matrix and denoted by $W_N$.
\end{definition}

The centrality of this matrix shall be described later on while studying geometrical properties of some sets inside bistochastic matrices.
For now, let us observe that the uniform mixture of all permutation matrices of dimension $N$ yields the van der Waerden matrix $W_N$.
In Section~\ref{sec:introduction_unistochastic} we noted that any unitary matrix can be converted to a bistochastic one but that the converse does not hold.
Therefore, we shall distinguish those for which it is true by the name of unistochastic matrices.

\begin{definition}[Unistochastic matrices]
    A bistochastic matrix\, $B$ of size $N$ is called unistochastic if there exists a unitary matrix\, $U\in U(N)$ such that $B_{ij} = |U_{ij}|^2$.
    The set of unistochastic matrices of dimension $N$ will be denoted by $\mathcal{U}_N$.
\end{definition}

Note that the van der Waerden matrix $W_N$ is unistochastic for all dimensions $N$ since the corresponding unitary matrix can be taken to be the appropriate Fourier matrix $F_N$ (see Section~\ref{sec:sets_of_matrices}).
Unistochastic matrices are widely used in the field of quantum dynamics, where the problem of quantization of a given bistochastic map is solved by a proper unitary matrix~\cite{Pakonski_2001,Pakonski_2003}.
Another application of unistochastic maps arises from the field of elementary particles. 
To describe mixing between different quark families, physicists use the unitary Cabibbo-Kobayashi-Maskawa matrix~\cite{Bigi_2009}, which is probed experimentally by its bistochastic counterpart, composed of probabilities of conversion~\cite{Dita_2006}.
Furthermore, one can use unistochastic matrices of size $N$ to find discrete quantum walks on a graph containing $N$ vertices~\cite{Aharonov_1993}.
Having based our research on solid physical applications, we shall move on to the study of the useful notion of bracelet matrices.

\section{Bracelet matrices}\label{sec:bracelet_matrices}
Even though research on unistochasticity has a great potential for solving several problems across physics mentioned in the previous section, the unistochasticity problem is far from being well-understood.
The only properly described case are bistochastic matrices of order 3, for which simple necessary and sufficient conditions for unistochasticity are known~\cite{Au-Yeung_1979,Jarlskog_1988}.
The conditions are based upon the notion of \emph{bracelet} matrices.

\begin{definition}[Bracelet matrices]
    A bistochastic matrix $B$ of dimension $N$ is called bracelet if it satisfies the following conditions
    	\begin{subequations}
		\begin{align}
		\label{eq:bracelet_row}
		2\max_j \sqrt{B_{lj}B_{kj}}&\leq \sum_{j=1}^N \sqrt{B_{lj}B_{kj}},\\
		\label{eq:bracelet_column}
		2\max_j \sqrt{B_{jk}B_{jl}}&\leq \sum_{j=1}^N \sqrt{B_{jk}B_{jl}},
		\end{align}
	\end{subequations}	
	for any $k,l\in \{1,...,N\}$.
	Those conditions are called, respectively, row and column bracelet conditions.
	The set of bracelet matrices of dimension $N$ will be denoted as $\mathcal{L}_N$.
\end{definition}

Although apparently complicated, row and column bracelet conditions have a simple geometrical explanation that motivates their name.
In order to visualize it, suppose that $B$ is a unistochastic matrix of dimension 3
\begin{equation}
    B = \begin{pmatrix}
a & b & c \\
d & f & g \\
h  & j  &  k   
\end{pmatrix}.
\end{equation}
Then, the following matrix $U$ is a unitary matrix for certain values of real phases $\{\alpha,...,\kappa\}$
\begin{equation}
    \begin{pmatrix}
{\color{black}\sqrt{a}e^{i\alpha}} & {\color{black}\sqrt{b}e^{i\beta}} & {\color{black}\sqrt{c}e^{i\gamma}} \\
{\color{black}\sqrt{d}e^{i\delta}} & {\color{black}\sqrt{f}e^{i\epsilon}} & {\color{black}\sqrt{g}e^{i\zeta}} \\
\sqrt{h}e^{i\eta}  & \sqrt{j}e^{i\theta}  &  \sqrt{k}e^{i\kappa}   
\end{pmatrix}.
\end{equation}

The orthogonality condition for the first two rows reads
\begin{equation}
    \sqrt{ad\mathstrut}e^{i(\alpha-\delta)} + \sqrt{bf}e^{i(\beta-\epsilon)} + \sqrt{cg\mathstrut}e^{i(\gamma-\zeta)} = 0.
\end{equation}

The above equation can only be satisfied if three lines of length $\sqrt{ad\mathstrut}$, $\sqrt{bf\mathstrut}$, and $\sqrt{cg\mathstrut}$ form a triangle since one can interpret the addition of complex numbers by geometrical means on the complex plane (see Fig.~\ref{fig:triangle_bracelet}).
If a triangle cannot be formed with respect to any pair of rows/columns, then the necessary conditions for unistochasticity are not met.
Therefore, such a matrix cannot be unistochastic.
In dimension $4$, the corresponding triangle condition transforms to the quadrilateral condition, while in dimension $N$ to the $N$-polygon condition.

\begin{figure}[H]
    \includegraphics[scale=1]{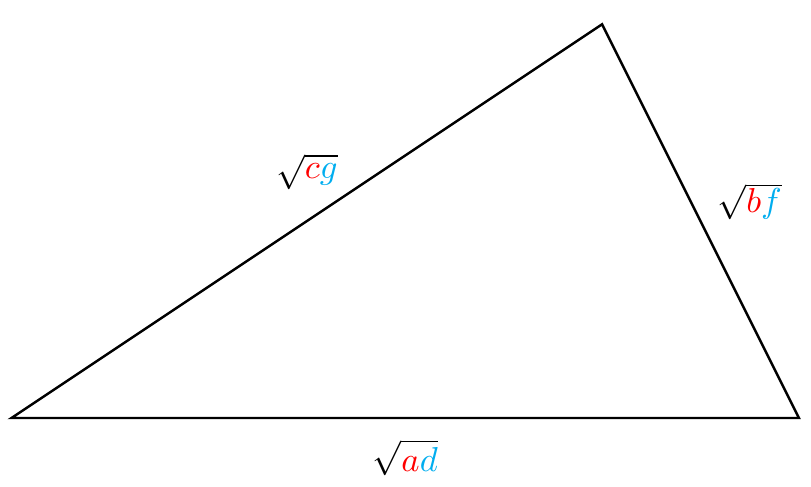}
    \caption{A $3 \times 3$ matrix 
    $\begin{pmatrix}
{\color{red}a} & {\color{red}b} & {\color{red}c} \\
{\color{cyan}d} & {\color{cyan}f} & {\color{cyan}g} \\
h  & j  &  k   
\end{pmatrix}$
    can only be unistochastic if a triangle, called the \emph{unitarity triangle}, can be formed by the entry-wise product of two rows containing square roots of the entries.
    The same orthogonality condition applies to all pairs of rows and columns.
    In the case of dimension 3, these conditions are sufficient for unistochasticity.
    Dimension 3 is special also because all unitarity triangles are equivalent; therefore, it suffices to verify the triangle inequality only for one pair of rows or columns~\cite{Jarlskog_1988}. 
}
    \label{fig:triangle_bracelet}
\end{figure}

As a result of these remarks, we conclude that the bracelet set is a superset of unistochastic matrices in any dimension $N$, which can be written in our notation as $\mathcal{U}_{N} \subset\mathcal{L}_{N}$.
What is more, in 1979 Au-Yeung and Poon found that in the case of dimension 3, these two sets are equal, $\mathcal{U}_{N}=\mathcal{L}_{N}$, making the bracelet condition not only necessary but also sufficient~\cite{Au-Yeung_1979}.
Then, the 2005 paper of Bengtsson et al.~\cite{Bengtsson_2005} described the geometrical structure of unistochastic matrices of order 3.
They form a 4-dimensional star-shaped set with the central point corresponding to the van der Waerden matrix $W_3$.

As a side note for the reader interested in the intricacies of the Birkhoff polytope, we recall the other geometrical results of this paper that might prove useful in understanding the geometrical considerations applied throughout the present chapter.
Specifically, the set of unistochastic matrices of size $3$ has a non-zero volume~\cite{Dunkl_2009} and contains a ball of unistochastic matrices, centered at $W_3$.
The ball is of radius $\sqrt{2}/3$ in the Hilbert-Schmidt metric, which defines the distance between matrices $A$ and $B$ as $\sqrt{\text{Tr}\big((A-B)(A-B)^\dagger\big)}$.

However, in the same paper the authors emphasized that the case of bistochastic matrices of dimension 4 is much more complicated.
The bracelet conditions are not sufficient for unistochasticity but also in every neighborhood of the van der Waerden matrix $W_4$ there exists a non-unistochastic matrix.
This implies the lack of any unistochastic ball around the center $W_4$ of the Birkhoff polytope $\mathcal{B}_4$.
Therefore, the state of the art for the unistochasticity problem for dimension $4$ does not admit any simple algorithm which determines whether a given bistochastic matrix has its unitary counterpart.
In the subsequent section, we shall introduce a numerical algorithm solving this task.

\section{Unistochasticity algorithm for dimension \texorpdfstring{$N=4$}{Lg}}\label{sec:Uffe_algorithm}
The core idea behind the present algorithm was proposed by the late Uffe Haagerup during informal collaboration.
Then, the extension and implementation of the algorithm was provided by the author of this thesis.
The rationale behind the algorithm will be the subject of this section.

Starting from a bistochastic matrix $B\in\mathcal{B}_4$,
\begin{equation}
    B=
    \begin{pmatrix}
        B_{11} & B_{12} & B_{13} & B_{14} \\
        B_{21} & B_{22} & B_{23} & B_{24} \\
        B_{31} & B_{32} & B_{33} & B_{34} \\
        B_{41} & B_{42} & B_{43} & B_{44} \\
    \end{pmatrix},
\end{equation}
we wish to determine whether there exists a unitary matrix $U$ such that
\begin{equation}\label{eq:unitary_matrix_uffe_algorithm}
    U=
    \begin{pmatrix}
        \sqrt{B_{11}} & \sqrt{B_{12}} & \sqrt{B_{13}} & \sqrt{B_{14}} \\
        \sqrt{B_{21}} & \sqrt{B_{22}}e^{i\phi} & \sqrt{B_{23}}e^{i\alpha_1} & \sqrt{B_{24}}e^{i\alpha_2} \\
        \sqrt{B_{31}} & \sqrt{B_{32}}e^{i\beta_1} & \sqrt{B_{33}}e^{i\delta_1} & \sqrt{B_{34}}e^{i\delta_2} \\
        \sqrt{B_{41}} & \sqrt{B_{42}}e^{i\beta_2} & \sqrt{B_{43}}e^{i\delta_3} & \sqrt{B_{44}}e^{i\delta_4} \\
    \end{pmatrix} = 
    \left( \begin{array}{c|c} 
    A & X \\ 
    \hline Y & D 
    \end{array} \right).
\end{equation}

Without loss of generality, we may assume that $U$ is dephased in such a way that elements in the first row and column are real numbers.
Furthermore, using the division into 4 blocks of size $2\times 2$ we impose the condition on $U$ that the first block is bounded in the Hilbert-Schmidt norm
\begin{equation}\label{eq:bound_on_submatrix_A}
    ||A|| = \sqrt{\text{Tr}(AA^\dagger)} < 1,
\end{equation}
what can be always achieved by a proper permutation acting on the matrix $B$.
In other words, either $B$ already satisfies this condition or we permute it.
Note that permutations do not change the unistochasticity of a matrix, i.e.\ if $B$ is unistochastic then all matrices $PBP'$, obtained by permutations $P$ and $P'$, will also be unistochastic.

To see that every bistochastic matrix can be permuted in such a way, observe that condition~(\ref{eq:bound_on_submatrix_A}) is equivalent to $B_{11}+B_{12}+B_{21}+B_{22} < 1$.
On the other hand, this can always be done if the initial matrix differs from the van der Waerden matrix $W_4$, in the opposite case a Hadamard or the Fourier matrix of dimension 4 shows unistochasticity.
Unitarity of the matrix $U$ implies that
\begin{equation}\label{eq:unitary_condition_blocks_uffe_algorithm}
    AA^\dagger + XX^\dagger = \mathbb{I}_{2} \quad \text{ and } \quad AY^\dagger + XD^\dagger = 0.
\end{equation}

Then, we can use the bound~(\ref{eq:bound_on_submatrix_A}) on the block $A$, from which we deduce that the eigenvalues of $AA^\dagger$ are smaller than 1.
Moreover, together with Eq.~(\ref{eq:unitary_condition_blocks_uffe_algorithm}), this shows that the eigenvalues of $XX^\dagger$ are positive, which in turn implies that the matrix $X$ is invertible.
Therefore, $(X^\dagger)^{-1}$ can be utilized to derive the diagonal block $D$ from the other blocks using Eq.~(\ref{eq:unitary_condition_blocks_uffe_algorithm}),
\begin{equation}\label{eq:D_block_uffe}
    D = -YA^\dagger (X^\dagger)^{-1}.
\end{equation}

Furthermore, we shall find the phases $\phi$, $\alpha_{1}$, $\alpha_{2}$, $\beta_{1}$, and $\beta_{2}$ from Eq.~(\ref{eq:unitary_matrix_uffe_algorithm}) by making use of orthogonality relations between the first two rows and columns.
In order to simplify the notation, we introduce auxiliary variables that can be interpreted as lengths of the segments appearing in the bracelet conditions,
\begin{equation}
    l_1 = \sqrt{B_{11}B_{21}}, \quad l_{2} = \sqrt{B_{12}B_{22}}, \quad l_{3} = \sqrt{B_{13}B_{23}}, \quad l_{4} = \sqrt{B_{14}B_{24}}.
\end{equation}

These allow us to rewrite the orthogonality condition imposed on the pair of the first two rows
\begin{equation}\label{eq:length_of_lines_bracelet_algorithm}
    l_1 + l_2 e^{i\phi} + l_{3} e^{i\alpha_1} + l_{4} e^{i\alpha_{2}} = 0,
\end{equation}
what we shall treat as an equation for unknown phases $\phi$, $\alpha_1$, and $\alpha_2$.
The row bracelet condition Eq.~(\ref{eq:bracelet_row}) requires that the longest segment is not longer than the sum of all other segments, which is equivalent to the condition 
\begin{equation}
    2\max_i l_i \leq \sum_{j=1}^4 l_{j}. 
\end{equation}

Failure to satisfy this condition renders the matrix non-unistochastic.
However, should this requirement be met, Eq.~(\ref{eq:length_of_lines_bracelet_algorithm}) provide two possible solutions for phases $\alpha_1$ and $\alpha_2$ when treated as functions of the phase $\phi$ 
\begin{equation}
    \alpha_1 = \alpha_1 (\phi) \quad \text{and} \quad \alpha_2 = \alpha_2 (\phi)
\end{equation}
for the non-convex case and
\begin{equation}
    \widetilde{\alpha}_1 = \widetilde{\alpha}_1 (\phi) \quad \text{and} \quad \widetilde{\alpha}_2 = \widetilde{\alpha}_2 (\phi)
\end{equation}
for the convex case, see Fig.~\ref{fig:non-convex_and_convex_Uffe_algorithm}.

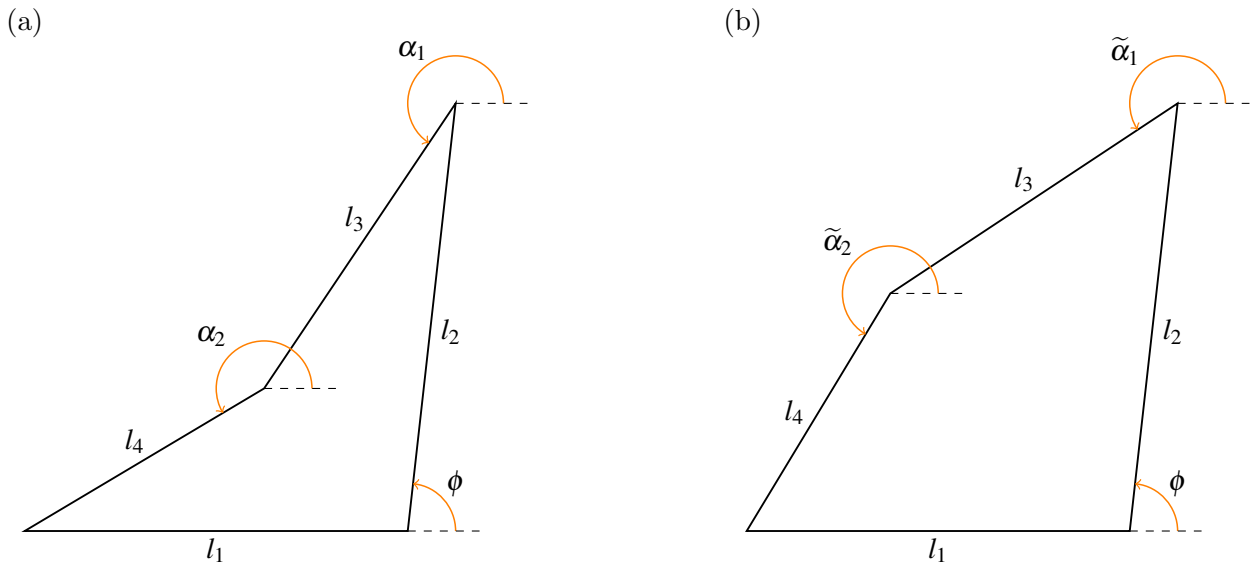
\begin{figure}[H]
\newcommand{\n}{0.63}
\newcommand{\xC}{9.5}
\newcommand{\sizeText}{\normalsize}
	\begin{tikzpicture}
         \draw [line width=0.25mm] (0,0) node[anchor=north]{} 
              -- (8*\n,0) node[anchor=north]{}
              -- (9*\n,9*\n) node[anchor=south]{}
              -- (5*\n,3*\n) node[anchor=south]{}
              -- cycle;
        \node at (4*\n,-0.4*\n) {$l_1$};
        \node at (8.9*\n,4.2*\n) {$l_2$};
        \node at (6.9*\n,6.57*\n) {$l_3$};
        \node at (2.3*\n,1.9*\n) {$l_4$};
  \path []
  (10*\n,0) coordinate (a) node[right] {}
    -- (8*\n,0) coordinate (b) node[left] {}
    -- (9*\n,9*\n) coordinate (c) node[above right] {}
    pic[draw=orange, ->, angle eccentricity=1.2, angle radius=0.63cm,line width=0.2mm]
    {angle=a--b--c};
    \draw [dashed] (8*\n,0) -- (9.5*\n,0);
    \node at (9*\n,1*\n) {$\phi$};
    
    \path []
  (10*\n,9*\n) coordinate (a) node[right] {}
    -- (9*\n,9*\n) coordinate (b) node[left] {}
    -- (5*\n,3*\n) coordinate (c) node[above right] {}
    pic[draw=orange, ->, angle eccentricity=1.2, angle radius=0.63cm,line width=0.2mm]
    {angle=a--b--c};
    \draw [dashed] (9*\n,9*\n) -- (10.5*\n,9*\n);
    \node at (8.1*\n,10.1*\n) {$\alpha_1$};
    
    \path []
  (10*\n,3*\n) coordinate (a) node[right] {}
    -- (5*\n,3*\n) coordinate (b) node[left] {}
    -- (0,0) coordinate (c) node[above right] {}
    pic[draw=orange, ->, angle eccentricity=1.2, angle radius=0.63cm,line width=0.2mm]
    {angle=a--b--c};
    \draw [dashed] (5*\n,3*\n) -- (6.5*\n,3*\n);
    \node at (3.9*\n,4.1*\n) {$\alpha_2$};
    
         \draw [line width=0.25mm] (0+\xC,0) node[anchor=north]{} 
              -- (8*\n+\xC,0) node[anchor=north]{}
              -- (9*\n+\xC,9*\n) node[anchor=south]{}
              -- (3*\n+\xC,5*\n) node[anchor=south]{}
              -- cycle;
        \node at (4*\n+\xC,-0.4*\n) {$l_1$};
        \node at (8.9*\n+\xC,4.2*\n) {$l_2$};
        \node at (5.8*\n+\xC,7.4*\n) {$l_3$};
        \node at (1.*\n+\xC,2.5*\n) {$l_4$};
  \path []
  (10*\n+\xC,0) coordinate (a) node[right] {}
    -- (8*\n+\xC,0) coordinate (b) node[left] {}
    -- (9*\n+\xC,9*\n) coordinate (c) node[above right] {}
    pic[draw=orange, ->, angle eccentricity=1.2, angle radius=0.63cm,line width=0.2mm]
    {angle=a--b--c};
    \draw [dashed] (8*\n+\xC,0) -- (9.5*\n+\xC,0);
    \node at (9*\n+\xC,1*\n) {$\phi$};
    
    \path []
  (10*\n+\xC,9*\n) coordinate (a) node[right] {}
    -- (9*\n+\xC,9*\n) coordinate (b) node[left] {}
    -- (3*\n+\xC,5*\n) coordinate (c) node[above right] {}
    pic[draw=orange, ->, angle eccentricity=1.2, angle radius=0.63cm,line width=0.2mm]
    {angle=a--b--c};
    \draw [dashed] (9*\n+\xC,9*\n) -- (10.5*\n+\xC,9*\n);
    \node at (7.9*\n+\xC,10.1*\n) {$\widetilde{\alpha}_1$};
    
    \path []
  (10*\n+\xC,5*\n) coordinate (a) node[right] {}
    -- (3*\n+\xC,5*\n) coordinate (b) node[left] {}
    -- (0+\xC,0) coordinate (c) node[above right] {}
    pic[draw=orange, ->, angle eccentricity=1.2, angle radius=0.63cm,line width=0.2mm]
    {angle=a--b--c};
    \draw [dashed] (3*\n+\xC,5*\n) -- (4.5*\n+\xC,5*\n);
    \node at (1.9*\n+\xC,6.1*\n) {$\widetilde{\alpha}_2$};

    \node at (0*\n,10.7*\n) {(a)};
    \node at (15*\n,10.7*\n) {(b)};

	\end{tikzpicture}
    \caption{Unitarity quadrilateral for $N=4$.
    Two possible choices of phases ($\alpha_1$, $\alpha_2$) and ($\widetilde{\alpha}_1$, $\widetilde{\alpha}_2$) are available for some values of $\phi$ -- one of them corresponds to a non-convex (a) and another to a convex (b) polygon in the complex plane.
    Without loss of generality, we may assume that phase $\phi \in [0,\pi ]$.
    }
    \label{fig:non-convex_and_convex_Uffe_algorithm}
\end{figure}

On the other hand, not all of the angles $\phi$ admit the existence of a polygon.
Typically, only for some subset of phases $\phi \in [\phi_{\text{min}},\phi_{\text{max}}]\subset [0,2\pi]$ a quadrilateral can be formed.
In the case of the extremal angles, see Fig.~\ref{fig:phi_max_min_Uffe_algorithm}, the quadrilateral will be degenerate.

\begin{figure}[H]
    \newcommand{\x}{6.7}
\newcommand{\n}{0.63}
\newcommand{\sizeText}{\normalsize}
	\begin{tikzpicture}
	
         \draw [line width=0.25mm] (0,0) node[anchor=north]{} 
              -- (8*\n,0) node[anchor=north]{}
              -- (-1.00594*\n, 0.94495*\n) node[anchor=south]{}
              -- (-4.24993*\n, 3.99225*\n) node[anchor=south]{}
              -- cycle;
        \node at (4*\n,-0.4*\n) {\sizeText $l_1$};
        \node at (3*\n,0.97*\n) {\sizeText $l_2$};
        \node at (-2.15*\n-0.11*\n,2.8*\n-0.11*\n) {\sizeText $l_3$};
        \node at (-2.3*\n+0.1*\n,1.4*\n+0.1*\n) {\sizeText $l_4$};
  \path []
  (10*\n,0) coordinate (a) node[right] {}
    -- (8*\n,0) coordinate (b) node[left] {}
    -- (-1.00594*\n, 0.94495*\n) coordinate (c) node[above right] {}
    pic[ draw=orange, ->, angle eccentricity=1.2, angle radius=0.63cm,line width=0.2mm]
    {angle=a--b--c};
    \draw [dashed] (8*\n,0) -- (9.5*\n,0);
    \node at (8*\n,1.42*\n) {\sizeText $\phi_{\text{max}}$};
    
    \path []
  (1.00594*\n, 0.94495*\n) coordinate (a) node[right] {}
    -- (-1.00594*\n, 0.94495*\n) coordinate (b) node[left] {}
    -- (-4.24993*\n, 3.99225*\n) coordinate (c) node[above right] {}
    pic[draw=orange, ->, angle eccentricity=1.2, angle radius=0.63cm,line width=0.2mm]
    {angle=a--b--c};
    \draw [dashed] (-1.00594*\n, 0.94495*\n) -- (-1.00594*\n+1.5*\n, 0.94495*\n);
    \node at (-0.45*\n, 2.2*\n) {\sizeText $\alpha_1$};
    
    \path []
  (4.24993*\n, 3.99225*\n) coordinate (a) node[right] {}
    -- (-4.24993*\n, 3.99225*\n) coordinate (b) node[left] {}
    -- (0,0) coordinate (c) node[above right] {}
    pic[draw=orange, ->, angle eccentricity=1.2, angle radius=0.63cm,line width=0.2mm]
    {angle=a--b--c};
    \draw [dashed] (-4.24993*\n, 3.99225*\n) -- (-4.24993*\n+1.5*\n, 3.99225*\n);
    \node at (-4.2499*\n,5.4*\n) {\sizeText $\alpha_2$};
    
    
    \draw [line width=0.25mm] (0+\x,0) node[anchor=north]{} 
              -- (8*\n+\x,0) node[anchor=north]{}
              -- (9.50591*\n+\x, 8.92929*\n) node[anchor=south]{}
              -- (5.25595*\n+\x, 4.93712*\n) node[anchor=south]{}
              -- cycle;
        \node at (4*\n+\x,-0.4*\n) {\sizeText $l_1$};
        \node at (9.2*\n+\x,4.3*\n) {\sizeText $l_2$};
        \node at (7.1*\n+\x,7.2*\n) {\sizeText $l_3$};
        \node at (2.7*\n+\x,3.1*\n) {\sizeText $l_4$};
  \path []
  (10*\n+\x,0) coordinate (a) node[right] {}
    -- (8*\n+\x,0) coordinate (b) node[left] {}
    -- (9.50591*\n+\x, 8.92929*\n) coordinate (c) node[above right] {}
    pic[ draw=orange, ->, angle eccentricity=1.2, angle radius=0.63cm,line width=0.2mm]
    {angle=a--b--c};
    \draw [dashed] (8*\n+\x,0) -- (9.5*\n+\x,0);
    \node at (9.4*\n+\x,1.*\n) {\sizeText $\phi_{\text{min}}$};
    
    \path []
 (10.50591*\n+\x, 8.92929*\n) coordinate (a) node[right] {}
    -- (9.50591*\n+\x, 8.92929*\n) coordinate (b) node[left] {}
    -- (4.24931*\n+\x, 3.99155*\n) coordinate (c) node[above right] {}
    pic[ draw=orange, ->, angle eccentricity=1.2, angle radius=0.63cm,line width=0.2mm]
    {angle=a--b--c};
    \draw [dashed] (9.50591*\n+\x, 8.92929*\n) -- (9.50591*\n+\x+1.5*\n, 8.92929*\n);
    \node at (8.55*\n+\x,10.*\n) {\sizeText $\alpha_1$};
    
    \path []
  (6.25595*\n+\x, 4.93712*\n) coordinate (a) node[right] {}
    -- (5.25595*\n+\x, 4.93712*\n) coordinate (b) node[left] {}
    -- (0+\x,0) coordinate (c) node[above right] {}
    pic[draw=orange, ->, angle eccentricity=1.2, angle radius=0.63cm,line width=0.2mm]
    {angle=a--b--c};
    \draw [dashed] (5.25595*\n+\x, 4.93712*\n) -- (5.25595*\n+\x+1.5*\n, 4.93712*\n);
    \node at (5.25595*\n+\x-1.05*\n, 1.05*\n+4.93712*\n) {\sizeText $\alpha_2$};
    
    \node at (-4.5*\n,10*\n) {\sizeText (a)};
    \node at (11*\n,10*\n) {\sizeText (b)};

	\end{tikzpicture}
    \caption{Two degenerated quadrilaterals corresponding to the extremal angles $\phi_{\text{max}}$ and $\phi_{\text{min}}$.
    In the case of $\phi_{\text{max}}$, the polygon is deformed to a triangle with an additional line (a), while the angle $\phi_{\text{min}}$ allows formation of a triangle (b).
    The extremal instances of $\phi$ are the only cases, for which both of the solutions for angles $\alpha_i$ are equal, $\alpha_1 = \widetilde{\alpha}_1$ and $\alpha_2 = \widetilde{\alpha}_2$.
    The relationship between angles read $\alpha_1 = \alpha_2 +\pi$ (a) and $\alpha_1 = \alpha_2$ (b).
    }
    \label{fig:phi_max_min_Uffe_algorithm}
\end{figure}
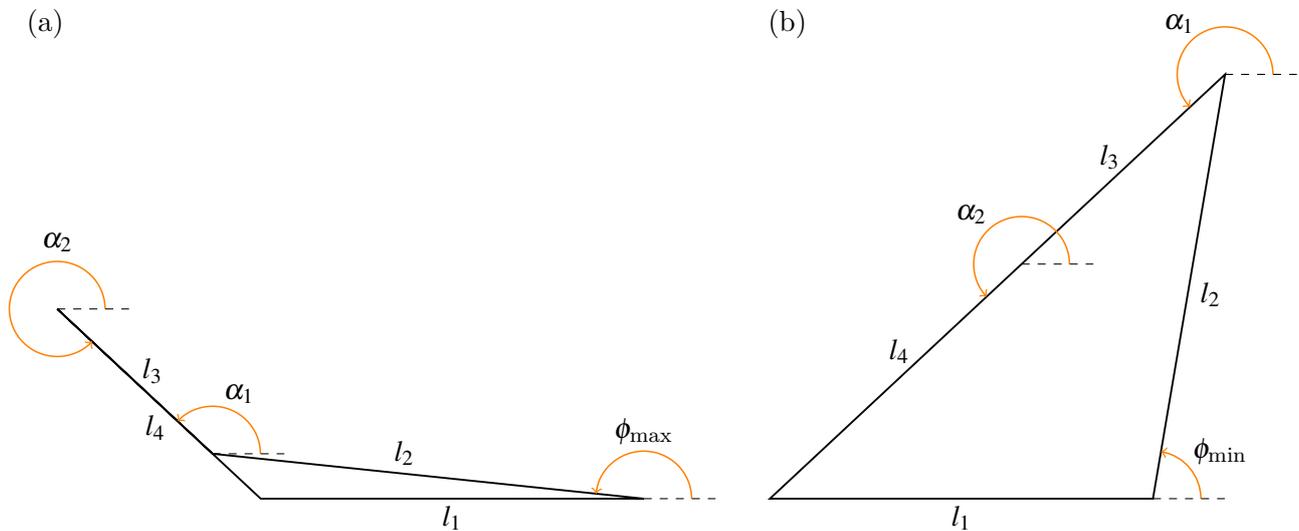

Likewise, similar reasoning applied to the first two columns yields another subset of angles $\phi$ for which solutions ($\beta_1$, $\beta_2$) and ($\widetilde{\beta}_1$, $\widetilde{\beta}_2$) exist.
Summing up these considerations, we arrive at the intersection of two possible subsets for the phase $\phi$, yielding $\phi \in [\Phi_{\text{min}},\Phi_{\text{max}}]$.

Finally, it is possible to determine also the last block of the matrix $D$, which is given by Eq.~(\ref{eq:D_block_uffe}).
However, for a randomly chosen phase $\phi \in [\Phi_{\text{min}},\Phi_{\text{max}}]$, the block $D$ will not correspond to the expected block of the initial matrix $U$ -- the amplitudes of the complex numbers will not be the same.
To uniquely determine the unitary $4\times 4$ matrix $U$ it is enough to specify its three $2\times 2$ blocks; however, similar reasoning does not apply to bistochastic matrices, for which the resulting solution still has one degree of freedom,
\begin{equation}
    B(x)=
    \begin{pmatrix}
        B_{11} & B_{12} & B_{13} & B_{14} \\
        B_{21} & B_{22} & B_{23} & B_{24} \\
        B_{31} & B_{32} & B_{33}-x & B_{34}+x \\
        B_{41} & B_{42} & B_{43}+x & B_{44}-x \\
    \end{pmatrix}.
\end{equation}

Thus, it is possible that the unitary matrix obtained by the application of Eq.~(\ref{eq:D_block_uffe}) yields a member of the family $B(x)$ different from the desired matrix $B = B(0)$.
To solve this problem, we must search for the whole set of admissible solutions: all phases $\phi \in [\Phi_{\text{min}},\Phi_{\text{max}}]$ and resulting four different choices of angles $(\alpha_1,\alpha_2,\beta_1,\beta_2)$, $(\widetilde{\alpha_1},\widetilde{\alpha_2},\beta_1,\beta_2)$, $(\alpha_1,\alpha_2,\widetilde{\beta_1},\widetilde{\beta_2})$, and $(\widetilde{\alpha_1},\widetilde{\alpha_2},\widetilde{\beta_1},\widetilde{\beta_2})$.
Consequently, we conclude that if the search for a unitary matrix corresponding to the $B = B(0)$ matrix is successful, matrix $B$ is unistochastic.
Conversely, failure to find a corresponding unitary matrix renders $B$ non-unistochastic.

The exact implementation of the algorithm determining the unistochasticity problem for dimension 4 in the Mathematica language is available online~\cite{algorithm_Rajchel}.
Using this program we are able to verify the statement from the work by Bengtsson et al.~\cite{Bengtsson_2005} concerning non-unistochastic matrices in any neighborhood of the van der Waerden matrix $W_4$.
The graphical representation is shown in Fig.~\ref{fig:not_unistochastic_line}.

\begin{figure}[H]
	\begin{center}
		\begin{tikzpicture}
			\node (myfirstpic) at (0,0) {\includegraphics[width=0.8\columnwidth]{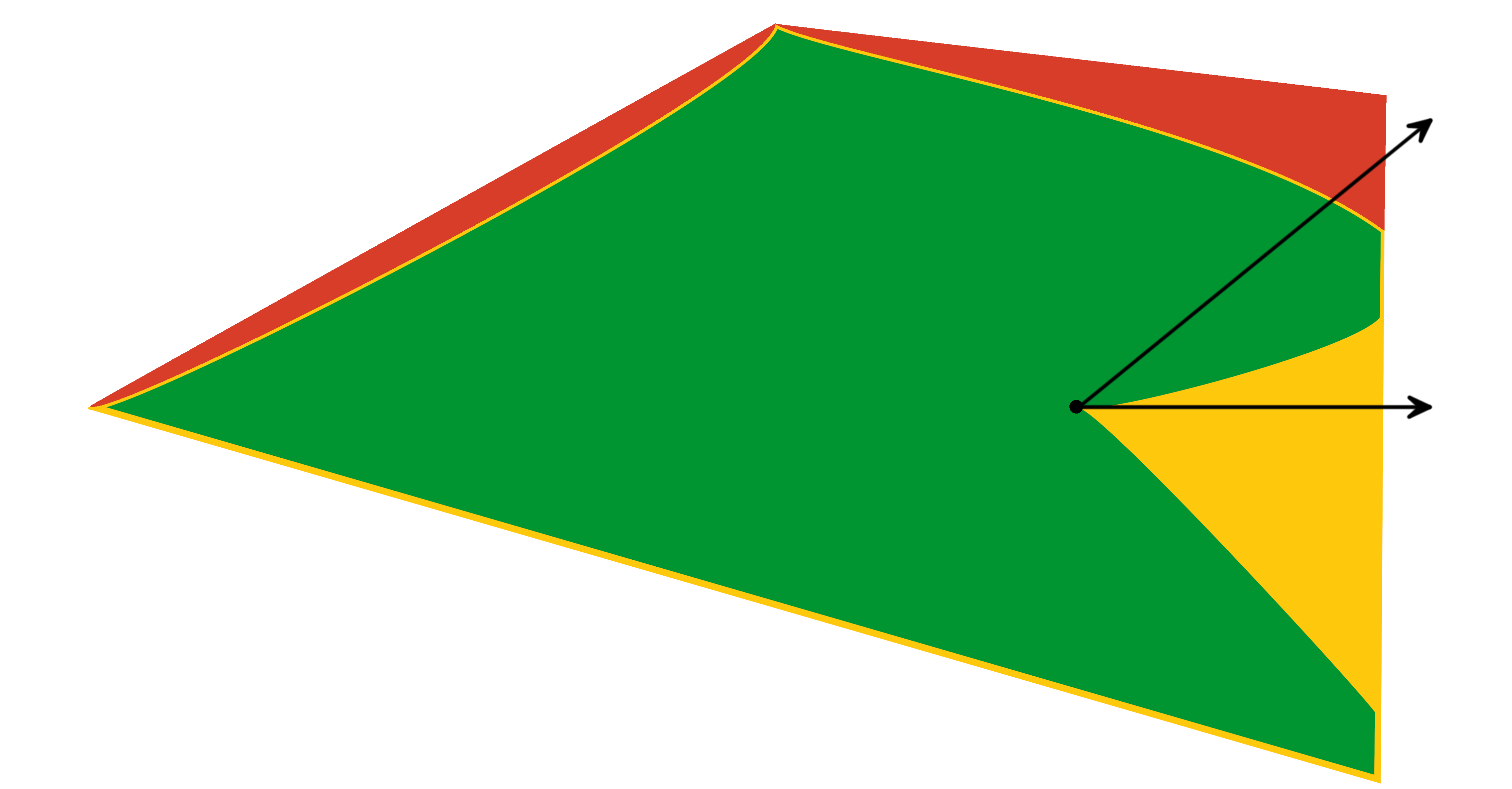}};		
			\node [label={$v_1=\begin{pmatrix}9 &-3 &-3 &-3\\-3&1&1&1\\-3&1&1&1\\-3&1&1&1\end{pmatrix}\!,$}] at (-2.8,-6){};
			\node [label={$v_2=\begin{pmatrix}7 &-1 &-1 &-5\\-1&-1&-1&3\\-1&-1&-1&3\\-5&3&3&-1\end{pmatrix}$}] at (2.,-6){};	
			
			\node at (0.14\columnwidth,-0.0\columnwidth) {\color{black}$W_4$};
			\node at (0.29\columnwidth,-0.07\columnwidth) {\color{black}$\mathcal{L}_4$};
			\node at (-0.05\columnwidth,0.01\columnwidth) {\color{black}$\mathcal{U}_4$};
			\node at (0.39\columnwidth,-0.01\columnwidth) {\color{black}$v_1$};	
			\node at (0.39\columnwidth,0.15\columnwidth) {\color{black}$v_2$};		
			\node at (0.295\columnwidth,0.135\columnwidth) {\color{black}$\mathcal{B}_4$};					
		\end{tikzpicture}
	\end{center}
	\caption{A cross-section through the Birkhoff polytope $\mathcal{B}_4$, given by the van der Waerden matrix $W_4$ and two directions $v_1$ and $v_2$.
	The green set of unistochastic matrices $\mathcal{U}_4$ does not fully surround the matrix $W_4$ since any deviation from $W_4$ in the direction $v_1$ yields a non-unistochastic matrix belonging to the yellow set of bracelet matrices $\mathcal{L}_4$.
	All the other matrices, marked in red, do not satisfy the bracelet conditions.
	Reproduced verbatim from the joint paper~\cite{Rajchel_algebraic_structures}.
	}\label{fig:not_unistochastic_line}
\end{figure}

As a final remark regarding unistochastic matrices of size 4, we have established that they do not form a monoid, which is a mathematical structure similar to a semigroup, albeit slightly stronger.
To clarify, the set $\mathcal{X}$ is a monoid if for any $x,y\in \mathcal{X}$ the binary operation does not lead out of the set, $xy\in \mathcal{X}$.
Furthermore, distinguishing it from a semigroup, monoid contains the identity element $e$ such that for any $x\in \mathcal{X}$ we have $ex=xe=x$.
The set of unistochastic matrices trivially does not form a group, which is an even stronger notion satisfying the condition of the existence of an inverse element $x^{-1}$ for every $x\in \mathcal{X}$ such that $xx^{-1} = x^{-1}x = e$.
To see why is it so, consider the van der Waerden matrix $W_N$.
Its determinant reads zero; therefore, it admits no inverse matrix.

To conclude, the set of unistochastic matrices of dimension 4 does not form a monoid, which we prove using the bistochastic matrix
\begin{equation}
J=\frac{1}{100}
\begin{pmatrix}
24 & 16 & 35 & 25 \\
38 & 21 & 12 &29 \\
23 & 24 & 14 & 39 \\
15 & 39 & 39 & 7
\end{pmatrix}.
\end{equation}
The algorithm presented in this section, implemented in Mathematica~\cite{algorithm_Rajchel}, verifies that $J$ is unistochastic while $J^2$ is not.
Thus, the condition for forming a monoid is not satisfied.

\section{Robust Hadamard matrices}\label{sec:robust_Hadamard_matrices}
In this section, we shall focus on the application of the subset of Hadamard matrices, defined in Section~\ref{sec:sets_of_matrices}, to the study of the unistochasticity problem.
To this end, we start by defining the central notion, introduced in the joint paper~\cite{Rajchel_robust}.

\begin{definition}[Robust Hadamard matrix]
    A Hadamard matrix $H$ is called robust if for any chosen indices $i\neq j$ the matrix formed by $\begin{pmatrix}
    H_{ii} & H_{ij} \\
    H_{ji} & H_{jj}
    \end{pmatrix}$
    is also Hadamard.
\end{definition}

The name of robust Hadamard matrices stems from the observation that any projection of such a matrix into a 2-dimensional subset yields a Hadamard matrix.
Robust Hadamard matrices are also connected to another subset of Hadamard matrices.

\begin{definition}[Skew Hadamard matrix]
    A real Hadamard matrix $H$ is called skew if $H+H^T = 2\mathbb{I}$.
\end{definition}

The simplest example of a skew Hadamard matrix is provided by the following matrix of order 2
\begin{equation}
    \begin{pmatrix}
            1 & 1\\
            -1 & 1
    \end{pmatrix}.
\end{equation}

The connection between these two sets of matrices, valid for any dimension $N$, is given by the subsequent remark.

\begin{remark}
    Every skew Hadamard matrix is robust Hadamard.
\end{remark}
\begin{proof}
    Using the definition, we observe that every diagonal element of a skew Hadamard matrix equals 1.
    Furthermore, any pair of off-diagonal elements, $H_{ij}$ and $H_{ji}$, consists of entries of the opposite sign.
    Therefore, we conclude that $\begin{pmatrix}
    H_{ii} & H_{ij} \\
    H_{ji} & H_{jj}
    \end{pmatrix}$ is a Hadamard matrix for any $i\neq j$.
\end{proof}

An additional connection to another set of matrices can be found using symmetric conference matrices.

\begin{definition}[Symmetric conference matrix]
    A symmetric matrix $C$ of size $N$, with elements equal to 0 on the diagonal and entries $\pm 1$ outside of it, is called symmetric conference if it satisfies the orthogonality condition $CC^T = (N-1)\mathbb{I}$.
\end{definition}

Based on the above definition, we provide the connection to the robust Hadamard matrices, also valid for any dimension $N$.

\begin{remark}
    Every matrix of the form $H = C + i\,\mathbb{I}$, where $C$ is a symmetric conference matrix, is a robust Hadamard matrix.
\end{remark}
\begin{proof}
Verification of this statement relies on the fact that every $2\times 2$ submatrix, corresponding to $i$ and $j$ indices, is Hadamard.
\end{proof}

Having finished the discussion of the connections to the widely known sets of matrices, let us apply the notion of robust Hadamard matrices to rays and counter-rays inside the Birkhoff polytope $\mathcal{B}_N$.

\section{Rays and counter-rays of the Birkhoff polytope}\label{sec:rays_and_counter-rays}

We shall start by establishing the notion of rays and counter-rays, whose names are derived from their geometrical properties.
More specifically, a ray is a line connecting the central matrix $W_N$ with one of the permutation matrices that form vertices of the polytope, whereas counter-ray is its extension inside $\mathcal{B}_N$, starting from $W_N$, in the opposite direction from the permutation matrix.

\begin{definition}[Ray and counter-ray of $\mathcal{B}_N$]
    A subset of the Birkhoff polytope $\mathcal{R}$ is called ray if it is formed by matrices that are convex combinations of a given permutation matrix $P$ and the van der Waerden matrix $W_N$:
    \begin{equation}
        \mathcal{R} = \{\alpha P + (1-\alpha) W_{N} \;| \; \alpha \geq 0\}.
    \end{equation}
    Analogously, a subset of the Birkhoff polytope $\widetilde{\mathcal{R}}$ is called counter-ray if
    \begin{equation}
        \widetilde{\mathcal{R}} = \{\alpha P + (1-\alpha) W_{N} \;| \; \alpha \leq 0\}.
    \end{equation} 
\end{definition}

Both of the above notions are illustrated in a schematic drawing (Fig.~\ref{fig:rays_counter-rays}).

\begin{figure}[H]
    \input{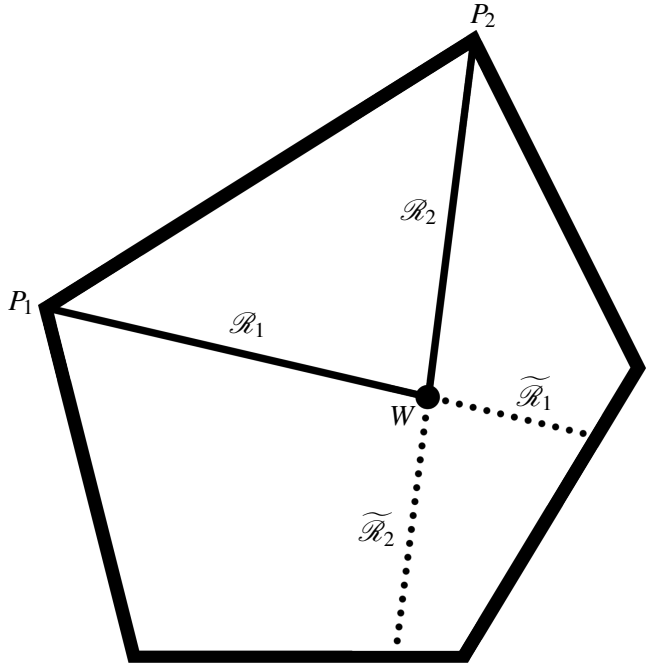}
    \caption{Sketch of a generic cross-section through the Birkhoff polytope $\mathcal{B}_N$ containing the van der Waerden matrix $W_N$ and two permutation matrices, $P_1$ and $P_2$.
    The solid lines inside the Birkhoff polypote denote the rays $\mathcal{R}_1$ and $\mathcal{R}_2$, while the dotted lines depict the counter-rays $\widetilde{\mathcal{R}_1}$ and $\widetilde{\mathcal{R}_2}$.
    }
    \label{fig:rays_counter-rays}
\end{figure}

Following the definitions of rays and counter-rays we will focus on the unistochasticity of these subsets.

\begin{lemma}
    If there exists a robust Hadamard matrix\, $H$ of size $N$ then all of the rays and counter-rays of the Birkhoff polytope $\mathcal{B}_N$ are unistochastic.
\end{lemma}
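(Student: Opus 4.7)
The plan is to reduce to the case $P = \mathbb{I}$ and then construct the realizing unitary directly as a real linear combination of $\mathbb{I}$ and a suitably phased version of the given robust Hadamard. First I would observe that $PW_N = W_N$ for every permutation $P$, so $\alpha P + (1-\alpha)W_N = P\bigl(\alpha\mathbb{I} + (1-\alpha)W_N\bigr)$; hence if a unitary $U_0$ realizes the case $P=\mathbb{I}$ then $PU_0$ realizes the general case, and it suffices to prove unistochasticity of $\alpha\mathbb{I} + (1-\alpha)W_N$ for every admissible $\alpha \in [-1/(N-1),\, 1]$ (the range keeping the matrix inside $\mathcal{B}_N$, which is exactly the union of the ray $\alpha \geq 0$ and the counter-ray $\alpha \leq 0$).

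Second I would rephase the given robust Hadamard $H$. Setting $D = \mathrm{diag}(\overline{H_{11}}, \ldots, \overline{H_{NN}})$, the matrix $\tilde H := DH$ still has unimodular entries and still satisfies $\tilde H \tilde H^\dagger = N\mathbb{I}$. A direct check on each $2\times 2$ principal submatrix, using column orthogonality of the corresponding $2\times 2$ Hadamard block of $H$, shows that $\tilde H$ is again robust Hadamard, with the additional features $\tilde H_{ii} = 1$ for all $i$ and $\tilde H + \tilde H^\dagger = 2\mathbb{I}$. With this normalization in hand I would take the one-parameter ansatz $U = a\mathbb{I} + b\tilde H$ with $a,b \in \mathbb{R}$. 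The identity $\tilde H + \tilde H^\dagger = 2\mathbb{I}$ collapses the cross terms in $UU^\dagger$ to a scalar, so $UU^\dagger = (a^2 + 2ab + Nb^2)\,\mathbb{I}$, while $|U_{ii}|^2 = (a+b)^2$ and $|U_{ij}|^2 = b^2$ for $i \neq j$.

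Matching with the target $\alpha\mathbb{I} + (1-\alpha)W_N$ forces $b^2 = (1-\alpha)/N$ and $(a+b)^2 = \alpha + (1-\alpha)/N$, which together imply $a^2 + 2ab = \alpha$ and are automatically consistent with the unitarity condition $a^2 + 2ab + Nb^2 = 1$. Solving the quadratic gives $a = -b \pm \sqrt{b^2 + \alpha}$, whose discriminant equals $(1 + (N-1)\alpha)/N$ and is nonnegative precisely when $\alpha \geq -1/(N-1)$; this is exactly the admissible range, so a real solution exists along the entire union of the ray and counter-ray. The main obstacle I anticipate is the rephasing step: without the identity $\tilde H + \tilde H^\dagger = 2\mathbb{I}$ the off-diagonal entries of $UU^\dagger$ fail to vanish and the one-parameter ansatz breaks down, so the heart of the argument is verifying that every robust Hadamard can be brought into this canonical form while preserving robustness.
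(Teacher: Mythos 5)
Your proposal is correct and follows essentially the same route as the paper's proof: your identity $\tilde H + \tilde H^\dagger = 2\mathbb{I}$ for the diagonally rephased robust Hadamard matrix is exactly the paper's auxiliary relation $HD^\dagger + DH^\dagger = 2\mathbb{I}$, and your ansatz $U = a\mathbb{I} + b\tilde H$ coincides, up to left multiplication by a diagonal unitary, with the paper's $U = \sqrt{\alpha\mathstrut}\,D + \sqrt{\beta}\,(H-D)$. The reduction to $P=\mathbb{I}$ via permutations and the matching of $|U_{ij}|^2$ with the ray and counter-ray entries are likewise identical, so only the bookkeeping differs.
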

\begin{proof}
    Let us begin by showing the auxiliary statement concerning the robust Hadamard matrix $H$
    \begin{equation}\label{eq:auxiliary_robust_rays}
        H D^\dagger + D H^\dagger = 2\mathbb{I},
    \end{equation}
    where $D$ is a diagonal matrix formed by the diagonal entries of $H$.
    Note that the diagonal elements of the left-hand side amount to $2H_{ii}\overline{H_{ii}} = 2$, whereas the off-diagonal terms of the left-hand side for $i\neq j$ read $H_{ij}\overline{H_{jj}} + \overline{H_{ji}}H_{ii}$.    
    To demonstrate that these off-diagonal terms are zero, we observe that $HD^\dagger$ is also robust Hadamard.
    Then, considering the $2\times 2$ submatrix $M_2$ of $HD^\dagger$ determined by indices $i$ and $j$, we use its robust property,
    \begin{equation}
    M_{2}^{} M_2^\dagger = 2\mathbb{I} = 
        \begin{pmatrix}
        2 & H_{ij}\overline{H_{jj}} + \overline{H_{ji}}H_{ii} \\
        \overline{H_{ij}}H_{jj} + H_{ji}\overline{H_{ii}} & 2
        \end{pmatrix}.
    \end{equation}
    Since off-diagonal terms of the above matrix are zero and coincide with off-diagonal terms of $H D^\dagger + D H^\dagger$, this proves Eq.~(\ref{eq:auxiliary_robust_rays}).
    
    Now, we shall demonstrate the unistochasticity of any matrix $R$ belonging to the ray or the counter-ray connected with the identity matrix $\mathbb{I}$, provided the existence of an appropriate robust Hadamard matrix $H$.
    To this end we construct a corresponding matrix $U$
    \begin{equation}
        U = \sqrt{\alpha\mathstrut} D + \sqrt{\beta} (H - D),
    \end{equation}
    where $D$ denotes the diagonal of $H$ and the real parameters $\alpha$ and $\beta$ can be determined from the equation $(N-1) \beta +\alpha = 1$.
   By this way we can achieve \emph{any} matrix $R$ from the ray and counter-ray connected to the identity matrix, by a suitable choice of the parameter $\alpha$ or $\beta$.
    Further, we prove that $U$ is unitary by verifying its orthogonality conditions
    \begin{equation}
    \begin{split}
        UU^\dagger &= \big(\sqrt{\alpha\mathstrut} D + \sqrt{\beta} (H - D)\big)\big(\sqrt{\alpha\mathstrut} D^\dagger + \sqrt{\beta} (H^\dagger - D^\dagger) \big)  \\
        &= \beta HH^\dagger - \sqrt{\beta}(\sqrt{\beta}-\sqrt{\alpha\mathstrut})\big( HD^\dagger +DH^\dagger \big) + (\sqrt{\beta} - \sqrt{\alpha\mathstrut})^2 DD^\dagger = \big( (N-1) \beta +\alpha \big) \mathbb{I} = \mathbb{I}.
    \end{split}
    \end{equation}
    
    Finally, we observe that the matrix $R$ corresponds to the studied matrix $U$ via element-wise operation, $R_{ij} = |U_{ij}|^2$.
    This proves that the ray and counter ray connected with the identity matrix are unistochastic, provided a robust Hadamard matrix of size $N$ exists.
    Moreover, a similar result holds for all rays and counter-rays since any of them can be achieved from the ray studied previously by a multiplication by a proper permutation matrix.
    Therefore, the lemma is proved in the general case.
\end{proof}

Additionally, if the robust Hadamard matrix is real so that $U$ becomes orthogonal, the matrix $R$ is not only unistochastic but also orthostochastic.
By this we mean that the underlying unitary matrix is orthogonal.
Using the connection between robust Hadamard matrices and other sets of matrices we are able to demonstrate the following remarks.

\begin{remark}
    For every dimension $N$, for which a skew Hadamard matrix exists, all rays and counter-rays of the Birkhoff polytope $\mathcal{B}_N$ are unistochastic, and, in particular, orthostochastic.
\end{remark}

\begin{remark}
    For every dimension $N$ for which a symmetric conference matrix exists all rays and counter-rays of the Birkhoff polytope $\mathcal{B}_N$ are unistochastic.
\end{remark}

Let us note that the existence of skew Hadamard matrices of orders $N = 4k$ is known for $k< 69$, with the proper construction done by Paley in 1933~\cite{Paley_1933}.
Also, there are infinitely many higher dimensions for which their existence is confirmed~\cite{Koukouvinos_2008}.
Similarly, it is known that for $N=$ 6, 10, 14, 18 there exists a symmetric conference matrix~\cite{van_Lint_1966}.
Nonetheless, an analogous construction is not known for the order $N=22$.
Concluding the research on ray and counter-rays we are able to prove the following theorem.

\begin{theorem}\label{thm:robust_Hadamard_rays}
For any even dimension $N<22$ all rays and counter-rays of the Birkhoff polytope $\mathcal{B}_N$ are
\begin{enumerate}
    \item unistochastic, as well as orthostochastic, (for $N=$ $2$, $4$, $8$, $12$, $16$, $20$) or
    \item unistochastic (for $N=$ $6$, $10$, $14$, $18$).
\end{enumerate}
\end{theorem}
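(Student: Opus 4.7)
The plan is to invoke the lemma already proved in Section~\ref{sec:rays_and_counter-rays} together with the two remarks that follow it, reducing the theorem entirely to the existence of suitable combinatorial objects (skew Hadamard matrices and symmetric conference matrices) in each of the listed dimensions. The lemma gives unistochasticity whenever a robust Hadamard matrix of order $N$ exists, and the stronger orthostochasticity whenever the robust Hadamard matrix is real, which happens in particular when a skew Hadamard matrix exists. Thus the work splits into two independent case checks based on $N \bmod 4$.

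First I would handle the dimensions of the form $N = 4k$, namely $N \in \{4, 8, 12, 16, 20\}$, together with the trivial case $N = 2$. For $N = 2$ the matrix $\bigl(\begin{smallmatrix} 1 & 1 \\ -1 & 1 \end{smallmatrix}\bigr)$ is skew Hadamard, so the orthostochasticity claim follows from the first remark after the lemma. For $N = 4k$ with $k \in \{1,2,3,4,5\}$ one cites Paley's 1933 construction~\cite{Paley_1933}, which produces skew Hadamard matrices in every order $4k$ for $k < 69$; again the first remark delivers the orthostochastic (and hence unistochastic) conclusion.

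Next I would treat the dimensions $N \equiv 2 \pmod 4$ with $N \neq 2$, i.e.\ $N \in \{6, 10, 14, 18\}$. Here skew Hadamard matrices cannot exist (any real Hadamard matrix has order $1,2$ or a multiple of $4$), so one needs a different source of robust Hadamard matrices. The appropriate tool is the symmetric conference matrix: by the remark in Section~\ref{sec:robust_Hadamard_matrices}, if $C$ is a symmetric conference matrix of size $N$, then $H = C + i\,\mathbb{I}$ is robust Hadamard. Existence of symmetric conference matrices in these four dimensions is a classical result of van Lint and Seidel~\cite{van_Lint_1966}, so the second remark after the lemma yields unistochasticity for all rays and counter-rays of $\mathcal{B}_N$ in these four orders.

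There is no real obstacle here since the heavy lifting was done in the preceding lemma; the statement is essentially a compilation result. The only subtlety to flag is that for $N \in \{6, 10, 14, 18\}$ the conclusion is weaker (unistochastic but not necessarily orthostochastic), because the robust Hadamard matrix produced from a symmetric conference matrix is genuinely complex and the associated unitary in the proof of the lemma need not be real. This explains why the statement of the theorem bifurcates into the two cases, and why one cannot strengthen part~(2) to orthostochasticity with the present argument. Finally, the cutoff $N < 22$ reflects exactly the point at which the van Lint--Seidel classification leaves open the $N = 22$ case, not a limitation of the reduction itself.
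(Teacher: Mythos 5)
Your proposal is correct and follows essentially the same route as the paper: the theorem is indeed a compilation of the preceding lemma (robust Hadamard $\Rightarrow$ unistochastic rays and counter-rays, with orthostochasticity in the real case) combined with the known existence of skew Hadamard matrices for $N=2$ and $N=4k$ (Paley) and of symmetric conference matrices for $N=6,10,14,18$ (van Lint--Seidel). Your remarks on why the two cases bifurcate and why the bound is $N<22$ match the paper's discussion exactly.
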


The above property holds also in infinitely many higher dimensions $N$ for which symmetric conference matrices or skew Hadamard matrices are known.


\section{Unistochasticity of certain triangles embedded inside the Birkhoff polytope}\label{sec:complementary_matrices}
In this section we shall extend the work done in 1991 by Au-Yeung and Cheng~\cite{Au-Yeung_1991}, concerning convex combinations of pairs of permutation matrices.
The sets formed by the combinations shall be called edges, even though some of them form diagonals inside the Birkhoff polytope $\mathcal{B}_N$.

\begin{definition}
    A set of convex combinations of two permutation matrices $P$ and $Q$ forms an edge $\mathcal{E}$,
    \begin{equation}
        \mathcal{E} = \{\alpha P + (1-\alpha) Q \; |\; \alpha \in [0,1] \}.
    \end{equation}
\end{definition}

Au-Yeung and Cheng introduced the notion of complementary permutations.

\begin{definition}[Complementary permutations~\cite{Au-Yeung_1991}]
    Two permutations $P$ and $Q$ of size $N$ are called complementary if for all indices $i$, $j$, $k$, and $l \in \{1,...,N\}$ equalities $P_{ij} = P_{kl} = Q_{il} = 1$ imply $Q_{kj}=1$. 
\end{definition}

In other words, two matrices are complementary if for those non-zero elements of two matrices that share the same row $P_{ij} = Q_{il} = 1$ and the same column $P_{kl} = Q_{il} = 1$, the symmetric element of the $Q$ matrix is also non-zero, $Q_{kj} = 1$.
Using this notation the following statement was proved~\cite{Au-Yeung_1991}.

\begin{proposition}[\cite{Au-Yeung_1991}]
    If permutation matrices $P$ and $Q$ of size $N$ are complementary then the entire edge $PQ$ is orthostochastic. 
    If they are not complementary, then the edge is not unistochastic, apart from the permutation matrices themselves.
\end{proposition}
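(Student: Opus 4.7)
The plan is to treat the two implications separately, and in both cases to exploit the combinatorial rigidity of the supports of $P$ and $Q$.

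For the forward direction, I would first interpret the complementarity condition as a closure property: whenever $\mathrm{supp}(P)$ and $\mathrm{supp}(Q)$ meet in a "three-corner" rectangle of indices $(i,j), (k,l), (i,l)$, the fourth corner $(k,j)$ must lie in $\mathrm{supp}(Q)$. Combined with the fact that $P$ and $Q$ each have exactly one nonzero entry per row and column, iterated application of the closure forces the union $\mathrm{supp}(P) \cup \mathrm{supp}(Q)$ to decompose into disjoint $1\times 1$ blocks (coinciding fixed points of $P$ and $Q$) and $2\times 2$ blocks on which one matrix restricts to $\mathbb{I}_2$ and the other to the swap. After a simultaneous permutation of rows and columns, $\alpha P+(1-\alpha)Q$ therefore becomes block-diagonal, each nontrivial block equalling $\begin{pmatrix} \alpha & 1-\alpha \\ 1-\alpha & \alpha \end{pmatrix}$. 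Since this is the entrywise square of the orthogonal matrix $\begin{pmatrix} \sqrt{\alpha} & \sqrt{1-\alpha} \\ \sqrt{1-\alpha} & -\sqrt{\alpha} \end{pmatrix}$, the whole edge is orthostochastic.

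For the reverse direction, I would argue via the bracelet conditions of Section~\ref{sec:bracelet_matrices}. Assuming non-complementarity, one obtains indices $i,k,j,l$ (all distinct after a short sanity check, since the permutation property rules out the degenerate cases) with $P_{ij}=P_{kl}=Q_{il}=1$ but $Q_{kj}=0$. A direct inspection of $B = \alpha P + (1-\alpha)Q$ reveals that row $i$ has exactly two nonzero entries, $\alpha$ at column $j$ and $1-\alpha$ at column $l$, while row $k$ has $\alpha$ at column $l$, $0$ at column $j$, and a single entry $1-\alpha$ at some column $m^{*}\notin\{j,l\}$. Among the overlap products $\sqrt{B_{im}B_{km}}$, all but the one at column $l$ vanish, so the row bracelet inequality~(\ref{eq:bracelet_row}) reduces to $2\sqrt{\alpha(1-\alpha)}\leq \sqrt{\alpha(1-\alpha)}$, which fails for every $\alpha\in(0,1)$. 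Hence $B$ is not even bracelet, let alone unistochastic.

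The main obstacle I anticipate is the combinatorial step in the forward direction: making rigorous the claim that complementarity forces a genuine direct-sum decomposition into $2\times 2$ and $1\times 1$ blocks, ruling out partial overlaps or longer cycles. Once this block structure is in place, the orthostochasticity statement is essentially a one-line $2\times 2$ construction, and the reverse direction follows at once from the single bracelet inequality isolated above.
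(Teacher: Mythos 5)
The paper does not actually prove this proposition---it is quoted from Au-Yeung and Cheng with only a citation---so there is no in-text argument to compare against; judged on its own, your proof is correct in both directions. The combinatorial step you flag as the main obstacle closes cleanly once you observe that, writing $P$ and $Q$ as permutations $\sigma$ and $\tau$ (with $P_{ij}=1$ iff $j=\sigma(i)$), the complementarity condition is \emph{equivalent} to $\tau\sigma^{-1}$ being an involution: the hypothesis $P_{ij}=P_{kl}=Q_{il}=1$ pins down $j=\sigma(i)$, $l=\tau(i)$, $k=\sigma^{-1}\tau(i)$, and the conclusion $Q_{kj}=1$ reads $\tau\sigma^{-1}\tau\sigma^{-1}=\mathrm{id}$. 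The orbits of the involution $\tau\sigma^{-1}$ then give exactly your $1\times1$ blocks (positions where $P$ and $Q$ coincide---note these need not be diagonal fixed points, e.g.\ $P=Q\neq\mathbb{I}$) and $2\times2$ blocks, with no longer cycles possible. Two small corrections to the write-up: the decomposition requires \emph{independent} permutations of rows and of columns (the row pair $\{i,\sigma^{-1}\tau(i)\}$ is matched with the column pair $\{\sigma(i),\tau(i)\}$, which is generally a different pairing), not a simultaneous conjugation; this is harmless since orthostochasticity is invariant under $B\mapsto RBS$ for permutations $R,S$. In the converse direction your bookkeeping is right: $i\neq k$ and $j\neq l$ follow from the permutation property, $m^{*}=\tau(k)\neq l$ because $\tau(i)=l$ and $\tau$ is injective, and the single nonvanishing overlap $\sqrt{\alpha(1-\alpha)}$ at column $l$ violates the row bracelet inequality~(\ref{eq:bracelet_row}) for every $\alpha\in(0,1)$ (equivalently, rows $i$ and $k$ of any candidate unitary would have inner product of modulus $\sqrt{\alpha(1-\alpha)}>0$), so the interior of the edge is not unistochastic.
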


Slightly modifying the definition of Au-Yeung and Cheng, we propose a stronger notion.

\begin{definition}[Strongly complementary permutations]
    Two permutations $P$ and $Q$ of size $N$ are called strongly complementary if they are complementary and if $P_{ij} = 1$ implies $Q_{ij} = 0$.
\end{definition}

The dimensions $N$ for which a pair of strongly complementary matrices exists are necessarily even.
Furthermore, for all even dimensions, such a pair exists.
A complementary matrix to the identity might share some entries with it, while for a strongly complementary matrix it is impossible.
Now, we shall prove a statement concerning strongly complementary matrices.

\begin{proposition}
    If $P$ and $Q$ are strongly complementary matrices of size $N$, for which a robust Hadamard matrix exists, then the triangle $\Delta (P$, $Q$, $W_N)$, formed by convex combinations of permutation matrices and the van der Waerden matrix, is unistochastic.
    Furthermore, if there exists a real robust Hadamard matrix, then the triangle is orthostochastic.
\end{proposition}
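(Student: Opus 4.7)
The strategy is to generalize the ray construction of Theorem~\ref{thm:robust_Hadamard_rays} from a one-parameter family to the two-parameter family of points inside the triangle, by exploiting simultaneously the strong complementarity of $P$ and $Q$ and the robustness of the assumed Hadamard matrix $H$ of order $N$.

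First, using the invariance of the unistochastic set under simultaneous left and right multiplication by permutation matrices, I would reduce to the canonical case $P = \mathbb{I}$. The strong complementarity condition then forces the partner $Q$ to be the permutation matrix of a fixed-point-free involution $\sigma$, which in particular implies that $N$ is even. The triangle is now parametrized by points $B = \alpha \mathbb{I} + \beta Q + \gamma W_N$ with nonnegative weights summing to one, whose entries are $a := \alpha + \gamma/N$ on the diagonal, $b := \beta + \gamma/N$ on the positions of $Q$, and $c := \gamma/N$ elsewhere.

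Next, I would propose the candidate unitary
\begin{equation*}
U = \sqrt{a}\, D + \sqrt{b}\, H_Q + \sqrt{c}\,(H - D - H_Q),
\end{equation*}
where $D = \mathrm{diag}(H)$ is the diagonal of $H$ and $H_Q = Q \circ H$ denotes the entrywise product of $Q$ with $H$, so that it is supported exactly on the positions of $Q$. Since the three summands have pairwise disjoint supports and $|H_{ij}| = 1$ everywhere, the identity $|U_{ij}|^2 = B_{ij}$ is automatic; and when $H$ can be chosen real (for instance a skew Hadamard matrix), the candidate $U$ is itself real, so that the conclusion upgrades from unistochastic to orthostochastic.

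The bulk of the work is in verifying $UU^\dagger = \mathbb{I}$. I would expand the product into six natural pieces. Three of them reduce immediately: $HH^\dagger = N\mathbb{I}$, and $DD^\dagger = H_Q H_Q^\dagger = \mathbb{I}$ because $\mathbb{I}$ and $Q$ are permutations supported on unimodular entries of $H$. The cross term $HD^\dagger + DH^\dagger = 2\mathbb{I}$ is exactly the auxiliary identity~(\ref{eq:auxiliary_robust_rays}) already established in the proof of Theorem~\ref{thm:robust_Hadamard_rays}. The cross term $DH_Q^\dagger + H_Q D^\dagger$ vanishes entry by entry: its only candidate nonzero positions are $(i,\sigma(i))$, where it coincides with a column inner product of the principal $2\times 2$ submatrix of $H$ indexed by $\{i,\sigma(i)\}$, which is Hadamard by robustness. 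The step I expect to be the main obstacle is the remaining cross term $HH_Q^\dagger + H_Q H^\dagger$: for $i\neq j$ its $(i,j)$ entry is a partial row inner product of $H$ over the column pair $\{\sigma(i),\sigma(j)\}$, which is not controlled by the robustness of $H$ taken in isolation. Handling it will require combining the involutive, fixed-point-free pairing structure induced by $\sigma$ with the global Hadamard relations of $H$, pairing up the off-diagonal contributions by involution orbits, and it is this combinatorial bookkeeping that I expect to consume most of the argument.
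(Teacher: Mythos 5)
Your construction is exactly the one the paper uses: after reducing to $P=\mathbb{I}$ you take the entrywise square root of the triangle matrix and multiply it entrywise by the robust Hadamard matrix, which is precisely your $U=\sqrt{a}\,D+\sqrt{b}\,H_Q+\sqrt{c}\,(H-D-H_Q)$. Your bookkeeping for the harmless pieces is correct: $HH^\dagger=N\,\mathbb{I}$, $DD^\dagger=H_QH_Q^\dagger=\mathbb{I}$, $HD^\dagger+DH^\dagger=2\,\mathbb{I}$ by the auxiliary identity~(\ref{eq:auxiliary_robust_rays}), and $DH_Q^\dagger+H_QD^\dagger=0$ by robustness of the principal pairs $\{i,\sigma(i)\}$; the diagonal of $UU^\dagger$ then collapses to $a+b+(N-2)c=1$ as required. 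But you stop exactly where the proof has to be made: you flag $HH_Q^\dagger+H_QH^\dagger$ as the main obstacle and do not close it, so as written the proposal does not prove the proposition.

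The obstacle is genuine, not just bookkeeping you ran out of time for. The $(i,k)$ entry of that term, for $i\neq k$ and $k\neq\sigma(i)$, is $H_{i\sigma(i)}\overline{H_{k\sigma(i)}}+H_{i\sigma(k)}\overline{H_{k\sigma(k)}}$, and its vanishing is equivalent to the $2\times 2$ submatrix of $H$ on rows $\{i,k\}$ and columns $\{\sigma(i),\sigma(k)\}$ being Hadamard -- that is, to the column-permuted matrix $HQ$ being robust as well. This is an additional compatibility condition between $H$ and the involution $\sigma$, and it does \emph{not} follow from robustness of $H$ alone. It holds automatically for $N=4$, because there $\{\sigma(i),\sigma(k)\}$ is the full complement of $\{i,k\}$, so global row orthogonality together with robustness of the principal pair forces the partial sum to vanish; this is why small examples look safe. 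Already for $N=8$ it can fail: taking the skew Hadamard matrix of order $8$ from the Paley construction and pairing consecutive rows, a direct check produces row pairs $i,k$ for which $H_{i\sigma(i)}H_{k\sigma(i)}+H_{i\sigma(k)}H_{k\sigma(k)}=\pm 2$, so the candidate $U$ is not unitary at generic interior points of the triangle. To complete the argument one must either exhibit, for every robust Hadamard matrix, at least one fixed-point-free involution $\sigma$ making $HQ$ robust (one choice suffices, since the relevant triangles are permutation-conjugate to one another), or add this compatibility as a hypothesis. For what it is worth, the paper's own proof disposes of this step in a single sentence, asserting that the entrywise product ``is a unitary matrix'' without addressing this cross term, so you have correctly located the point where the published argument is also silent; but identifying the gap is not the same as filling it.
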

\begin{proof}
Due to the symmetry of the Birkhoff polytope, without loss of generality, we may assume that one of the permutation matrices is the identity, $P = \mathbb{I}_N$.
Then, every bistochastic matrix belonging to the edge connecting $\mathbb{I}_N$ and $Q$ can be written as
\begin{equation}
    \begin{pmatrix}
			b & a & 0 & 0 & \dots\\
			a & b & 0 & 0 & \dots \\
			0 & 0 & b	& a & \dots\\
			0 & 0 & a & b & \dots\\
			\vdots & \vdots & \vdots & \vdots & \ddots & \\
		\end{pmatrix},
\end{equation}
up to an irrelevant permutations of rows or columns.
Therefore, any matrix belonging to the triangle $\Delta (P$, $Q$, $W_N$) have the following form
\begin{equation}
    \begin{pmatrix}
		b & a & c & c & \dots\\
		a & b & c & c & \dots \\
		c & c & b & a & \dots\\
		c & c & a & b & \dots\\
		\vdots & \vdots & \vdots & \vdots & \ddots & \\
		\end{pmatrix},
\end{equation}
where the normalization condition requires $a+b+(N-2)c= 1$.
Using the element-wise square root and element-wise product with the robust Hadamard matrix, we obtain a unitary matrix, which completes the proof.
\end{proof}

The reasoning presented above can also be extended to a set of pair-wise strongly complementary matrices $\{P_{1},..., P_{k}\}$.
Then, all $2$-faces of the polytope, formed by the convex hull of those permutation matrices and the van der Waerden matrix, are unistochastic.
Nonetheless, it is hard to give any bound on how big a set of pair-wise strongly complementary matrices can be, apart from that it must not be larger than the dimension of matrices, $k\leq N$.
Furthermore, the question of whether the matrices inside the polytope are unistochastic is still open.

Next, we shall move on to the study of one of the applications of rays inside the Birkhoff polytope in quantum information.

\section{Equi-entangled bases}\label{sec:equi-entangled}
Several tasks in quantum information require that the states we use share the same degree of entanglement, not necessarily extremal, i.e.\ utilizing states that are not maximally entangled but also not separable.
Examples of these applications include generalized Bell state measurement with not-maximally entangled states, as well as studying the effect of entanglement on the capacity of quantum channels by encoding words into separate basis states having equal amounts of entanglement~\cite{Karimipour_2006}.

The problem of constructing bases in which every vector possesses the same degree of entanglement was initiated by Karimipour and Memarzadeh in 2006~\cite{Karimipour_2006}.
They constructed such bases by means of the generalized Pauli operator $Z = \sum_{i=1}^N \ket{i}\bra{i\oplus 1}$, where the addition is understood modulo $N$.
The 2010 follow-up paper by Gheorghiu and Looi discussed a more general construction using quadratic Gauss sums~\cite{Gheorghiu_2010}.
We shall present another construction that is based upon the properties of the robust Hadamard matrices.
Let us start by recalling the definition from the paper by Gheorghiu and Looi~\cite{Gheorghiu_2010}.

\begin{definition}
    A family of bases $\mathcal{B}_t = \{\ket{\psi^t_i}\}^N_{i=1}$ is said to form equi-entangled bases if two following conditions are satisfied
    \begin{enumerate}
        \item the family interpolates continuously between a product basis and a basis of maximally entangled states and
        \item for a fixed value of the parameter $t$, all vectors $\{\ket{\psi^t_i}\}^N_{i=1}$ have the same degree of entanglement.        
\end{enumerate}
\end{definition}

Now, we shall prove the connection between unistochastic rays inside the Birkhoff polytope and equi-entangled bases.
Consider a bistochastic matrix $B_\alpha$ belonging to the unistochastic ray parametrized by $\alpha$,
and the unitary matrix $U_\alpha$ corresponding to $B_\alpha$.
Such a family of unitary matrices exists for all dimensions for which a robust Hadamard matrix exists, see Section~\ref{sec:rays_and_counter-rays}.
This family allows us to construct a family of equi-entangled bases.
The following reasoning was put forward by other collaborators in~\cite{Rajchel_robust}.

\begin{proposition}
    Let $U$ be a unitary matrix of size $N$, connected to a bistochastic matrix $B$ belonging to a ray of the Birkhoff polytope $\mathcal{B}_N$.
    Then, the set of $N^2$ vectors, belonging to a bipartite Hilbert space $\mathcal{H} = \mathcal{H}_N \otimes \mathcal{H}_N$, defined by 
    \begin{equation}
        \ket{\psi_{ij}} = \sum_{k=1}^N U_{ik} \ket{k}\otimes \ket{k\oplus j}, \; \; \text{for } i\text{, }j \in \{1,...,N\},
    \end{equation}
    where $\oplus$ denotes addition modulo $N$, forms an equi-entangled basis of $\mathcal{H}_N \otimes \mathcal{H}_N$.
\end{proposition}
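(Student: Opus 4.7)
The plan is to verify the three required properties in turn: orthonormality of the proposed family (so that it is a basis of $\mathcal{H}_N \otimes \mathcal{H}_N$), constancy of the entanglement across indices $i,j$ for fixed $\alpha$, and continuous interpolation between a product basis and a maximally entangled basis as $\alpha$ ranges along the ray. The whole argument relies on the simple observation that the reduced density matrix of $\ket{\psi_{ij}}$ is diagonal in the computational basis with eigenvalues given exactly by the row of $B$.

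First I would verify orthonormality. Computing
\begin{equation}
\braket{\psi_{ij}|\psi_{i'j'}} = \sum_{k,k'} \overline{U_{ik}}\, U_{i'k'}\, \delta_{k,k'}\, \delta_{k\oplus j,\, k'\oplus j'} = \delta_{jj'} \sum_k \overline{U_{ik}} U_{i'k} = \delta_{ii'}\delta_{jj'},
\end{equation}
by unitarity of $U$. Since there are $N^2$ such vectors in a space of dimension $N^2$, they form a basis.

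Next I would compute the reduction. A direct calculation gives
\begin{equation}
\mathrm{Tr}_B\big(\ket{\psi_{ij}}\bra{\psi_{ij}}\big) = \sum_{k,k'} U_{ik}\overline{U_{ik'}}\, \ket{k}\bra{k'}\, \delta_{k\oplus j,\, k'\oplus j} = \sum_k |U_{ik}|^2\, \ket{k}\bra{k} = \sum_k B_{ik}\, \ket{k}\bra{k}.
\end{equation}
Hence the reduced state depends only on the $i$-th row of $B$ and not on $j$. The entanglement of $\ket{\psi_{ij}}$, in any Schur-concave measure (linear entropy, von Neumann entropy, etc.), is therefore a function of the multiset of entries of that row.

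Now the ray structure enters. By assumption $B = \alpha P + (1-\alpha) W_N$ for some permutation $P$ and some $\alpha$ in the admissible interval, so each row of $B$ contains one entry equal to $\alpha + (1-\alpha)/N$ and $N-1$ entries equal to $(1-\alpha)/N$, irrespective of $i$. The multiset of eigenvalues of $\mathrm{Tr}_B(\ket{\psi_{ij}}\bra{\psi_{ij}})$ is thus the same for every $(i,j)$, which establishes the equi-entanglement property.

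Finally I would check the endpoints and continuity. At the permutation endpoint the row has a single unit entry, the reduced state is pure, and each $\ket{\psi_{ij}}$ is separable; at the van der Waerden endpoint the row is uniform, the reduced state is $\mathbb{I}/N$, and each $\ket{\psi_{ij}}$ is maximally entangled. The explicit construction $U_\alpha = \sqrt{\alpha\mathstrut}\, D + \sqrt{\beta\mathstrut}\,(H-D)$ from Section~\ref{sec:rays_and_counter-rays} depends continuously on $\alpha$, so the family $\{\ket{\psi^\alpha_{ij}}\}$ interpolates continuously between the two regimes. The main obstacle I foresee is purely notational: one must keep careful track of how the modular shift $k \oplus j$ couples only the second factor, since this is what forces $\delta_{jj'}$ in the orthogonality calculation and simultaneously makes the partial trace diagonal; once that bookkeeping is clean, the rest is immediate from the ray parametrization of $B$.
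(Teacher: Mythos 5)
Your proposal is correct and follows essentially the same route as the paper: orthonormality via unitarity of $U$ (forcing $\delta_{jj'}$ from the modular shift and $\delta_{ii'}$ from row orthogonality), followed by the observation that the entanglement of $\ket{\psi_{ij}}$ is determined solely by the $i$-th row of $B$, whose multiset of entries is the same for every row of a ray matrix. Your computation of the reduced density matrix $\sum_k B_{ik}\ket{k}\bra{k}$ is just the partial-trace phrasing of the paper's relabelled Schmidt decomposition $\sum_k U_{ik}\ket{k}_1\otimes\ket{k}_2$, and your endpoint/continuity remarks match the paper's discussion immediately following its proof.
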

\begin{proof}
    To check that the set of vectors forms a basis, it suffices to compute their scalar product
    \begin{equation}
        \braket{\psi_{ij}|\psi_{i'j'}} = \sum_{k,k'=1}^N \overline{U_{ik}} U_{i'k'} \braket{k|k'} \braket{k\oplus j|k'\oplus j'} = \sum_{k,k'=1}^N \overline{U_{ik}} U_{i'k'} \delta_{kk'} \delta_{jj'} = \sum_{k=1}^N \overline{U_{ik}} U_{i'k} \delta_{jj'} = \delta_{ii'}\delta_{jj'},
    \end{equation}
    where the last equality follows from the orthogonality of rows of $U$.
    Then, the degree of entanglement of the state $\ket{\psi_{ij}}$ can be analyzed via relabelling of the basis vectors.
    In the second Hilbert space we consider $\ket{k}_2 \coloneqq \ket{k\oplus j}$, separately for different $j$,
    \begin{equation}
        \ket{\psi_{ij}} = \sum_{k=1}^N U_{ik} \ket{k}_1 \otimes \ket{k}_2.
    \end{equation}
    
    Consequently, the above expression formulates the Schmidt decomposition of the vector $\ket{\psi_{ij}}$.
    The degree of entanglement is specified solely by the amplitudes of the Schmidt coefficients $U_{ik}$, so by the elements of rows of the bistochastic matrix, $B = B_{\alpha}$.
    
    Since all rows of a matrix belonging to a ray consist of the same elements $\{\frac{1+\alpha(N-1)}{N}, \frac{1-\alpha}{N}, ..., \frac{1-\alpha}{N} \}$ with some permutation, the Schmidt coefficients of all states defined by the rows are the same, rendering all vectors equi-entangled.
\end{proof}

The above reasoning does not rely on the particular measure of entanglement used, as it depends only on the Schmidt coefficients.
Finally, the association of the equi-entangled basis to a bistochastic matrix $B_\alpha$ from the ray can be done for all parameters $\alpha$.
Thus, the two extremal bases are those connected to the permutation matrix and to the central van der Waerden matrix.

As a result, the extremal bases are maximally entangled, with all Schmidt coefficients equal for the van der Waerden matrix, and fully separable for the permutation matrix, with only one non-zero Schmidt coefficient equal to 1.
Therefore, any continuous function describing entanglement solely via the Schmidt coefficients (such as von Neumann entropy) yields all intermediate values.
Finally, for all possible values of entanglement an equi-entangled basis of a bipartite Hilbert space $\mathcal{H}_N\otimes \mathcal{H}_N$ can be created, provided the existence of a robust Hadamard matrix of size $N$.

Our construction is advantageous because the distribution of coefficients in the computational basis is almost uniform.
We hope that a similar construction can also be extended to the multipartite scenario.
What is more, since equi-entangled bases constructed with a help of robust Hadamard matrices are formed by the straight lines inside the Birkhoff polytope, this construction is geometrically simpler than the earlier ones.

\section{Circulant matrices of dimension \texorpdfstring{$N=4$}{Lg}}\label{sec:circulant_matrices}
This section shall be devoted to the study of a particular subset of bistochastic matrices in dimension $N=4$.

\begin{definition}[Circulant matrix]
    A bistochastic matrix $C$ of size $N$ is called circulant if its $k$-th row is obtained by $(k-1)$ translations of the first row,
    \begin{equation}
        C_{i\oplus k,j\oplus k} = C_{i,j},
    \end{equation}
    where $\oplus$ sign denotes addition modulo $N$.
\end{definition}

In general, circulant matrices do not need to be bistochastic; however, for the purpose of this chapter, we shall restrict to only those contained in the Birkhoff polytope.
Bracelet circulant matrices of size $N=3$ are unistochastic, like all of the bracelet bistochastic matrices of this size, see Section~\ref{sec:bracelet_matrices}.
The set of circulant matrices of this dimension form a 2-dimensional subset of the Birkhoff polytope, as shown in Fig.~\ref{fig:triangle_hypocycloid}.

\begin{figure}[H]
    \input{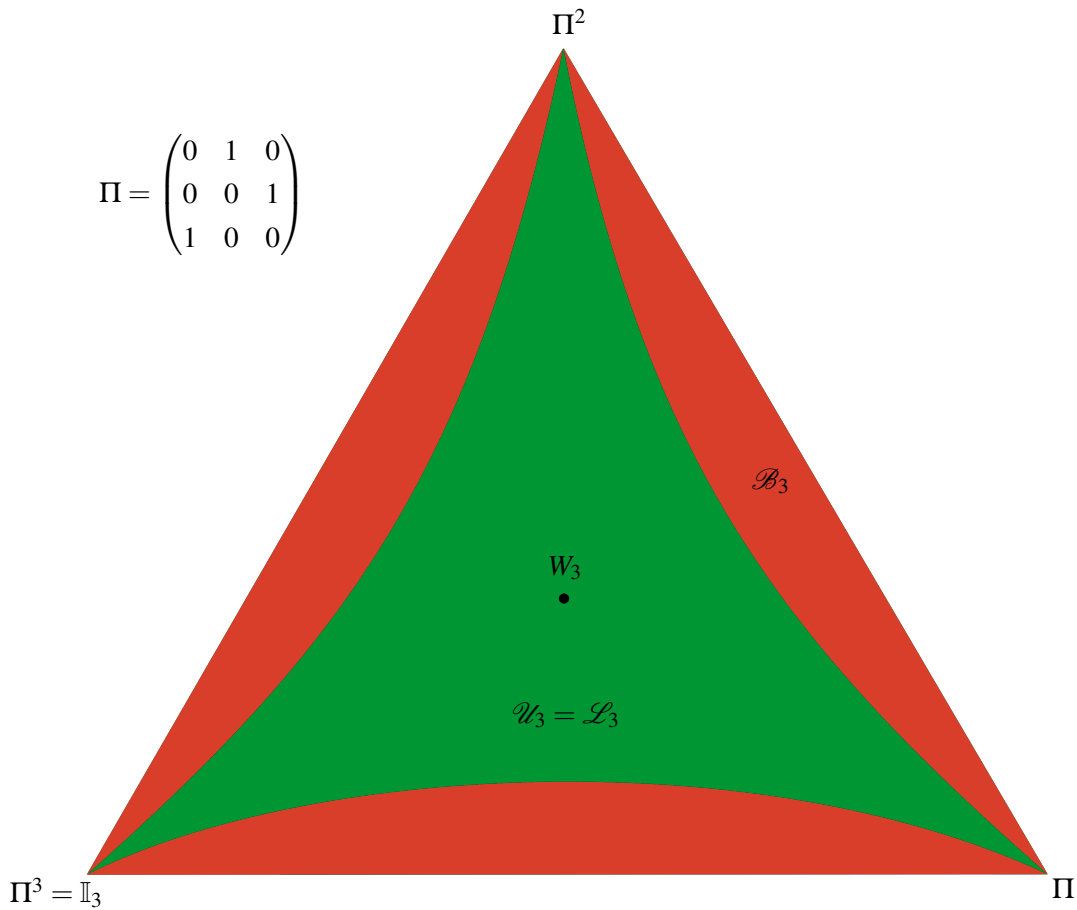}
    \caption{
    The set of circulant bistochastic matrices of order 3.
    All matrices of this size are unistochastic if they are bracelet, $\mathcal{U}_3 = \mathcal{L}_3$.
    The circulant matrices can be depicted using cross-section through the Birkhoff polytope $\mathcal{B}_3$ that consists of 3 permutation matrices: identity $\mathbb{I}_3$, a full cycle $\Pi$, and the other full cycle $\Pi^2$.
    The set of unistochastic matrices forms a full hypocycloid, with vertices given by the permutation matrices.
    Observe the centrality of the van der Waerden matrix $W_3$.
    }
    \label{fig:triangle_hypocycloid}
\end{figure}

In higher dimensions, in particular $N=4$, the bracelet conditions are only necessary, not sufficient, with existing counter-examples~\cite{Bengtsson_2005}.
Nonetheless, we shall show that regarding the subset of circulant bistochastic matrices of size 4 \emph{all} bracelet matrices are unistochastic.

\begin{theorem}\label{thm:circulant_size4}
    If bistochastic circulant matrix $C$ of size 4 is bracelet then it is unistochastic.
\end{theorem}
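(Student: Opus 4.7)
The plan is to construct an explicit circulant unitary $U = F_{4}\operatorname{diag}(\lambda_0,\ldots,\lambda_3)F_{4}^{\dagger}$ with $|\lambda_m|=1$, whose entrywise squared moduli reproduce $C$. Since the first row of such a $U$ reads $u_n = \tfrac{1}{4}\sum_m \lambda_m i^{mn}$, the unistochasticity requirement becomes $|u_n|^2 = c_n$ for $(c_0,c_1,c_2,c_3) = (a,b,c,d)$. Fixing the global phase by $\lambda_0 = 1$ and passing to the phase differences $\mu_m := \lambda_{m+1}\bar\lambda_m$ (which satisfy the cocycle $\mu_0\mu_1\mu_2\mu_3 = 1$), the two non-trivial Fourier-inversion conditions acquire the form
\[
\tfrac{1}{4}\sum_m \mu_m = (a-c)+i(b-d), \qquad \tfrac{1}{4}\sum_m \mu_m\mu_{m+1} = (a+c)-(b+d).
\]

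Next I would adopt the symmetric parametrization $\mu_0,\mu_2 = e^{i(\theta/2\pm\xi)}$ and $\mu_1,\mu_3 = e^{i(-\theta/2\pm\eta)}$, which trivializes the cocycle automatically. Upon writing $X = \cos\xi$ and $Y = \cos\eta$, the above constraints reduce to the compact real system
\[
(X+Y)\cos(\theta/2) = 2(a-c), \quad (X-Y)\sin(\theta/2) = 2(b-d), \quad XY = (a+c)-(b+d).
\]
Treating $\theta$ as a free parameter, the first two equations determine $X$ and $Y$ as continuous functions of $\theta$. The existence of an admissible solution should then follow from an intermediate-value argument applied to the residual $XY - [(a+c)-(b+d)]$ on the subregion of $\theta$ in which both $|X|$ and $|Y|$ stay in $[-1,1]$; on the boundary of this subregion one has either $|X| = 1$ or $|Y| = 1$, so the residual there reduces to $\pm Y - \gamma$ or $\pm X - \gamma$, with $\gamma = (a+c)-(b+d) \in [-1,1]$.

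The main obstacle is to verify that the admissible $\theta$-subregion is non-empty and that within it the residual takes both signs, so that the intermediate-value theorem actually supplies a zero. I expect both facts to follow directly from the bracelet polygon inequality $2\max(\sqrt{ab},\sqrt{bc},\sqrt{cd},\sqrt{da}) \leq (\sqrt{a}+\sqrt{c})(\sqrt{b}+\sqrt{d})$: the boundary of the image of the map $(\xi,\eta,\theta) \mapsto (a-c,\,b-d,\,\gamma)$ consists of four elliptic pieces, one for each of $X = \pm 1$ and $Y = \pm 1$, and a direct computation identifies these with the four equality cases of the bracelet inequality, sorted according to which of the four cyclic products $\sqrt{ab},\sqrt{bc},\sqrt{cd},\sqrt{da}$ attains the maximum. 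The crux of the proof is therefore the case analysis showing that each of the four bracelet-equality loci coincides with one of the four elliptic boundary pieces of the image.
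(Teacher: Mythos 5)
Your reduction is correct, and your route is genuinely different from the paper's, though dual to it: the paper keeps the entries' moduli fixed and phases them directly, writing a circulant $M$ with unknown phases $\alpha,\beta,\gamma$, eliminating one angle via an $\arccos$ from the condition $n=0$, and then running an intermediate-value argument on a modulus-one condition; you instead parametrize the same family of circulant unitaries by their (unimodular) eigenvalues and land on the cleaner algebraic system $(X+Y)\cos(\theta/2)=2(a-c)$, $(X-Y)\sin(\theta/2)=2(b-d)$, $XY=(a+c)-(b+d)$. Both proofs ultimately rest on an IVT whose endpoint signs must be extracted from the bracelet hypothesis. What your version buys is transparency of the algebra; what it still owes is exactly the step you flag as ``the crux.''

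That crux does close, and more simply than your proposed four-ellipse case analysis. Assume $a+c>0$ and $b+d>0$ (the degenerate cases are immediate) and set $u=\tfrac{|a-c|}{a+c}$, $v=\tfrac{|b-d|}{b+d}$. The only nontrivial bracelet condition for a circulant $4\times4$ matrix is $2\max(\sqrt{ab},\sqrt{bc},\sqrt{cd},\sqrt{da})\le(\sqrt a+\sqrt c)(\sqrt b+\sqrt d)$, which after squaring is $|a-c|\,|b-d|\le 4\sqrt{abcd}$, i.e.\ $uv\le\sqrt{(1-u^2)(1-v^2)}$, i.e.\ the \emph{single} inequality $u^2+v^2\le1$ --- there is no need to sort by which cyclic product attains the maximum. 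Now put $t=\cos^2(\theta/2)$, $P(t)=|a-c|/\sqrt t$, $Q(t)=|b-d|/\sqrt{1-t}$; the admissibility condition $|X|,|Y|\le1$ is $P+Q\le1$ (a convex sublevel set, hence an interval), and the residual is $P^2-Q^2-\gamma$. Evaluate at $t_1=(a-c)^2/(a+c)^2$, where $P=a+c$: admissibility there reads $v\le\sqrt{1-u^2}$, i.e.\ $u^2+v^2\le1$, and the residual is $(a+c)^2-Q^2-\gamma\ge(a+c)^2-(b+d)^2-\gamma=0$. Evaluate at $t_2=4bd/(b+d)^2$, where $Q=b+d$: admissibility reads $u\le\sqrt{1-v^2}$, the \emph{same} inequality, and the residual is $\le0$. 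IVT on $[t_1,t_2]$ (or $[t_2,t_1]$) finishes. One genuine flaw to repair as you write this up: your claim that the first two equations determine $X,Y$ as continuous functions of $\theta$ fails where $\cos(\theta/2)$ or $\sin(\theta/2)$ vanishes, and the solution can sit exactly there --- e.g.\ for $(a,b,c,d)=(0,\tfrac12,0,\tfrac12)$ one needs $\sin(\theta/2)=0$ with $X=-Y=\pm1$. Running the IVT in the variable $t$ on $(P,Q)$ as above, and treating $a=c$ or $b=d$ separately, avoids this.
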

\begin{proof}
    Let us start with a bracelet circulant matrix $C$
    \begin{equation}
    C = \begin{pmatrix}
    a & b & c & d \\
    d & a & b & c\\
    c & d & a & b \\
    b & c & d  & a
    \end{pmatrix},
    \end{equation}
    with $a+b+c+d = 1$.
    Now, at least one of the products $ac$ or $bd$ is greater than zero since in the other case the matrix could not be bracelet.
    Furthermore, without loss of generality, let us assume that $ac \leq bd$.
    If the actual values satisfy the opposite inequality, by means of permutation of rows we are able to achieve the desired condition.
    Multiplying by a permutation matrix does not change neither bracelet nor unistochastic properties of any matrix.
    
    We create a circulant matrix with complex phases by taking the element-wise square root of $C$,
    \begin{equation}
    M = \begin{pmatrix}
    \sqrt{a} & e^{i\alpha}\sqrt{b} & e^{i\beta}\sqrt{c} & e^{i\gamma}\sqrt{d} \\
    e^{i\gamma}\sqrt{d} & \sqrt{a} & e^{i\alpha}\sqrt{b} & e^{i\beta}\sqrt{c}\\
    e^{i\beta}\sqrt{c} & e^{i\gamma}\sqrt{d} & \sqrt{a} & e^{i\alpha}\sqrt{b} \\
    e^{i\alpha}\sqrt{b} & e^{i\beta}\sqrt{c} & e^{i\gamma}\sqrt{d}  & \sqrt{a}
    \end{pmatrix}.
    \end{equation}
    
    We shall show that there exist angles $\alpha$, $\beta$, and $\gamma$ such that the matrix $M$ is unitary.
    Due to the symmetries of circulant matrices, the product $MM^\dagger$ is also highly symmetric
    \begin{equation}
    MM^\dagger = \begin{pmatrix}
    1 & m^* & n & m \\
    m & 1 & m^* & n\\
    n & m & 1 & m^* \\
    m^* & n & m  & 1
    \end{pmatrix},
    \end{equation}
    where 
    \begin{equation}\label{eq:thm_circulant4_m}
        m = e^{i\gamma}\sqrt{ad} + e^{-i\alpha}\sqrt{ab} + e^{i(\alpha-\beta)}\sqrt{bc} + e^{i(\beta-\gamma)}\sqrt{cd}
    \end{equation}
    and 
    \begin{equation}\label{eq:thm_circulant4_n}
        n = e^{i\beta}\sqrt{ac\mathstrut} + e^{i(\gamma-\alpha)}\sqrt{bd} + e^{-i\beta}\sqrt{ac\mathstrut} + e^{i(\alpha - \gamma)}\sqrt{bd} 
        = 2 \sqrt{ac\mathstrut}\cos{\beta} + 2 \sqrt{bd}\cos{(\alpha - \gamma)}.
    \end{equation}

    Since we want $m$ and $n$ to be zero, from Eq.~(\ref{eq:thm_circulant4_n}) we conclude that $\alpha = \gamma + \arccos{(-\sqrt{\frac{ac}{bd}}\cos{\beta})} = \gamma + f(\beta)$ with a new notation $f(\beta)$.
    For our purposes, we choose the function $\arccos(x)$ to map $[-1,1]$ into $[0,\pi]$.
    Inserting the value of $\alpha$ into Eq.~(\ref{eq:thm_circulant4_m}) we obtain
    \begin{equation}
            m = e^{i\gamma}\big(\sqrt{ad}+e^{i(f(\beta)-\beta)}\sqrt{bc}\big) 
    +e^{-i\gamma}\big(e^{ - if(\beta)}\sqrt{ab} + e^{i\beta}\sqrt{cd}\big).
    \end{equation}
    
    Therefore, the value of $m$ will also be zero provided that
    \begin{equation}\label{eq:thm_circulant4_gamma}
     e^{2i\gamma} = - \frac{e^{ - if(\beta))}\sqrt{ab} + e^{i\beta}\sqrt{cd}}{e^{i(f(\beta)-\beta)}\sqrt{bc}+\sqrt{ad}}.
    \end{equation}
    
    Finally, if in Eq.~(\ref{eq:thm_circulant4_gamma}) the right-hand side has modulus 1 then there exist angles $\alpha$, $\beta$, and $\gamma$ such that matrix $M$ is unitary.
    Now, we will show that this is always the case by considering a function $F: [\pi/2,3\pi/2] \to [0,2\pi)^2$ defined by 
    \begin{equation}
        F(x) = \big(x + \arccos{(\eta \cos{x})},x - \arccos{(\eta \cos{x})}\big),
    \end{equation}
    with $\eta = -\sqrt{\frac{ac}{bd}} \in [-1,0]$ since we have chosen $ac\leq bd$. 
    Importantly, for these values of the parameter $\eta$ the function $F$ is continuous with $F(\pi/2) = (\pi,0)$ and $F(3\pi/2) = (0,\pi)$.
    
    Furthermore, let us define the second function $G: [0,2\pi)^2 \to \mathbb{R}$, with positive parameters $a$, $b$, $c$, and $d \leq 1$, defined as
    \begin{equation}
        G(x,y) = |\sqrt{ab} + e^{ix}\sqrt{cd} | - |\sqrt{bc}+e^{iy}\sqrt{ad}|.
    \end{equation}
    
    Continuity of functions $G$ and $F$ implies continuity of the composite function $G \circ F$.
    Furthermore, values of $G \circ F$ are opposite at the boundaries of the domain,
    \begin{equation}
        G\big(F(\pi/2)\big) =  |\sqrt{ab} - \sqrt{cd} | - |\sqrt{bc}+\sqrt{ad}| 
        = \max{(\sqrt{ab},\sqrt{cd})}-\min{(\sqrt{ab},\sqrt{cd})}
        - \sqrt{bc}- \sqrt{ad} \leq 0
    \end{equation}
    and
    \begin{equation}
        G\big(F(3\pi/2)\big) =  |\sqrt{ab} + \sqrt{cd} | - |\sqrt{bc} - \sqrt{ad}|
        = \min{(\sqrt{bc},\sqrt{ad})} - \max{(\sqrt{bc},\sqrt{ad})}
       +\sqrt{ab} + \sqrt{cd}  \geq 0,
    \end{equation}
    where both inequalities hold if the matrix $C$ satisfies the bracelet conditions.
    
    Therefore, by application of the intermediate value theorem, we conclude that there exists some intermediate value of the parameter $\beta$ such that $G\big(F(\beta )\big) = 0$.
    The same $\beta$ yields a fraction of modulus 1 in Eq.~(\ref{eq:thm_circulant4_gamma}); therefore, it provides the parameters $\gamma$ and $\alpha = \gamma+ f(\beta)$ for which $M$ is a unitary matrix, which concludes the proof.    
\end{proof}

As a corollary following Theorem~\ref{thm:circulant_size4}, the set of circulant unistochastic matrices of size 4 is star-shaped with respect to the central van der Waerden matrix $W_4\in \mathcal{B}_4$.
Finally, using simplification provided by Theorem~\ref{thm:circulant_size4} we are able to probe the tetrahedron of unistochastic circulant matrices with high density, as depicted in Fig.~\ref{fig:tetrahedron_green}, and in the introductory 3D printout shown in Fig.~\ref{fig:model3D_unistochastic}.

\begin{figure}[H]
	\begin{tikzpicture}
		\node (myfirstpic) at (0,0) {\includegraphics[width=0.55\columnwidth]{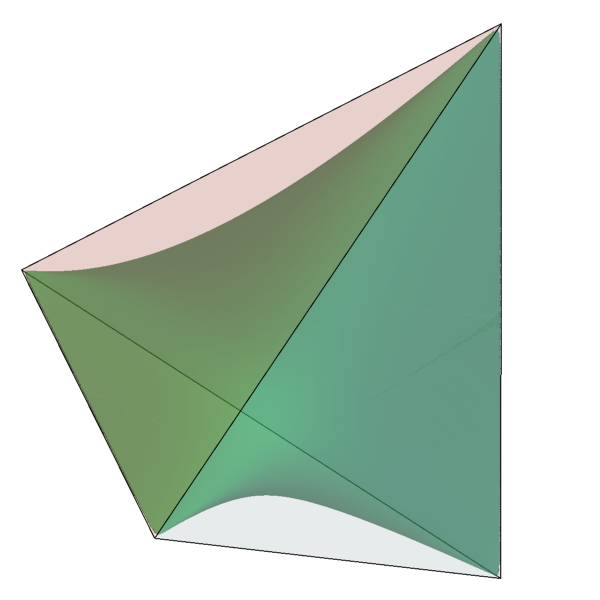}};		
		\node at (-0.17\columnwidth,-0.23\columnwidth) {\color{black}$\Pi^4_4 =\mathbb{I}_4^{}$};
		\node at (0.21\columnwidth,-0.27\columnwidth) {\color{black}$\Pi_4^{}$};
		\node at (-0.28\columnwidth,0.05\columnwidth) {\color{black}$\Pi_4^2$};
		\node at (0.21\columnwidth,0.26\columnwidth) {\color{black}$\Pi_4^3$};		
		\node [label={$\Pi_4^{}=\begin{pmatrix}0 &1 &0 &0\\0&0&1&0\\0&0&0&1\\1&0&0&0\end{pmatrix}$}] at (0.4\columnwidth,-0.09\columnwidth){};
	\end{tikzpicture}
	\caption{
	The set of $4\times 4$ circulant matrices represented as a tetrahedron, with its vertices given by four circulant permutation matrices $\mathbb{I}_4^{}$, $\Pi_4^{}$, $\Pi_4^2$, and $\Pi_4^3$.
	The emphasized non-convex subset of unistochastic matrices $\mathcal{C}_4 \cap\mathcal{U}_4$ is marked green and proved to be equal to the set of bracelet matrices (see Theorem~\ref{thm:circulant_size4}).
	The edges $(1,0,0,0) \leftrightarrow (0,0,1,0)$ and $(0,1,0,0) \leftrightarrow (0,0,0,1)$ are two of the lines at which the set of unistochastic matrices touches the surface of the tetrahedron.
	Two pairs of permutation matrices joined by the lines are strongly complementary, see Section~\ref{sec:complementary_matrices}.
    The only other remaining regions in which unistochastic matrices coincide with the surface of the tetrahedron are lines connecting midpoints of these two edges with the opposite vertices.
	Reproduced verbatim from the joint paper~\cite{Rajchel_algebraic_structures}.
	}\label{fig:tetrahedron_green} 
\end{figure}


\section{Conclusions}
In this chapter we focused our study on the characterization of the set of unistochastic matrices.
We were able to find a numerical algorithm solving the unistochasticity problem as well as to prove the simple structure of circulant unistochastic matrices in the case of dimension 4.
Furthermore, we demonstrated more general results concerning the connection between robust Hadamard matrices, rays of the Birkhoff polytope, and equi-entangled bases.
The author hopes that proceeding in the direction of research started in this chapter our understanding of the set of unistochastic matrices shall be enriched in the near future, with a plethora of applications in quantum physics.
The discussion of the future prospects is delegated to Chapter~\ref{Summary}.

\clearpage
\chapter{Entangling power in multipartite systems}
\label{chapter_4}
\vspace{-1cm}
\rule[0.5ex]{1.0\columnwidth}{1pt} \\[0.2\baselineskip]

\section{Introduction}
Quantum entanglement is one of the cornerstones of the applications of quantum mechanical systems.
Many quantum information protocols rely on entanglement to achieve the gain over their classical counterparts, e.g.\ superdense coding~\cite{Bennett_1992}, measurement outperforming classical limit~\cite{Caves_1981}, quantum teleportation~\cite{Bennett_1993}, and many more~\cite{Nielsen_Chuang,Montanaro_2016}.
The study of entanglement in the bipartite case is believed to be mostly understood (see Section~\ref{sec:measures_of_entanglement}).
However, this is not the case with multipartite entanglement, for which a lot of problems arise.
One of them is the existence of the maximally entangled states, see subsequent Chapter~\ref{chapter_6} concerning AME states.

The other obstacle on the path to understanding the manipulations on multipartite systems is the limited study of entanglement creation by an action of a given unitary gate.
We face this problem by analytically evaluating the average entanglement created by unitary gates, which is the main goal of this chapter.
A summary of some parts of this chapter, as well as an extension of the others, in which the author's involvement was less substantial, can be found in a joint paper~\cite{Rajchel_entangling_power}.
If not specified differently, the author's contribution to the work covered by this chapter was significant.

\begin{figure}[H]
    \input{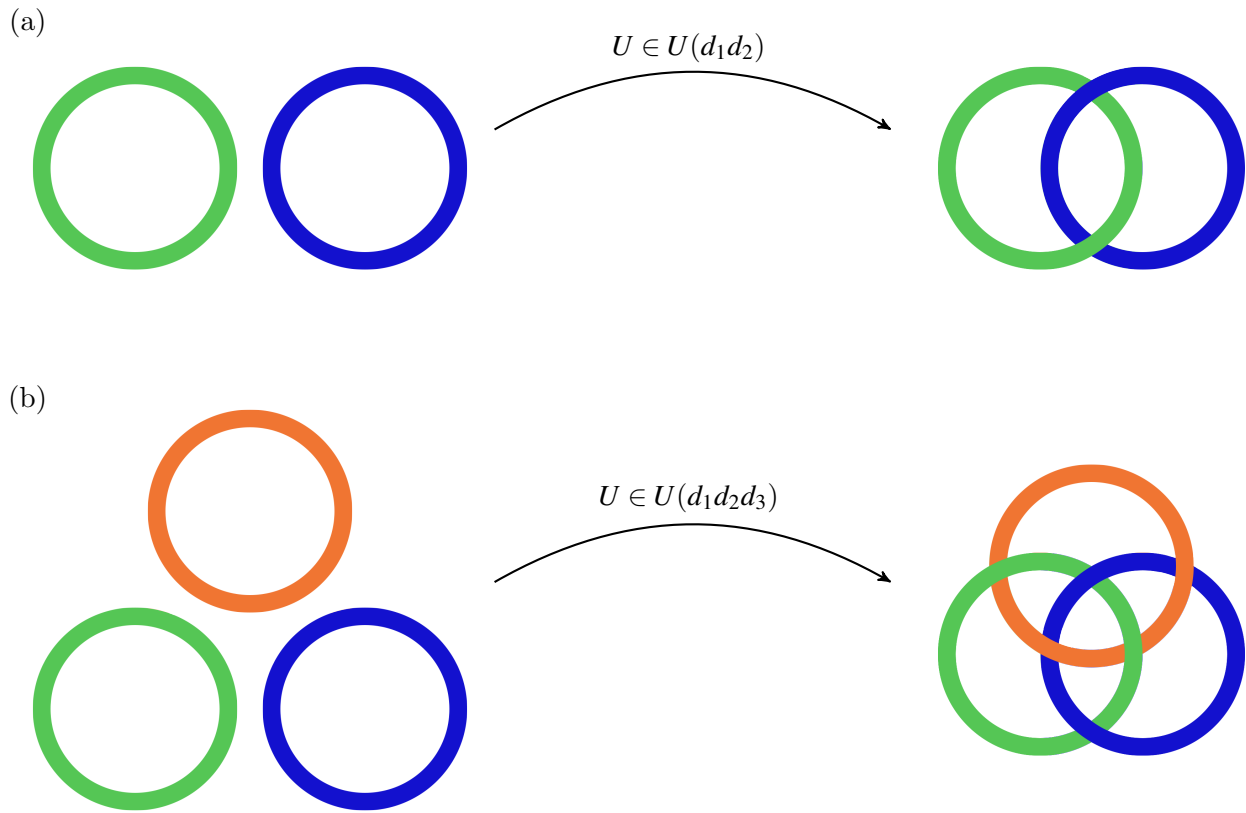}
    \caption{The entangling power of a given unitary matrix $U$ is given by the average entanglement created, while acting on separable states.
    Here, disjoint rings on the left depict separable states that, under the action of $U$, might become entangled (the right side of the figure).
    The bipartite case (a) is thoroughly explored in the literature, while a similar study for tripartite systems (b) was performed in this thesis.
    }
    \label{fig:entangling_power_sketch}
\end{figure}

\section{Entangling power in the tripartite case}
The entangling power, was introduced for the bipartite scenario in 2000 by Zanardi et al.~\cite{Zanardi_2000}.
It quantifies how much entanglement is created on average by applying a given quantum gate to a random pure state (see Section~\ref{sec:gates_ent_power}).
A similar analysis was lacking in the case of multipartite gates.
Motivated by this fact, we revisited the general requirement for the bipartite entangling power, see Eq.~(\ref{eq:def_e_p}), to extend it for the multipartite case.
An analogous condition can be imposed, the entangling power $\varepsilon_\tau$ of a unitary gate $U$ should amount to the average entanglement created by the action of the gate 
\begin{equation}\label{eq:def_entangling_power}
    \varepsilon_\tau (U) = \langle \tau\big( U\ket{\psi_{sep}} \big)\rangle_{\ket{\psi_{sep}}\in \mathcal{H}},
\end{equation}
where $\tau$ denotes the chosen multipartite measure of entanglement.
It was proved by Zanardi et al. that this formula, for the bipartite measure of entanglement being the linear entropy, can be converted to Eq.~(\ref{eq:def_e_p}).
The first to study a generalization of entangling power was Scott~\cite{Scott_2004}, whose results found applications in the context of quantum error correcting codes~\cite{Atushi_1996,Furuya_1998,Rossini_2004}.
Nonetheless, his formulae were restricted to several cases of equal dimensions, whereas our results are independent of the dimensionality of the Hilbert spaces.

As the measure of entanglement for the tripartite case we choose the one-tangle
\begin{equation}
    \tau_1 \big( \ket{\psi} \big) \coloneqq \frac{1}{3} \bigg( \tau_{12|3}\big( \ket{\psi} \big) + \tau_{13|2}\big( \ket{\psi} \big) + \tau_{23|1}\big( \ket{\psi} \big) \bigg),
\end{equation}
where the quantity $\tau_{ab|c}$ was introduced in Section~\ref{sec:measures_of_entanglement}.
Let us start describing our contribution by the lemma, for which the proof is included in the joint paper~\cite{Rajchel_entangling_power}.

\begin{lemma}
    The entangling power defined by Eq.~(\ref{eq:def_entangling_power}) for the one-tangle of a tripartite unitary gate $U$ is equivalent to
    \begin{equation}\label{eq:definition_ent_power_tripartite}
        \varepsilon_\tau (U) \equiv \varepsilon_{\tau_1}(U) = \frac{1}{3}\big( \varepsilon_{12|3}(U) +\varepsilon_{13|2}(U) + \varepsilon_{23|1} (U) \big),
    \end{equation}
    where the average entangling power of the gate $U$ with respect to the particular bipartition $ab|c$ reads
    \begin{equation}\label{eq:definition_ent_power_tripartite_one_bipartition}
        \varepsilon_{ab|c}(U) \coloneqq \big\langle \tau_{ab|c} (U\ket{\psi_{\text{\emph{sep}}}}) \big\rangle_{\ket{\phi_{\text{\emph{sep}}}}\in \mathcal{H}} = 2 \bigg[1- \bigg(\prod^3_{i=1}\frac{1}{d_i (d_i + 1)}\bigg)u_{\vec{r}}\:u_{\vec{s}}\:u_{\vec{t}}\:
	f_{\vec{r},\vec{s},\vec{t}}^{ab|c}(U) \bigg].
    \end{equation}
    Two other newly introduced notions include
    \begin{equation}\label{eq:u_r-u_s-u_t-expansion}
        u_{\vec{v}}\coloneqq\delta^{v_1}_{v_2}\delta^{v_3}_{v_4}+\delta^{v_1}_{v_4}\delta^{v_3}_{v_2}
    \end{equation}
    and
    \begin{equation}\label{eq:definition_f_function_e_p}
        f_{\vec{r},\vec{s},\vec{t}}^{ab|c}(U) \coloneqq
	\delta^{i_a}_{l_a\vphantom{(}}\delta^{i_b}_{l_b\vphantom{(}}\delta^{i_c}_{j_c\vphantom{(}}
	\delta^{k_a}_{j_a\vphantom{(}}\delta^{k_b}_{j_b\vphantom{(}}\delta^{k_c}_{l_c\vphantom{(}}\:
	U_{r_1s_1t_1}^{i_1i_2i_3}\,
	\left(U^\dag\right)_{j_1j_2j_3}^{r_2s_2t_2}\,
	U_{r_3s_3t_3}^{k_1k_2k_3}\,
	\left(U^\dag\right)_{l_1l_2l_3}^{r_4s_4t_4}.
    \end{equation}
    
    Note the Einstein summation convention.
    The vectors $\vec{r}$, $\vec{s}$, and $\vec{t}$ are of length four, each consisting of elements from the discrete sets $\{1,...,d_1\}$, $\{1,...,d_2\}$, and $\{1,...,d_3\}$, respectively.
\end{lemma}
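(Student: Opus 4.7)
The plan is to start from Definition~\ref{def:entangling_power} applied componentwise, expanding the one-tangle by linearity of expectation: since $\tau_1 = \tfrac{1}{3}(\tau_{12|3}+\tau_{13|2}+\tau_{23|1})$, the averaging operator distributes and it suffices to derive the single-bipartition formula~(\ref{eq:definition_ent_power_tripartite_one_bipartition}); the overall decomposition~(\ref{eq:definition_ent_power_tripartite}) then follows immediately. Thus the entire content of the lemma reduces to establishing, for one fixed bipartition $ab|c$, that $\bigl\langle \tau_{ab|c}(U\ket{\psi_{\text{sep}}})\bigr\rangle = 2\bigl[1-\prod_i \tfrac{1}{d_i(d_i+1)}\,u_{\vec r}u_{\vec s}u_{\vec t}f_{\vec r,\vec s,\vec t}^{ab|c}(U)\bigr]$.

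Next, I would rewrite $\tau_{ab|c}$ using Definition~(\ref{eq:generalized_concurrence}): for any pure state $\ket{\phi}\in\mathcal{H}$, $\tau_{ab|c}(\ket{\phi}) = 2\bigl(1-\operatorname{Tr}\rho_c^2\bigr)$, where $\rho_c = \operatorname{Tr}_{ab}\ket{\phi}\bra{\phi}$. Averaging then hinges on computing $\bigl\langle \operatorname{Tr}\rho_c^2\bigr\rangle$ with $\ket{\phi}=U\ket{\psi_1}\ket{\psi_2}\ket{\psi_3}$. The standard swap trick $\operatorname{Tr}\rho_c^2 = \operatorname{Tr}\bigl[(\rho_c\otimes\rho_c)S_c\bigr]$ lifts the square of a single copy of $\rho_c$ to a partial trace on two copies, so the average becomes a linear functional of $\bigl\langle (\ket{\psi_i}\bra{\psi_i})^{\otimes 2}\bigr\rangle$ for $i=1,2,3$, taken independently on the three unit spheres.

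The core computation is then the textbook identity
\begin{equation}
\int d\psi_i\;(\ket{\psi_i}\bra{\psi_i})^{\otimes 2} \;=\; \frac{\mathbb{I}_{d_i^2}+S_i}{d_i(d_i+1)},
\end{equation}
valid for the uniform measure on the unit sphere in $\mathcal{H}^{d_i}$. In components, $(\mathbb{I}+S_i)^{v_1v_3}_{v_2v_4} = \delta^{v_1}_{v_2}\delta^{v_3}_{v_4}+\delta^{v_1}_{v_4}\delta^{v_3}_{v_2}$, which is precisely the definition of $u_{\vec v}$ in~(\ref{eq:u_r-u_s-u_t-expansion}). Contracting the three such tensors with the four copies of $U,U^\dagger$ produced by the two $\rho_c$ factors and the swap $S_c$, and identifying the Kronecker pattern enforced by ``trace over $ab$, square on $c$'', produces exactly the tensor $f^{ab|c}_{\vec r,\vec s,\vec t}(U)$ of~(\ref{eq:definition_f_function_e_p}): the six deltas with $i,k,j,l$ indices encode which tensor slots are joined by the partial trace on subsystems $a,b$ and by the swap acting on subsystem $c$.

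The bookkeeping of indices is the step I expect to be the main obstacle, because one must consistently track eight groups of indices (one per leg of each of the four $U$ or $U^\dagger$ tensors) and verify that the three $u$-factors couple them in the way demanded by the chosen bipartition $ab|c$. I would handle this by drawing the corresponding tensor-network diagram once and reading off the contractions: each Haar integral inserts an $\mathbb{I}+S$ on a pair of legs, and the swap $S_c$ together with the traces over $a,b$ on both copies fixes the remaining contractions. Matching this diagram against the explicit delta structure in~(\ref{eq:definition_f_function_e_p}) would close the proof; permuting the labels $(a,b,c)$ across the three bipartitions and averaging as in~(\ref{eq:definition_ent_power_tripartite}) completes the lemma.
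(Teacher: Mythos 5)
Your proposal is correct and follows the standard route: reduce to a single bipartition by linearity of the average, apply the swap trick to the purity, and use $\int (\ket{\psi}\bra{\psi})^{\otimes 2}\,d\psi = (\mathbb{I}+S)/\big(d(d+1)\big)$ independently on each of the three unit spheres, which is exactly where the prefactors $1/\big(d_i(d_i+1)\big)$ and the tensors $u_{\vec{v}}$ originate. The thesis itself defers the proof of this lemma to the joint paper~\cite{Rajchel_entangling_power}, whose argument is essentially identical to yours; the only bookkeeping point to watch is that the printed delta pattern in $f^{ab|c}_{\vec{r},\vec{s},\vec{t}}$ encodes $\mathrm{Tr}\,\rho_{ab}^2$ (i.e.\ the partial trace over $c$ is taken first, matching the definition of $\tau_{A|B}$) rather than $\mathrm{Tr}\,\rho_c^2$ --- the two coincide for pure states, so your tensor-network diagram matches after this harmless relabelling.
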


The above, constructive expression for the entangling power will be used to find the averages over the orthogonal and the unitary group.
In the next section, we shall introduce the moments of the orthogonal group that are crucial to calculating the average entangling power.

\section{Moments of the orthogonal group}\label{sec:orthogonal_weingarten_functions}
To assist evaluation of the averages over the sets of matrices it will be helpful to recall mathematical notions that allow calculating of the moment over the unitary and the orthogonal groups.
In 1978, Weingarten introduced functions that were later named, honoring his contribution, the \emph{Weingarten functions}~\cite{Weingarten_1978}.
These are rational functions describing the integration of elements of matrices
\begin{equation}
    \int_{U(d)} U_{i_1 j_1}...U_{i_n j_n} U^\dagger_{k_1 l_1}...U^\dagger_{k_n l_n}\, dU,
\end{equation}
where the integration is taken over the entire unitary group $U(d)$ with respect to the Haar measure.
The lower indices of matrices denote the appropriate elements of the matrix.

Similar to their unitary counterparts, the orthogonal Weingarten functions were first described by Collins and {\'S}niady in 2006~\cite{Collins_2006}, with later extensions by Collins and Matsumoto in 2009~\cite{Collins_2009} and by Gu in 2013~\cite{Gu_2013}.
The orthogonal functions are given by the average over the (real) orthogonal group $O(d)$ with respect to the Haar measure,
\begin{equation}
    \int_{O(d)} O_{i_1 j_1}...O_{i_n j_n} O_{k_1 l_1}...O_{k_n l_n}\, dO.
\end{equation}

In this section we shall restrict to the example of second moments of the orthogonal group.
For a more general view we refer the reader to the paper of Collins and Matsumoto~\cite{Collins_2009}.
Regarding this special case, second moments read
\begin{equation}\label{eq:orthogonal_moments}
    \int_{O(d)} O_{i_1 j_1}O_{i_2 j_2} O_{k_1 l_1}O_{k_2 l_2}\, dO = \sum_{q,r\in \{p_1, p_2, p_3\}} \langle \text{Wg}^O(q),r\rangle\;\; \delta_{i_1}^{i_{q(1)}}\delta_{j_1}^{j_{q(1)}}\delta_{i_2}^{i_{q(2)}}\delta_{j_2}^{j_{q(2)}}
    \delta_{k_1}^{k_{q(1)}}\delta_{l_1}^{l_{q(1)}}\delta_{k_2}^{k_{q(2)}}\delta_{l_2}^{l_{q(2)}},
\end{equation}
where $\langle \text{Wg}^O(q),r\rangle$ denotes the orthogonal Weingarten function for two permutations $q$ and $r$.
The three special permutations of length four are given by pairs of disjoint transpositions $p_1 = \{(12)(34)\}$, $p_2 = \{(13)(24)\}$, and $p_{3} = \{(14)(23)\}$.

In order to evaluate the nine different values of $\langle \text{Wg}^O(p_i),p_j\rangle$, we shall follow the chain of reasoning presented by Gu~\cite{Gu_2013}.
First, we observe that the functions can be obtained by combining these two permutations into one, $p_i p_j$.
By breaking the new permutation into cycles, we are able to acquire their lengths.
Then, for every cycle of length $l$ there is a corresponding cycle of the same length, see Lemma 1.16 in~\cite{Gu_2013}.
Thus, it is possible to divide the lengths of cycles into two, the step which is necessary in order to evaluate the exact values of the orthogonal Weingarten functions.

As an example, consider one of the nine expressions given by two cycles $p_1$, reading $\langle \text{Wg}^O(p_1),p_1\rangle$.
In this case, the square of the cycle simplifies to
\begin{equation}
    p_1 p_1 = \{(12)(34)\} \; \{(12)(34)\} = \text{id} = \{(1)(2)(3)(4)\},
\end{equation}
yielding the identity permutation, which consists of 4 disjoint cycles of length 1.
Therefore, by taking half of them we obtain two cycles of length 1, what we write as $[1,1]$.
Ultimately, Gu in Appendix B of~\cite{Gu_2013} found this Weingarten function to be
\begin{equation}
    \langle \text{Wg}^O(p_1),p_1\rangle = \text{Wg}^O([1,1],d) = \frac{d+1}{d(d-1)(d+2)}.
\end{equation}

All the three functions $\langle \text{Wg}^O(p_i),p_i\rangle$ yield the same value.
The other 6 pairs of different permutations $\{p_i, p_j\}$ result in the joint permutation $p_i p_j$ consisting of two cycles of length 2, and so
\begin{equation}
    \langle \text{Wg}^O(p_i),p_j\rangle = \text{Wg}^O([2],d) = \frac{-1}{d(d-1)(d+2)} \;\;\; \text{for } i\neq j.
\end{equation}

Finally, employing Eq.~(\ref{eq:orthogonal_moments}), we conclude the results of this section by the ultimate equation for the second moments of matrices, averaged over the orthogonal group $O(d)$
\begin{equation}\label{eq:Weingarten_final}
\begin{split}
    \int_{O(d)} O_{i_1 j_1}& O_{i_2 j_2} O_{k_1 l_1}O_{k_2 l_2}\, dO = \frac{1}{d(d-1)(d+2)}\bigg((d+1)\,\big( \delta^{i_2}_{i_1}\delta^{k_2}_{k_1}\delta^{j_2}_{j_1}\delta^{l_2}_{l_1} + \delta^{k_1}_{i_1}\delta^{k_2}_{i_2}\delta^{l_1}_{j_1}\delta^{l_2}_{j_2} + \delta^{k_2}_{i_1}\delta^{k_1}_{i_2}\delta^{l_2}_{j_1}\delta^{l_1}_{j_2}\big) \\ 
    &- \delta^{i_2}_{i_1}\delta^{k_2}_{k_1}\delta^{l_1}_{j_1}\delta^{l_2}_{j_2} - \delta^{i_2}_{i_1}\delta^{k_2}_{k_1}\delta^{l_2}_{j_1}\delta^{l_1}_{j_2} - 
    \delta^{k_1}_{i_1}\delta^{k_2}_{i_2}\delta^{j_2}_{j_1}\delta^{l_2}_{l_1} - \delta^{k_1}_{i_1}\delta^{k_2}_{i_2}\delta^{l_2}_{j_1}\delta^{l_1}_{j_2} - \delta^{k_2}_{i_1}\delta^{k_1}_{i_2}\delta^{j_2}_{j_1}\delta^{l_2}_{l_1} - \delta^{k_2}_{i_1}\delta^{k_1}_{i_2}\delta^{l_1}_{j_1}\delta^{l_2}_{j_2}
    \bigg).
\end{split}
\end{equation}

Having found this expression, we shall move on to applying the above mathematical result to obtain the mean entangling power of orthogonal gates.

\section{Average entangling power of tripartite orthogonal gates}\label{sec:average_e_p_tripartite_orthogonal}
In order to evaluate how much entanglement do random gates create, it is beneficial to consider the mean of their entangling power.
This section shall be devoted to the discussion of the average entangling power of random gates from the orthogonal tripartite ensemble of size $d = d_1 d_2 d_3$, drawn with respect to the Haar measure.

Starting with Eq.~(\ref{eq:definition_ent_power_tripartite}), we note that in order to calculate the average entangling power of orthogonal gates it suffices to consider the average of only one of the three bipartitions, such as, without loss of generality, $12|3$.
Then, using Eq.~(\ref{eq:definition_ent_power_tripartite_one_bipartition}) with this particular choice of the bipartition, we arrive at the expression
\begin{equation}
    \langle \varepsilon_\tau (O) \rangle_{O(d)} = \langle \varepsilon_{12|3} (O) \rangle_{O(d)} = 2 \bigg[1- \bigg(\prod^3_{i=1}\frac{1}{d_i (d_i + 1)}\bigg)u_{\vec{r}}\:u_{\vec{s}}\:u_{\vec{t}}\:
	\big\langle f_{\vec{r},\vec{s},\vec{t}}^{12|3}(O)\big\rangle_{O(d)} \bigg].
\end{equation}

Now, making use of Eq.~(\ref{eq:definition_f_function_e_p}) we conclude that
    \begin{equation}
        \big\langle f_{\vec{r},\vec{s},\vec{t}}^{12|3}(O)\big\rangle_{O(d)} =
	\delta^{i_1}_{l_1\vphantom{(}}\delta^{i_2}_{l_2\vphantom{(}}\delta^{i_3}_{j_3\vphantom{(}}
	\delta^{k_1}_{j_1\vphantom{(}}\delta^{k_2}_{j_2\vphantom{(}}\delta^{k_3}_{l_3\vphantom{(}}\:
	\int_{O(d)} O_{r_1s_1t_1}^{i_1i_2i_3}\,
	O_{r_3s_3t_3}^{k_1k_2k_3}\,
	\left(O^T\right)_{j_1j_2j_3}^{r_2s_2t_2}\,
	\left(O^T\right)_{l_1l_2l_3}^{r_4s_4t_4} \; dO.
    \end{equation}
Applying the results of Section~\ref{sec:orthogonal_weingarten_functions} we evaluate the integral to be a rather lengthy combination of Kronecker deltas
\begin{equation}
\begin{split}
            \big\langle &f_{\vec{r},\vec{s},\vec{t}}^{12|3}(O)\big\rangle_{O(d)} =
	\delta^{i_1}_{l_1\vphantom{(}}\delta^{i_2}_{l_2\vphantom{(}}\delta^{i_3}_{j_3\vphantom{(}}
	\delta^{k_1}_{j_1\vphantom{(}}\delta^{k_2}_{j_2\vphantom{(}}\delta^{k_3}_{l_3\vphantom{(}}\:
	\big\langle O^{i_1i_2i_3}_{r_1s_1t_1}\,
	O^{k_1k_2k_3}_{r_3s_3t_3}\,
	O_{r_2s_2t_2}^{j_1j_2j_3}\,
	O_{r_4s_4t_4}^{l_1l_2l_3}
	\big\rangle_{O(d)}  \\ 
	    &= \frac{\delta^{i_1i_2i_3}_{l_1l_2j_3\vphantom{(}}\delta^{k_1k_2k_3}_{j_1j_2l_3\vphantom{(}}}{d(d-1)(d+2)}
	\bigg(\big(d+1\big)
	\big(   
	    \delta^{k_1k_2k_3}_{i_1i_2i_3}\delta^{l_1l_2l_3}_{j_1j_2j_3}\delta^{3}_{1}\delta^{4}_{2}
	    + \delta^{j_1j_2j_3}_{i_1i_2i_3}\delta^{l_1l_2l_3}_{k_1k_2k_3}\delta^{2}_{1}\delta^{4}_{3} + \delta^{l_1l_2l_3}_{i_1i_2i_3}\delta^{j_1j_2j_3}_{k_1k_2k_3}\delta^{4}_{1}\delta^{3}_{2}\big)
    - \delta^{k_1k_2k_3}_{i_1i_2i_3}\delta^{l_1l_2l_3}_{j_1j_2j_3}\delta^{2}_{1}\delta^{4}_{3} \\ 
    &- \delta^{k_1k_2k_3}_{i_1i_2i_3}\delta^{l_1l_2l_3}_{j_1j_2j_3}\delta^{4}_{1}\delta^{3}_{2} - 
    \delta^{j_1j_2j_3}_{i_1i_2i_3}\delta^{l_1l_2l_3}_{k_1k_2k_3}\delta^{3}_{1}\delta^{4}_{2} - \delta^{j_1j_2j_3}_{i_1i_2i_3}\delta^{l_1l_2l_3}_{k_1k_2k_3}\delta^{4}_{1}\delta^{3}_{2} -  
    \delta^{l_1l_2l_3}_{i_1i_2i_3}\delta^{j_1j_2j_3}_{k_1k_2k_3}\delta^{3}_{1}\delta^{4}_{2} - 
    \delta^{l_1l_2l_3}_{i_1i_2i_3}\delta^{j_1j_2j_3}_{k_1k_2k_3}\delta^{2}_{1}\delta^{4}_{3}
    \bigg),
\end{split}
\end{equation}
where, for conciseness, by $\delta^{ijk}_{lmn}$ we denote $\delta^i_l\, \delta^{j\vphantom{(}}_{m\vphantom{(}}\, \delta^k_n$, while, for $i$ and $j$ being numbers, $\delta^i_j$ means $\delta^{r_is_it_i}_{r_js_jt_j}$.

Having all of the necessary tools, we may proceed to the evaluation of the terms with Kronecker deltas.
These terms, due to their implied summation, simplify to the dimension of the appropriate subsystem $d_1$, $d_2$, or $d_3$
\begin{equation}
\begin{split}
    \big\langle f_{\vec{r},\vec{s},\vec{t}}^{12|3}(O)\big\rangle_{O(d)} =\frac{(d+1 - d_3 - d_1d_2)\,\delta^3_1 \delta^4_2 
    + (dd_3+d_3 - 1 - d_1d_2)\, \delta^2_1\delta^4_3 + 
    (dd_1d_2+d_1d_2 - 1 - d_3)\,\delta^4_1\delta^3_2}{(d-1)(d+2)}.	
\end{split}
\end{equation}

Now, employing Eq.~(\ref{eq:u_r-u_s-u_t-expansion}), we expand another term of the entangling power into products of Kronecker deltas,
\begin{equation}
\begin{split}
    u_{\vec{r}}\:u_{\vec{s}}\:u_{\vec{t}}\: = 
    &\big(\delta^{r_1 r_3}_{r_2 r_4}+\delta^{r_1 r_3}_{r_4 r_2}\big)
    \big(\delta^{s_1 s_3}_{s_2 s_4}+\delta^{s_1 s_3}_{s_4 s_2}\big)
    \big(\delta^{t_1 t_3}_{t_2 t_4}+\delta^{t_1 t_3}_{t_4 t_2}\big), 
\end{split}
\end{equation}
which leads to an additional simplification of the analyzed average,
\begin{equation}
\begin{split}
            &u_{\vec{r}}\:u_{\vec{s}}\:u_{\vec{t}}\:\big\langle f_{\vec{r},\vec{s},\vec{t}}^{12|3}(O)\big\rangle_{O(d)} = \\
            &\frac{d\big(8(d+1 - d_3 - d_1d_2) + (dd_3+dd_1d_2+d_3+d_1d_2 - 2 - d_1d_2 - d_3)(d_1+1)(d_2+1)(d_3+1)\big)}{(d-1)(d+2)}.
\end{split}
\end{equation}

Finally, the average entangling power of orthogonal tripartite gates $O(d_1d_2d_3) = O(d)$ reads
\begin{equation}\label{eq:average_e_p_tripartite_orthogonal}
    \langle \varepsilon_\tau (O) \rangle_{O(d)} = \langle \varepsilon_{12|3} (O) \rangle_{O(d)} = \frac{(3d+3-d_1-d_2-d_3-d_{1}d_{2}-d_{1}d_{3}-d_{2}d_{3})\big(d(d_1+1)(d_2+1)(d_3+1)-8\big)}{\frac{3}{2}(d-1)(d+2)(d_1+1)(d_2+1)(d_3+1)}.
\end{equation}

We conclude this section by noting that the above result can be used to verify the efficacy of experiments aimed at simulating behavior of random quantum circuits when restricted to the subclass of orthogonal ones.
The following section aims at providing an analogous result for a superset of orthogonal gates, namely the unitary ones.

\section{Comparison between orthogonal and unitary groups}
Analogous results to those from the previous section for the case of the unitary group have been derived using Weingarten functions~\cite{Rajchel_entangling_power}, with the proof contained in the joint paper~\cite{Rajchel_entangling_power}.

\begin{proposition}
    Average entangling power $\varepsilon_\tau$ of tripartite unitary gates over the Hilbert space \\ {$\mathcal{H}_{d_1}\otimes\mathcal{H}_{d_2} \otimes \mathcal{H}_{d_3}$}, drawn with respect to the Haar measure, reads
    \begin{equation}
        \langle \varepsilon_\tau \rangle_U = \frac{3d_{1}d_{2}d_{3}+3-d_1-d_2-d_3-d_{1}d_{2}-d_{1}d_{3}-d_{2}d_{3}}{\frac{3}{2}(d_{1}d_{2}d_{3}+1)}.
    \end{equation}
\end{proposition}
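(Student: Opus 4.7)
The plan is to mirror the orthogonal-group derivation of Section~\ref{sec:average_e_p_tripartite_orthogonal}, with the orthogonal Weingarten functions replaced by their unitary counterparts. First, using the invariance of the Haar measure on $U(d_{1}d_{2}d_{3})$ under relabelling of the tensor factors, and the symmetric structure already observed in the orthogonal formula~(\ref{eq:average_e_p_tripartite_orthogonal}), I would reduce $\langle\varepsilon_{\tau}(U)\rangle_{U}$ of Eq.~(\ref{eq:definition_ent_power_tripartite}) to a single bipartition, say $12|3$, computed via Eq.~(\ref{eq:definition_ent_power_tripartite_one_bipartition}). This localises the problem to the Haar moment
\begin{equation*}
I \coloneqq \bigl\langle U_{r_{1}s_{1}t_{1}}^{i_{1}i_{2}i_{3}}\,(U^{\dag})_{j_{1}j_{2}j_{3}}^{r_{2}s_{2}t_{2}}\,U_{r_{3}s_{3}t_{3}}^{k_{1}k_{2}k_{3}}\,(U^{\dag})_{l_{1}l_{2}l_{3}}^{r_{4}s_{4}t_{4}}\bigr\rangle_{U(d)},
\end{equation*}
which must then be contracted against the six Kronecker deltas $\delta^{i_{1}}_{l_{1}}\delta^{i_{2}}_{l_{2}}\delta^{i_{3}}_{j_{3}}\delta^{k_{1}}_{j_{1}}\delta^{k_{2}}_{j_{2}}\delta^{k_{3}}_{l_{3}}$ that encode the splitting $12|3$ and against the three factors $u_{\vec{r}}\,u_{\vec{s}}\,u_{\vec{t}}$ defined in Eq.~(\ref{eq:u_r-u_s-u_t-expansion}).

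Second, I would evaluate $I$ by the standard second-moment formula for $U(d)$. In contrast to the orthogonal case, the relevant Weingarten matrix is indexed by $S_{2}$ (only identity and transposition), taking values $\tfrac{1}{d^{2}-1}$ on the diagonal and $-\tfrac{1}{d(d^{2}-1)}$ off the diagonal, so $I$ is a sum of only four products of Kronecker deltas on the indices $(r_{a},s_{a},t_{a})$, rather than the nine terms of Eq.~(\ref{eq:Weingarten_final}). Inserting $I$ into the contraction against the six external deltas and the $u$-factors, each closed index loop over an $r$-, $s$-, or $t$-direction produces a factor $d_{1}$, $d_{2}$, or $d_{3}$, respectively. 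Collecting the contributions of the four Weingarten terms yields an explicit rational expression in $(d_{1},d_{2},d_{3})$ for $\langle\varepsilon_{12|3}\rangle_{U}$ that plays the role of Eq.~(\ref{eq:average_e_p_tripartite_orthogonal}).

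In the final step I would form the average of the three bipartition formulas, which are related by cyclic permutations of $(d_{1},d_{2},d_{3})$, and simplify. The totally symmetric combination $3d_{1}d_{2}d_{3}+3-\sum_{i}d_{i}-\sum_{i<j}d_{i}d_{j}$ must appear in the numerator and $\tfrac{3}{2}(d_{1}d_{2}d_{3}+1)$ in the denominator, matching the claimed expression. The main obstacle is not conceptual but combinatorial: one has to manage the six external index slots, the four Weingarten indices, the three $u$-factors, and the normalisation prefactor $\prod_{i=1}^{3}[d_{i}(d_{i}+1)]^{-1}$ from Eq.~(\ref{eq:definition_ent_power_tripartite_one_bipartition}) simultaneously, and the bookkeeping is error-prone despite being simpler than the orthogonal analogue. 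A reassuring consistency check is the large-dimension limit $d_{1},d_{2},d_{3}\to\infty$, where both numerator and denominator grow like $3d_{1}d_{2}d_{3}$ and hence $\langle\varepsilon_{\tau}\rangle_{U}\to 2$, reproducing the familiar fact that a Haar-random tripartite pure state is, on average, essentially maximally one-tangle entangled.
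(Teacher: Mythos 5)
Your proposal is correct and follows essentially the same route as the paper: the thesis defers the detailed unitary computation to the joint publication, but the intended argument is exactly the unitary second-moment Weingarten calculation applied to the same $f^{ab|c}_{\vec{r},\vec{s},\vec{t}}$ and $u_{\vec{v}}$ tensors, in direct parallel with the orthogonal derivation of Section~\ref{sec:average_e_p_tripartite_orthogonal}, with the four-term $S_2$ Weingarten sum replacing the nine-term pair-partition sum. Your consistency checks also agree with the paper ($\langle\varepsilon_\tau\rangle_{U(8)}=2/3$ for three qubits, and the limit $2$ for large dimensions), and your final symmetrization over the three bipartitions is the correct way to handle unequal local dimensions.
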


We compare the results for the mean entangling power of unitary and orthogonal tripartite qu$d$it gates in Fig.~\ref{fig:average_tripartite_ep}, in the case equal local dimensions.
In the simplest case of three-qubit system, the mean of orthogonal gates reads $\langle \varepsilon_\tau \rangle_{O(8)}= \frac{208}{315}\approx0.660$, while the average over the unitary gates is given by $\langle \varepsilon_\tau \rangle_{U(8)} = \frac{2}{3}$.

\begin{figure}[H]
    \input{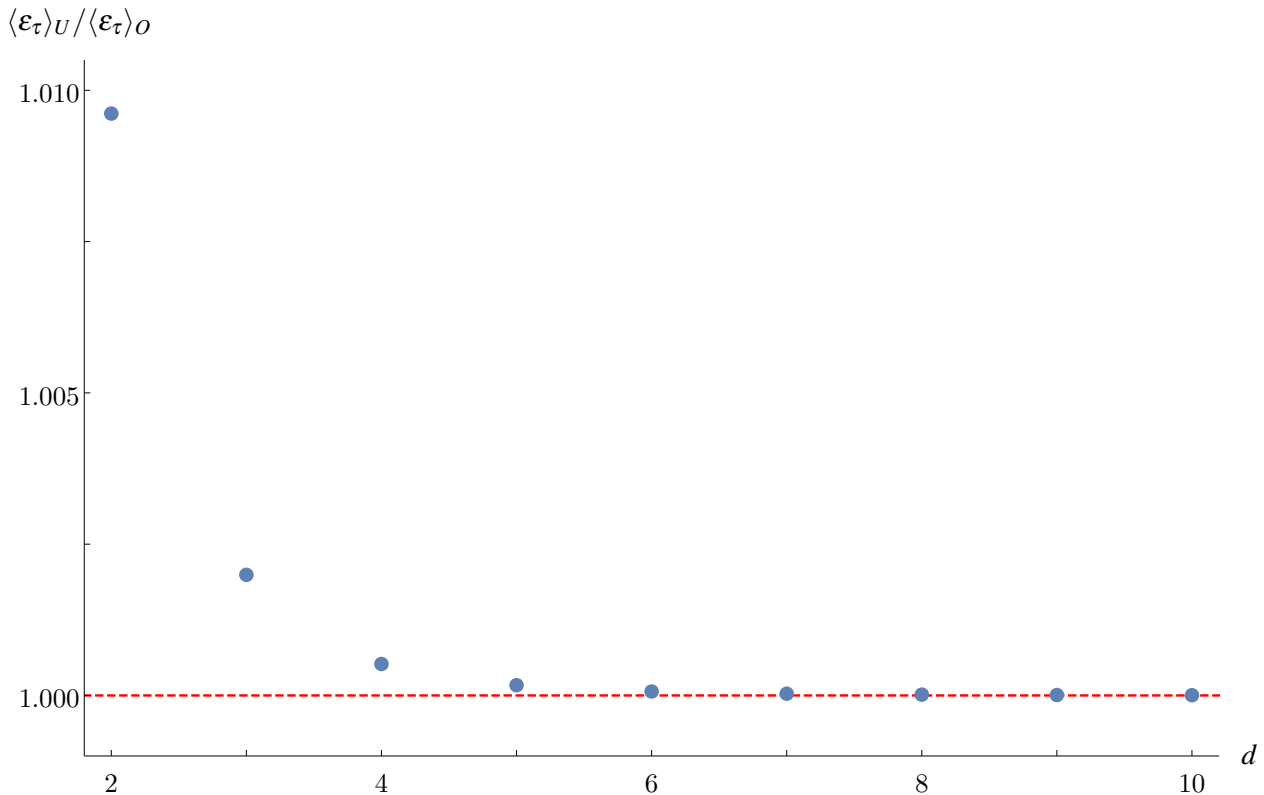}
    \caption{Fraction of the average entangling power of unitary tripartite qu$d$it gates $\langle \varepsilon_\tau \rangle_U$ over the average of the orthogonal ones $\langle \varepsilon_\tau \rangle_O$ approaches 1 as the dimension $d$ grows.
    }
    \label{fig:average_tripartite_ep}
\end{figure}

The dissimilarities between the unitary group and its proper subgroup of orthogonal operators diminish as the dimension increases.
This lead us to conjecture that in order to simulate random high-dimensional unitary gates it might be sufficient to restrict to orthogonal ones, at least from the perspective of tasks based on the average entanglement.

\section{Extension to multipartite gates}
Finally, as the most general result of the joint paper~\cite{Rajchel_entangling_power}, we present the characterization of the entangling power in the multipartite case.
The reasoning is analogous to the tripartite case.
Let us start by defining the notation.
Using elements of matrix of size $d_1...d_N$
\begin{equation}
    U^{j_1...j_N}_{i_1...i_N} = \bra{j_1...j_N}U\ket{i_1...i_N},
\end{equation}
we introduce its Choi-Jamio{\l}kowski isomorphism (see Section~\ref{sec:sets_of_matrices})
\begin{equation}
    U \mapsto \ket{U} = \frac{1}{\sqrt{d_1...d_N}}\sum_{i_1,...,i_N,j_1,...,j_N} U^{i_1...i_N}_{j_1...j_N} \ket{i_1...i_N j_1...j_N}.
\end{equation}

Then, the definition of entangling power of a multipartite operator can be expressed through the state connected by Choi-Jamio{\l}kowski isomorphism.

\begin{theorem}
    Setting the generalized concurrence as the multipartite measure of entanglement (see Section~\ref{sec:measures_of_entanglement}), the entangling power $\varepsilon_\tau$ of a multipartite gate $U$ acting on $N$ particles given by Eq.~(\ref{eq:def_entangling_power}) is equivalent to 
    \begin{equation}
        \varepsilon_\tau (U) = \frac{1}{2^{N-1}-1}\sum_{A|B} \varepsilon_{A|B} (U),
    \end{equation}
    where
    \begin{equation}
        \varepsilon_{A|B} (U)  = 2 \bigg[1 - \bigg( \prod^{N}_{i=1} \frac{d_i}{d_1 + 1} \bigg) \sum_{C|D} \text{\emph{Tr}}_{BC}\big(\text{\emph{Tr}}_{AD} \ket{U}\bra{U}\big)^2 \bigg].
    \end{equation}
\end{theorem}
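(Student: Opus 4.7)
The plan is to proceed by the same reductionist strategy used in the tripartite case: reduce the multipartite entangling power to a sum over bipartitions, evaluate one representative term, and then reassemble. First, by the definition of the one-tangle extended to $N$ particles (Section~\ref{sec:measures_of_entanglement}) and by linearity of the average, I can immediately write
\begin{equation*}
    \varepsilon_\tau(U) = \frac{1}{2^{N-1}-1}\sum_{A|B}\bigl\langle \tau_{A|B}\bigl(U\ket{\psi_{\text{sep}}}\bigr)\bigr\rangle_{\ket{\psi_{\text{sep}}}},
\end{equation*}
so the remaining task is to identify each summand with the $\varepsilon_{A|B}(U)$ announced in the statement.

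For a fixed bipartition $A|B$, I would use the generalized-concurrence identity $\tau_{A|B}(\ket{\phi}) = 2\bigl(1 - \mathrm{Tr}_A[(\mathrm{Tr}_B\ket{\phi}\bra{\phi})^2]\bigr)$ and write the separable input as a tensor product $\ket{\psi_{\text{sep}}} = \ket{\psi_1}\otimes\cdots\otimes\ket{\psi_N}$, each factor uniformly distributed on the unit sphere of $\mathcal{H}_{d_i}$. The Haar average then factorises and is carried out site by site with the standard first-order moment identity
\begin{equation*}
    \int \bigl(\ket{\psi_i}\bra{\psi_i}\bigr)^{\otimes 2}\,d\psi_i = \frac{\mathbb{I}_{d_i^2}+S_i}{d_i(d_i+1)},
\end{equation*}
where $S_i$ denotes the swap between the two copies of $\mathcal{H}_{d_i}$. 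This step produces the prefactor $\prod_{i=1}^{N}\tfrac{1}{d_i(d_i+1)}$ together with $2^N$ operator-valued terms indexed by an independent choice of identity or swap on each site.

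The third step is to translate the $2^N$-term sum into traces of the Choi--Jamio{\l}kowski state $\ket{U}$. Each swap at site $i$ can be re-expressed as a partial trace that cross-contracts the two replicas of subsystem $i$, whereas an identity leaves the two replicas independent. Labelling the sites on which a swap acts by $D$ and the remaining ones by $C$, these patterns are in one-to-one correspondence with bipartitions $C|D$ of $\{1,\dots,N\}$, and the combinatorial sum collapses to $\sum_{C|D} \mathrm{Tr}_{BC}\bigl[(\mathrm{Tr}_{AD}\ket{U}\bra{U})^2\bigr]$. The factors of $d_1\cdots d_N$ arising from the normalisation convention for $\ket{U}$ combine with $\prod_{i}\tfrac{1}{d_i(d_i+1)}$ to yield the announced prefactor $\prod_{i=1}^{N}\tfrac{d_i}{d_i+1}$.

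The main obstacle will be the careful index bookkeeping of the third step: one must verify that the $A|B$ bipartition, which is fixed throughout, interacts correctly with the secondary bipartition $C|D$ emerging from the Haar integrals, so that what appears is precisely $\mathrm{Tr}_{BC}[(\mathrm{Tr}_{AD}\cdot)^2]$ and not some other trace pattern. A clean way to organise this is a tensor-network picture with two replicas of $\ket{U}$, in which swaps correspond to wire-crossings at the chosen sites and the $A|B$ partition dictates which legs are traced before and after squaring; the resulting diagram is then unambiguously equal to the stated expression, which completes the proof.
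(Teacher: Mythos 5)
Your proposal is correct and follows essentially the route the paper intends: the thesis defers the proof of this theorem to the joint publication, but the tripartite lemma displayed earlier in the chapter — with its prefactor $\prod_i \tfrac{1}{d_i(d_i+1)}$ and the tensors $u_{\vec v}=\delta^{v_1}_{v_2}\delta^{v_3}_{v_4}+\delta^{v_1}_{v_4}\delta^{v_3}_{v_2}$, which are exactly the matrix elements of $\mathbb{I}+S$ from your second-moment identity — is precisely the bipartition-by-bipartition Haar average you describe, and your bookkeeping of the prefactor ($\prod_i\tfrac{1}{d_i(d_i+1)}\cdot(\prod_i d_i)^2=\prod_i\tfrac{d_i}{d_i+1}$, absorbing the Choi normalisation) checks out. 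Note only that the theorem as printed contains a typo ($d_1+1$ in the denominator should read $d_i+1$), which your version silently and correctly repairs.
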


The last result, relevant for our considerations, is a theorem concerning averages of entangling power of multipartite gates.

\begin{theorem}\label{thm:average_e_p_multipartite_gates}
Average entangling power of a multipartite gate $U$, acting on the Hilbert space $\mathcal{H}_{d_1}\otimes ...\otimes \mathcal{H}_{d_N}$ while averaging over
    \begin{enumerate}
        \item the orthogonal group $O(d_1...d_N)$ reads
        \begin{equation}
            \braket{\varepsilon_{\tau}}_{O(d_1\ldots d_N)}
	=2\left[1-\left(\prod_{i=1}^N\frac{1}{d_i+1}\right)
	\frac{2^N(D+1)-2B+\frac{BD-2^N}{2^{N-1}-1}C}
	{(D-1)(D+2)}\right],
        \end{equation}
        \item the unitary group $U(d_1...d_N)$ reads
        \begin{equation}
            \braket{\varepsilon_{\tau}}_{U(d_1\ldots d_N)}
	= 2\left[1-\left(\prod_{i=1}^N\frac{1}{d_i+1}\right)
	\frac{BC}
	{(2^{N-1}-1)(D+1)}\right],
        \end{equation}
    \end{enumerate}
    where the auxiliary variables denote $B\coloneqq \sum_{i_1,\ldots,i_N=0}^1 d_1^{i_1}\ldots d_N^{i_N}$, $C\coloneqq\sum_{A|B}(d_A+d_B)$, and $D\coloneqq d_1\ldots d_N$.
\end{theorem}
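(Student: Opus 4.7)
The plan is to generalize the tripartite derivation already sketched in Section~\ref{sec:average_e_p_tripartite_orthogonal} to arbitrary $N$-partite gates, exploiting the linearity of the average and the bipartition-wise definition of the multipartite concurrence. By the first theorem of the previous section, $\varepsilon_\tau (U)=\frac{1}{2^{N-1}-1}\sum_{A|B}\varepsilon_{A|B}(U)$, so it suffices to average $\varepsilon_{A|B}(U)$ over the group and then sum over the $2^{N-1}-1$ bipartitions. For each fixed $A|B$, I would write the purity $\mathrm{Tr}_{BC}(\mathrm{Tr}_{AD}\ket{U}\bra{U})^2$ as a contraction of four factors of matrix elements of $U$ and $U^{\dagger}$ with Kronecker deltas enforcing (i) the tracing prescription determined by $A|B$ and by the Choi-Jamio\l{}kowski partner $C|D$ (which partitions ancilla indices) and (ii) the purity pairing. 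This yields the direct $N$-partite analogue of the tripartite function $f^{ab|c}_{\vec r,\vec s,\vec t}$ in Eq.~(\ref{eq:definition_f_function_e_p}) and of the product $u_{\vec r}u_{\vec s}u_{\vec t}$ from Eq.~(\ref{eq:u_r-u_s-u_t-expansion}), with one factor $u_{\vec v_k}=\delta^{v_1^{(k)}}_{v_2^{(k)}}\delta^{v_3^{(k)}}_{v_4^{(k)}}+\delta^{v_1^{(k)}}_{v_4^{(k)}}\delta^{v_3^{(k)}}_{v_2^{(k)}}$ per subsystem $k=1,\dots,N$.

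Next, I would perform the group average. For $U(D)$ the second-moment Weingarten formula contains only two diagram types (the identity and the single transposition), producing contractions whose pattern of Kronecker deltas matches exactly either the identity or the swap pairing on each subsystem. Expanding $\prod_{k=1}^N u_{\vec v_k}$ gives $2^N$ terms indexed by binary strings $(i_1,\dots,i_N)\in\{0,1\}^N$; contracting each such term against the Weingarten deltas and the bipartition deltas gives a monomial $d_1^{i_1}\cdots d_N^{i_N}$ multiplied by a factor that depends on the bipartition $A|B$ only through $d_A+d_B$. Summing first over the binary strings produces $B=\prod_k(1+d_k)=\sum_{i_1,\dots,i_N}d_1^{i_1}\cdots d_N^{i_N}$, and summing over bipartitions introduces $C=\sum_{A|B}(d_A+d_B)$; the overall Haar normalization contributes $(D+1)^{-1}$, yielding the second formula of the theorem after reinserting the prefactor $\prod_k(d_k+1)^{-1}$ and the $(2^{N-1}-1)^{-1}$ coming from averaging the one-tangle.

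For the orthogonal case I would follow the same script, but now the second moment integral (Eq.~(\ref{eq:Weingarten_final})) contains three pair-partitions $p_1,p_2,p_3$ weighted by $\langle\mathrm{Wg}^O(p_i),p_j\rangle$. Each of the three contributions couples to the expansion of $\prod_k u_{\vec v_k}$ through a different matching between the primed and unprimed sets of indices. After carrying out the contractions one obtains three classes of terms: one giving a pure $B$--type sum (from the permutation pairing), one giving a $BC$--type contribution analogous to the unitary case, and one giving a constant multiple of $2^N$ from the fully-collapsed delta contractions. Grouping these classes and collecting the $(D-1)(D+2)$ denominator coming from the orthogonal Weingarten functions should reproduce $2^N(D+1)-2B+\tfrac{BD-2^N}{2^{N-1}-1}C$ in the numerator. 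The main obstacle is clearly bookkeeping in this last step: one must show that the signed combinations of the nine $\langle\mathrm{Wg}^O(p_i),p_j\rangle$ terms, after contraction with the $2^N$ expansion pieces and the $2^{N-1}-1$ bipartitions, collapse precisely into the compact linear combination of $B,C,D$ claimed, without residual cross-terms. A convenient way to control this is to verify the expression at $N=2$ and $N=3$ against Eq.~(\ref{eq:average_e_p_tripartite_orthogonal}) and the known bipartite result, and then argue by the additive structure of the bipartition sum that the closed form extends without modification to arbitrary $N$.
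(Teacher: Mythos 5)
Your proposal follows essentially the same route as the paper: decompose the one-tangle into the $2^{N-1}-1$ bipartition contributions, write each average purity as a contraction of four matrix elements against the product $\prod_k u_{\vec v_k}$, and evaluate the Haar average with the unitary (two-diagram) and orthogonal (three pair-partition) Weingarten calculus, exactly as the thesis demonstrates in detail for the tripartite orthogonal case in Section~\ref{sec:average_e_p_tripartite_orthogonal} before deferring the general statement to the joint paper. The one point to tighten is the final step: verifying $N=2,3$ and invoking ``additive structure'' is a consistency check rather than a proof that the nine orthogonal Weingarten contributions collapse to $2^N(D+1)-2B+\tfrac{BD-2^N}{2^{N-1}-1}C$, so the general-$N$ contraction you outline must actually be carried through rather than inferred from small cases.
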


In the case of equal dimensions of subsystems, the expressions from Theorem~\ref{thm:average_e_p_multipartite_gates} can be simplified.
\begin{corollary}
Mean entangling power of a $N$ qu$d$it gate $U$, acting on the Hilbert space $\mathcal{H}_{d}^{\otimes N}$ while averaging over
    \begin{enumerate}
        \item the orthogonal group $O(d^N)$ reads
        \begin{equation}
            \braket{\varepsilon_{\tau}}_{O(d^N)}
	=
	\frac{[2^N(d^N+1)-2(d+1)^N][d^N(d+1)^N-2^N]}
	{(2^{N-1}-1)(d^{2N}+d^N-2)(d+1)^N},
        \end{equation}
        \item the unitary group $U(d^N)$ reads
        \begin{equation}
            \braket{\varepsilon_{\tau}}_{U(d^N)}
	= \frac{2^N(d^N+1)-2(d+1)^N}{(2^{N-1}-1)(d^N+1)}.
        \end{equation}
    \end{enumerate}
    In the asymptotic limit $N\to \infty$ both values are equal, $\lim_{N\to \infty} 
    \braket{\varepsilon_{\tau}}_{O(d^N)} = \lim_{N\to \infty} \braket{\varepsilon_{\tau}}_{U(d^N)}=\frac{1}{2}$.
\end{corollary}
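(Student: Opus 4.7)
The plan is to derive the corollary as a direct specialization of Theorem~\ref{thm:average_e_p_multipartite_gates} to the symmetric case $d_1 = \cdots = d_N = d$. The main work is to evaluate the auxiliary constants $B$, $C$, $D$ under this symmetry, substitute into the two formulas, and simplify; the asymptotic claim then follows from identifying the dominant exponentials in $N$.

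First I would compute the three constants with equal dimensions. For $B = \sum_{i_1,\ldots,i_N=0}^{1}d_1^{i_1}\cdots d_N^{i_N}$, the sum factorizes as $\prod_{k=1}^{N}(1+d) = (d+1)^N$. The identity $D = d^N$ is immediate. For $C = \sum_{A|B}(d_A+d_B)$, I would parametrize unordered bipartitions by subsets $A\subsetneq\{1,\ldots,N\}$, noting that each bipartition is counted twice (once for $A$ and once for its complement), which gives $2C = \sum_{k=1}^{N-1}\binom{N}{k}(d^k+d^{N-k})$, and the binomial theorem reduces this to $C = (d+1)^N - d^N - 1$.

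Next I would substitute these expressions into the two formulas of Theorem~\ref{thm:average_e_p_multipartite_gates}. The unitary case is the cleaner of the two: the factor $\prod(d_i+1) = (d+1)^N$ in the denominator cancels exactly one copy of $B = (d+1)^N$ in the numerator, reducing the expression to $2 - 2[(d+1)^N - d^N -1]/[(2^{N-1}-1)(d^N+1)]$. Placing this over a common denominator and simplifying recovers the stated closed form. For the orthogonal case I would first record the factorizations $(D-1)(D+2) = d^{2N}+d^N-2$ and $BD - 2^N = d^N(d+1)^N - 2^N$, and then substitute $B$, $C$, $D$ into the bracketed numerator $2^N(D+1) - 2B + (BD-2^N)C/(2^{N-1}-1)$. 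Combined with the outer $2[1-\,\cdots]$ and cleared denominators, the result has to match the product $[2^N(d^N+1)-2(d+1)^N][d^N(d+1)^N-2^N]$ divided by $(2^{N-1}-1)(d^{2N}+d^N-2)(d+1)^N$.

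For the asymptotic statement I would compare the growth rates of the three exponentials $2^N$, $(d+1)^N$, and $(2d)^N$ appearing throughout. For $d\geq 2$, the ratios $(d+1)/(2d) \leq 3/4$ and $2/[d(d+1)] \leq 1/3$ are both strictly less than one, so the dominant contribution in each closed form comes from the single monomial of highest exponential order in $N$; isolating that monomial in both numerator and denominator yields the common limit. The main obstacle is not conceptual but purely algebraic: the factorization of the orthogonal numerator into the neat product advertised in the statement is not apparent from Theorem~\ref{thm:average_e_p_multipartite_gates}, and the safest route is to expand both the substituted formula and the claimed product and verify that their coefficients agree monomial by monomial in the variables $2^N$, $d^N$, $d^{2N}$, and $(d+1)^N$. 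Only after this identity is in hand does the asymptotic analysis reduce to reading off the leading exponentials.
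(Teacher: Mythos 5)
Your route is exactly the one the paper intends: the corollary is presented as a direct specialization of Theorem~\ref{thm:average_e_p_multipartite_gates}, and your evaluations $B=(d+1)^N$, $D=d^N$, $C=(d+1)^N-d^N-1$ are correct, as is the cancellation of $(d+1)^N$ in the unitary case. The "purely algebraic obstacle" you worry about in the orthogonal case does close: writing $x=d^N$, $y=(d+1)^N$, $z=2^N$, $P=z(x+1)-2y$, $Q=xy-z$, $S=2^{N-1}-1$, $R=(x-1)(x+2)y$ and $C=y-x-1$, the required identity is $2SR-2SP-2QC=PQ$, and both sides expand to $zx^2y+zxy+2zy-z^2x-z^2-2xy^2$. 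Up to and including parts (1) and (2) your proposal is complete and matches the paper.

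The gap is in the final sentence. You name the correct method (isolate the dominant exponentials) but never execute it, and executing it does not give $1/2$. In the unitary formula the numerator is dominated by $2^Nd^N$ and the denominator by $2^{N-1}d^N$, since $\big((d+1)/(2d)\big)^N\to 0$ for $d\ge 2$, so
\begin{equation*}
\lim_{N\to\infty}\braket{\varepsilon_{\tau}}_{U(d^N)}=\lim_{N\to\infty}\frac{2^N}{2^{N-1}-1}=2,
\end{equation*}
and the same leading-order count for the orthogonal expression ($2^Nd^{2N}(d+1)^N$ over $2^{N-1}d^{2N}(d+1)^N$) also gives $2$. This is consistent with the normalization used throughout the chapter: each generalized concurrence $\tau_{A|B}$ is bounded by $2$ rather than $1$, and for a Haar-random gate the typical bipartition is nearly balanced, so the one-tangle approaches its supremum. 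Hence the closed forms in (1) and (2) follow from the theorem, but they are incompatible with the claimed limit $\tfrac12$; a careful write-up along your lines would either conclude only that the two limits coincide (true — both equal $2$) or explicitly flag that the value $\tfrac12$ cannot be derived from the displayed formulas and must come from a different normalization convention. By asserting that the dominant-term analysis "yields the common limit," you pass over the one place in the statement that actually fails under scrutiny.
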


The above result concludes the main achievements of the joint paper~\cite{Rajchel_entangling_power}.

\section{Conclusions}
The entangling power of a bipartite unitary gate is a widely used notion that gives a lot of theoretical insight, as well as finds its practical applications in various quantum information protocols.
Optimization of the bipartite entangling power for unitary gates from $U(36)$ is the topic of the subsequent Chapter~\ref{chapter_6}.
The results provided in this chapter concerned the characterization of the more general, multipartite case.
Though theoretically inspiring, mathematical difficulties rendered this field not entirely explored, with our work likely to be the first contribution addressing this question. 

Our research supplies with the analytical expression of statistical properties of the entangling power, for the one-tangle as a measure of entanglement, in the case of tripartite and multipartite gates.
The averages were computed with respect to Haar measures on both unitary and orthogonal groups of an appropriate size.

A similar experimental problem of estimating a distribution of outcomes, given the action of random quantum gates, was presented recently by the Google team, which they claim proved quantum advantage~\cite{Arute_2019}.
Therefore, we are tempted to believe that our results will prove useful for benchmarking quantum devices, with a possible extension into the domain of entanglement creation.
The discussion of suggested future investigations is presented in Chapter~\ref{Summary}.


\clearpage
\part{Quantum designs}
\chapter{Absolutely maximally entangled state of 4 quhexes}
\label{chapter_6}
\vspace{-1cm}
\rule[0.5ex]{1.0\columnwidth}{1pt} \\[0.2\baselineskip]

\section{Introduction}
In Chapter~\ref{chapter_4}, we remarked that the issue of multipartite entanglement is less explored than its bipartite counterpart.
For instance, in the former case the existence of maximally entangled states is not known for all dimensions and number of parties~\cite{Table_AME}.
The present chapter shall explain techniques that lead to us to construct such a state in the smallest setup that hitherto was unsettled.

To this end, we shall consider an application of the previously introduced general notion of \emph{quantum designs}.
The usage of a design will be restricted to the specific ordering or arrangements of states, which entails particular properties as a consequence.
We will show the application of designs that lead us to the discovery of an absolutely maximally entangled state of four quantum systems of dimension 6, also called quhexes.

An absolutely maximally entangled (AME) state $\ket{\psi}_{ABCD}\in \mathcal{H}_N^{\otimes 4}$ is maximally entangled with respect to all symmetric bipartitions $AB|CD$, $AC|BD$, and $AD|BC$.
A unitary matrix connected to such a state corresponds to a gate $U\in U(N^2)$ with the maximal bipartite entangling power, connecting the present chapter to the previous one.

A summary of some parts of this chapter, as well as an extension of the others, in which the author's involvement was less substantial, can be found in a joint paper~\cite{Rajchel_AME}.
If not specified differently, the author's contribution to the work covered by this chapter was significant.

\begin{figure}[H]
    \includegraphics[width=1.\columnwidth]{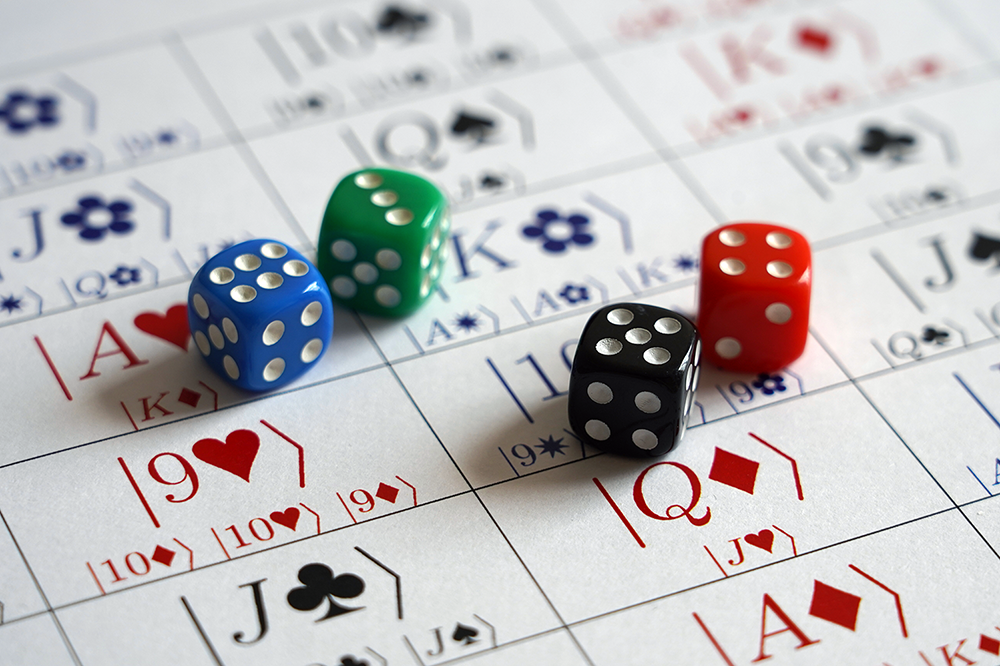}
    \caption{Absolutely maximally entangled states of four quhexes do exist and they are perfect to transmit the result of a dice game using quantum error correction codes.
    Although every pair of dice is unbiased, their outcome determines the results of the other two.
    The author is grateful to his wife for this artistic representation of an application of discovered AME states.}
    \label{fig:dice_AME}
\end{figure}

\section{Bipartite and multipartite entanglement}
Since the advent of quantum theory and the 1935 letter from Schr{\"o}dinger to Einstein~\cite{Schroedinger_1935}, the study of quantum entanglement was compelling from the point of view of theoretical physics.
Not surprisingly, the oddity of the new notion posed many challenging and long unanswered questions, to name only one: the EPR paradox.
In 1935 Einstein, Podolsky and Rosen considered a problem of apparent misalignment of quantum mechanics and theory of relativity by a consideration of two entangled particles~\cite{EPR_1935}. 
The experimental solution to this paradox was given a theoretical background by a seminal paper in 1964 by John Bell~\cite{Bell_1964}.
A specific setup provided by the group of Aspect in the early 1980s~\cite{Aspect_1981}, together with the final answer to certain loopholes~\cite{Aspect_1999}, inclined many experts to agree that entanglement is consistent with the theory of relativity and properly describes characteristics of nature.

The above example of the surprising properties of entanglement shows how the community of physicists remained not entirely convinced as to the genuine validity of quantum theory. However, at the same time, it reveals current consensus among scientists concerning bipartite entanglement, with only two quantum subsystems exhibiting quantum correlations.

To compare the degree of entanglement possessed by different quantum states, certain \emph{measures} of entanglement were introduced -- see Section~\ref{sec:measures_of_entanglement}.
The measures of entanglement in the bipartite case are currently believed to be well-understood~\cite{Plenio_2007}.


This belief stems from the two basic facts: (i) all states are fully characterized by the entanglement of formation~\cite{Bennett_1996} and (ii) entanglement monotones completely describe any state, up to local unitary (LU) operations~\cite{Vidal_1999}.
There exists a simple classification of maximally entangled pure states in a bipartite Hilbert space $\mathcal{H}_N\otimes \mathcal{H}_N $, as all of them are LU-equivalent to the generalized Bell state~\cite{Plenio_2007}
\begin{equation}
    \ket{\psi} = \frac{1}{\sqrt{N}}(\ket{11}+...+\ket{NN}).
\end{equation}

Summarizing the above description, almost ninety years after the paper by Schr{\"o}dinger some basic features of bipartite entanglement are presented in graduate textbooks~\cite{Peres_quantum_theory,Englert_lectures_on_quantum_mechanics}.
However, the situation is different in the domain of multipartite entanglement.
Consider two exemplary three-qubit states, belonging to the Hilbert space $\mathcal{H}_8 = \mathcal{H}_2\otimes \mathcal{H}_2 \otimes \mathcal{H}_2$: 
\begin{equation}
    \ket{W} = \frac{1}{\sqrt{3}}(\ket{112}+\ket{121}+\ket{211})
\end{equation}
and
\begin{equation}
    \ket{GHZ} = \frac{1}{\sqrt{2}}(\ket{111}+\ket{222}).
\end{equation}

D\"ur et al.\ showed in 2000~\cite{Dur_2000} that the states $\ket{W}$ and $\ket{GHZ}$ are not equivalent in a sense that none of them can be converted into the other one utilizing local operations and classical communication, see Section~\ref{sec:measures_of_entanglement}.
In the same paper the authors show that, in the case of more than three subsystems, two generic pure states cannot be converted into each other, not even with a small probability of success.
Both states $\ket{W}$ and $\ket{GHZ}$ are maximally entangled in their own respective classes, i.e.\ every state can be achieved by LOCC using one of them.
Therefore, the impossibility of conversion between them indicates an interesting property of existence of inequivalent maximally entangled states.
Even in a relatively simple case of three qubits, entanglement does possess profoundly different characteristics than in the bipartite systems.

\section{Absolutely maximally entangled (AME) states}
Multipartite entanglement displays an amusing trait: not always do maximally entangled states exist.
To explore the topic, we need to understand what does the term \emph{maximally entangled} mean in the case of more than two subsystems.
Since it is not possible to find a single measure of entanglement inducing an order into the set of multipartite quantum states, let us artificially divide the system into two parts, as shown in Fig.~\ref{fig:division_AME} in the example of 4 subsystems.

\begin{figure}[H]
\includegraphics[width=1\columnwidth]{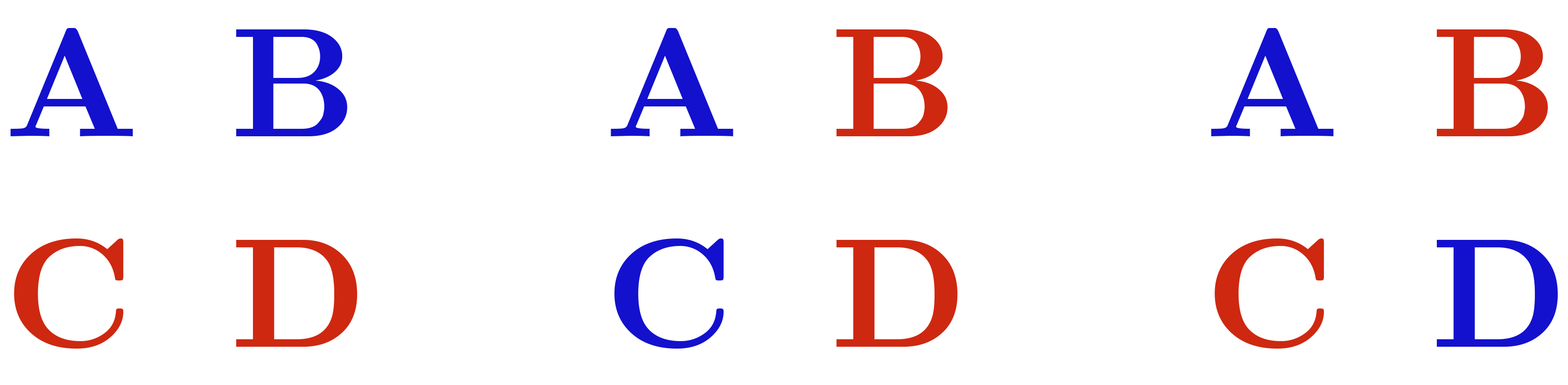}
    \caption{Three possible symmetric partitions of a system composed of four subsystems A, B, C, and D, marked by red and blue letters. 
    }\label{fig:division_AME} 
\end{figure}

Using splittings presented in Fig~\ref{fig:division_AME} one defines absolutely maximally entangled (AME) state of four subsystems as those for which all bipartitions are maximally entangled, in the sense of bipartite entanglement.
Following Helwig and Cui~\cite{Helwig_2012} we define AME states in the general case of $d$ subsystems.

\begin{definition}[AME state]\label{def:AME_state}
    A pure state $\ket{\psi}$ of Hilbert space $\mathcal{H}_N^{\otimes d}$ of $d$ subsystems labeled $Z=\{Z_1,...,Z_d\}$ of $N$ internal levels each is called absolutely maximally entangled -- AME($d$,$N$) -- if it meets one of the following equivalent requirements.
    \begin{enumerate}
        \item State $\ket{\psi}$ is maximally entangled with respect to every possible bipartition, i.e.\ splitting subsystems into two disjoint sets $X$ and $Y$ such that $X \bigcup Y = Z$ and $X\bigcap Y =\emptyset$. 
        Without loss of generality, we can assume that the number of subsystems in both partitions are $m = |X| \leq |Y| = d-m$.
        From the operational point of view, this means that the state can be written as
        \begin{equation}
            \ket{\psi} = \frac{1}{\sqrt{N^m}} \sum_{\vec{k}} \ket{k_1}_{X_1}\ket{k_2}_{X_2} ... \ket{k_m}_{X_m} |\phi(\vec{k})\rangle_Y,
        \end{equation}
        where the summation is understood over all possible multi-indices $\vec{k}$ with orthogonality relations $\langle \phi(\vec{k}) |\phi(\vec{k'})\rangle = \delta_{\vec{k}\vec{k'}}$.
        This state is LU-equivalent to the generalized Bell state; therefore, it is a maximally entangled state.
        \item Reduced density matrix of every subset of $X\subset Z$, for which the number of subsystems equals $|X| =\left \lfloor{\frac{d}{2}}\right \rfloor = m $, is maximally mixed,
        \begin{equation}
            \rho_X = \text{Tr}_{Z\setminus X}\big(\ket{\psi}\bra{\psi}\big) = \frac{1}{N^{m}}\mathbb{I}_{N^{m}}.
        \end{equation}
        \item Reduced density matrix of every subset of $X\subset Z$ that contains fewer than the half of all subsystems, $ |X| \leq \left \lfloor{\frac{d}{2}}\right\rfloor$, is maximally mixed.
    \end{enumerate}
\end{definition}

Equivalence between (2) and (3) stems from the fact that it suffices to check bipartite entanglement in extremal cases, in which the sizes of two subsystems are as close as possible.
AME states were first introduced for $N$ qubit systems by Facchi et al.\ in 2008~\cite{Facchi_2008}, with an extension for systems of a larger local dimension in 2012 by Helwig and Cui~\cite{Helwig_2012}.
These states prove to be useful from the perspective of multiple quantum protocols, including quantum error correction codes~\cite{Raissi_2018}, quantum maximal distance separable codes~\cite{Alsina_2019}, and open teleportation~\cite{Helwig_2012}.

AME states do not exist in all possible setups, specified by the number of subsystems $d$ and local dimensions $N$.
One of the goals of current research in this field is to find for which setups there exists AME($d$,$N$) state.
In many small local dimensions these attempts were successful, e.g.\ for $d$ states with local dimension 2 (qubits) it was proven that AME($d$,2) states exist only for $d=2,3,5$ and $6$, while they do not exist for all other dimensions~\cite{Scott_2004,Huber_2017}.

In general, there are several known methods for deciding whether the state AME($d$,$N$) states exist.
These techniques include constructions involving orthogonal Latin squares~\cite{Goyeneche_2015}, shadow inequalities~\cite{Huber_2018}, bound stemming from quantum error correcting codes~\cite{Scott_2004}, usage of stabilizer states~\cite{Danielsen_2012}, entanglement witnesses~\cite{Yu_2021}, and linear programming~\cite{Calderbank_1998}.
Notwithstanding the above methods, there are still cases for which the existence question remains open.
The smallest of these are presented in Fig.~\ref{fig:ame_table}, with the special emphasis on the four quhex state AME(4,6), which will be discussed in the subsequent parts of this chapter.

\begin{figure}[H]
    \includegraphics[scale=0.58]{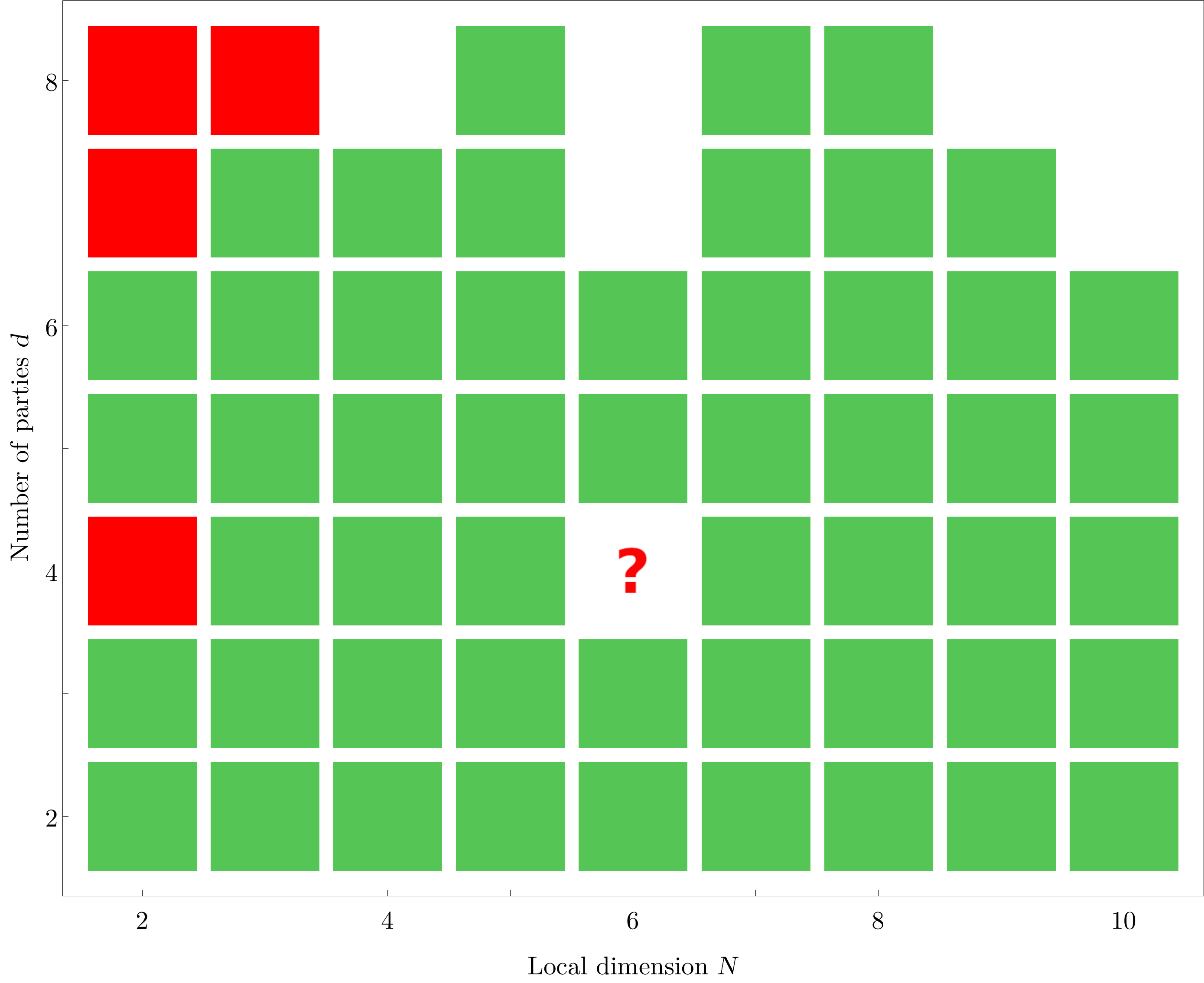}
    \caption{The existence of AME($d$,$N$) states for a small number of parties $d$ and local dimensions $N$, listed at the webpage of Huber and Wyderka~\cite{Table_AME} and known before the 2021 publication~\cite{Rajchel_AME}. 
    The green background of the square means that there exists a construction of the AME states in the respective Hilbert space, whereas the red background denotes the non-existence of the corresponding AME($d$,$N$) state.
    Blank squares depict the cases that are yet undecided.
    The question mark refers to the unknown status of the AME(4,6) state, which existence was proven in the recent joint paper~\cite{Rajchel_AME}. 
    }
    \label{fig:ame_table}
\end{figure}

\section{Orthogonal Latin squares and 36 officers of Euler}
Finding unexpected connections between seemingly distant fields of science is particularly interesting.
Such links between combinatorics and quantum information were established as combinatorial designs related to the problem of officers of Euler were applied for the search for absolutely maximally entangled states of four quhexes.
In order to appreciate the connection consider the original problem of Euler, which concerns Latin squares (see Section~\ref{sec:quantum_designs}) in dimension 6. 
Let us recall a known definition~\cite{Combinatorics_book} that lays the ground for the exact statement of the problem.

\begin{definition}[OLS]
    Two Latin squares ($A_{ij}$) and ($B_{ij}$) are called orthogonal (OLS) if, when superimposed, no pair of elements $(A_{ij},B_{ij})$ repeats.
\end{definition}

One can expand this definition for more than two mutually orthogonal Latin squares. 
However, in this thesis, we shall restrict to consideration of two OLS only.
Let us start the study of OLS by an observation that there are no two orthogonal Latin squares of size 2, exhibited by Fig.~\ref{fig:OLS_2}.

\begin{figure}[H]
    \begin{equation*}
            \large
        \begin{array}{|c|c|} \hline 
    1  & 2  \\ \hline
    2 & 1\\ \hline
    \end{array}\quad  + \quad
        \begin{array}{|c|c|} \hline 
    1  & 2  \\ \hline
    2 & 1\\ \hline
    \end{array} \quad
    =\quad
    \begin{array}{|c|c|} \hline 
    {1,1}    & 2,2  \\ \hline
    { 2,2} & 1,1\\ \hline
    \end{array} 
    \end{equation*}
    \caption{Two Latin squares of size $N=2$.
    Without loss of generality, we can assume that the first rows are the same in both Latin squares. 
    This implies that they are the same, and their superimposition shows that there are no two orthogonal Latin squares of size $N=2$.}
    \label{fig:OLS_2}
\end{figure}

Even though OLS built out of the smallest Latin squares do not exist, it is possible to construct them for all higher dimensions (apart from $N=6$), as exemplifies Fig.~\ref{fig:OLS_3} for $N=3$.

\begin{figure}[H]
    \begin{equation*} 
    \begin{array}{|c|c|c|} \hline 
1   & 2&  3  \\ \hline
2  & 3 & 1 \\  \hline
3  & 1 & 2  \\  \hline
\end{array}  
\quad + \quad \begin{array}{|c|c|c|} \hline 
1 & 2&  3  \\ \hline
3  & 1 & 2 \\  \hline
2  &3& 1  \\  \hline
\end{array}  
\quad = \quad 
\begin{array}{|c|c|c|} \hline 
1,1   & 2,2&  3,3  \\ \hline
2,3  & 3,1 &  1,2 \\  \hline
3,2 & 1,3 &  2,1 \\  \hline
\end{array}  
\quad = \quad    
\begin{array}{|c|c|c|} \hline 
{ \color{black} A \text{\spade} }    & K \text{\club} &  {\color{red}Q \text{\diamond}}   \\ \hline
{ \color{red} K\text{\diamond}}  & {\color{black} Q \text{\spade}} & A \text{\club}    \\  \hline
{ Q \text{\club}}  & {\color{red}A \text{\diamond}} &  {\color{black} K \text{\spade} }  \\  \hline
\end{array} 
    \end{equation*}
    \caption{Two orthogonal Latin squares of size 3.
    Alternative card form was also used, where the rank of the card stands for the first Latin square, while its suit denotes appropriate element from the second Latin square (in particular, 
$1\rightarrow A / \text{\spade} $, 
$2\rightarrow K / \text{\club}$, 
$3\rightarrow Q / {\color{red} \text{\diamond}}$).}
    \label{fig:OLS_3}
\end{figure}

The case of OLS of dimension 6 is different, what was already acknowledged by Euler in his notes from 1779~\cite{Euler_1779}, in which the name of 36 officers appear.
Motivation for such a name came from the problem of rearranging 36 officials of different ranks and different regiments so that no row or column contains repeating rank or regiment -- a real-world application of combinatorial designs.

Difficulties in constructing OLS of this size encouraged Euler to pose a conjecture that there are no OLS in dimensions $N \equiv 2$ (mod 4).
The smallest of these dimensions is six, and it took 121 years to prove that, indeed, there are no OLS of size $N=6$.
The proof was done in an exhaustive manner by a French lawyer and an amateur mathematician Gaston Tarry, employed by French colonial administration in Algeria\footnote{Henri Poincar\'e was so much impressed by this result that he recommended Tarry to the \emph{French Acad\'emie des Sciences}.}.
Tarry reduced the total number of more than 812 millions of combinations to 9408 cases and then showed that none of them leads to orthogonal Latin squares of size six~\cite{Tarry_1900}.
Nonetheless, the conjecture of Euler waited till Parker, Bose, and Shrikhande in 1960 proved that it does not hold for any $N>6$~\cite{Bose_1960}.
We sum up those results in the following theorem.

\begin{theorem}[Existence of OLS~\cite{Tarry_1900}]\label{theorem:OLS}
    Two orthogonal Latin squares exist in all dimensions apart from $N=2$ and $N=6$.
\end{theorem}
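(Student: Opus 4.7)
The plan is to split the proof into three parts matching the three regimes: the two exceptional sizes $N=2$ and $N=6$, and the existence result for all remaining $N\geq 3$.

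First, I would dispose of $N=2$ by a direct argument. Up to relabeling symbols we may assume the first row of any Latin square of size $2$ reads $(1,2)$, which then forces the second row to be $(2,1)$. Hence there is essentially only one Latin square of this size; superimposing it with itself yields the pairs $\{(1,1),(2,2)\}$ on the diagonal and $\{(2,2),(1,1)\}$ off-diagonal, so repetitions are unavoidable. This handles the $N=2$ case.

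Next, I would construct OLS for every $N\geq 3$ with $N\not\equiv 2\pmod{4}$. The core idea is to first treat prime powers $N=p^k$ by working in the finite field $\mathbb{F}_{p^k}$: fixing any $\alpha\in\mathbb{F}_{p^k}\setminus\{0,1\}$, the two arrays $A_{ij}=i+j$ and $B_{ij}=i+\alpha j$ (with indices ranging over $\mathbb{F}_{p^k}$) are Latin squares, and their orthogonality reduces to the solvability of a $2\times 2$ linear system over the field, whose determinant $\alpha-1$ is nonzero. Then I would extend to arbitrary $N$ with $N\not\equiv 2\pmod 4$ by MacNeish's product construction: if $(A,B)$ is an OLS pair of order $m$ and $(A',B')$ an OLS pair of order $n$, then the Kronecker-type pair $(A\otimes A',B\otimes B')$, obtained by indexing entries by ordered pairs, is an OLS pair of order $mn$. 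Combining this with the prime-power construction covers every $N$ whose prime factorization contains no factor $2$ in odd multiplicity, which is precisely $N\not\equiv 2\pmod 4$.

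For $N=6$ I would appeal to Tarry's exhaustive classification \cite{Tarry_1900}: by fixing a canonical form of the first Latin square and enumerating all reduced Latin squares of order $6$, one reduces the problem to a finite check (the $9408$ inequivalent cases mentioned in the text) and verifies that none of them admits an orthogonal mate. This step is unavoidable here — there is no short structural obstruction known — so I would simply cite the classical enumeration rather than reproduce it.

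The genuine obstacle is the remaining family $N\equiv 2\pmod 4$ with $N\geq 10$, disproving Euler's conjecture. Here I would follow the Bose–Shrikhande–Parker strategy \cite{Bose_1960}, which combines three ingredients: (i) the equivalence between a pair of OLS of order $N$ and a transversal design $\mathrm{TD}(4,N)$, or equivalently a resolvable pairwise balanced design on $N$ points with suitable block sizes; (ii) the existence of sufficiently many such designs for small sporadic orders, constructed by hand from group-divisible designs built on finite fields; and (iii) a recursive singular direct product that glues a design of order $n$ to a design of order $m$ along a common subdesign to produce one of order $nm+k$. Iterating these constructions starting from the sporadic base cases $N\in\{10,14,18,22,26\}$ (the range not directly covered by MacNeish) yields OLS for every $N\equiv 2\pmod 4$ with $N\geq 10$. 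I expect the main difficulty, and the part requiring the most bookkeeping, to be setting up the recursion so that every residue class modulo $4$ greater than $6$ is reached from a verified base case; the quick combinatorial arguments of the previous paragraphs do not suffice and one really needs the full design-theoretic machinery.
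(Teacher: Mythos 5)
Your proposal is correct and follows essentially the same route as the paper, which likewise disposes of $N=2$ by a direct two-row argument, defers $N=6$ to Tarry's exhaustive enumeration~\cite{Tarry_1900}, and defers the orders $N\equiv 2\ (\mathrm{mod}\ 4)$ with $N\geq 10$ to Bose, Shrikhande, and Parker~\cite{Bose_1960}. The only difference is that you spell out the standard finite-field and MacNeish product constructions for $N\not\equiv 2\ (\mathrm{mod}\ 4)$, whereas the paper leaves that case implicit and merely exhibits the $N=3$ example.
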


In the subsequent parts of the chapter, we consider in detail the link between OLS and AME states.

\section{Connection between orthogonal Latin squares and AME states}
One of the methods for constructing absolutely maximally entangled states involves using combinatorial designs.
In particular, the existence of orthogonal Latin squares of size $N$ implies the existence of an AME(4,$N$) state.

To show the connection, consider an example of OLS of size $N = 3$ presented in Fig.~\ref{fig:OLS_3}.
To its every element $(k,l)$ on the position $(i,j)$ one can associate a state in $\mathcal{H}_3^{\otimes 4}$ which reads $\ket{ijkl}$.
Note that we have chosen a slightly less common computational basis of a Hilbert space $\{\ket{1},...,\ket{N}\}$ instead of the usual one starting with $\ket{0}$.
By a summation of all $N^2 = 9$ elements (shown in the central panel in Fig.~\ref{fig:OLS_3}) one obtains an unnormalized state
\begin{equation}
\begin{split}
    \ket{\psi} &= \ket{1111} + \ket{1222} + \ket{1333} + \ket{2123} + 
    \ket{2231} + \ket{2312} + \ket{3132} + \ket{3213} + \ket{3321}.
\end{split}
\end{equation}

Following Helwig et al.~\cite{Helwig_2012} we observe that this state is AME.
In order to prove it, consider the second condition for a state to be AME (Def.~\ref{def:AME_state}), requiring that any two-systems reduced density matrix is maximally mixed.
Denoting $A$ and $B$ as subsystems corresponding to the index of an element of OLS and $C$ and $D$ as its content, observe that tracing over $A$ and $B$ gives a sum over all possible 9 pairs $\ket{k,l}\bra{k,l}$ of $\mathcal{H}_3\otimes \mathcal{H}_3$, thus a maximally mixed state.
Analogous reasoning holds for tracing over $C$ and $D$.

What remains is to verify the partial trace of systems with respect to subsystems associated to index and value.
Without loss of generality, consider the averaging over $A$ (row index) and $C$ (first value).
Since in each row every symbol from the set $\{1,2,3\}$ occurs exactly once on the first and the second position, then partial trace leaves exactly 9 pairs $\ket{k,l}\bra{k,l}$, making it maximally mixed.
The above argument holds similarly for any dimension, which proves the following lemma.

\begin{lemma}\label{lemma:OLS_AME}
    Two orthogonal Latin squares $(A_{ij},B_{ij})$ of size $N$ lead to an AME(4,$N$) state defined as $\psi = \frac{1}{N}\sum_{i,j}\ket{i,j,A_{ij},B_{ij}}$.     
\end{lemma}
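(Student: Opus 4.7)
The plan is to verify the second characterization of AME states from Definition~\ref{def:AME_state}, namely that every $2$-subsystem reduced density matrix of $\ket{\psi}$ is maximally mixed on $\mathcal{H}_N\otimes\mathcal{H}_N$. First I would check that $\ket{\psi}$ is normalized: the $N^{2}$ kets $\ket{i,j,A_{ij},B_{ij}}$ indexed by distinct pairs $(i,j)$ are mutually orthogonal in the computational basis, so with prefactor $1/N$ one gets $\braket{\psi|\psi}=1$. Labelling the four subsystems $A,B,C,D$ (row index, column index, first OLS entry, second OLS entry), the full density matrix expands as
\begin{equation}
\ket{\psi}\bra{\psi}=\frac{1}{N^{2}}\sum_{i,j,i',j'}\ket{i,j,A_{ij},B_{ij}}\bra{i',j',A_{i'j'},B_{i'j'}}.
\end{equation}

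Next I would split the six bipartitions $\{X,Y\}\subset\{A,B,C,D\}$ into three classes and show maximal mixedness of $\rho_{XY}=\mathrm{Tr}_{Z\setminus XY}(\ket{\psi}\bra{\psi})$ in each.
\textbf{Class 1: index--index,} $XY=\{A,B\}$. Tracing out $C,D$ enforces $A_{ij}=A_{i'j'}$ and $B_{ij}=B_{i'j'}$, forcing $(i,j)=(i',j')$ because $(A,B)$ is a Latin square (in fact because $A$ alone is). This leaves $\rho_{AB}=\frac{1}{N^{2}}\sum_{i,j}\ket{i,j}\bra{i,j}=\frac{1}{N^{2}}\mathbb{I}_{N^{2}}$.
\textbf{Class 2: value--value,} $XY=\{C,D\}$. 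This is where orthogonality of the Latin squares enters: tracing out $A,B$ forces $i=i'$ and $j=j'$, giving $\rho_{CD}=\frac{1}{N^{2}}\sum_{i,j}\ket{A_{ij},B_{ij}}\bra{A_{ij},B_{ij}}$, and the map $(i,j)\mapsto(A_{ij},B_{ij})$ is a bijection on $\{1,\dots,N\}^{2}$ precisely by the OLS condition, so $\rho_{CD}=\frac{1}{N^{2}}\mathbb{I}_{N^{2}}$.
\textbf{Class 3: one index, one value,} e.g.\ $XY=\{A,C\}$. Tracing out $B,D$ now imposes $j=j'$ and $B_{ij}=B_{i'j}$; since $B$ is a Latin square, fixing column $j$ makes $i\mapsto B_{ij}$ a bijection, so $i=i'$ automatically. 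Hence $\rho_{AC}=\frac{1}{N^{2}}\sum_{i,j}\ket{i,A_{ij}}\bra{i,A_{ij}}$, and for each fixed row $i$ the map $j\mapsto A_{ij}$ is a bijection (since $A$ is Latin), yielding $\rho_{AC}=\frac{1}{N^{2}}\mathbb{I}_{N^{2}}$. The remaining three mixed cases $\{A,D\},\{B,C\},\{B,D\}$ follow by the same argument after exchanging the roles of $A\leftrightarrow B$ and/or $C\leftrightarrow D$.

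The main obstacle is really conceptual rather than computational: one has to notice that the OLS property is only needed for the value--value bipartition, while the (weaker) Latin property of each square separately suffices for the mixed bipartitions, and the index--index case is automatic. Once the six cases are dispatched, every reduced density matrix with $|X|=\lfloor 4/2\rfloor=2$ is maximally mixed, so by condition~(2) of Definition~\ref{def:AME_state} the state $\ket{\psi}$ is AME$(4,N)$, completing the proof.
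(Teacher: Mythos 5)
Your overall strategy coincides with the paper's: verify condition (2) of Definition~\ref{def:AME_state} by showing that all six two-body reductions are maximally mixed, split into the index--index, value--value, and mixed classes. The normalization check and the computations in Classes 2 and 3 are correct, and Class 3 does indeed use only the Latin property of each square separately.

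However, your justification of Class 1 is wrong, and the ``conceptual'' remark at the end inherits the error. From $A_{ij}=A_{i'j'}$ alone you cannot conclude $(i,j)=(i',j')$: in a Latin square each symbol occurs $N$ times (once per row), e.g.\ $A_{11}=A_{23}=A_{32}$ in the paper's $3\times 3$ example, so ``because $A$ alone is [Latin]'' is false. What kills the off-diagonal terms of $\rho_{AB}$ is the simultaneous equality $(A_{ij},B_{ij})=(A_{i'j'},B_{i'j'})$ combined with the \emph{orthogonality} of the pair, which says precisely that $(i,j)\mapsto(A_{ij},B_{ij})$ is injective. Hence the index--index bipartition is not ``automatic,'' and the OLS property is not ``only needed for the value--value bipartition'': it is needed for both (injectivity and surjectivity of that map onto $\{1,\dots,N\}^{2}$ are equivalent here, and each of $\rho_{AB}$ and $\rho_{CD}$ being maximally mixed is equivalent to it). A sanity check: if you took $A=B$ to be the same Latin square, your Class 1 argument would read identically, yet $\rho_{AB}$ would have nonzero off-diagonal terms. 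The cheapest correct fix is either to invoke orthogonality directly in Class 1, or to observe that $\rho_{AB}$ and $\rho_{CD}$ are complementary reductions of a pure state and therefore share their nonzero spectrum, so Class 1 follows from Class 2 for free.
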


Based on Theorem~\ref{theorem:OLS}, we conclude that there exists an AME(4,$N$) state for any $N$ besides 2 and 6.
These two cases are not excluded by the connection between OLS and AME states.
Nonetheless, the non-existence of OLS of sizes 2 and 6 means that searching for instances of AME(4,2) state and AME(4,6) state should be carried out by different means.
As was already mentioned, Higuchi and Sudbery found out~\cite{Higuchi_2000} that there are no AME(4,2) states, which means that the only case left is AME(4,6) state.

\section{Orthogonal quantum Latin squares}
Let us start exploring the path which led to the success in search of AME(4,6) state by introducing the notion of \emph{orthogonal quantum Latin squares} (OQLS) in correspondence to their classical counterparts.
This was first done by Goyeneche et al.~\cite{Goyeneche_2018} and then slightly modified by Rico~\cite{Rico_2020,Rico_2021}.

\begin{definition}[OQLS]\label{def:OQLS}
    A $N\times N$ arrangement of pure quantum states $\{\ket{\psi_{ij}}\}$ from $\mathcal{H}_N\otimes \mathcal{H}_N = \mathcal{H}^{A}\otimes \mathcal{H}^{B}$ is said to form a pair of orthogonal quantum Latin squares if it meets the following conditions.
     \begin{enumerate}
        \item All its elements form an orthonormal basis $\{\ket{\psi_{11}},...,\ket{\psi_{NN}}\}$.
        \item Trace over any of the two subsystems while summing over two different rows is equal to zero,
        \begin{equation}
            \text{Tr}_{A(B)} \sum_{i} \ket{\psi_{ji}}\bra{\psi_{ki}} = \delta_{jk} \mathbb{I}.
        \end{equation}
        \item Trace over any of the two subsystems while summing over two different columns is equal to zero,
        \begin{equation}
            \text{Tr}_{A(B)} \sum_{i} \ket{\psi_{ij}}\bra{\psi_{ik}} = \delta_{jk} \mathbb{I}.
        \end{equation}
    \end{enumerate}
\end{definition}

The first condition of Def.~\ref{def:OQLS} can be straightforwardly related to the analogous condition for classical orthogonal Latin squares, namely that all of the possible pairs are present inside the array.
This can be seen clearly in the case where all the states inside OQLS are elements of the computational basis.
Then, the arrangement differs from the classical one only by using the Dirac notation while referring to pair of numbers.
For this reason, OQLS are generalizations of OLS -- every orthogonal Latin square directly leads to an orthogonal quantum Latin square.

On the other hand, the connection to quantum Latin squares (see Section~\ref{sec:quantum_designs}) is not so obvious, since only in the case of separable elements of OQLS, $\ket{\psi_{ij}} = \ket{\kappa_{ij}}\otimes \ket{\eta_{ij}} \in \mathcal{H}_N\otimes \mathcal{H}_N$, one can derive two underlying quantum Latin squares $\{\ket{\kappa_{ij}}\}$ and $\{\ket{\eta_{ij}}\}$.
In analogy to the connection presented by Lemma~\ref{lemma:OLS_AME}, the existence of OQLS of size $N$ implies the existence of AME(4,$N$) state.

\begin{lemma}\label{lemma:OQLS_AME}
    A pair of orthogonal quantum Latin squares $\{\ket{\psi_{ij}}\}$ of size $N$ leads to an AME(4,$N$) state defined as $\ket{\psi} = \frac{1}{N}\sum_{ij} \ket{ij}\ket{\psi_{ij}}$.
    Conversely, any AME(4,$N$) state leads to a pair of OQLS of size $N$.
\end{lemma}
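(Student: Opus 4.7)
My plan is to prove the two implications separately, exploiting the characterization of AME states via reduced density matrices (condition (2) of Definition~\ref{def:AME_state}) together with the three OQLS conditions, which are remarkably well-matched. For a four-party state the relevant bipartitions yield six two-body marginals $\rho_{12}, \rho_{13}, \rho_{14}, \rho_{23}, \rho_{24}, \rho_{34}$, and I will check maximal mixedness of each by computing it directly from the candidate expansion $\ket{\psi}=\frac{1}{N}\sum_{ij}\ket{ij}\ket{\psi_{ij}}$, where subsystems $1,2$ are indexed by $i,j$ and subsystems $3,4$ correspond to the two tensor factors $A,B$ of $\ket{\psi_{ij}}\in\mathcal{H}^{A}\otimes\mathcal{H}^{B}$.

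For the forward direction (OQLS $\Rightarrow$ AME), first I would verify normalization: condition 1 (orthonormal basis) gives $\braket{\psi|\psi}=1$. Then $\rho_{12}=\text{Tr}_{34}\ket{\psi}\bra{\psi}=\frac{1}{N^{2}}\sum_{ij,kl}\ket{ij}\bra{kl}\braket{\psi_{kl}|\psi_{ij}}$ collapses to $\frac{1}{N^{2}}\mathbb{I}$ by orthonormality, and $\rho_{34}=\frac{1}{N^{2}}\sum_{ij}\ket{\psi_{ij}}\bra{\psi_{ij}}=\frac{1}{N^{2}}\mathbb{I}$ because an orthonormal basis resolves the identity. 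The key step is handling the mixed marginals $\rho_{13},\rho_{14},\rho_{23},\rho_{24}$. For instance, tracing out subsystems $2$ and $3$ yields
\begin{equation}
\rho_{14}=\frac{1}{N^{2}}\sum_{i,k,j}\ket{i}\bra{k}\otimes\text{Tr}_{A}\bigl(\ket{\psi_{ij}}\bra{\psi_{kj}}\bigr),
\end{equation}
and the OQLS row condition 2 with trace over $A$ states exactly that $\sum_{j}\text{Tr}_{A}(\ket{\psi_{ij}}\bra{\psi_{kj}})=\delta_{ik}\mathbb{I}_{N}$, so $\rho_{14}=\frac{1}{N^{2}}\mathbb{I}$. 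The remaining three marginals follow by the same manipulation: $\rho_{13}$ from condition 2 traced over $B$, and $\rho_{23}, \rho_{24}$ from the column condition 3 traced over $B$ and $A$ respectively.

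For the converse direction (AME $\Rightarrow$ OQLS), I would simply expand any AME state in the computational basis of the first two subsystems, $\ket{\psi}=\frac{1}{N}\sum_{ij}\ket{ij}\ket{\psi_{ij}}$, which uniquely defines the vectors $\ket{\psi_{ij}}\in\mathcal{H}^{A}\otimes\mathcal{H}^{B}$ (with the factor $1/N$ absorbing the normalization). Reading the same identities as in the forward direction but the other way round, maximal mixedness of $\rho_{12}$ gives the orthonormality $\braket{\psi_{kl}|\psi_{ij}}=\delta_{ik}\delta_{jl}$ (hence an orthonormal basis of $\mathcal{H}^{A}\otimes\mathcal{H}^{B}$ since there are $N^{2}$ such vectors), while maximal mixedness of $\rho_{13}, \rho_{14}$ yields the row condition 2 for both partial traces, and maximal mixedness of $\rho_{23}, \rho_{24}$ yields the column condition 3.

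No serious obstacle is anticipated: the proof is essentially bookkeeping matching the three defining conditions of OQLS to the four nontrivial bipartitions of the four-party state. The one place to be careful is notational, keeping track of which index labels rows, which labels columns, and on which factor ($A$ or $B$) each partial trace acts, so that condition 2 is correctly paired with $\rho_{13},\rho_{14}$ and condition 3 with $\rho_{23},\rho_{24}$. The remaining marginals $\rho_{12}$ and $\rho_{34}$ correspond directly to OQLS condition 1 and require no new input.
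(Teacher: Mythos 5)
Your proposal is correct and follows essentially the same route as the paper's proof: both directions are established by matching the three OQLS conditions to maximal mixedness of the two-body marginals of the four-party state, with condition (1) corresponding to $\rho_{12}$ (equivalently $\rho_{34}$) and conditions (2)--(3) corresponding to the mixed marginals via the partial traces over $A$ and $B$. Your version is merely a bit more systematic in listing all six marginals explicitly, whereas the paper computes $\rho_{AC}$ in detail and handles the rest by symmetry; the content is identical.
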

\begin{proof}
    In the proof, we shall be following the line of thought presented by Rico~\cite{Rico_2021}.
    Let us start by proving that the AME(4,$N$) state $\ket{\psi}$ leads to a pair of OQLS of size $N$.    
    Every such state can be written as
    \begin{equation}
        \ket{\psi} = \frac{1}{N}\sum_{ij}\ket{i}_A \ket{j}_B \ket{\psi_{ij}}_{CD},
    \end{equation}
    which stems from the fact that an AME state is maximally entangled with respect to any symmetric bipartition.
    Therefore, it is locally unitarily equivalent to a generalized Bell state.
    
    We shall construct the pair of OQLS treating $\{i,j\}$ as indices and $\ket{\psi_{ij}}$ as elements in the appropriate places.
    First, condition (1) from Definition~\ref{def:OQLS} is satisfied because of the maximal entanglement of this bipartition.
    What remains to prove are conditions (2) and (3).
    
    Due to properties of AME states $\rho_{AC} = \text{Tr}_{BD}\big(\ket{\psi}\bra{\psi}\big) = \frac{1}{N^2}\mathbb{I}_{AC}$, so
    \begin{equation}\label{eq:AME_OQLS}
    \begin{split}
        \frac{1}{N^2}\mathbb{I}_{AC} &= \text{Tr}_{BD} \big(\ket{\psi}\bra{\psi}\big) = \frac{1}{N^2} \text{Tr}_{BD} \big( \sum_{ijkl} \ket{ij}\bra{kl} \otimes \ket{\psi_{ij}}\bra{\psi_{kl}}  \big) = \frac{1}{N^2} \sum_{ik} \ket{i}\bra{k} \otimes \text{Tr}_D \big( \sum_{j}  \ket{\psi_{ij}}\bra{\psi_{kj}}  \big) = \\
        &= \frac{1}{N^2}\Big( \mathbb{I}_{A} \otimes \text{Tr}_D \big( \sum_{j}  \ket{\psi_{ij}}\bra{\psi_{ij}}  \big) + \sum_{i\neq k} \ket{i}\bra{k} \otimes \text{Tr}_D \big( \sum_{j}  \ket{\psi_{ij}}\bra{\psi_{kj}}  \big) \Big).
    \end{split}
    \end{equation}
    
    The left-hand side does not admit off-diagonal non-zero elements.
    Therefore, since the second part of the sum concerns $\ket{i}\bra{k}$ for $i\neq k$, the terms $\text{Tr}_D \big( \sum_{j}  \ket{\psi_{ij}}\bra{\psi_{kj}}  \big)$ must all vanish.  
    Simultaneously, $\text{Tr}_D \big( \sum_{j}  \ket{\psi_{ij}}\bra{\psi_{ij}}  \big) = \mathbb{I}_{C}$, which proves condition (2).
    Analogous reasoning verifies condition (3) using the reduced state $\rho_{AD}$.
    This concludes the proof that an AME(4,$N$) state leads to a pair of OQLS.
    Nonetheless, the reasoning exhibited by Eq.~(\ref{eq:AME_OQLS}) is also valid in the other direction.
    Together with the fact that OQLS form a basis, this proves that OQLS can be used to construct AME(4,$N$) state; thereby, proving the theorem.
\end{proof}

Lemma~\ref{lemma:OQLS_AME} motivates the search for the AME(4,6) state from the perspective of quantum designs since these two notions turn out to be equivalent.
In the next section, we shall focus on the starting point provided by an approximate pair of OLS of size 6, enabling us to progress in the quantum domain and, at last, producing the desired absolutely maximally entangled state of four quhexes.

\section{Almost orthogonal pair of classical Latin squares of size 6}
In the previous sections, we argued following Euler that there are no orthogonal Latin squares of size 6.
Nevertheless, one is tempted to find, \emph{how close} to orthogonality two Latin squares of this size can be arranged.
The answer to this puzzle is known~\cite{Hill_1986}, and one of the many equivalent arrangements, designated $P_{36}$, is presented in Fig.~\ref{fig:almost_OLS_6}.

\begin{figure}[H]
\centering
	\begin{tikzpicture}
			\node (a) at (-3,0) {$P_{36}$ =
\setlength{\tabcolsep}{3pt}
\renewcommand{\arraystretch}{1.}
\begin{tabular}{|llllll|}
\hline
$11$ & $22$ & {\color{red} $33$}  & {\color{red} $44$} & $55$ & $66$ \\ 
$23$ & $14$ & $45$ & $36$ & $61$ & $52$ \\ 
$32$ & $41$ & $64$ & $53$ & $16$ & $25$ \\ 
$46$ & $35$ & $51$ & $62$ & $24$ & $13$ \\ 
$54$ & $63$ & $26$ & $15$ & $42$ & $31$ \\ 
$65$ & $56$ & $12$ & $21$ & {\color{red} $33$} & {\color{red} $44$} \\
\hline
\end{tabular}
 };
        \node (a) at (-0.1,0) {$=$};
        \node (a) at (3.8,0) {\includegraphics[scale=0.95]{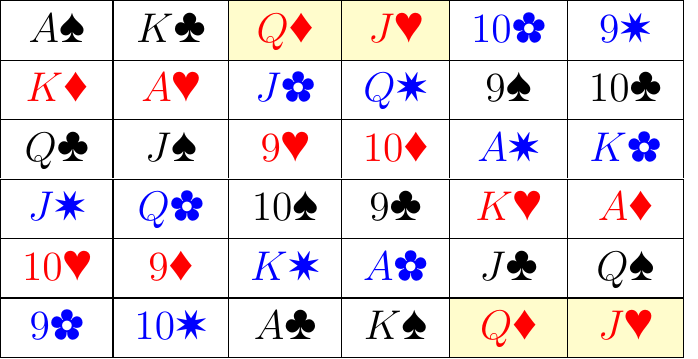}};    
    \end{tikzpicture}
    \caption{Two \emph{almost} orthogonal Latin squares of size 6, represented by a permutation matrix $P_{36}$ of order $N^2=36$, reproduced from the joint paper~\cite{Rajchel_AME}.
    The rank of the card stands for the first Latin square, while its suit denotes appropriate element from the second Latin square (in particular, 
$1\rightarrow A / \text{\spade} $, 
$2\rightarrow K / \text{\club}$, 
$3\rightarrow Q / {\color{red} \text{\diamond}}$,
$4\rightarrow J / {\color{red} \text{\heart} }$,
$5\rightarrow 10 / {\color{blue} \text{\flower} }$, 
$6\rightarrow 9 / {\color{blue}  \text{\star}}$).
    This is only an approximation to the OLS of size 6, as there are two pairs of symbols that repeat twice (highlighted by a yellow background), not fulfilling the condition that no pairs can repeat.
    Observe that two cards from the deck of 36, queen of hearts {\color{red}Q\heart} and jack of diamonds {\color{red}J\diamond}, are missing in this pattern.
    }
    \label{fig:almost_OLS_6}
\end{figure}

In order to quantify the degree of orthogonality of a given pair of Latin squares, we shall use the notion of entangling power, introduced in Section~\ref{sec:gates_ent_power} and discussed in detail in the context of multipartite gates in Chapter~\ref{chapter_3}.
By using this metric we find that the matrix $P_{36}$ leads to the solution as close to OLS of size 6 as possible, since $P_{36}$ possesses the highest entangling power attainable by any such arrangement of size six~\cite{Clarisse_2005}.
Using the normalization introduced in Section~\ref{sec:gates_ent_power}, with maximal entangling power $e_p = 1$, it is easy to compute that $e_p(P_{36}) = \frac{314}{315} \approx 0.9968$.
Hence, only slightly more than three permilles of the normalized entangling power is missing to the desired AME state for which, by construction, $e_p = 1$.

\section{Alternative description of QLS using matrices}\label{sec:matrix_description_AME}
Since using the classical form of OLS will not yield any better solution, we need to expand the search for AME(4,6) state to the domain of quantum Latin squares.
By using this method, every element of a QLS is a pure quantum state from $\mathcal{H}_{6}\otimes \mathcal{H}_{6}$; therefore, can be written as a $6\times 6$ matrix.
The injection can be visualized easily for the computational basis, namely $\ket{i}\ket{j} \rightarrow$ 1 in the $i$-th row and $j$-th column.

As an example, consider the element in the second row and first column of the matrix $P_{36}$, which reads (2,3).
Its transformation to a matrix gives
\begin{equation}\label{eq:example_OLS_to_QLS}
    (2,3) \rightarrow
    \begin{pmatrix}
    0 & 0 & 0 & 0 & 0 & 0\\ 
    0 & 0 & 1 & 0 & 0 & 0\\ 
    0 & 0 & 0 & 0 & 0 & 0\\ 
    0 & 0 & 0 & 0 & 0 & 0\\ 
    0 & 0 & 0 & 0 & 0 & 0\\ 
    0 & 0 & 0 & 0 & 0 & 0\\ 
    \end{pmatrix}.
\end{equation}

Ultimately, in the language of QLS the matrix $P_{36}$ can be written as a permutation matrix of order 36 by transforming each of its elements into one of the $6\times 6$ blocks, using the procedure exemplified by Eq.~(\ref{eq:example_OLS_to_QLS}).
The reverse procedure can also be applied, i.e.\ every matrix of size 36 can be converted to an array of size 6, composed of quantum states (possibly unnormalized), by dividing it into 36 blocks.
Slightly abusing the notation, we shall refer to QLS/matrices by the same name, e.g.\ $P_{36}$ might be understood as an array presented in Fig.~\ref{fig:almost_OLS_6} or a unitary matrix of order 36, depending on the context.
In Table~\ref{tab:P_36} we present $P_{36}$ converted to a form of a permutation matrix, $P_{36}\in U(36)$.

\begingroup
\arraycolsep=1.4pt\def\arraystretch{2.2}
\renewcommand{\arraystretch}{1}
\definecolor{green(pigment)}{rgb}{0.75, 1.0, 0.750}
\newcommand\y{\colorbox{green(pigment)}{$1$}}
\newcommand\redd{\colorbox{red}{$1$}}
\begin{table}[H]
\begin{center}
$P_{36}=
\tiny{
\left[\begin{array}{rrrrrr|rrrrrr|rrrrrr|rrrrrr|rrrrrr|rrrrrr}
      \y &.&.&.&.&.&.&.&.&.&.&.&.&.&.&.&.&.&.&.&.&.&.&.&.&.&.&.&.&.&.&.&.&.&.&.\\
      .&.&.&.&.&.&.&\y&.&.&.&.&.&.&.&.&.&.&.&.&.&.&.&.&.&.&.&.&.&.&.&.&.&.&.&.\\
      .&.&.&.&.&.&.&.&.&.&.&.&.&.&\redd&.&.&.&.&.&.&.&.&.&.&.&.&.&.&.&.&.&.&.&.&.\\
      .&.&.&.&.&.&.&.&.&.&.&.&.&.&.&.&.&.&.&.&.&\redd&.&.&.&.&.&.&.&.&.&.&.&.&.&.\\
      .&.&.&.&.&.&.&.&.&.&.&.&.&.&.&.&.&.&.&.&.&.&.&.&.&.&.&.&\y&.&.&.&.&.&.&.\\
      .&.&.&.&.&.&.&.&.&.&.&.&.&.&.&.&.&.&.&.&.&.&.&.&.&.&.&.&.&.&.&.&.&.&.&\y\\
\hline
      .&.&.&.&.&.&.&.&.&\y&.&.&.&.&.&.&.&.&.&.&.&.&.&.&.&.&.&.&.&.&.&.&.&.&.&.\\
      .&.&\y&.&.&.&.&.&.&.&.&.&.&.&.&.&.&.&.&.&.&.&.&.&.&.&.&.&.&.&.&.&.&.&.&.\\
      .&.&.&.&.&.&.&.&.&.&.&.&.&.&.&.&.&.&.&.&.&.&.&\y&.&.&.&.&.&.&.&.&.&.&.&.\\
      .&.&.&.&.&.&.&.&.&.&.&.&.&.&.&.&\y&.&.&.&.&.&.&.&.&.&.&.&.&.&.&.&.&.&.&.\\
      .&.&.&.&.&.&.&.&.&.&.&.&.&.&.&.&.&.&.&.&.&.&.&.&.&.&.&.&.&.&.&\y&.&.&.&.\\
      .&.&.&.&.&.&.&.&.&.&.&.&.&.&.&.&.&.&.&.&.&.&.&.&\y&.&.&.&.&.&.&.&.&.&.&.\\
\hline
      .&.&.&.&.&.&.&.&.&.&.&.&.&.&.&.&.&.&.&.&.&.&.&.&.&.&.&.&.&\y&.&.&.&.&.&.\\
      .&.&.&.&.&.&.&.&.&.&.&.&.&.&.&.&.&.&.&.&.&.&.&.&.&.&.&.&.&.&.&.&.&.&\y&.\\
      .&\y&.&.&.&.&.&.&.&.&.&.&.&.&.&.&.&.&.&.&.&.&.&.&.&.&.&.&.&.&.&.&.&.&.&.\\
      .&.&.&.&.&.&\y&.&.&.&.&.&.&.&.&.&.&.&.&.&.&.&.&.&.&.&.&.&.&.&.&.&.&.&.&.\\
      .&.&.&.&.&.&.&.&.&.&.&.&.&.&.&.&.&.&.&.&\y&.&.&.&.&.&.&.&.&.&.&.&.&.&.&.\\
      .&.&.&.&.&.&.&.&.&.&.&.&.&.&.&\y&.&.&.&.&.&.&.&.&.&.&.&.&.&.&.&.&.&.&.&.\\
\hline
      .&.&.&.&.&.&.&.&.&.&.&.&.&.&.&.&.&.&.&.&.&.&.&.&.&.&.&.&.&.&.&.&\y&.&.&.\\
      .&.&.&.&.&.&.&.&.&.&.&.&.&.&.&.&.&.&.&.&.&.&.&.&.&.&.&\y&.&.&.&.&.&.&.&.\\
      .&.&.&.&.&.&.&.&.&.&\y&.&.&.&.&.&.&.&.&.&.&.&.&.&.&.&.&.&.&.&.&.&.&.&.&.\\
      .&.&.&.&.&\y&.&.&.&.&.&.&.&.&.&.&.&.&.&.&.&.&.&.&.&.&.&.&.&.&.&.&.&.&.&.\\
      .&.&.&.&.&.&.&.&.&.&.&.&\y&.&.&.&.&.&.&.&.&.&.&.&.&.&.&.&.&.&.&.&.&.&.&.\\
      .&.&.&.&.&.&.&.&.&.&.&.&.&.&.&.&.&.&.&\y&.&.&.&.&.&.&.&.&.&.&.&.&.&.&.&.\\
\hline
      .&.&.&.&.&.&.&.&.&.&.&.&.&.&.&.&.&.&.&.&.&.&\y&.&.&.&.&.&.&.&.&.&.&.&.&.\\
      .&.&.&.&.&.&.&.&.&.&.&.&.&.&.&.&.&\y&.&.&.&.&.&.&.&.&.&.&.&.&.&.&.&.&.&.\\
      .&.&.&.&.&.&.&.&.&.&.&.&.&.&.&.&.&.&.&.&.&.&.&.&.&.&.&.&.&.&\y&.&.&.&.&.\\
      .&.&.&.&.&.&.&.&.&.&.&.&.&.&.&.&.&.&.&.&.&.&.&.&.&\y&.&.&.&.&.&.&.&.&.&.\\
      .&.&.&\y&.&.&.&.&.&.&.&.&.&.&.&.&.&.&.&.&.&.&.&.&.&.&.&.&.&.&.&.&.&.&.&.\\
      .&.&.&.&.&.&.&.&\y&.&.&.&.&.&.&.&.&.&.&.&.&.&.&.&.&.&.&.&.&.&.&.&.&.&.&.\\
\hline
      .&.&.&.&.&.&.&.&.&.&.&.&.&\y&.&.&.&.&.&.&.&.&.&.&.&.&.&.&.&.&.&.&.&.&.&.\\
      .&.&.&.&.&.&.&.&.&.&.&.&.&.&.&.&.&.&\y&.&.&.&.&.&.&.&.&.&.&.&.&.&.&.&.&.\\
      .&.&.&.&.&.&.&.&.&.&.&.&.&.&.&.&.&.&.&.&.&.&.&.&.&.&\redd&.&.&.&.&.&.&.&.&.\\
      .&.&.&.&.&.&.&.&.&.&.&.&.&.&.&.&.&.&.&.&.&.&.&.&.&.&.&.&.&.&.&.&.&\redd&.&.\\
      .&.&.&.&.&.&.&.&.&.&.&\y&.&.&.&.&.&.&.&.&.&.&.&.&.&.&.&.&.&.&.&.&.&.&.&.\\
      .&.&.&.&\y&.&.&.&.&.&.&.&.&.&.&.&.&.&.&.&.&.&.&.&.&.&.&.&.&.&.&.&.&.&.&.\\
\end{array}\right]
}$
\caption{Permutation $P_{36}$ matrix with the highest entangling power, $e_p(P_{36})=\frac{314}{315} \approx 0.9968$. Unit elements are highlighted in green and red, all the rest are equal to 0.
This matrix \emph{almost} satisfies the location condition of strong Sudoku -- in every block, elements are in different positions, apart from the two red pairs that spoil the orthogonality.
Sudoku designs are discussed in Chapter~\ref{chapter_7}.}\label{tab:P_36}
\end{center}
\end{table}
\endgroup

Conditions for a double QLS of size $N$ to form an OQLS can be also written in terms of operations on matrices. For more details concerning these operations and their connection to the search for the AME(4,6) state, see Section~\ref{sec:operations_on_matrices} and~\cite{Rico_2020}.

\begin{definition}[Multiunitary matrix]
    A matrix\, $U$ of size $N^2$ is called multiunitary (or 2-unitary) if the following conditions are satisfied:
    \begin{enumerate}
        \item $U$ is a unitary matrix,
        \item reshuffled matrix $U^R$ is unitary,
        \item partially transposed matrix $U^{\Gamma}$ is unitary.
    \end{enumerate}
    A multiunitary matrix attains the maximal entangling power, $e_p(U) = 1$.
\end{definition}

Using this definition, involving transformations of partial transpose and reshuffling described in Section~\ref{sec:operations_on_matrices}, we are able to formulate and prove the subsequent lemma.

\begin{lemma}\label{lemma:multiunitary}
    Every multiunitary matrix\, $U$ of size $N^2$ defines a pair of orthogonal quantum Latin squares of size $N$.
    Moreover, any pair of OQLS defines a multiunitary matrix.
\end{lemma}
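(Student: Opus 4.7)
The plan is to realize the statement as a direct translation between the four-index entries of $U$ and the components of the states $\ket{\psi_{ij}}$, and then to check the three OQLS conditions one against each of the three unitarity conditions defining a multiunitary matrix. Concretely, given $U\in U(N^2)$ I would use the block structure of Section~\ref{sec:matrix_description_AME} to associate to each column label $(k,l)\in\{1,\dots,N\}^2$ the vector
\[
\ket{\psi_{kl}}=\sum_{i,j=1}^{N}U_{ij,kl}\,\ket{i}\ket{j}\in\mathcal{H}_N\otimes\mathcal{H}_N,
\]
arranging these $N^2$ vectors in an $N\times N$ array indexed by $(k,l)$. This is nothing but the Choi--Jamio{\l}kowski picture read block-wise, so it is already a bijection between matrices of size $N^2$ and such arrays; what remains is to verify that the three unitarity conditions are in one-to-one correspondence with the three OQLS conditions of Definition~\ref{def:OQLS}.

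The first, easy step is to observe that unitarity of $U$ itself, i.e.\ orthonormality of its columns, is literally condition (1) of Definition~\ref{def:OQLS}: $\braket{\psi_{kl}|\psi_{k'l'}}=\delta_{kk'}\delta_{ll'}$, so the $\{\ket{\psi_{kl}}\}$ form an orthonormal basis of $\mathcal{H}_N\otimes\mathcal{H}_N$. The second step is the index-chase that matches the OQLS row and column identities with unitarity of $U^R$ and $U^\Gamma$. Using the four-index notation of Section~\ref{sec:operations_on_matrices}, substituting the definition of $\ket{\psi_{ij}}$ into $\mathrm{Tr}_{A(B)}\sum_i\ket{\psi_{ji}}\bra{\psi_{ki}}$ produces a contraction of two copies of $U$ over exactly the pair of indices that is traced out in $U^R(U^R)^\dagger$ (or $U^\Gamma(U^\Gamma)^\dagger$), so the identity $\mathrm{Tr}_{A(B)}\sum_i\ket{\psi_{ji}}\bra{\psi_{ki}}=\delta_{jk}\mathbb{I}$ becomes one of the two remaining unitarity conditions. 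The column condition (3) is handled symmetrically.

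An alternative, more conceptual route is to compose two known equivalences: first, the Choi--Jamio{\l}kowski state $\ket{U}=(U\otimes\mathbb{I})\ket{\phi^+}\in\mathcal{H}_N^{\otimes 4}$ is maximally entangled across all three symmetric bipartitions $AB|CD$, $AC|BD$, $AD|BC$ if and only if the three matrices $U$, $U^R$, $U^\Gamma$ are simultaneously unitary, which follows from Eq.~(\ref{eq:e_p_using_singular_entropy_only_for_AME}) and the fact that the singular entropy attains its maximum only on unitaries; thus $U$ is multiunitary iff $\ket{U}$ is an AME(4,$N$) state. Invoking Lemma~\ref{lemma:OQLS_AME} then yields the equivalence with a pair of OQLS, and the converse direction is obtained by traversing the same chain backwards, starting from a pair of OQLS, building the AME(4,$N$) state via Lemma~\ref{lemma:OQLS_AME}, and undoing Choi--Jamio{\l}kowski to recover a multiunitary $U\in U(N^2)$.

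The main obstacle is purely notational: keeping track of which pair of four indices is being contracted and which is being compared, so that each OQLS identity matches \emph{one specific} unitarity condition rather than some degenerate combination. Once the dictionary $U\leftrightarrow\{\ket{\psi_{kl}}\}$ is fixed as above, this is a routine but careful index bookkeeping, and the rest of the proof is immediate from the already established correspondence between AME states, multiunitary matrices, and maximal entangling power.
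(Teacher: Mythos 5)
Your proposal is correct, and your ``alternative, more conceptual route'' is essentially the paper's own proof: the paper also writes the Choi--Jamio{\l}kowski state $\ket{\psi}=(U_{AB}\otimes\mathbb{I}_{CD})\ket{\phi^+}_{AB|CD}$, observes that unitarity of $U^R$ and $U^\Gamma$ lets one rewrite the same state as $(U^R_{AC}\otimes\mathbb{I}_{BD})\ket{\phi^+}_{AC|BD}$ and $(U^\Gamma_{AD}\otimes\mathbb{I}_{BC})\ket{\phi^+}_{AD|BC}$, concludes that $\ket{\psi}$ is AME$(4,N)$, and then invokes Lemma~\ref{lemma:OQLS_AME} for the passage to OQLS (and back). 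Your primary route --- directly matching the three unitarity conditions against the conditions of Definition~\ref{def:OQLS} by index bookkeeping --- is a genuinely different, more self-contained path that bypasses the AME intermediary; it buys independence from Lemma~\ref{lemma:OQLS_AME} at the cost of carrying out the contractions explicitly, which you only sketch. Two small points of precision. First, the match is not quite one OQLS identity to one unitarity condition: conditions (2) and (3) each split into two statements (trace over $A$ or over $B$), corresponding to the pairs $\rho_{AC},\rho_{BD}$ and $\rho_{AD},\rho_{BC}$ respectively, which collapse to the single conditions ``$U^R$ unitary'' and ``$U^\Gamma$ unitary'' only because complementary reductions of a pure state have equal spectra. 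Second, in the conceptual route you justify ``multiunitary iff AME'' via Eq.~(\ref{eq:e_p_using_singular_entropy_only_for_AME}) and maximality of the singular entropy, but that equation is stated in the paper only under the hypothesis that $U$, $U^R$, $U^\Gamma$ are already unitary, so as written the argument is circular; the clean justification is the direct one the paper uses, namely that $\rho_{AC}\propto U^R U^{R\dagger}$ is maximally mixed exactly when $U^R$ is unitary.
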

\begin{proof}
The proof will use the equivalence between AME states and OQLS, which was asserted by Lemma~\ref{lemma:OQLS_AME}.
Using the generalized Bell state $\ket{\phi^+} = \frac{1}{\sqrt{N}}\sum_i \ket{i}\otimes\ket{i}$ across the bipartition $AB$ versus $CD$, the normalized 
state defined by the multiunitary matrix on four subsystems reads
\begin{equation}
    \ket{\psi} = \frac{1}{N}\sum_{ijkl=1}^N \alpha_{ijkl} \ket{ijkl} = \big(U_{AB}\otimes \mathbb{I}_{CD}\big) \frac{1}{N}\sum_{ij=1}^{N}\ket{ij}_{AB}\otimes \ket{ij}_{CD}= \big(U_{AB}\otimes \mathbb{I}_{CD}\big) \ket{\phi^+}_{AB|CD},
\end{equation}
where $U_{AB}$ denotes matrix $U$ acting on subsystems $A$ and $B$.
Since $U = \sum_{ijkl} \alpha_{ijkl} \ket{ij}\bra{kl}$, due to the properties of the reshuffling, $U^R = \sum \alpha_{ijkl} \ket{ik}\bra{jl}$, and similarly for the partial transposition $U^\Gamma = \sum \alpha_{ijkl} \ket{il}\bra{jk}$.
Unitarity of both these matrices implies that the state $\ket{\psi}$ can be written down with respect to two different bipartitions,
\begin{equation}
    \ket{\psi} = \big(U^R_{AC}\otimes \mathbb{I}_{BD}\big) \ket{\phi^+}_{AC|BD} = \big(U^\Gamma_{AD} \otimes \mathbb{I}_{BC}\big) \ket{\phi^+}_{AD|BC}.
\end{equation}

The above equation provides the proof of maximal entanglement across all bipartitions provided unitarity of all matrices $U$, $U^R$, and $U^\Gamma$.
Therefore, we proved that a multiunitary matrix of size $N^2$ implies the existence of an AME(4,$N$) state and OQLS of size $N$.
Reasoning in the other direction holds analogously, as the property of maximal entanglement is equivalent to the unitarity of the appropriate matrices.
\end{proof}

The above lemma shows that the problem of finding AME(4,6) state can be posed not only by the existence of OQLS of size 6 but also by the existence of a multiunitary matrix of size 36.
This remark is particularly useful for numerical maximization of the entangling power since computing $e_p$ of a given unitary matrix is a relatively easy task.

By changing the perspective from two quantum Latin squares of size $N=6$ to matrices of order $N^2=36$, one obtains, by the Choi-Jamio{\l}kowski isomorphism (see Section~\ref{sec:sets_of_matrices}), a 4-partite pure state in the Hilbert space $\mathcal{H}_6^{\otimes 4}$.
Even though \emph{a priori} the entangling power of a unitary matrix does not provide insight into the entanglement properties of the said pure state, it is still a valid measure since its optimization may lead to an absolute maximum, $e_p=1$, which corresponds to the desired AME(4,6) state.

\section{Visualization of matrices}
Written in terms of matrices, $P_{36}$ is unitary and, in particular, a permutation matrix.
In fact, all matrices connected with classical design are permutation matrices, provided that no two elements in the same row or column have the same rank or suit.
This means that the first and second elements of corresponding arrangements should all be different, which is always true for Latin squares.
We have relaxed the condition for OLS which concerns distinct pairs of elements, as it cannot be met for two Latin squares of size 6.

By extending the set of permutation matrices to the unitary set one can use many more degrees of freedom since there are only 36!\ different permutation matrices, while unitary matrices must be described by continuous $36^2 = 1296$ real parameters; therefore, there are infinitely many unitary matrices.
Nonetheless, these numbers show that the search space is huge and it is unfeasible to use brute-force methods with 1296 real parameters.
Moreover, what additionally enhances the difficulty of the problem, the entangling power is a function which we conjecture to admit local maxima that are not global, as will be postulated in Section~\ref{sec:local_maxima}.

Although multidimensional, the space of unitary matrices can be visualized disregarding coordinates which are not important from our perspective.
We aim to find the matrix which has the highest entangling power or, equivalently, is as close to the multiunitary matrix as possible.
Visualizing the results on the real line by using only a single parameter does not lead to a clear picture.
Furthermore, we would like to distinguish between different matrices possessing the same entangling power.
Thus, it is beneficial to take into account a second, complementary function that takes as a variable a unitary matrix.
To this end we chose the gate typicality $g_t\in [0,1]$, introduced in Section~\ref{sec:gates_ent_power}.
In Fig.~\ref{fig:ent_power_gt}, we present a 2D picture of unitary matrices of dimension 36 in these coordinates.

\begin{figure}[H]
    \input{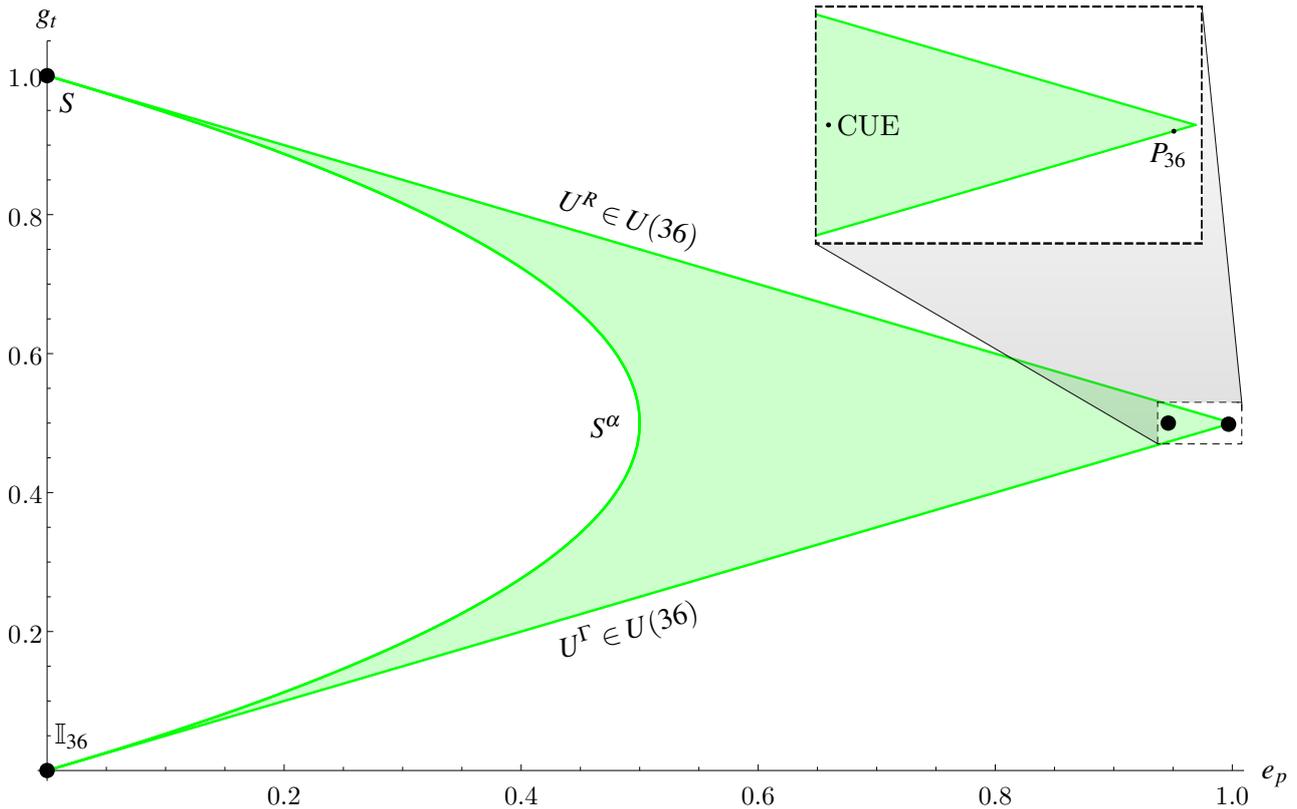}
    \caption{
    The set of unitary matrices $U(36)$ projected into the plane ($e_p$,\,$g_t$) -- entanglement power versus gate typicality.
    The parabola on the left denotes the power of the swap matrix $S^\alpha$ for $\alpha \in [0,1]$.
    On the right side of the picture, enlarged on the magnification above, the circulant unitary ensemble (CUE) point denotes the average entangling power and gate typicality attained for matrices from the circulant unitary ensemble of order 36.
    The point labeled by $P_{36}$ indicates the permutation matrix with the highest entangling power; indeed, it is barely distinguished from the right corner of the triangle -- multiunitary matrix $U_{36}$.
    The other two straight lines restricting the region correspond to matrices that are unitary and either their reshuffle or partial transpose is unitary.
    To obtain a mirror of a matrix across the line $g_t = 1/2$ it suffices to multiply a given matrix by the swap matrix $S$.
    }
    \label{fig:ent_power_gt}
\end{figure}

The extremal points on the left at $e_p=0$ are matrices (or, equivalently, gates) possessing zero entangling power.
However, zero entangling power does not mean the gate is local -- consider the swap gate $S$ in the upper left corner, already discussed in Section~\ref{sec:sets_of_matrices}.
It is maximally non-local but does not create entanglement since it only exchanges both subsystems.
The right corner of the triangle denotes the ultimate goal of the research -- the multiunitary matrix $U_{36}$, representing an absolutely maximally entangled state of four quhexes.

\section{Search for a quantum approximation of the AME state better than \texorpdfstring{$P_{36}$}{Lg}}\label{sec:A_family}
We start our search for the AME(4,6) state using the matrix description provided in Section~\ref{sec:matrix_description_AME}.
All subsequent efforts shall be devoted to the search for the multiunitary matrix $U_{36}$ that corresponds to the AME(4,6) state.
Most of unitary matrices, projected into the plane of Fig~\ref{fig:ent_power_gt}, lie very close to the point corresponding to the average entangling power of unitary gates of order 36, drawn from the circulant unitary ensemble, $\langle e_p(U)\rangle_{U\in U(36)} = \frac{35}{37} \approx 0.9459$~\cite{Zanardi_2000}.
This behavior is not atypical in the case of larger dimensions -- compare with, prevalent in high dimensional analysis, phenomenon of concentration of measure, presented in~\cite{Gross_2009,Hayden_2004}. 
Hence, the algorithm based on optimization of random matrices proved to be a blind alley, whether the set from which we draw a matrix being unitary matrices or its proper subset such as Hadamard matrices, rescaled by the square root of the dimension, $\sqrt{N}$.

Therefore, the practical point to start the search for a multiunitary matrix is the permutation matrix $P_{36}$ exhibiting the maximal entangling power, presented in Fig.~\ref{fig:almost_OLS_6}.
Before moving to the explanation of further advancement, let us focus on the study of this matrix -- in particular on its characteristics leading to non-multiunitarity.
We conclude from Lemma~\ref{lemma:multiunitary} that at least one of the matrices $P_{36}^R$ and $P_{36}^\Gamma$ is not unitary.
It can be easily verified that $P_{36}^\Gamma$ is a unitary matrix and that $P_{36}^R$ is not,
as it contains two repeating pairs of blocks in $P_{36}$.
It is not possible to alter those two pairs while using the regime of classical Latin squares; however, it is viable by allowing for quantum states as the elements of the array.
The following one-parameter family of double quantum Latin squares (equivalently matrices from $U(36)$) was devised by Arul Lakshminarayan

\begin{equation}
    A(x) = \begin{pmatrix}
    \ket{11} & \ket{22} & c_x \ket{33}-s_x \ket{43} &s_x\ket{34}+c_s\ket{44}&\ket{55}& \ket{66} \\
    \ket{23} & \ket{14} & \ket{45} & \ket{36} & \ket{61} & \ket{52} \\
    \ket{32} & \ket{41} & \ket{64} & \ket{53} & \ket{16} &\ket{25} \\
    \ket{46} & \ket{35} & \ket{51} & \ket{62} & \ket{24} & \ket{13} \\
    \ket{54} & \ket{63} & \ket{26} & \ket{15} & \ket{42} & \ket{31} \\
    \ket{65} & \ket{56} & \ket{12} & \ket{21} & c_x \ket{33}+s_x \ket{43} & -s_x \ket{34}+c_x \ket{44}
    \end{pmatrix},
\end{equation}
where $c_x$, $s_x$ denote $\cos(x)$ and $\sin(x)$ respectively.  
Note that the above family of double QLS simplifies to two classical Latin squares $P_{36}$ from Fig.~\ref{fig:almost_OLS_6} for the value of parameter $x=0$.
The value of the parameter for which the entangling power of $A(x)$ is optimal reads $x = \pi/4$, and $e_{p}\big(A(\pi/4)\big) \approx 0.9976$.
From this time on we shall refer to this particular matrix as simply $A\coloneqq A(\pi/4)$.
This result is the first observation yielding a positive answer to the question posed in~\cite{Clarisse_2005}, whether there exist unitary gates of size 36 with a higher entangling power than the best classical case, $P_{36}$.

\section{Addition of a new rotation}\label{sec:introducing_G}
After the discovery of the matrix $A$, it was found that increasing the number of rotated elements using cosine and sine functions improves the entangling power.
Introducing angle, as in the $A$ family, enhances orthogonality relations between the appropriate elements but changes other orthogonality relations as well.
Including the third rotation and allowing for unequal angles yields the $G$ family,
\begin{equation}\label{eq:G_family}
    G(x,y,z) = \begin{pmatrix}
    \ket{11} & \ket{22} & c_x \ket{33}-s_x \ket{43} &s_x\ket{34}+c_s\ket{44}&\ket{55}& \ket{66} \\
    \ket{23} & \ket{14} & c_y\ket{35}+s_y\ket{45} & s_y\ket{36}-c_y\ket{46} & \ket{61} & \ket{52} \\
    \ket{32} & \ket{41} & \ket{64} & \ket{53} & \ket{16} &\ket{25} \\
    \ket{46} & \ket{35} & \ket{51} & \ket{62} & \ket{24} & \ket{13} \\
    \ket{54} & \ket{63} & \ket{26} & \ket{15} & \ket{42} & \ket{31} \\
    \ket{65} & \ket{56} & \ket{12} & \ket{21} & c_z\ket{33}+s_z \ket{43} & -s_z \ket{34}+c_z \ket{44}
    \end{pmatrix},
\end{equation}
with $c$ and $s$ subscripted meaning cosine and sine with appropriate parameters.
Obviously, this family reduces to the previous family $A(x)$ for a suitable choice of parameters, $A(x) = G(x,0,x)$.
Nevertheless, by using different values of parameters the $G$ family outperforms the $A$ family, with the optimal values of parameters given by the matrix $G \coloneqq G( \pi/4,  3\pi/8, \pi/8)$.

\section{Family with five rotations}\label{sec:introducing_W}
A 5-parameter family of unitary matrices was proposed by Wojciech Bruzda, and was later given the name $W$ family.
We present this family in the form of double QLS:
\begin{equation}
    \begin{split}
    &W(x,y,z,u,w) = \\
    &\begin{pmatrix}
    \ket{11} & \ket{22} & c_x \ket{33}-s_x \ket{43} &s_x\ket{34}+c_s\ket{44}&\ket{55}& \ket{66} \\
    \ket{23} & \ket{14} & c_y\ket{35}+s_y\ket{45} & s_y\ket{36}-c_y\ket{46} & \ket{61} & \ket{52} \\
    \ket{32} & \ket{41} & \ket{64} & \ket{53} & \ket{16} &\ket{25} \\
    c_u\ket{36}+s_u\ket{46} & -s_u\ket{35}+c_u\ket{45} & \ket{51} & \ket{62} & \ket{24} & \ket{13} \\
    \ket{54} & \ket{63} & \ket{26} & \ket{15} & c_w\ket{32}-s_w\ket{42} & c_w\ket{31}+s_w\ket{41} \\
    \ket{65} & \ket{56} & \ket{12} & \ket{21} & c_z\ket{33}+s_z \ket{43} & -s_z \ket{34}+c_z \ket{44}
    \end{pmatrix},
    \end{split}
\end{equation}
where the convention regarding cosine and sine functions is maintained.
This family generalizes the previous $G$ family, as $G(x,y,z)=W(x,y,z,0,0)$.
The optimal entangling power achievable by this family of unitary matrices is obtained for $W \coloneqq W(-2\pi/12, -\pi/12, \pi/12,2\pi/12,3\pi/12)$.
The entangling power of this matrix reads $e_p(W)  = \frac{208 + \sqrt{3}}{210}\approx 0.9987$.
Thus, only a bit more than a permille of the normalized entangling power separates it from the desired multiunitary matrix $U_{36}$, for which $e_p(U_{36})=1$.

\section{Research into the \texorpdfstring{$W$}{Lg} family }\label{sec:checking_W_family}
To visualize the status of the search for the multiunitary matrix corresponding to AME(4,6) state we present the selected matrices from respective families in the $e_p$/$g_t$ plot, as depicted in Fig.~\ref{fig:corner_W_G_A}, which forms a magnification of the right fragment of Fig.~\ref{fig:ent_power_gt}.

\begin{figure}[H]
    \input{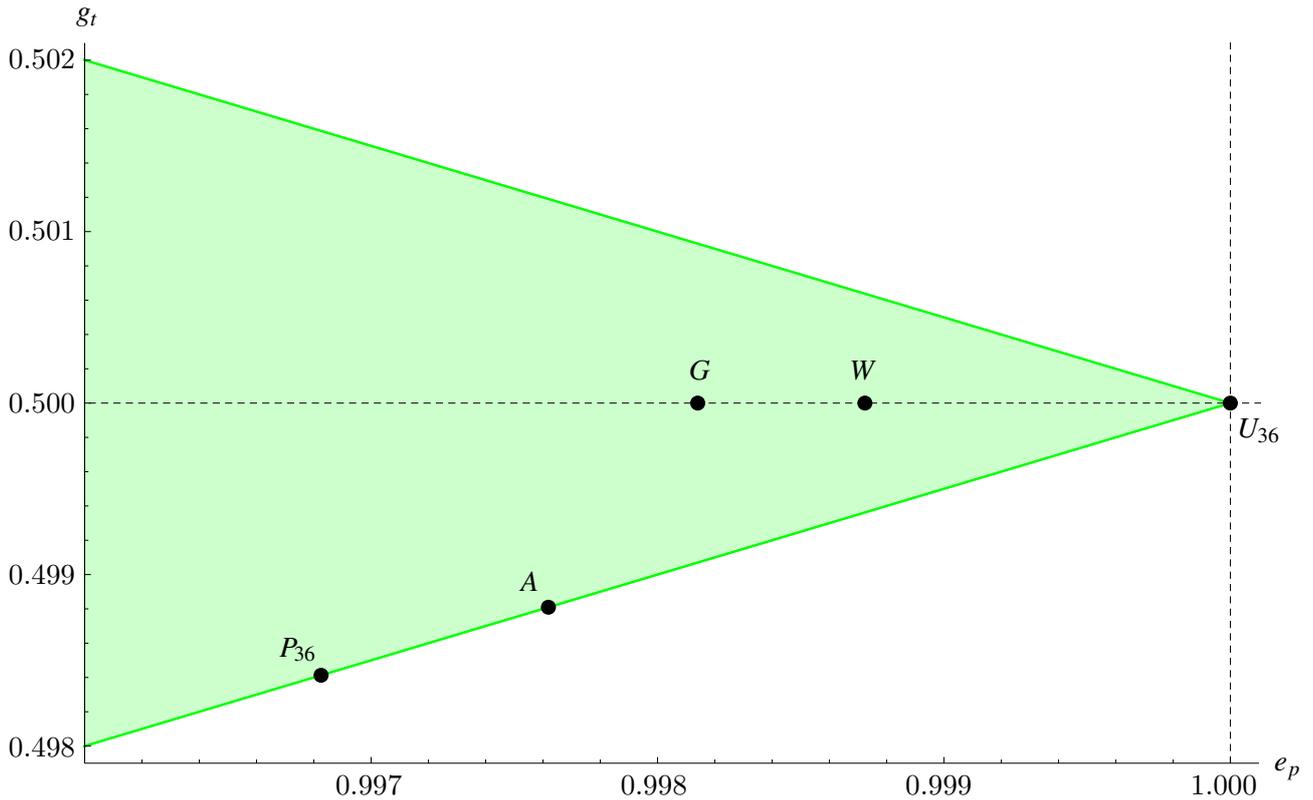}
    \caption{Projection of the set of unitary matrices of order 36 into the ($e_p$,\,$g_t$) plane near the goal -- the multiunitary matrix $U_{36}$ with coordinates $(e_p,\, g_t) = (1,1/2)$, corresponding to the state AME(4,6).
    Best permutation matrix $P_{36}$ was a starting point to the search, with $A$, $G$, and $W$ denoting representatives of appropriate families with the largest entangling power.
    }
    \label{fig:corner_W_G_A}
\end{figure}

One can try to generalize the reasoning which led to discoveries of all three families of matrices that do not yield a multiunitary matrix.
The problem we need to overcome is that even though a given matrix $U$ is unitary, its reshuffled $U^R$ or partially transposed form $U^\Gamma$ is not.
This stems from the fact that rearranging elements of matrices, done by reshuffling or partial transposition, breaks orthogonality conditions between rows/columns.
To prevent the violation of orthogonality, we can introduce new rotations, thereby fixing the problems. 
However, this comes at the price of breaking other orthogonality conditions between blocks, called strong Sudoku.

It is possible to obtain a matrix with intermediate properties -- with fewer orthogonality conditions satisfied, but with a higher entangling power like $A$, $G$, or $W$.
This results from the addition of entropies inside the entangling power, see Eq.~(\ref{eq:e_p_by_linear_entropy}), which provides a possibility of acquiring a larger sum by the inclusion of two almost maximal values instead of one maximal and the other one smaller.
Therefore, the general rationale behind the introduction of families is the reducing the non-orthogonality between certain rows/columns.
What follows naturally is the question of whether this line of thought might be applied more than five times as in the $W$ family.

In this section, we will try to explain the results starting from this angle.
First, let us consider the number of different rotations.
Since there are 36 rows of a unitary matrix, there are $\frac{36\times 35}{2} = 630$ possible two-dimensional rotations.
For the time being, all the families ($A$, $G$, and $W$) included rotations chosen in such a way that they fix the non-orthogonal relations of certain rows.
Similar reasoning cannot be applied further, since the reshuffled matrix $W^R$ does not provide additional paths for new rotations -- the supposed sixth rotation should exist between the rows which are already rotated in this matrix.
Therefore, there is no space for further direct optimization, and larger families do not improve the entangling power.

Hence, it is necessary to consider another direction, namely that rotations are applied without direct relation to the disturbed orthogonality conditions.
In order to clarify the optimization procedure, we shall refer to maximization as a numerical as well as an analytical approach evaluated using the computer.
It is feasible to apply this method for a system described by less than 50-100 parameters; therefore, a careful choice of rotations over which we optimize is crucial.

In all families of matrices, rotations acted on the neighboring rows, allowing to obtain the highest entangling power.
Thus, it is reasonable to proceed further along similar lines, rendering the maximal number of 2-dimensional rotations equal to 18 pairs ($1 \leftrightarrow 2,...,35 \leftrightarrow 36$).
Nonetheless, this family of matrices parametrized by 18 real parameters (angles of rotations) does not yield any higher value of the entangling power than $W$. 
Similarly, if we translate the pairs between which we make the link ($2 \leftrightarrow 3,...,36 \leftrightarrow 1$), the new family does not provide any member closer to the multiunitary matrix than $W$.

However, this still does not prove the extremality of the $W$ matrix (or any of the matrices that share the same entangling power).
It is hardly feasible to check every combination of 2-dimensional rotations performed on rows due to the abundance of possibilities.
To strengthen our conjecture concerning the local extremality of this matrix, we conducted a more comprehensive study using $\frac{36\times 35}{2}=630$ directions.
As a result, we obtained a full characterization given by the analytical form of $e_p$.
The one-parameter family of matrices that we studied reads

\begin{equation}
    V_\alpha = W R_\alpha,
\end{equation}
with $R_\alpha$ given by a matrix which acts as a rotation matrix on two rows/columns while having trivial action on the rest, $R_\alpha = P\Bigg[\begin{pmatrix}
\cos \alpha & - \sin \alpha \\
\sin \alpha & \cos \alpha 
\end{pmatrix}\oplus \mathbb{I}_{34}\Bigg] P^T$, where $P$ denotes an appropriately chosen permutation matrix.

In the case of the single rotations, the entangling power depended on the angle of rotation $\alpha$ by a mixture of sine and cosine function, raised to appropriate even powers, with the general formula

\begin{equation}
\begin{gathered}
    e_p (V_\alpha) = \sum^2_{k=0}\sum^2_{m=0} a_{km}\sin^{2k} (\alpha) \cos^{2m} (\alpha),
\end{gathered}
\end{equation}
where coefficients $a_{km}$ depend on the choice of the rotation, i.e.\ the rows/columns on which the transformation acts.
The research shows that the optimal value of $e_p$ is attained in all of the 630 cases by the $W$ matrix;
therefore, we conjecture that this is a local maximum of this function.
Further arguments in favor of this hypothesis shall be presented in subsequent sections concerning the Hessian matrix.

Similar results for other matrices ($A$ and $G$) were also obtained. However, these matrices admit directions that enable to increase entangling power.
This is clear in the case of the matrix $A$, since $G$ can be obtained from it by using a single rotation only, but analogous reasoning cannot be extended to the $G$ matrix itself.
Therefore, the study of single 2-dimensional rotations proves to be nontrivial and gives an insight into the extremality of the matrices under consideration.

Having concluded the search on the simplest possible rotations, one is tempted to extend this setup into larger sets of matrices, e.g.\ all unitary matrices of dimension 2.
Since exploring 2-dimensional rotations proved to be useful, it is also alluring to inspect the set of 4- or 6-dimensional matrices, either in the form of general orthogonal matrices or the full unitary group.
Notwithstanding interesting prospects, all of the procedures mentioned above did not yield a single matrix possessing a higher entangling power than $W$. 
These methods are local in a sense that they attempt to find a better matrix than $W$ using its elements.
Furthermore, it is not feasible to verify all of the possibilities provided by these bigger sets of rotation matrices, since already 2-dimensional rotations were hard to consider from the most general case.
By a careful tuning of the setup in each respective case, always the value $e_p (W) \approx 0.9987$ was the highest achievable.
This result which prompted us to attempt a more exhaustive search using the Hessian, to which goal we shall devote subsequent sections.

\section{Hessian as a tool to study extremality of a solution}\label{sec:hessian}
The extremality of the $W$ matrix was considered in Section~\ref{sec:checking_W_family} by several trials to exceed its entangling power via more rotations or more carefully suited ones.
Inability to do so prompted us to study it from a more systematic approach.
The field of mathematical analysis comes to avail with a tool -- namely, the Hessian of a function.
Let us briefly recall this notion, starting from the intuitions given by the simplest case of analysis of single-variable functions.
In this case, if a function $f$ has a derivative at point $x_0$ equal to zero $f'(x_0) = 0$, then in order to determine whether $x_0$ is an extremum one needs to verify the value of second derivative.
Its negativity indicates a local maximum, positivity a local minimum, while zero value shows that the test is inconclusive and higher derivatives must be considered. 

Similarly, in the case of multivariate functions (with $n$ variables), instead of a scalar function given by derivative, partial derivatives form a vector of length $n$.
Signature that the point $\vec{x}$ might be a local extremum is given if all partial derivatives disappear $\nabla f(\vec{x}) = [0,...,0]$.
Analogously to the single variable functions, it is not enough to determine the behavior of the function in the neighborhood of $\vec{x}$ by using only partial derivatives.
One needs to resort to second-degree partial derivatives, which at this point form an $n\times n$ matrix

\begin{equation}\label{eq:definition_of_hessian}
    H_f = \begin{pmatrix}
    \frac{\partial^2 f}{\partial x_1^2} & \frac{\partial^2 f}{\partial x_1 \partial x_2} & ... & \frac{\partial^2 f}{\partial x_1 \partial x_n} \\ 
    \frac{\partial^2 f}{\partial x_2 \partial x_1} & \frac{\partial^2 f}{\partial x_2^2} & ... & \frac{\partial^2 f}{\partial x_2 \partial x_n} \\ 
    \vdots & \vdots & \ddots & \vdots \\
        \frac{\partial^2 f}{\partial x_n \partial x_1} & \frac{\partial^2 f}{\partial x_n \partial x_2} & ... & \frac{\partial^2 f}{\partial x_n^2} \\ 
    \end{pmatrix}.
\end{equation}

A more comprehensive study of mathematical intricacies connected with Hessian can be found in~\cite{Binmore_2007}.
From the point of view of this thesis, the most important reason for introducing Hessians is studying potential local maxima of the entangling power.
The research enabled us to assert several important facts concerning families of matrices.
Thus, it is beneficial to recall the mathematical theorem used to study local maxima of multivariate functions.

\begin{theorem}[\cite{Stewart_2004}]
    Let $f: X \rightarrow Y$ be a multivariate function with Hessian matrix $H_f$.
    If its derivative equals zero ($\nabla f = 0$) at point $\vec{x}$, then:
    \begin{enumerate}
        \item if $H_f$ is positive definite (equivalently, with only positive eigenvalues) at point $\vec{x}$, then $\vec{x}$ is a local minimum.
        \item if $H_f$ is negative definite (equivalently, with only negative eigenvalues) at point $\vec{x}$, then $\vec{x}$ is a local maximum.
        \item if $H_f$ has both positive and negative eigenvalues at point $\vec{x}$, then $\vec{x}$ is a saddle point.
    \end{enumerate}
\end{theorem}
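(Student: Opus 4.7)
The plan is to reduce the theorem to the behavior of the quadratic form $\vec{h}^{T} H_{f}(\vec{x}) \vec{h}$ via a second-order Taylor expansion around the critical point. Assuming $f$ is twice continuously differentiable on a neighborhood of $\vec{x}$, Taylor's theorem with remainder gives
\begin{equation}
    f(\vec{x}+\vec{h}) = f(\vec{x}) + \nabla f(\vec{x})\cdot \vec{h} + \tfrac{1}{2}\vec{h}^{T} H_{f}(\vec{x})\, \vec{h} + R(\vec{h}),
\end{equation}
where the remainder satisfies $R(\vec{h})/\|\vec{h}\|^{2} \to 0$ as $\vec{h}\to 0$. Since $\nabla f(\vec{x})=0$, the sign of $f(\vec{x}+\vec{h}) - f(\vec{x})$ is controlled by the quadratic form for small $\|\vec{h}\|$.

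First, I would handle case (1). If $H_{f}(\vec{x})$ is positive definite, then by compactness of the unit sphere its smallest eigenvalue $\lambda_{\min}>0$ realises a uniform bound $\vec{h}^{T} H_{f}(\vec{x})\vec{h} \geq \lambda_{\min}\|\vec{h}\|^{2}$. Choosing $\delta>0$ small enough that $|R(\vec{h})| < \tfrac{\lambda_{\min}}{4}\|\vec{h}\|^{2}$ whenever $\|\vec{h}\|<\delta$, one obtains $f(\vec{x}+\vec{h}) - f(\vec{x}) \geq \tfrac{\lambda_{\min}}{4}\|\vec{h}\|^{2} > 0$, so $\vec{x}$ is a strict local minimum. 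Case (2) follows by applying the same argument to $-f$, whose Hessian is then positive definite.

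For case (3), let $\vec{v}_{+}$ and $\vec{v}_{-}$ be unit eigenvectors of $H_{f}(\vec{x})$ with eigenvalues $\lambda_{+}>0$ and $\lambda_{-}<0$. Along the ray $\vec{h}=t\vec{v}_{+}$ the quadratic term equals $\tfrac{1}{2}\lambda_{+} t^{2}>0$, while the remainder is $o(t^{2})$, so $f(\vec{x}+t\vec{v}_{+}) > f(\vec{x})$ for small $t\neq 0$. The symmetric argument along $\vec{v}_{-}$ shows $f$ decreases, which identifies $\vec{x}$ as a saddle point.

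The only real obstacle is the uniform control of $R(\vec{h})$ relative to the quadratic form; this is a routine consequence of the continuity of the second partial derivatives and the eigenvalue bound, but it is the one place where the \emph{size} of the admissible neighborhood $\delta$ has to be tracked. Everything else in the proof is either linear algebra (spectral decomposition of the symmetric Hessian) or an immediate consequence of Taylor's formula.
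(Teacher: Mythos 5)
Your proof is correct and complete: the second-order Taylor expansion with Peano remainder, the uniform eigenvalue bound on the unit sphere for the definite cases, and the restriction to eigenvector rays for the saddle case together constitute the standard proof of the second derivative test. Note that the paper does not prove this theorem at all --- it is quoted as a known result from a calculus textbook~\cite{Stewart_2004} and used only as a tool to analyze the extremality of the entangling power --- so there is no in-paper argument to compare against; your argument is the canonical one and correctly identifies the only delicate point, namely the uniform control of the remainder $R(\vec{h})$ relative to $\lambda_{\min}\|\vec{h}\|^{2}$, which requires the assumed $C^{2}$ regularity (also needed for the Hessian to be symmetric, so that the spectral characterization of definiteness applies).
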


In the cases not listed above the Hessian test is inconclusive.
Nonetheless, it can still yield a lot of insight into the behavior of a function $f$ in the neighborhood of the point $\vec{x}$.
This shall be of paramount importance for the further study of the extremality of different matrices introduced in the previous sections.
What is more, the Hessian will be a motivation for introducing a numerical algorithm that further argues for the local optimality of the value $e_p(W) \approx 0.9987$.

\section{Hessian matrix of the entangling power}\label{sec:local_maxima}
Since the entangling power $e_p$ of a unitary matrix might be thought of as a multivariate function mapping matrices to real numbers, it is possible to use the setup presented in the previous section.
However, in the case of unitary matrices, one needs to take into account their underlying structure. 
Unitary matrices do not form a linear vector space, but they do form a manifold.

In order to find directions in the neighborhood of a given unitary matrix, it is useful to use the Lie group structure of unitary matrices.
Their Lie algebra is formed by Hermitian matrices, for more details on Lie group theory and the unitary group, see Section~\ref{sec:sets_of_matrices} and~\cite{Hall_2015}.
The algebra of Hermitian matrices of dimension $N$ is a vector space over the field of real numbers.
Therefore, it is convenient to operate in one of the bases of this space, consisting of the matrices we choose to be appropriate.
Without loss of generality, we can set basis as
\begin{equation}
\begin{split}
    H_{ii} &= \ket{i}\bra{i} \qquad\quad\quad\quad\;\; \text{for} \;\; i \in \{1,...,N\}, \\
    H^+_{kl} &= \ket{k}\bra{l} + \ket{l}\bra{k}\quad\quad \text{for} \;\; k\text{ and } l \in \{1,...,N\}, \;\; k\neq l, \\
    H^-_{kl} &= i(\ket{k}\bra{l} - \ket{l}\bra{k}) \quad \text{for} \;\; k\text{ and } l \in \{1,...,N\}, \;\; k\neq l,\\
\end{split}
\end{equation}
with the proper arrangement of the elements so that the final basis of $36^2 = 1296$ elements consists of all sets $\{H^{}_{ii}, H^+_{kl}, H^-_{kl}\}$.
For simplicity, we shall refer to this basis with one index as $H_{i}$.
The forthcoming computation has been made in parallel by Arul Lakshminarayan and the author of this thesis.

The derivative of any function $f$, in particular of the one that acts on matrices, can be defined as
\begin{equation}\label{eq:definition_partial_derivative}
    \nabla_i \; f(U) = \lim_{\varepsilon_i \to 0}\frac{f(U_{\varepsilon_i})-f(U)}{\varepsilon_i},
\end{equation}
where the vector of deviations $\vec{\varepsilon} = \{\varepsilon_1,...,\varepsilon_{N^2}\}$ is set to zero apart from the non-zero value $\varepsilon_i$.
We shall focus on two terms of Eq.~(\ref{eq:e_p_using_singular_entropy}) 
\begin{equation}\label{eq:focus_of_hessian}
    X_R = \mathrm{Tr}(U^RU^{R\dagger}U^RU^{R\dagger})\quad \mathrm{and } \quad X_\Gamma = \mathrm{Tr}(U^\Gamma U^{\Gamma\dagger}U^\Gamma U^{\Gamma\dagger}),
\end{equation}
restricting to unitary matrices $U$ and unitary deviations.
Every unitary matrix can be obtained as an exponent of an appropriate Hermitian matrix $U = e^{iH}$. 
Therefore, we shall decompose small deviations $U_\varepsilon$ from a given matrix $U$ as
\begin{equation}
    \tilde{U} = \exp \bigg( i\sum^{N^2}_{j=1} \varepsilon_j H_j \bigg),
\end{equation}
and the resulting matrix after the perturbation to be $U_\varepsilon = U\tilde{U}$.
We assume that the perturbation parameters are real and small, $|\varepsilon_j| \ll 1$.
This shows that, without loss of generality, we may assume one-side multiplication, i.e.\ right to the initial matrix $U$.
Furthermore, it allows us to expand the exponent into the series,
\begin{equation}\label{eq:expension_up_to_second_order_varepsilon}
    U_\varepsilon = U\bigg(\mathbb{I} + i\sum_{j}\varepsilon_j H_j +O(\varepsilon^2)\bigg). 
\end{equation}

The vector of the first-order derivatives is obtained by subtraction of the initial matrix from the final one,
\begin{equation}\label{eq:expansion_delta_U}
    \delta U = U_\varepsilon - U = i\sum_{j}\varepsilon_j UH_j + O(\varepsilon^2).
\end{equation}
Due to the linearity of reshuffling and partial transposition we expand $(U_\varepsilon)^R = (U + \delta U)^R = U^R + \delta U^R$ and $(U_\varepsilon)^\Gamma = U^\Gamma + \delta U^\Gamma$.
Therefore, the expression concerning reshuffled $U^R_\varepsilon$ that occurs in Eq.~(\ref{eq:focus_of_hessian}), up to terms linear in $\varepsilon$, reads
\begin{equation}\label{eq:delta_U^R}
\begin{split}
         X_R = \,\,&\mathrm{Tr}(U^RU^{R\dagger}U^RU^{R\dagger}) + \mathrm{Tr}(\delta U^RU^{R\dagger}U^RU^{R\dagger})+\mathrm{Tr}(U^R\delta U^{R\dagger}U^RU^{R\dagger})\\
         &+\mathrm{Tr}(U^RU^{R\dagger}\delta U^RU^{R\dagger})+\mathrm{Tr}(U^RU^{R\dagger}U^R\delta U^{R\dagger}).
\end{split}
\end{equation}

Then, using the cyclic property of the trace, we conclude that Eq.~(\ref{eq:delta_U^R}) can be rewritten as
\begin{equation}
            X_R = \mathrm{Tr}(U^RU^{R\dagger}U^RU^{R\dagger}) + 2\;\mathrm{Tr}(\delta U^RU^{R\dagger}U^RU^{R\dagger})+2\;\mathrm{Tr}(\delta U^{R\dagger}U^RU^{R\dagger}U^R),
\end{equation}
what, given the properties of the trace under complex conjugation, transforms to
\begin{equation}\label{eq:hessian_derivative_reshuffle_simplified}
    X_R = \mathrm{Tr}(U^RU^{R\dagger}U^RU^{R\dagger}) + 4\;\mathrm{Re}\;\mathrm{Tr}(\delta U^RU^{R\dagger}U^RU^{R\dagger}).
\end{equation}
Similarly the second expression from Eq.~(\ref{eq:focus_of_hessian}) reads
\begin{equation}\label{eq:hessian_derivative_partial_transpose_simplified}
   X_\Gamma = \mathrm{Tr}(U^\Gamma U^{\Gamma \dagger}U^\Gamma U^{\Gamma\dagger}) + 4\;\mathrm{Re}\;\mathrm{Tr}(\delta U^\Gamma U^{\Gamma \dagger}U^\Gamma U^{\Gamma\dagger}).
\end{equation}


Finally, plugging Eq.~(\ref{eq:hessian_derivative_reshuffle_simplified}) and~(\ref{eq:hessian_derivative_partial_transpose_simplified}) into the definition of the partial derivative~(\ref{eq:definition_partial_derivative}) and employing the expression for the entangling power~(\ref{eq:e_p_using_singular_entropy}), we arrive at
\begin{equation}
\begin{split}
    \nabla_i \; e_p(U) = -\frac{4N^2}{N^4(N^2-1)}  \bigg(\lim_{\varepsilon_i \to 0} \frac{ \mathrm{Re}\;\mathrm{Tr}(\delta U^RU^{R\dagger}U^RU^{R\dagger})}{\varepsilon_i} +
    \lim_{\varepsilon_i \to 0} \frac{\mathrm{Re}\;\mathrm{Tr}(\delta U^\Gamma U^{\Gamma \dagger}U^\Gamma U^{\Gamma\dagger})}{\varepsilon_i}\bigg),   
\end{split}
\end{equation}
what, upon expansion of $\delta U^R$ and $\delta U^\Gamma$ via Eq.~(\ref{eq:expansion_delta_U}), simplifies to
\begin{equation}\label{eq:first_order_derivative_e_p}
    \begin{split}
        \nabla_i \; e_p(U) &= -\frac{4N^2}{N^4(N^2-1)} \bigg( \lim_{\varepsilon_i \to 0}\frac{ \mathrm{Re}\;i\varepsilon_i\;\mathrm{Tr}\big(( UH_i)^RU^{R\dagger}U^RU^{R\dagger}\big)}{\varepsilon_i} + \lim_{\varepsilon_i \to 0}
        \frac{ \mathrm{Re}\;i\varepsilon_i\;\mathrm{Tr}\big(( UH_i)^\Gamma U^{\Gamma\dagger}U^\Gamma U^{\Gamma\dagger}\big)}{\varepsilon_i}
        \bigg) \\
        &= -\frac{4N^2}{N^4(N^2-1)}\; \mathrm{Im}\; \bigg( \mathrm{Tr}\big(( UH_i)^RU^{R\dagger}U^RU^{R\dagger}\big)+ 
        \mathrm{Tr}\big(( UH_i)^\Gamma U^{\Gamma\dagger}U^\Gamma U^{\Gamma\dagger}\big) \bigg).
    \end{split}
\end{equation}
The above equation is the ultimate expression for the $i$-th component of the vector of derivatives of the entangling power.
Employing this analytical result, we conducted numerical computations of the derivatives of the entangling power for the matrices of our interest.
The only matrix with its derivative equal to zero was $W$. 
This results prompted us to check also the second-order derivative of the entangling power in the case of $W$.
The calculations of the Hessian are provided below.

To find second partial derivatives, we employ the formula for the first-order~(\ref{eq:first_order_derivative_e_p}).
Analogously to the steps taken while deriving $\nabla_i \; e_p(U)$, we concentrate on two terms involving traces.
In terms up to the linear order, we expand $Y_R = \mathrm{Tr}\big((U_\varepsilon H_i)^{R\vphantom{)}} U^{R\dagger\vphantom{)}}_\varepsilon U^{R\vphantom{)}}_\varepsilon U^{R\dagger\vphantom{)}}_\varepsilon \big)$ as
\begin{equation}
\begin{split}
       Y_R \approx \,\,&\mathrm{Tr}\big(( UH_i)^RU^{R\dagger} U^RU^{R\dagger}\big) + \mathrm{Tr}\big((\delta  UH_i)^RU^{R\dagger}U^RU^{R\dagger}\big) + \mathrm{Tr}\big(( UH_i)^R\delta U^{R\dagger}U^RU^{R\dagger}\big) \\&+\mathrm{Tr}\big(( UH_i)^RU^{R\dagger}\delta U^RU^{R\dagger}\big) +
       \mathrm{Tr}\big(( UH_i)^RU^{R\dagger}U^R\delta U^{R\dagger}\big).
\end{split}
\end{equation}
Using similar relations for the partial transposition part of the matrix $U$ of size 36 and employing the definition of the partial derivative, we arrive at the element $(i,j)$ of Hessian, with $i,j\in \{1,...,N^2\}$,
\begin{equation}\label{eq:second_order_derivative_e_p}
    \begin{split}
        \nabla_j\nabla_i \; e_p(U) = \frac{4N^2}{N^4(N^2-1)}\; \mathrm{Re} \big(B_R(U) + B_\Gamma(U)\big) ,
    \end{split}
\end{equation}
where by the notation $B_R(U)$ we mean
\begin{equation}
\begin{split}
       B_R(U) &= \mathrm{Tr}\big((UH_j H_i)^RU^{R\dagger}U^RU^{R\dagger}\big) + \mathrm{Tr}\big(( UH_i)^R(UH_j)^{R\dagger}U^RU^{R\dagger}\big) \\
       &+ \mathrm{Tr}\big(( UH_i)^RU^{R\dagger}(UH_j)^RU^{R\dagger}\big) +
       \mathrm{Tr}\big(( UH_i)^RU^{R\dagger}U^R(UH_j)^{R\dagger}\big),
\end{split}
\end{equation}
with a similar expression for $B_\Gamma (U)$.
Observe that the expression does not seem to be symmetric; however, we have verified numerically that the symmetry property in the case of second-order derivatives is satisfied.

Using the final expression for the elements of the Hessian~(\ref{eq:second_order_derivative_e_p}) we evaluated them  for each of the matrices separately, with the characterization of their eigenvalues shown in Table~\ref{tab:hessian_e_p_results}.
Concluding the results on the extremality of the analyzed unitary matrices, we observe that the only matrix with no positive eigenvalues of its Hessian was $W$.
Therefore, we surmise that $W$ is a local maximum of the entangling power.
The following section shall be devoted to the study of a separate way for obtaining the $W$ matrix from $G$.

 \begin{table}[H]
 \centering
\begin{tabular}{|c|c|c|c|c|}
        \hline
        name of the matrix & $W$ & $G$ & $A$ & $P_{36}$  \\
        \hline
        \multirow{6}{6em}{\centering Most positive eigenvalues} & \multirow{6}{5em}{\centering none} 
          & 0.332 &  & 1.32\\ 
        & & 0.259 &  & 1.32\\ 
        & & 0.0506 & 0.00964 & 0.500\\
        & & 0.0506 & 0.00964 & 0.500 \\ 
        & & 0.493 &  & 0.310 \\ 
        & & 0.493 &  & 0.310 \\ 
        \hline
        \multirow{2}{7em}{\centering Number of eigenvalues $> 0$} & \multirow{2}{6em}{\centering 0} & \multirow{2}{7em}{\centering 84} & \multirow{2}{6em}{\centering 2} & \multirow{2}{6em}{\centering 14}  \\
         & & & &\\
        \hline
        \multirow{2}{7.3em}{\centering Number of eigenvalues = 0} & \multirow{2}{6em}{\centering 157} & \multirow{2}{6em}{\centering 64} & \multirow{2}{6em}{\centering 157} & \multirow{2}{6em}{\centering 60}  \\
         & & & &\\
        \hline
        \multirow{2}{7.3em}{\centering Number of eigenvalues < 0} & \multirow{2}{6em}{\centering 1139} & \multirow{2}{6em}{\centering 1148} & \multirow{2}{6em}{\centering 1137} & \multirow{2}{6em}{\centering 1222}  \\
         & & & &\\
         \hline
        \multirow{6}{7em}{\centering Most negative eigenvalues}  
        & -4.046 & -4.094 & -4.91 & -4.28\\ 
        & -4.046 & -4.094 & -4.58 & -4.28\\ 
        & -4.044 & -4.067 & -4.58 & -4.20\\ 
        & -4.044 & -4.067 & -4.14 & -4.20\\ 
        & -4.040 & -4.053 & -4.14 & -4.18\\ 
        & -4.040 & -4.053 & -4.09 & -4.18 \\
        \hline
\end{tabular}
 \caption{Properties of the spectrum of Hessian of the entangling power in the instances of matrices $W$, $G$, $A$, and $P_{36}$ introduced in this chapter. 
 The spectrum of $H$ consists of $36^2$ real eigenvalues.
 Since $W$ was the only matrix without positive eigenvalues of $H(W)$, we observe that the Hessian test does not exclude it as a local maximum of the entangling power.
 Furthermore, we conjecture that grouping of eigenvalues into pairs is related to the fact that for every direction $U$ there is a symmetric one $US$ given by the swap matrix $S$.
 }\label{tab:hessian_e_p_results}
\end{table}

\section{Achieving matrix \texorpdfstring{$W$}{Lg} from matrix \texorpdfstring{$G$}{Lg}}\label{sec:achieving_W_from_G}
The purpose of this section is to convince the reader that the value of entangling power of $W$, being $e_p(W) \approx 0.9987$, is special in the sense of admitting several local maxima.
In order to do so, we shall demonstrate the efforts to achieve this value of entangling power from the matrix $G$ by means of the derivatives.
First, we investigated the steepest ascent, utilizing the expression for the first-order derivative (\ref{eq:first_order_derivative_e_p}).
Nonetheless, by applying this direction, the increase of $e_p$ saturates quickly and it is not possible to reach $e_p(W) \approx 0.9987$.
Therefore, the second attempt involved the Hessian of the matrix $G$, evaluated in the previous section, see Eq.~(\ref{eq:second_order_derivative_e_p}).
Then, we used its largest positive eigenvalue and the corresponding leading eigenvector as a direction for further transformation.
Optimizing over this one-parameter family we were able to achieve a matrix $G_1$ with a slightly higher entangling power.

The procedure was applied several times, yielding other new matrices $G_2$, $G_3$, $G_4$, and $G_5$.
Their values of entangling power are gradually approaching the value $e_p(W)$; however, they are not close in terms of squared Hilbert-Schmidt distance, given by $\mathrm{dist}(A,B) = \sum_{i,j}|A_{ij}-B_{ij}|^2  $.
The lack of significant advances towards $W$ in the sense of elements can be explained by the introduction of non-trivial complex phases to the matrices by the algorithm.
However, $W$ is a real matrix, thus does not admit any non-trivial phases.
To sum up, we observe that the value $e_p (W)$ is special in the sense that we were unable to surpass it using our algorithm.

 \begin{table}[H]
 \centering
\begin{tabular}{|c|c|c|c|c|c|}
\hline
 matrix & $e_p$ &$g_t$& \# positive eigenvalues & the largest eigenvalue & dist$(\text{matrix},W)$ \\ \hline
 $G$&  $0.998\mathbf{139}$& 0.500000&52 & $0.6644$ & 27.863 \\ \hline
 $G_1$& $0.998\mathbf{631}$ &0.500000& 81 & $0.1938$ & 27.643 \\ \hline
 $G_2$&  $0.998\mathbf{694}$& 0.499965&73 & $0.0864$ & 26.557 \\ \hline
  $G_3$&  $0.998\mathbf{713}$& 0.499998&73 & $0.0747$ & 27.124 \\ \hline
 $G_4$& $0.998\mathbf{720}$ & 0.500035&74 & $0.0348$ & $26.881$ \\ \hline
  $G_5$& $0.998\mathbf{722}$ & 0.500012&-- & -- & 26.310 \\ \hline
 $W$& $0.998\mathbf{723}$ &0.500000& 0 & 0 & 0 \\ \hline
\end{tabular}
 \caption{The steepest ascent algorithm connected with the Hessian introduced new unitary matrices of order 36, denoted as $G_1$, $G_2$, $G_3$, $G_4$, and $G_5$.
 These matrices gradually approach $W$ in terms of the entangling power.
 Here, we present properties of their Hessians, such as the number of positive eigenvalues and the largest one, as well as the H-S distances to the $W$ matrix.
 After five iterations the algorithm produced matrix $G_5$ with entangling power close to $e_p(W)$.
 In general, obtained matrices $G_i$ do not lie on the $g_t = 1/2$ line.
 }\label{tab:iteration_G_to_W}
\end{table}

\section{Average singular entropy}\label{sec:sum_of_entropies_AME}
The sole measure of multiunitarity that was used prior to this section was the entangling power $e_p$.
Optimization of this measure might yield results that are locally extremal.
Nonetheless, it is possible that there exist other measures which have the same optimal point, i.e.\ multiunitary matrix, but with different local maxima landscape.
In order to overcome this obstacle, we considered another measure of multiunitarity.
For a matrix $X$, in general not unitary, the \emph{average singular entropy} $s_e(X)$ is defined as
\begin{equation}\label{eq:definition_average_singular_entropy}
    s_e(X) = \frac{1}{3} \bigg(E_S(X) + E_S(X^R) + E_S(X^\Gamma)\bigg).
\end{equation}
Here $E_S(X)$ stands for the singular entropy of an operator, introduced in Section~\ref{sec:gates_ent_power}.
Due to Lemma~\ref{lemma:multiunitary}, the maximal value of the average singular entropy is achieved for a multiunitary matrix.
Even though this quantity does not have the same clear operational meaning as the entangling power, it characterizes the mean bipartite entanglement of the corresponding 4-party pure state.
Thus, the average singular entropy provides a slightly different insight into the extremality of a given matrix.
It is possible that discarding the unitarity $U\mapsto X$, of an optimal unitary matrix $U$ in the sense of entangling power $e_p(U)$, can be compensated by an increase of the average singular entropy $s_e(X)$.
As we shall see in the forthcoming parts of this section, there are some matrices with the higher average singular entropy than $W$.
Nonetheless, the matrix $W$ remains the unitary matrix with the highest $e_p$ if we restrict to 2-dimensional rotations from the optimal permutation matrix $P_{36}$.

Altogether, allowing for non-unitarity of the original matrix $X$ moves the 2-dimensional problem to the 3-dimensional domain.
There are three quantities to be optimized, $E_S(X)$, $E_S(X^R)$, and $E_S(X^\Gamma)$, see Eq.~(\ref{eq:e_p_using_singular_entropy_only_for_AME}).
All in all, this setup shows that points obtainable by a matrix must lie inside the unit cube spanned by three entropies.
The main diagonal of this, so called, \emph{entropy cube} lies between points (0,0,0) and (1,1,1).
The corner (1,1,1) refers to a matrix that has all entropies maximal; thus, it would correspond to a multiunitary matrix.
The main diagonal line refers to matrices that share the same value of singular entropies, $E_S(X) = E_S(X^R) = E_S(X^\Gamma)$; thus, making them highly symmetric matrices with respect to their properties.

The search conducted by random rotations led Wojciech Bruzda to the discovery of two other matrices.
These were found numerically as a limit of a convergence procedure, see Table~\ref{table:4_categories}.
The table shows that the absolute maximum in all these categories is currently the multiunitary matrix $U_{36}$, corresponding to the golden AME(4,6) state, described in Section~\ref{sec:analytical_AME}.
Similarly to the derivative procedure explained in the previous sections, we have verified the extremality conditions also for the function~(\ref{eq:definition_average_singular_entropy}), with the details of the calculations delegated to Appendix~\ref{app:hessian_s_e}.
Applying the resulting formulae to the above-mentioned matrices showed that in all of these cases the derivative vector is zero.
We conjecture that all of these points form local maxima of the average singular entropy.

\begin{table}[H]
\begin{tabular}{|c|c|c|c|}
\hline
\diagbox{cube parts}{closest to $U_{36}$} & \pbox{20cm}{distance to $U_{36}$ \\\centering in 1-norm} & \pbox{20cm}{name of the matrix } & property \\ \hline
&&&\\[-1em]
corner (2-unitary) & 0 & $U_{36}$ & \pbox{20cm}{corresponds to the\\\centering golden AME state}\\&&&\\[-1em] \hline
entire cube& 0.000413 & $X_i$ & $\nabla s_e = 0$ \\ \hline
face (unitary) & 0.000425 & $W$ & $\nabla s_e = 0$ \\ \hline
edge (dual unitary) & 0.000793 & $A$ & $\nabla s_e = 0$ \\ \hline
diagonal &  0.000534 & $X_d$ & $\nabla s_e = 0$ \\ \hline
\end{tabular}
\caption{The state of the art concerning the search for $U_{36}$ before application of the Rather's algorithm.
Each row corresponds to a different part of the entropy cube and the matrix that minimizes the distance to the unitary matrix in the average singular entropy $s_e(X) = E_S(X) + E_S(X^R) + E_S(X^\Gamma)$).
Property $\nabla s_e = 0$ denotes that vector of derivatives is equal to zero.
Matrices $X_i$ and $X_d$ are available at~\cite{matrices_Bruzda}.}\label{table:4_categories}
\end{table}

To summarize, both of these approaches strengthen the surmise concerning the local optimality of the matrix $W$.
Furthermore, we conclude that the procedures are useful in the search for optimal matrices.
Extension of these methods to other setups might also prove relevant, such as optimization over other sets of matrices.

\section{The region of the \texorpdfstring{$W$}{Lg} family}\label{sec:region_W}
Obtaining new families of matrices of order 36 (and the corresponding pure states in $\mathcal{H}_6^{\otimes 4}$) was important also from another perspective.
It fosters the search for AME(4,6) state by a proof that quantum Latin squares \emph{really} can be beneficial but also by establishing states sharing high entangling power that are not isolated.
Although the $W$ family is parametrized by five angles, the subsequent visualization will be 2-dimensional, using projection of the set of the unitary matrices into a plane spanned by $e_p$ and $g_t$.
Given \emph{a priori} set of unitary matrices, it is not clear whether the projection of such a set is more than 1-dimensional.
We demonstrated that this set forms a 2-dimensional subregion in the plane $(e_p$,\;$g_t)$.
The set fills the entire region between the boundary lines forming the allowed triangle in  Figs.~\ref{fig:ent_power_gt}-\ref{fig:region_W}, but does not contain the region close to the corner -- the matrix $U_{36}$.

\begin{figure}[H]
    \includegraphics[scale=1.08]{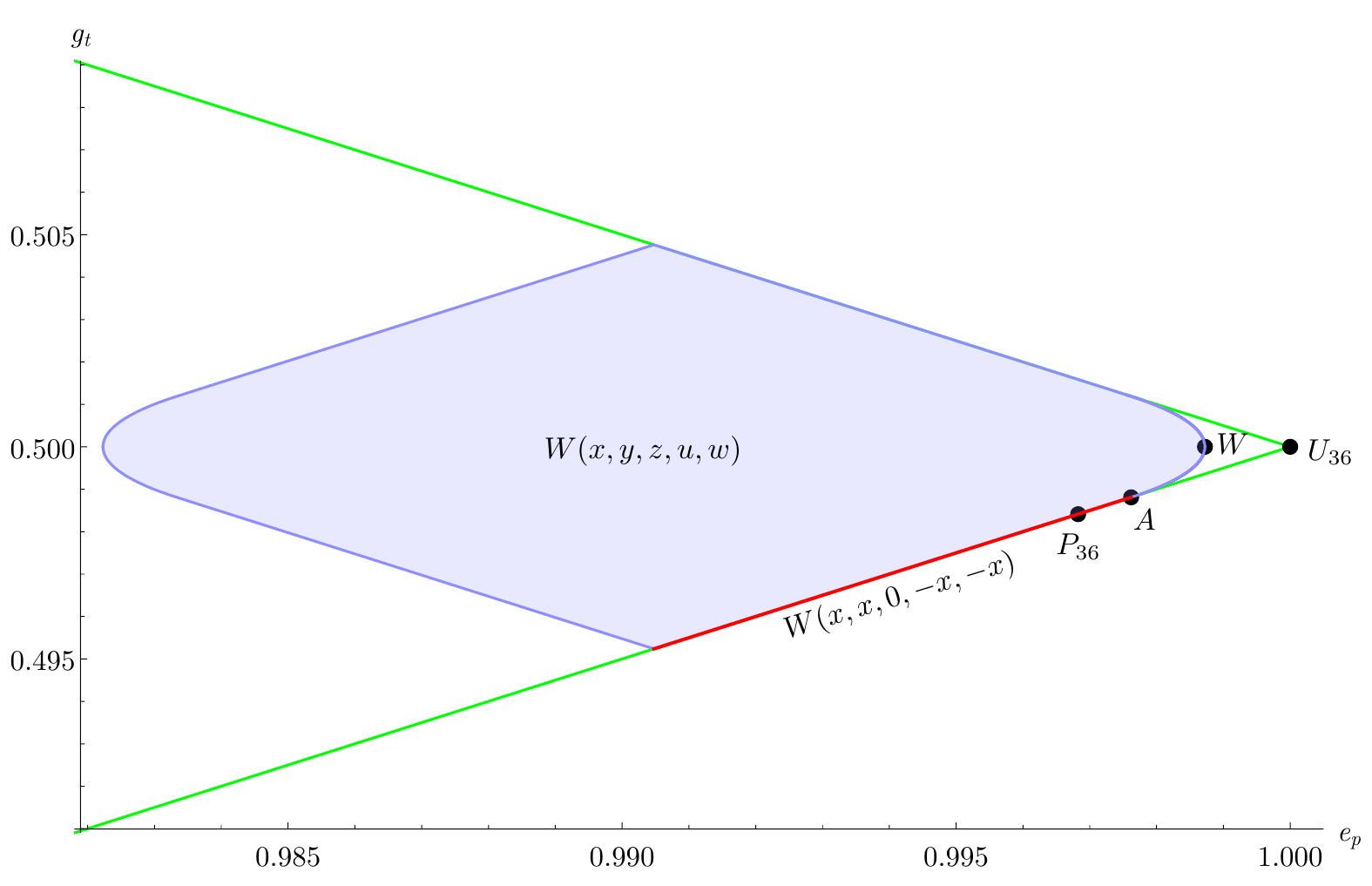}
    \caption{Unitary matrices $U(36)$ in the plane $(e_p,\, g_t)$ with the region attainable by the family $W(x,y,z,u,w)$.
    The line on the bottom, equaling the border of the unitary set, is given by a parametrization $W(x,x,0,-x,-x) $.
    The upper line can be obtained from the bottom using the mirror-like relation given by multiplying a given matrix by the swap matrix $S$.
    This set does not include the multiunitary matrix $U_{36}$ at the right corner of the triangle.
    }
    \label{fig:region_W}
\end{figure}

The region seen in Fig.~\ref{fig:region_W} is a 2D projection from a more dimensional set in the unitary manifold.
Thus, it is necessary that some of the points do not specify uniquely the matching matrices.
Indeed, the point corresponding to the matrix $A$, obtained from the parametrization of the boundary line, does not coincide with the $A$ matrix itself.
Similarly, the $W$ matrix which maximizes the entangling power has its different counterparts with the same $e_p$.
Nonetheless, since their properties are similar to properties of $W$ the study of extremality was devoted to $W$ only.

There are two regions in Fig.~\ref{fig:region_W} of a special interest.
One of them is the border of the whole set of unitary matrices, covered by the family $W(x,x,0,-x,-x)$.
This is the first characterization of extremal unitary matrices in these coordinates so close to the multiunitary matrix.
The other interesting parts of border are ellipses-like lines, bounding the set from both sides.
We have verified that these are indeed ellipses, and that they can be described as a simple, one-parameter family inside the $W$ family.
The ellipse at the right side of the figure, closer to the multiunitary matrix is parametrized by $W(-\pi/6, -\pi/12 + x,\pi/12 +x,\pi/6,\pi/4+x)$, while parameter $x \in (-\pi/12,\pi/12)$.
Note that the ellipse is tangent to the border of unitary matrices exactly at the point corresponding to the $A$ matrix.
We conjecture that the whole set is dense, i.e.\ that to every point inside the set one can associate a unitary matrix with corresponding $e_p$ and $g_t$, which has been checked only numerically.
Let us now move to the description of an algorithm allowing us to obtain an AME(4,6) state.

\section{Numerical algorithm used to find AME(4,6) state}\label{sec:Suhail_alg}
Previous sections described the search for the new families of complex matrices of order 36.
In this section, we sketch the first successful algorithm which allows obtaining the AME(4,6) state.
While in the past there were many numerical attempts, with a few published in 2020~\cite{Rather_2020,Rico_2020}, the first one to finally obtain the numerical approximation to the multiunitary matrix was Suhail Ahmad Rather~\cite{Rajchel_AME}.
His algorithm acts on any unitary matrix $U_0$ of order $N^2$ and consists of three steps.
\begin{enumerate}
    \item Take any initial unitary matrix $U_0$, and reshuffle it obtaining $U_0^R$.
    \item Perform a partial transpose on this matrix, which yields $(U_0^R)^\Gamma$.
    \item Apply the polar decomposition to the, in general non-unitary, matrix $\big(U_0^R\big)^\Gamma = VH$, where $VV^\dagger = \mathbb{I}$ and $H=H^\dagger \geq 0$. Take $U_1 \coloneqq V$.  
\end{enumerate}

Applying this algorithm several times ($U_0 \rightarrow U_1 \rightarrow ... \rightarrow U_n$) for matrices in dimensions $d^2 = 3^2$ and $d^2 = 4^2$ yields an AME(4,$d$) state with a high probability for initial matrices taken as a seed.
This is true even for random unitary matrices taken from the circular unitary ensemble~\cite{Rather_2020}.
Nevertheless, the case of $d^2 = 6^2$ is more complicated, since one needs to pick particular matrices for a seed.
It could be expected that $P_{36}$, as the permutation matrix with the highest entangling power, or $W$, or $G$ would be matrices leading to the success.
However, none of these starting matrices lead to a multiunitary matrix.
Rather unexpectedly, the matrix which proves to give a successful seed is not the one with the highest $e_p$ known before, but a slight modification to the matrix $P_{36}$ or, more precisely, a matrix in the neighborhood of the matrix $\tilde{P_s}$, visualized in Fig.~\ref{fig:seed_matrix}.

\begin{figure}[H]
\centering
	\begin{tikzpicture}
		\node (a) at (-3,0) {$\tilde{P_s}$ =
\setlength{\tabcolsep}{3pt}
\renewcommand{\arraystretch}{1.}
\begin{tabular}{|llllll|}
\hline
$11$ & $22$ & $33$ & $44$ & $55$ & $66$ \\ 
$23$ & $14$ & $45$ & $36$ & $61$ & $52$ \\ 
$32$ & $41$ & $64$ & $53$ & $16$ & $25$ \\ 
$46$ & $35$ & $51$ & $62$ & $24$ & $13$ \\ 
$64$ & $56$ & $26$ & $15$ & {\color{blue}  $43$} &  {\color{blue}  $31$} \\ 
$55$ & $63$ & $12$ & $21$ &  {\color{blue}  $42$} &  {\color{blue}  $34$} \\
\hline
\end{tabular}
 };
    \node (a) at (-0.1,0) {=};
    \node (a) at (3.8,0) {\includegraphics[scale=0.95]{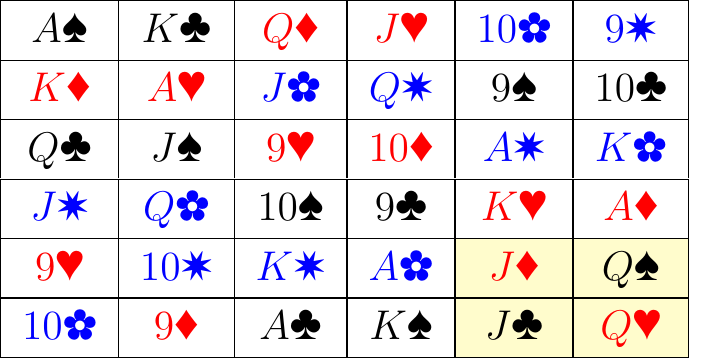}};
	\end{tikzpicture}
	\caption{Permutation matrix $\tilde{P}_s$ which, after a slight perturbation $\tilde{P}_s \mapsto \tilde{P}_s e^{i\varepsilon H}$, yields a multiunitary matrix after iterations of the Rather's algorithm.
	The mismatching elements are marked in yellow.
	}\label{fig:seed_matrix}
\end{figure}

The algorithm returns a multiunitary matrix with a certain probability, while sampling the unitary neighborhood of $\tilde{P_s}$, for more details see the joint paper~\cite{Rajchel_AME}.
Usefulness of the algorithm stems from the fact that it provides a numerical AME(4,6) state up to the machine precision.
Furthermore, some of the returned multiunitary matrices were of a special, block-like form.
Due to the dimensions of these matrices, we are unable to reproduce this particular complex matrix of size 36.
In the next section we improve our understanding of its peculiar structure.

\section{Block-like properties of the numerical multiunitary matrix}\label{sec:block-like_AME}
We focused on the particular form of the numerical multiunitary matrix, found using the iterative algorithm described in the previous section.
It facilitated the further investigation of the conditions for multiunitarity, leading to the following lemma.

\begin{lemma}\label{lemma:block_AME}
A set of 24 vectors of length 6, called $a_1,...,a_{12}, b_1,..., b_{12}$, such that three matrices $M$, $M_R$, and $M_{\Gamma}$ of order 12, defined in Table~\ref{tab:block_small}, are unitary, leads to a multiunitary matrix $U$ of size 36.
\end{lemma}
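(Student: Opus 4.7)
The plan is to reverse-engineer the multiunitarity conditions on $U$ into the three unitarity conditions on the smaller $12\times 12$ matrices $M$, $M_R$, $M_\Gamma$, exploiting the block-sparse form of $U$ dictated by Table~\ref{tab:block_small}. First, I would view $U$ as a $6\times 6$ array of $6\times 6$ blocks and write down the explicit assignment: each row of each block of $U$ is either the zero row or one of the vectors $a_i$ or $b_i$, placed according to the combinatorial skeleton inherited from the permutation $\tilde{P}_s$ of Fig.~\ref{fig:seed_matrix}. Because of the high sparsity, the $36\times 36$ orthogonality equations $UU^\dagger=\mathbb{I}_{36}$ collapse: most row pairs of $U$ have disjoint supports and are automatically orthogonal, while the nontrivial conditions all reduce to identities of the form $\sum_{s}\langle v_s,w_s\rangle=\delta_{??}$, where each $v_s,w_s\in\{a_i,b_j\}$.

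Second, I would identify these surviving inner-product identities with the rows of the matrix $M$: by construction, each row of $M$ is a length-$12$ vector obtained by concatenating either an $a_i$ with a $b_j$ (or by padding with zeros) in the pattern chosen in the table, so $MM^\dagger=\mathbb{I}_{12}$ encodes exactly the nontrivial row-orthonormality relations of $U$. The verification amounts to matching the index sets, which is a straightforward, if tedious, bookkeeping exercise once the block pattern is fixed.

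Third, I would repeat the argument for $U^R$ and $U^\Gamma$. Using the index description of reshuffling and partial transposition from Section~\ref{sec:operations_on_matrices}, namely $U^R_{klmn}=U_{kmln}$ and $U^\Gamma_{klmn}=U_{klnm}$, the supports of the rows of $U^R$ and $U^\Gamma$ inherit a permuted but equally sparse pattern. Consequently, unitarity of $U^R$ (respectively $U^\Gamma$) reduces to a second (respectively third) batch of inner-product identities among the very same vectors $\{a_i,b_i\}$, but grouped differently. These batches are then shown to be exactly the rows of $M_R$ (resp.\ $M_\Gamma$), so that $M_RM_R^\dagger=\mathbb{I}_{12}$ and $M_\Gamma M_\Gamma^\dagger=\mathbb{I}_{12}$ supply the missing relations. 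Combining the three steps yields multiunitarity of $U$.

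The main obstacle will be the bookkeeping in the third step: one must verify that the reshuffling and partial transposition actually preserve the block-sparse pattern (rather than mixing supports in a way that couples $a_i$'s and $b_j$'s with coefficients from different blocks), and that the grouping of the resulting inner products coincides rowwise and columnwise with the definitions of $M_R$ and $M_\Gamma$ in Table~\ref{tab:block_small}. Once this index matching is done, the lemma follows purely from linear algebra, without any new analytic input. A secondary issue is to check that no orthogonality constraint has been double-counted or omitted — this can be handled by a dimension count, comparing the number of independent scalar equations imposed by $U,U^R,U^\Gamma$ being unitary with the $3\cdot\binom{12}{2}+3\cdot 12$ equations supplied by $M,M_R,M_\Gamma\in U(12)$.
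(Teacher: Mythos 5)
Your strategy coincides with the paper's: the proof given there is exactly the reduction you describe, carried out by exhibiting the sparsity patterns of $U$, $U^R$, and $U^\Gamma$ explicitly (Tables~\ref{tab:U_block_vectors}, \ref{tab:U_R}, and \ref{tab:U_Gamma}) and observing that, in each of the three matrices, the twelve rows supported on a fixed group of twelve columns assemble into a copy of $M$, $M_R$, or $M_\Gamma$ respectively, up to a permutation of rows.

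There is, however, one point that you file under ``straightforward bookkeeping'' which is actually the substantive content of the construction, and your write-up leaves it under-determined. The skeleton of Table~\ref{tab:U_block_vectors} has nine non-zero $4\times 12$ blocks (AB1, AB5, AB9, CD1, \dots, EF9 in the notation of Table~\ref{tab:define_blocks}), i.e.\ $72$ vector slots, while the lemma supplies only $24$ vectors. Each vector must therefore be placed three times, and the unitarity of $U^R$ and $U^\Gamma$ reduces to that of $M_R$ and $M_\Gamma$ \emph{only} for the particular triplication of Eq.~(\ref{eq:block_equality}), namely $\mathrm{AB1=CD5=EF9}$, $\mathrm{AB5=CD9=EF1}$, $\mathrm{AB9=CD1=EF5}$. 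With a different assignment (say $\mathrm{AB1=CD1=EF1}$) the twelve-row groups of $U^R$ would couple vectors drawn from distinct AB-blocks in a pattern not covered by $M_R$, and the hypotheses of the lemma would no longer suffice. So when you ``write down the explicit assignment'' in your first step, it cannot be read off from $\tilde{P}_s$ alone: it has to be chosen so that the permutation of blocks induced by $R$ and $\Gamma$ maps each column group back onto a complete copy of a single AB-block. Once that choice is made, the rest of your argument (disjoint supports kill most orthogonality conditions; the survivors are precisely the row-orthonormality relations of $M$, $M_R$, $M_\Gamma$; the equation count confirms nothing is omitted, noting that each relation appears three times because of the triplication) goes through exactly as in Appendix~\ref{app:block_like}.
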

\begin{proof}
The proof of this lemma relies on the particular arrangement of these 24 vectors into the matrix of order 36.
For the full description of this arrangement, we refer the reader to Appendix~\ref{app:block_like}.
\end{proof}

\begin{table}[H]
                \vspace{3.5cm} 
                $M\coloneqq
                \quad\quad\quad\quad\quad\quad\quad\quad\quad\,\,\,\,\,\,\,\quad\,\, ,\;\;\,\,\,\,\;\,
                M_R\coloneqq
                \quad\quad\quad\quad\quad\quad\quad\quad\,\,\,\,\,\,\quad\quad\,\, ,\;\;\,\,\,\,\;\,
                M_\Gamma\coloneqq
                \quad\quad\quad\quad\quad\quad\quad\quad\quad\quad\quad\;\; $
                \\[-11.1em]
                \quad\quad\,
                \begin{tabular}[t]{|p{1.6cm}|p{1.6cm}|}
                    \hline
                    $a_1$&$b_1$\\ \hline
                    $a_2$&$b_2$\\ \hline
                    $a_3$&$b_3$\\ \hline
                    $a_4$&$b_4$\\ \Xhline{4\arrayrulewidth}
                    $a_5$&$b_5$\\ \hline
                    $a_6$&$b_6$\\ \hline
                    $a_7$&$b_7$\\ \hline
                    $a_8$&$b_8$\\ \Xhline{4\arrayrulewidth}
                    $a_9$&$b_9$\\ \hline
                    $a_{10}$&$b_{10}$\\ \hline
                    $a_{11}$&$b_{11}$\\ \hline
                    $a_{12}$&$b_{12}$\\ \hline
                \end{tabular}
                \hfill
                \begin{tabular}[t]{|p{1.6cm}|p{1.6cm}|}
                    \hline
                    $a_1$&$a_2$\\ \hline
                    $b_1$&$b_2$\\ \hline
                    $a_3$&$a_4$\\ \hline
                    $b_3$&$b_4$\\ \Xhline{4\arrayrulewidth}
                    $a_5$&$a_6$\\ \hline
                    $b_5$&$b_6$\\ \hline
                    $a_7$&$a_8$\\ \hline
                    $b_7$&$b_8$\\ \Xhline{4\arrayrulewidth}
                    $a_9$&$a_{10}$\\ \hline
                    $b_{9}$&$b_{10}$\\ \hline
                    $a_{11}$&$a_{12}$\\ \hline
                    $b_{11}$&$b_{12}$\\ \hline
                \end{tabular}
                \hfill
                \begin{tabular}[t]{|p{1.6cm}|p{1.6cm}|}
                    \hline
                    $a_1$&$a_3$\\ \hline
                    $a_2$&$a_4$\\ \hline
                    $b_1$&$b_3$\\ \hline
                    $b_2$&$b_4$\\ \Xhline{4\arrayrulewidth}
                    $a_5$&$a_7$\\ \hline
                    $a_6$&$a_8$\\ \hline
                    $b_5$&$b_7$\\ \hline
                    $b_6$&$b_8$\\ \Xhline{4\arrayrulewidth}
                    $a_9$&$a_{11}$\\ \hline
                    $a_{10}$&$a_{12}$\\ \hline
                    $b_{9}$&$b_{11}$\\ \hline
                    $b_{10}$&$b_{12}$\\ \hline  
                \end{tabular}
                \, \mbox{}
                \caption{Three analyzed unitary matrices of size 12, each consisting of 24 vectors of length 6.
                The block structure, denoted by bold lines, shows that the content of 3 blocks of size $4\times 12$ remains the same for $M$, $M_R$, and $M_{\Gamma}$.
                If in any block of matrix $M$ we trim the vectors $a_i$ and $b_i$ to be of length 2, then it will form a matrix of order 2.
                The respective two blocks from matrices $M_R$ and $M_\Gamma$ will be formed by reshuffling and partial transposition of this block; therefore, motivating the lower index notation.
                }\label{tab:block_small}
\end{table}

Distinguishing the 3 blocks from Table~\ref{tab:block_small}, we shall name them correspondingly to the vectors they are formed of: AB1 = $\{a_1,a_2,a_3,a_4,b_1,b_2,b_3,b_4\}$, AB5 = $\{a_5,...,b_8\}$, and AB9 = $\{a_9,...,b_{12}\}$.
Slightly abusing notation, we shall refer to the set of vectors and their particular arrangement into the blocks interchangeably.
All three matrices are formed by elements of the union of the blocks AB1$\,\cup\,$AB5$\,\cup\,$AB9.
The vectors forming the first block AB1 are shown in Table~\ref{tab:block_AB1}.

\begin{table}[H]
                \vspace{1.0cm} 
                AB1 = \quad\quad\quad\quad\quad\quad\quad\quad\quad\quad\quad\,\,\quad \mbox{}
                \\[-4.8em]
                \quad\quad\quad\quad\quad\quad\quad\,
                \begin{tabular}[t]{|p{1.8cm}|p{1.8cm}|}
                    \hline
                    $a_1$&$b_1$\\ \hline
                    $a_2$&$b_2$\\ \hline
                    $a_3$&$b_3$\\ \hline
                    $a_4$&$b_4$\\ \hline
                \end{tabular}
                \quad\quad\quad\quad\mbox{}
                \caption{The arrangement of 8 vectors of length 6 leading to a matrix of size $4\times 12$, given the name of AB1 block.
                These vectors, in different orders, form the first blocks in matrices $M$, $M_R$, and $M_{\Gamma}$.}\label{tab:block_AB1}
\end{table}

In Table~\ref{tab:U_block_vectors} we provide the general form of the numerical multiunitary matrix found by the algorithm described in Section~\ref{sec:Suhail_alg}, with a particular emphasis on its non-zero elements.


\begin{table}[H]
$U = \begin{pmatrix}
\begin{tabular}{p{2cm}p{2cm}p{2cm}p{2cm}p{2cm}p{2cm}}
\cline{1-2}
\multicolumn{1}{|l|}{$a_1$} & \multicolumn{1}{l|}{$b_1$} &                       &                       &                       &                       \\ \cline{1-2}
\multicolumn{1}{|l|}{$a_2$} & \multicolumn{1}{l|}{$b_2$} &                       &                       &                       &                       \\ \cline{1-4}
                       & \multicolumn{1}{l|}{} & \multicolumn{1}{l|}{$c_1$} & \multicolumn{1}{l|}{$d_1$} &                       &                       \\ \cline{3-4}
                       & \multicolumn{1}{l|}{} & \multicolumn{1}{l|}{$c_2$} & \multicolumn{1}{l|}{$d_2$} &                       &                       \\ \cline{3-6} 
                       &                       &                       & \multicolumn{1}{l|}{} & \multicolumn{1}{l|}{$e_1$} & \multicolumn{1}{l|}{$f_1$} \\ \cline{5-6} 
                       &                       &                       & \multicolumn{1}{l|}{} & \multicolumn{1}{l|}{$e_2$} & \multicolumn{1}{l|}{$f_2$} \\ \cline{1-2} \cline{5-6} 
\multicolumn{1}{|l|}{$a_3$} & \multicolumn{1}{l|}{$b_3$} &                       &                       &                       &                       \\ \cline{1-2}
\multicolumn{1}{|l|}{$a_4$} & \multicolumn{1}{l|}{$b_4$} &                       &                       &                       &                       \\ \cline{1-4}
                       & \multicolumn{1}{l|}{} & \multicolumn{1}{l|}{$c_3$} & \multicolumn{1}{l|}{$d_3$} &                       &                       \\ \cline{3-4}
                       & \multicolumn{1}{l|}{} & \multicolumn{1}{l|}{$c_4$} & \multicolumn{1}{l|}{$d_4$} &                       &                       \\ \cline{3-6} 
                       &                       &                       & \multicolumn{1}{l|}{} & \multicolumn{1}{l|}{$e_3$} & \multicolumn{1}{l|}{$f_3$} \\ \cline{5-6} 
                       &                       &                       & \multicolumn{1}{l|}{} & \multicolumn{1}{l|}{$e_4$} & \multicolumn{1}{l|}{$f_4$} \\ \cline{5-6} 
                       &                       &                       & \multicolumn{1}{l|}{} & \multicolumn{1}{l|}{$e_5$} & \multicolumn{1}{l|}{$f_5$} \\ \cline{5-6} 
                       &                       &                       & \multicolumn{1}{l|}{} & \multicolumn{1}{l|}{$e_6$} & \multicolumn{1}{l|}{$f_6$} \\ \cline{1-2} \cline{5-6} 
\multicolumn{1}{|l|}{$a_5$} & \multicolumn{1}{l|}{$b_5$} &                       &                       &                       &                       \\ \cline{1-2}
\multicolumn{1}{|l|}{$a_6$} & \multicolumn{1}{l|}{$b_6$} &                       &                       &                       &                       \\ \cline{1-4}
                       & \multicolumn{1}{l|}{} & \multicolumn{1}{l|}{$c_5$} & \multicolumn{1}{l|}{$d_5$} &                       &                       \\ \cline{3-4}
                       & \multicolumn{1}{l|}{} & \multicolumn{1}{l|}{$c_6$} & \multicolumn{1}{l|}{$d_6$} &                       &                       \\ \cline{3-6} 
                       &                       &                       & \multicolumn{1}{l|}{} & \multicolumn{1}{l|}{$e_7$} & \multicolumn{1}{l|}{$f_7$} \\ \cline{5-6} 
                       &                       &                       & \multicolumn{1}{l|}{} & \multicolumn{1}{l|}{$e_8$} & \multicolumn{1}{l|}{$f_8$} \\ \cline{1-2} \cline{5-6} 
\multicolumn{1}{|l|}{$a_7$} & \multicolumn{1}{l|}{$b_7$} &                       &                       &                       &                       \\ \cline{1-2}
\multicolumn{1}{|l|}{$a_8$} & \multicolumn{1}{l|}{$b_8$} &                       &                       &                       &                       \\ \cline{1-4}
                       & \multicolumn{1}{l|}{} & \multicolumn{1}{l|}{$c_7$} & \multicolumn{1}{l|}{$d_7$} &                       &                       \\ \cline{3-4}
                       & \multicolumn{1}{l|}{} & \multicolumn{1}{l|}{$c_8$} & \multicolumn{1}{l|}{$d_8$} &                       &                       \\ \cline{3-4}
                       & \multicolumn{1}{l|}{} & \multicolumn{1}{l|}{$c_9$} & \multicolumn{1}{l|}{$d_9$} &                       &                       \\ \cline{3-4}
                       & \multicolumn{1}{l|}{} & \multicolumn{1}{l|}{$c_{10}$} & \multicolumn{1}{l|}{$d_{10}$} &                       &                       \\ \cline{3-6} 
                       &                       &                       & \multicolumn{1}{l|}{} & \multicolumn{1}{l|}{$e_{9}$} & \multicolumn{1}{l|}{$f_{9}$} \\ \cline{5-6} 
                       &                       &                       & \multicolumn{1}{l|}{} & \multicolumn{1}{l|}{$e_{10}$} & \multicolumn{1}{l|}{$f_{10}$} \\ \cline{1-2} \cline{5-6} 
\multicolumn{1}{|l|}{$a_9$} & \multicolumn{1}{l|}{$b_{9}$} &                       &                       &                       &                       \\ \cline{1-2}
\multicolumn{1}{|l|}{$a_{10}$} & \multicolumn{1}{l|}{$b_{10}$} &                       &                       &                       &                       \\ \cline{1-4}
                       & \multicolumn{1}{l|}{} & \multicolumn{1}{l|}{$c_{11}$} & \multicolumn{1}{l|}{$d_{11}$} &                       &                       \\ \cline{3-4}
                       & \multicolumn{1}{l|}{} & \multicolumn{1}{l|}{$c_{12}$} & \multicolumn{1}{l|}{$d_{12}$} &                       &                       \\ \cline{3-6} 
                       &                       &                       & \multicolumn{1}{l|}{} & \multicolumn{1}{l|}{$e_{11}$} & \multicolumn{1}{l|}{$f_{11}$} \\ \cline{5-6} 
                       &                       &                       & \multicolumn{1}{l|}{} & \multicolumn{1}{l|}{$e_{12}$} & \multicolumn{1}{l|}{$f_{12}$} \\ \cline{1-2} \cline{5-6} 
\multicolumn{1}{|l|}{$a_{11}$} & \multicolumn{1}{l|}{$b_{11}$} &                       &                       &                       &                       \\ \cline{1-2}
\multicolumn{1}{|l|}{$a_{12}$} & \multicolumn{1}{l|}{$b_{12}$} &                       &                       &                       &                       \\ \cline{1-2}
\end{tabular}
\end{pmatrix}$
\caption{A general form of a numerical matrix of size 36 obtained with the iterative algorithm, where non-zero vectors are of length 6.
Every blank vector consists of entries equal to 0.
Due to Lemma~\ref{lemma:block_AME}, if the first two columns of vectors, $a_1,...,a_{12},b_1,...,b_{12}$, form a unitary matrix $M$ of order 12, such that $M_R$ and $M_{\Gamma}$ are unitary then, by duplicating these vectors according to Eq.~(\ref{eq:block_equality}), $U$ of order 36 is transformed to a multiunitary matrix.
}\label{tab:U_block_vectors}
\end{table}

Table~\ref{tab:U_block_vectors} admits a particular structure of blocks invariant under reshuffling and partial transposition, see Appendix~\ref{app:block_like}. 
Analogously to the smaller case discussed in Lemma~\ref{lemma:block_AME}, we shall distinguish 9 different blocks, as presented in Table~\ref{tab:define_blocks}.

\begin{table}[H]
              \begin{tabular}[t]{|l|l|l|}
                    \hline
                     AB1 = $\{a_1,...,b_4\}$& CD1 = $\{c_1,...,d_4\}$ &  EF1 = $\{e_1,...,f_4\}$\\ \hline
                    AB5 = $\{a_5,...,b_8\}$ & CD5 = $\{c_5,...,d_8\}$& EF5 = $\{e_5,...,f_8\}$\\ \hline
                     AB9 = $\{a_9,...,b_{12}\}$ &CD9 = $\{c_9,...,d_{12}\}$ &EF9 = $\{e_9,...,f_{12}\}$\\ \hline
                \end{tabular}
                \caption{Non-zero vectors of the matrix $U$ of order 36, presented in Table~\ref{tab:U_block_vectors}, arranged into 9 blocks of size $4\times 12$, called AB1, AB5, AB9, CD1, CD5, CD9, EF1, EF5, and EF9.
                Each column of this array corresponds to non-zero vectors from two consecutive columns of $U$.
                }\label{tab:define_blocks}
\end{table}

The block notation was introduced to highlight the non-mixing property between different blocks, as any matrix $U$ admitting the form of Table~\ref{tab:U_block_vectors} will retain its structure of non-zero vectors under both reshuffling and partial transposition, see Appendix~\ref{app:block_like}.
The operations exchange the positions of blocks, e.g.\ $U \mapsto U^\Gamma$ exchanges AB5 and CD1.
Nonetheless, any block in the matrices $U$, $U^R$, and $U^\Gamma$ is composed of the vectors from a single block of Table~\ref{tab:define_blocks}.

The most important achievement of Lemma~\ref{lemma:block_AME} is the reduction in the dimensionality of the search for a multiunitary matrix of size 36.
Instead of considering all $36^2=1296$ elements of a unitary matrix of order 36, it suffices to find 3 blocks of size $4\times 12$.
Therefore, we reduced the number of non-zero elements by a factor of 9.
These three blocks form a matrix $M$ of order 12 which, provided it satisfies the conditions of Lemma~\ref{lemma:block_AME}, can be extended to a multiunitary matrix.
To this end, we aim to find AB1, AB5, and AB9 blocks such that matrices $M$, $M_R$, and $M_\Gamma$, defined in Table~\ref{tab:block_small}, are unitary.
Then, by setting all the other 6 non-zero blocks of $U$, given by Table~\ref{tab:U_block_vectors}, to be
\begin{equation}\label{eq:block_equality}
    \mathrm{AB1 = CD5 = EF9,\;\;\;\; AB5 = CD9 = EF1, \;\;\; and \;\;\; AB9 = CD1 = EF5}
\end{equation}
we obtain a multiunitary matrix of order 36.
This can be proven by the resulting unitarity of $U^R$ and $U^\Gamma$, as shown in Appendix~\ref{app:block_like}.
In our notation, the equality of blocks should be understood as the equality of the corresponding vectors.


We shall search for three blocks of size $4\times 12$ that satisfy the conditions of Lemma~\ref{lemma:block_AME}.
The smallest structure consists of setting all $4\times 12$ blocks AB1, AB5, and AB9 in such a way that the non-zero elements of different blocks do not overlap in columns.
Thus, every block consists of a single non-zero submatrix of size $4\times 4$.
Due to the non-overlapping, it is sufficient to select only one of these blocks, i.e.\ one matrix of order 4, while the others might be duplicated.
Then, the constraints imposed by Lemma~\ref{lemma:block_AME} on the block are \emph{equivalent} to the conditions for the multiunitarity of this block, see Table~\ref{tab:block_AME42}.
However, this would lead to the corresponding AME(4,2) state, which is known to be non-existent~\cite{Higuchi_2000}; therefore, such an arrangement is impossible.

\begin{table}[H]
    \vspace{1.cm} 
        $M\coloneqq
    \quad\quad\quad\quad\quad\quad\quad\quad\quad\,\,\,\,\,\,\,\quad\,\, ,\;\;\,\,\,\,\;\,
    M_R\coloneqq
    \quad\quad\quad\quad\quad\quad\quad\quad\,\,\,\,\,\,\quad\quad\,\, ,\;\;\,\,\,\,\;\,
    M_\Gamma\coloneqq
    \quad\quad\quad\quad\quad\quad\quad\quad\quad\quad\quad\;\; $
    \\[-4.8em]
    \quad\quad\,    
    \begin{tabular}[t]{|p{1.6cm}|p{1.6cm}|}
        \hline
        $a_1$&$b_1$\\ \hline
        $a_2$&$b_2$\\ \hline
        $a_3$&$b_3$\\ \hline
        $a_4$&$b_4$\\ \hline
    \end{tabular}
    \hfill
    \begin{tabular}[t]{|p{1.6cm}|p{1.6cm}|}
        \hline
        $a_1$&$a_2$\\ \hline
        $b_1$&$b_2$\\ \hline
        $a_3$&$a_4$\\ \hline
        $b_3$&$b_4$\\ \hline
    \end{tabular}
    \hfill
    \begin{tabular}[t]{|p{1.6cm}|p{1.6cm}|}
        \hline
        $a_1$&$a_3$\\ \hline
        $a_2$&$a_4$\\ \hline
        $b_1$&$b_3$\\ \hline
        $b_2$&$b_4$\\ \hline
    \end{tabular}
    \, \mbox{}
    \caption{Three matrices $M$, $M_R$, and $M_\Gamma$ of order 4, each formed by 8 vectors $a_1,a_2,a_3,a_4,b_1,b_2,b_3,$ and $b_4$ of length 2.
    Note that these matrices form the first blocks of the respective matrices defined in Table~\ref{tab:block_small}, provided that we extend vectors to length 6 by appending zero entries.
    Unitarity of these three matrices implies that matrix $M$ is multiunitary since $M_R=M^R$ and $M_\Gamma=M^\Gamma$.    
    }\label{tab:block_AME42}
\end{table}

Ultimately, all three blocks AB1, AB5, and AB9 cannot be treated independently.
In the next section, we shall describe the further progress in the search for a multiunitary matrix of size 36 of the block structure.

\section{Search using the block structure}\label{sec:search_block_structure}
Using Lemma~\ref{lemma:block_AME} we conclude that it is possible to tremendously simplify the search for an analytical multiunitary matrix of order 36 and the corresponding AME(4,6) state, exploiting the block structure thus decreasing the number of parameters by a factor of 9.
Then, the search can be conducted e.g.\ by utilizing the set of real Hadamard matrices, the notion recalled in Section~\ref{sec:sets_of_matrices}.
We tried several different combinations, with choosing vectors $a_i$ and $b_i$ that form the blocks as the consecutive rows of real Hadamard matrices of sizes 6, 8, and 12.
Finally, applying Eq.~(\ref{eq:block_equality}), one can extend the resulting three blocks AB1, AB5, and AB9 to all 9 blocks of the matrix of order 36, given in Table~\ref{tab:U_block_vectors}.
However, the search yielded no solutions that approach the entangling power of the best permutation matrix.

Therefore, we utilized another set of matrices in our investigation.
Complex Hadamard matrices, after rescaling, form a subset of unitary matrices that, due to their high degree of symmetry, is a good candidate for a search space.
The numerical exploration was conducted starting from a randomly chosen Hadamard matrix, then multiplying rows and columns by a random complex phase.
Applying this procedure, the highest entangling power achieved is equal to $e_p \approx 0.979$, which is lower than the corresponding value of entangling power for the best permutation matrix, $e_p(P_{36})\approx 0.9968$.

This concluded the search in the subset of Hadamard matrices.
The last method utilized to find the matrix of order 36 of the highest entangling power was the investigation of vectors forming the matrices $M$, $M_R$, and $M_\Gamma$, defined in Table~\ref{tab:block_small}, that are in the same position in all three matrices.
Each block AB1, AB5, and AB9 admits two invariant vectors -- in total there are 6 of them: $a_1$, $a_5$, $a_9$, $b_4$, $b_8$, and $b_{12}$.
Using this structure for reducing the number of parameters of the search, we obtained the value 0.985 as the best entangling power.

Finally, we conclude that the simplification given by Lemma~\ref{lemma:block_AME} provides a new perspective to the search of the multiunitary matrix of order 36 and the corresponding AME(4,6) state. Nonetheless, using the techniques described above we were not able to obtain an analytical multiunitary matrix.
Subsequently, we shall move on to the optimization by local unitary rotations, which yielded the final answer to our search in the form of the multiunitary matrix $U_{36}$, leading to the golden AME(4,6) state.

\section{Golden AME(4,6) state}\label{sec:analytical_AME}
Research conducted by Adam Burchardt and Wojciech Bruzda concerning rotations on the AME(4,6) state obtained by the numerical algorithm allowed them to find the analytical form of the multiunitary matrix.
This line of reasoning stemmed from the invariance of entanglement under LOCC processes, in particular under local unitary operations.
To formalize this remark, two matrices $M$ and $N$ of order 36 have the same entangling power if they are equivalent with respect to local unitaries,
\begin{equation}
    M = (U_1 \otimes U_2)\; N\; (U_3 \otimes U_4),
\end{equation}
with $U_i$ representing local unitary matrices of order 6.
By a careful manipulation of the matrix, it was possible to obtain an analytical form of a  multiunitary matrix of size 36, with only 4 or 2 non-zero elements in each row.
This matrix is depicted in Fig.~\ref{fig:AME_table}, with the values 
\begin{align}
a=&\big(\sqrt{2}(\omega+\overline{\omega})\big)^{-1}=\big(5+\sqrt{5}\big)^{-1/2}, \nonumber \\
b=&\big(\sqrt{2}(\omega^3+\overline{\omega}^{3})\big)^{-1}=\big((5+\sqrt{5})/20\big)^{1/2},   \\
c=&1/\sqrt{2},\nonumber
\end{align}
where the golden ratio is maintained by the values $b/a=\phi=(1+\sqrt{5})/2$; thus, motivating the name of the \emph{golden} AME(4,6) state.
The phase of any element is a multiplicity of a root of unity of order 20, written $\omega = e^{i\pi/10}$.

\begin{figure}[H]
    \includegraphics[scale=0.92]{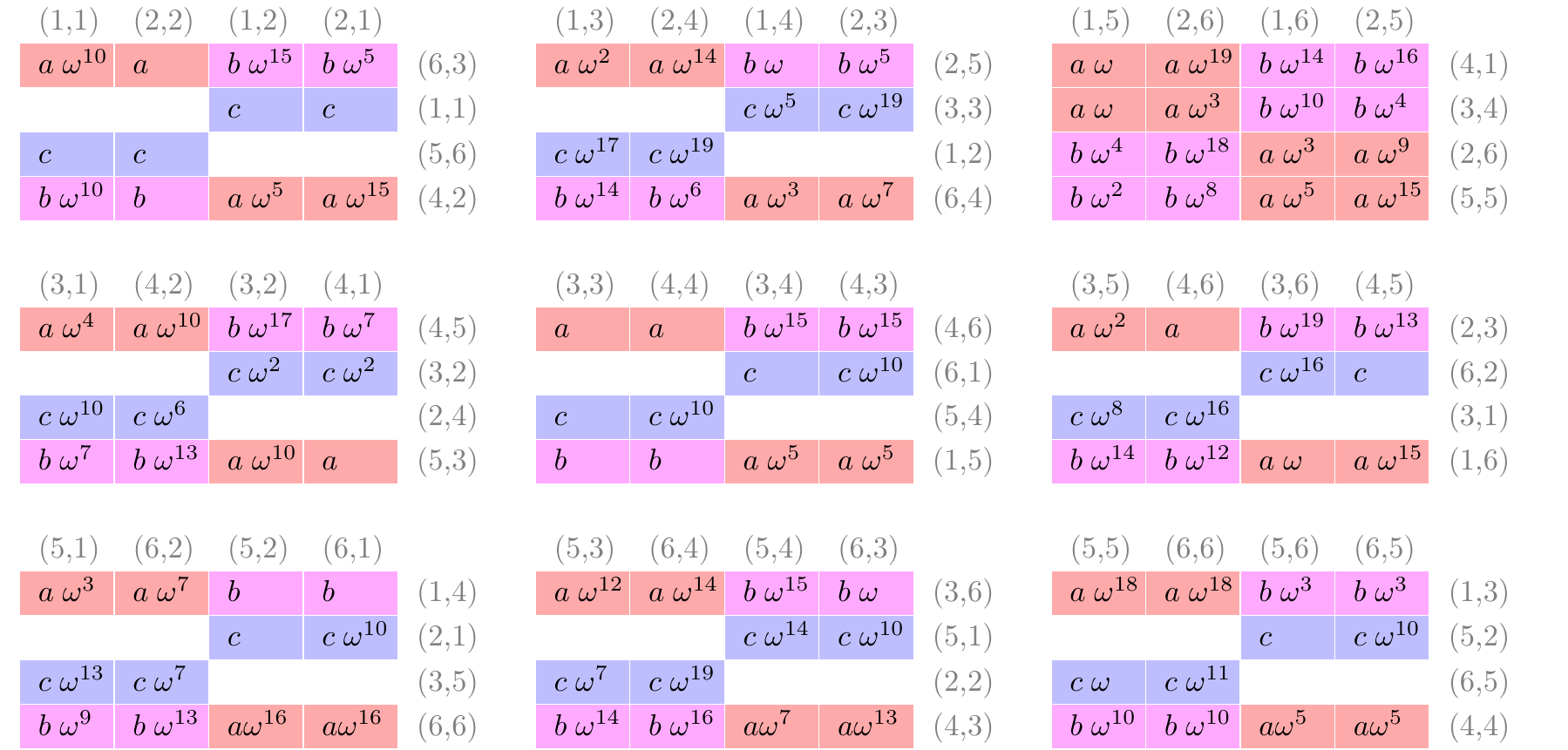}
    \caption{Analytical golden AME(4,6) state with each element depicted together with its indices for row $(i,j)$ and columns $(k,l)$.
    The indices refer to the position of a given row/column in the number system of base six, e.g.\ $(1,1)$ refers to a position 1.
    In order to observe the state either in a full matrix form or in a form of AME state, all the necessary files are available at~\cite{AME_files}. Figure reproduced from the joint paper~\cite{Rajchel_AME}.}
    \label{fig:AME_table}
\end{figure}

To exemplify the connection of the golden multiunitary matrix presented in Fig.~\ref{fig:AME_table} to an OQLS in a similar way in which $P_{36}$ is associated with a pair of Latin squares, let us present the state $\ket{\mathrm{AME}}$ in Fig.~\ref{fig:AME_cards} by using entangled cards from a quantum deck.
This implies non-trivial correlations between their entries -- non-zero elements belong to the same column.
Despite having elements in the same columns, the rows are orthogonal due to the unitarity of the matrix -- a similar trait cannot be achieved by the usage of permutation matrices in the case of permutation matrices and classical orthogonal Latin squares.
 
\begin{figure}[H]
    \includegraphics[scale=0.92]{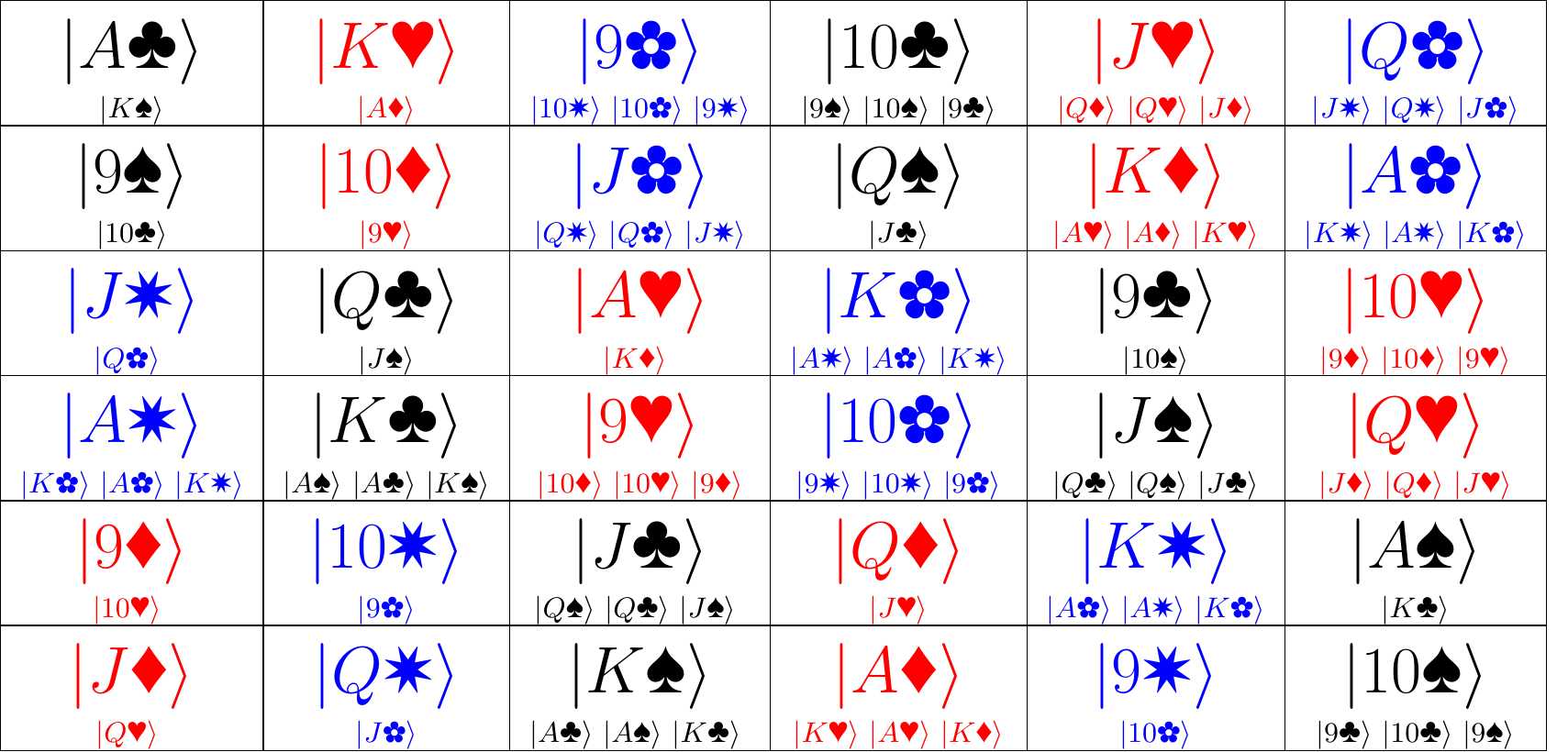}
    \caption{The golden AME(4,6) state depicted as a quantum orthogonal Latin square.
    Every element of a table is composed of 2 or 4 quantum states, which correspond to the positions of non-zero elements of the unitary matrix $U_{36}$ indicated in consecutive rows of Fig.~\ref{fig:AME_table}.
    Block structure is observed since e.g.\ aces and kings are not coupled to other figures and a similar property holds for suits.
    The explanation of the card symbols is contained in the caption of Fig.~\ref{fig:almost_OLS_6}.
    To reconstruct the entire multiunitary matrix one needs concrete numbers that appear in Fig.~\ref{fig:AME_table}.}
    \label{fig:AME_cards}
\end{figure}

Fig.~\ref{fig:AME_cards} illustrates the connection to the standard form of quantum states, i.e.\ in the bra-ket notation.
Using the notation from Lemma~\ref{lemma:OQLS_AME} we write down 36 states $\ket{\psi_{ij}}$ forming the OQLS and leading to the full AME(4,6) state, $\ket{\mathrm{AME}} = \frac{1}{N}\sum_{ij} \ket{ij}\ket{\psi_{ij}}$,
\begin{eqnarray}
 |\psi_{11}\rangle &=& c \ket{\text{A\club}} + c \ket{\text{K\spade}}, \nonumber \\
 |\psi_{12}\rangle  &=&  c \omega^{19} \ket{{\color{red}\text{K\heart}}} + c \omega^{17} \ket{{\color{red}\text{A\diamond}}} ,   \nonumber \\
  & & ....  
  \nonumber \\
 |\psi_{66}\rangle  &=&  
 b\omega^9 \ket{\text{10\spade}}+b\omega^{12}\ket{\text{9\club}}+a\omega^{16}\ket{\text{10\club}}+a\omega^{16}\ket{\text{9\spade}}, \nonumber
\end{eqnarray}
which combines the notation of two figures, Fig.~\ref{fig:AME_table} and \ref{fig:AME_cards}, into the golden AME(4,6) state.
As a final statement concluding the search, let us summarize the results analogously to Theorem~\ref{theorem:OLS}.
\begin{theorem}[Existence of OQLS]\label{theorem:QOLS}
    There exist OQLS of size 6.
    Therefore, orthogonal quantum Latin squares and AME(4,$N$) states exist in all dimensions for $N\geq 3$.
\end{theorem}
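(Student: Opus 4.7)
The plan is to establish the theorem by combining two ingredients already assembled in the chapter: the classical existence result for orthogonal Latin squares and the newly constructed golden AME(4,6) state. The key observation is that the desired statement splits naturally into two cases according to whether $N=6$ or not, and each case can be handled by invoking a result proved earlier.

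First, I would dispose of the case $N \geq 3$ with $N \neq 6$. Here Theorem~\ref{theorem:OLS} guarantees the existence of a pair of orthogonal Latin squares of size $N$. I would then argue that any such classical OLS can be viewed as an OQLS: place in cell $(i,j)$ the product state $\ket{A_{ij}}\otimes\ket{B_{ij}}$, where $(A_{ij},B_{ij})$ is the superimposed pair from the classical OLS. The three defining conditions of Definition~\ref{def:OQLS} must be checked, but they follow immediately from the Latin property of each square and the orthogonality of the superimposition: the distinctness of pairs guarantees an orthonormal set of $N^2$ vectors, and the fact that each symbol appears exactly once in every row/column of each Latin square ensures the two partial-trace conditions. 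Thus OQLS exist in every dimension in which OLS exist.

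Second, and this is the main obstacle that the entire chapter has been devoted to overcoming, I would address $N=6$. Here the classical construction is unavailable by Tarry's theorem, so I would invoke the explicit analytical multiunitary matrix $U_{36}$ of the golden AME(4,6) state constructed in Section~\ref{sec:analytical_AME} and displayed in Fig.~\ref{fig:AME_table}. By Lemma~\ref{lemma:multiunitary}, the multiunitarity of $U_{36}$ (i.e.\ the simultaneous unitarity of $U_{36}$, $U_{36}^R$, and $U_{36}^{\Gamma}$) is equivalent to the fact that the associated 4-partite state is AME(4,6). Then, by Lemma~\ref{lemma:OQLS_AME}, any AME(4,$N$) state yields a pair of orthogonal quantum Latin squares of size $N$; applying this with $N=6$ gives the desired OQLS, with the $36$ quantum states $\ket{\psi_{ij}}$ read off exactly as in Fig.~\ref{fig:AME_cards}.

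Finally, I would package the two cases together: OQLS of every size $N\geq 3$ exist, and, invoking Lemma~\ref{lemma:OQLS_AME} in the forward direction, AME(4,$N$) states exist for all $N\geq 3$. The only subtle point worth emphasising in the write-up is that the $N=6$ case genuinely rests on an explicit construction rather than on a general combinatorial argument; everything else is bookkeeping that follows from results already proved in the chapter.
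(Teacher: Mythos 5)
Your proposal is correct and follows essentially the same route as the paper: the case $N\geq 3$, $N\neq 6$ is settled by Theorem~\ref{theorem:OLS} together with the observation that every classical OLS embeds as an OQLS of product states (equivalently, via Lemmas~\ref{lemma:OLS_AME} and~\ref{lemma:OQLS_AME}), while the case $N=6$ rests on the explicit golden multiunitary matrix $U_{36}$ of Section~\ref{sec:analytical_AME} combined with Lemmas~\ref{lemma:multiunitary} and~\ref{lemma:OQLS_AME}. The paper states the theorem as a summary of these already-established results rather than writing out the case split, but the logical content is identical to what you propose.
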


\section{Conclusions}
The search for a quantum equivalent of the famous Euler problem was tackled for many years in quantum information due to various connections and applications, e.g.\ to multipartite entanglement, and perfect tensors.
In this chapter, we report the positive solution to the problem in the form of the golden AME(4,6) state, as well as several developments that might prove interesting to solve other problems. 
Furthermore, some of the matrices discovered while exploring the highly entangled multipartite state are conjectured to be local extrema of functions such as the entangling power.
Therefore, they provide an interesting and fairly simple approximation to AME(4,6) state.
The main problem was solved; however, there are still many interesting questions that might be investigated by future researches.
We shall discuss them in Chapter~\ref{Summary}. 

Our solution opens a path to find other applications to quantum designs in creating highly entangled quantum states.
The author hopes that similar discoveries will be made in the near future, since quantum designs broaden our understanding of the quantum world.
The golden AME state implies the existence of a new quhex quantum error correction code, which may have potential applications in future quantum technologies.
\clearpage
\chapter{Genuinely quantum SudoQ}
\label{chapter_7}
\vspace{-1cm}
\rule[0.5ex]{1.0\columnwidth}{1pt} \\[0.2\baselineskip]

\section{Introduction}

The emergence of the theory of quantum mechanics prompted researchers to search for its applications in different areas of science.
Most notably, several branches of mathematics, such as functional analysis, gained a perspective of physical motivation.
Another example of a mathematical field that can be altered by the introduction of quantum mechanics is game theory, a branch of mathematics younger than quantum mechanics.
A cornerstone of the quantum subfield of game theory was laid by the 1999 paper of Eisert et al., in which the authors studied the famous prisoner's dilemma while extending classical strategies to their quantum superpositions \cite{Eisert_1999}.
The generalization helped the authors to find new solutions, optimal under altered assumptions.
What is more, in the case of asymmetry, the player having access to a superposition of classical strategies has an advantage over the classical player~\cite{Dariano_2002},\cite{Piotrowski_2003}.
This example shows that the incorporation of quantum rules into classical games significantly enriches the possibilities of strategies.

In 2020, this concept was extended to the popular Sudoku game, introduced briefly in Section~\ref{sec:Sudoku}.
Similar to the quantum generalization of Latin squares introduced in 2016 by Musto and Vicary~\cite{Musto_2016}, a quantum version of Sudoku was put forward by Nechita and Pillet in 2020~\cite{Nechita_2020}.
Quantum Sudoku is a notion that brings together classical and quantum combinatorial designs to study them for their deep connections to envisaging new schemes for quantum measurements.

We expand the idea of Nechita and Pillet~\cite{Nechita_2020}, introducing the notion of \emph{genuinely} quantum designs, which are those that cannot be brought to the classical form by means of unitary rotations.
The present chapter shall focus on certain properties of those non-classical quantum Sudoku.
A summary of some parts of this chapter, as well as an extension of the others, in which the author's involvement was less substantial, can be found in a joint paper~\cite{Rajchel_SudoQ}.
If not specified differently, the author's contribution to the work covered by this chapter was significant.

\begin{figure}[H]
    \includegraphics[width=1\columnwidth]{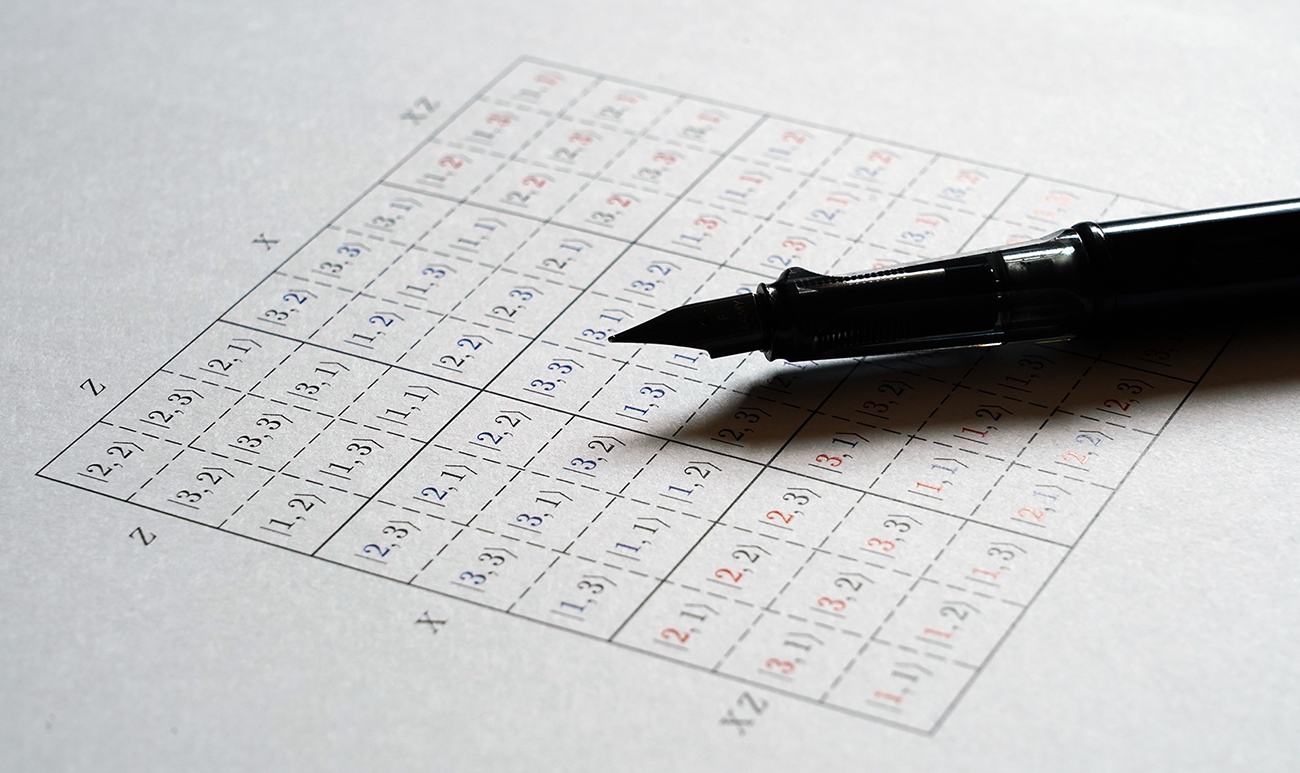}
    \caption{Genuinely quantum SudoQ can be constructed and solved using similar methods to the construction of mutually unbiased bases.
    The name of ``Sudoku'' is the Japanese abbreviation of \begin{CJK}{UTF8}{}\begin{Japanese}\mbox{数字は独身に限る}\end{Japanese}\end{CJK} (``Suuji wa dokushin ni kagiru''), which means ``numbers must occur only once'' or ``numbers better be single!''~\cite{Jana_2015}.
    This coincides with our motivation for conducting research of SudoQ from the perspective of quantum designs since in the genuinely quantum SudoQ each state repeats exactly once.
    The author is grateful to his wife for this artistic depiction of quantum Sudoku.}
    \label{fig:sudoQ_pen}
\end{figure}

\section{Classical Sudoku}\label{sec:Sudoku}
The research of quantum Sudoku is based on its classical predecessor.
\begin{definition}[Sudoku]
    An $N^2 \times N^2$ combinatorial arrangement of $N^2$ symbols from the set $\{1,...,N^2\}$, such that in every row, column, and $N\times N$ block all elements are distinct is called Sudoku design.
\end{definition}

As an example, we shall present the simplest Sudoku design possible in dimension $N^2= 2^2 = 4$,

\begin{equation}\label{eq:Sudoku}
 \begin{tabular}{!{\vrule width 1pt}c|c!{\vrule width 1pt}c|c!{\vrule width 1pt}}
    \noalign{\hrule height 1pt}
    $1$ & $2$ & $3$ & $4$ \tabularnewline
    \hline
    $3$ & $4$ & $1$ & $2$ \tabularnewline
    \noalign{\hrule height 1pt}
    $4$ & $1$ & $2$ & $3$ \tabularnewline
    \hline
    $2$ & $3$ & $4$ & $1$ \tabularnewline
    \noalign{\hrule height 1pt}
\end{tabular}.
\end{equation}
Any Sudoku design belongs to the class of Latin squares since it satisfies the row/columns conditions.
Apart from being a popular puzzle, Sudoku designs form a part of combinatorial designs.
As such, they encompass deep mathematical questions, including numbers of different arrangements~\cite{Felgenhauer_2005}, the minimal number of clues required for a unique solution~\cite{Mcguire_2014}, and its complexity (proved to be NP-complete~\cite{Yato_2003}).

\section{Quantum Sudoku}\label{sec:quantum_sudoku}
Based on the popular classical game of Sudoku, we recall its quantum implementation, first introduced by Nechita and Pillet~\cite{Nechita_2020}.
\begin{definition}[SudoQ]
    An $N^2\times N^2$ arrangement of $N^4$ quantum states from a Hilbert space $\mathcal{H}_{N^2}$, such that each row, each column, and each $N\times N$ block forms a basis of $\mathcal{H}_{N^2}$, is called quantum Sudoku (SudoQ).
\end{definition}

Any classical Sudoku can be \emph{quantized} in the sense that each element is mapped to a different computational basis $\mathcal{B}$ vector -- $\{1,...,4\}$ into $\mathcal{B} = \{\ket{1},...,\ket{4}\}$.
This transformation yields a SudoQ, as exemplified below by quantizing the Sudoku from arrangement~(\ref{eq:Sudoku})

\begin{equation}\label{eq:SudoQ_cardinality4}
     \begin{tabular}{!{\vrule width 1pt}c|c!{\vrule width 1pt}c|c!{\vrule width 1pt}}
    \noalign{\hrule height 1pt}
    $\ket{1}$ & $\ket{2}$ & $\ket{3}$ & $\ket{4}$ \tabularnewline
    \hline
    $\ket{3}$ & $\ket{4}$ & $\ket{1}$ & $\ket{2}$ \tabularnewline
    \noalign{\hrule height 1pt}
    $\ket{4}$ & $\ket{1}$ & $\ket{2}$ & $\ket{3}$ \tabularnewline
    \hline
    $\ket{2}$ & $\ket{3}$ & $\ket{4}$ & $\ket{1}$ \tabularnewline
    \noalign{\hrule height 1pt}
\end{tabular}.
\end{equation}

The above example of SudoQ is the smallest dimensional, $N=2$. 
However, using the simple quantization rule, it is possible to create such a design for an arbitrary $N$.
Any such design composed of quantum states is also trivially a quantum Latin square, a notion already studied in Chapter~\ref{chapter_6}.
A quantum design admitting orthogonality rules is connected with the design of experiments since it admits a particular construction of a projective measurement, see Section~\ref{sec:basics_QI}.
In order to construct a projective measurement, it suffices to have a basis of the appropriate Hilbert space, which is granted by rows, columns, or blocks.

Furthermore, such a design grants more experimental setups than a simple list of $N^4$ vectors.
This list, if assuming only orthogonality in rows, corresponds to $N^2$ different projective measurements.
However, due to additional constraints, available due to the column and block structure, any quantum Sudoku yields a collection of $3N^2$ orthogonal measurements.
Nonetheless, if all the elements of SudoQ are taken from the computational basis $\mathcal{B}$, then all of these measurements are the same.
As we shall see in the next section, this obstacle shall be removed by a deeper investigation of the number of different elements in the design.

\section{Cardinality}
The relation between classical Sudoku designs and their quantum counterparts is not reflexive, i.e. not all SudoQ designs can be transformed into Sudoku.
We distinguish the subset of quantum Sudoku that have their classical equivalent by a name of \emph{classical} SudoQ.

\begin{definition}[Classical QLS]
    A QLS design is called classical if it composes only of quantum states belonging to the computational basis $\mathcal{B}$.
\end{definition}

As an example of non-classical SudoQ, consider the simple unitary transformation $U$ acting on the design~(\ref{eq:SudoQ_cardinality4}), such that $U\ket{1} = \ket{1}$, $U\ket{2} = \ket{2}$, $U\ket{3} = \frac{1}{\sqrt{2}}(\ket{3}+\ket{4})$, and $U\ket{4} = \frac{1}{\sqrt{2}}(\ket{3}-\ket{4})$.
It yields the following SudoQ with $N=2$

\begin{equation}\label{eq:sudoku4x4_card_6}
         \begin{tabular}{!{\vrule width 1pt}c|c!{\vrule width 1pt}c|c!{\vrule width 1pt}}
    \noalign{\hrule height 1pt}
    $\ket{1}$ & $\ket{2}$ & $\ket{3}+\ket{4}$ & $\ket{3}-\ket{4}$ \tabularnewline
    \hline
    $\ket{3}+\ket{4}$ & $\ket{3}-\ket{4}$ & $\ket{1}$ & $\ket{2}$ \tabularnewline
    \noalign{\hrule height 1pt}
    $\ket{3}-\ket{4}$ & $\ket{1}$ & $\ket{2}$ & $\ket{3}+\ket{4}$ \tabularnewline
    \hline
    $\ket{2}$ & $\ket{3}+\ket{4}$ & $\ket{3}-\ket{4}$ & $\ket{1}$ \tabularnewline
    \noalign{\hrule height 1pt}
\end{tabular},
\end{equation}
where, for brevity, we neglected the normalization coefficients.
Note that all of the orthogonality conditions are preserved; however, the above example cannot be created directly out of a classical one.
Nonetheless, by a simple action of inverse unitary operator $U^\dagger$ it is possible to revert the design to its classical equivalent, as the number of different vectors in the pattern equals $N^2$.
On the other hand, this is not always the case.
If the number of elements of a given SudoQ of size $N^2$ is greater than $N^2$, then no unitary operation can change its elements into the computational basis.
This prompted us to propose a new notion concerning SudoQ designs and, more generally, quantum Latin squares.

\begin{definition}[Cardinality]\label{def:cardinality}
    The cardinality \emph{c} of a quantum Latin square is the number of its distinct elements.
\end{definition}
SudoQ designs form a subset of quantum Latin squares; therefore, the above definition is useful for their classification.
The importance of the above notion stems from the fact that it allows for a classification of quantum Latin squares by their degree of \emph{quantumness}.

\begin{definition}[Apparently quantum QLS]
    A QLS design of size $N^2$ is called apparently quantum if it composes only of $N^2$ quantum states, which do not belong to the computational basis $\mathcal{B}$.
\end{definition}

Ensuing the above remarks we notice that all \emph{apparently} quantum SudoQ, like the one shown in (\ref{eq:sudoku4x4_card_6}), can be unitarily converted into a classical SudoQ.

\begin{definition}[Genuinely quantum QLS]\label{def:genuinely_quantum_QLS}
    A QLS design of size $N^2$ is called genuinely quantum if it composes of more than $N^2$ quantum states.
\end{definition}
As an example of the above notion, consider \emph{genuinely} quantum SudoQ design of cardinality $c=6$
\begin{equation}\label{eq:SudoQ_cardinality6}
     \begin{tabular}{!{\vrule width 1pt}c|c!{\vrule width 1pt}c|c!{\vrule width 1pt}}
    \noalign{\hrule height 1pt}
    $\ket{1}$ & $\ket{2}$ & $\ket{3}$ & $\ket{4}$ \tabularnewline
    \hline
    $\ket{3}$ & $\ket{4}$ & $\ket{1}$ & $\ket{2}$ \tabularnewline
    \noalign{\hrule height 1pt}
    $\ket{2}-\ket{4}$ & $\ket{1}$ & $\ket{2}+\ket{4}$ & $\ket{3}$ \tabularnewline
    \hline
    $\ket{2}+\ket{4}$ & $\ket{3}$ & $\ket{2}-\ket{4}$ & $\ket{1}$ \tabularnewline
    \noalign{\hrule height 1pt}
\end{tabular}.
\end{equation}

A genuinely quantum Sudoku cannot be converted into a classical SudoQ by a means of unitary transformations; therefore, explaining its name and usefulness for quantum information protocols.
As a side note, let us mention that the golden AME(4,6) state, mentioned in Section~\ref{sec:analytical_AME}, forms also a pair of genuinely quantum OQLS.
In dimension 6 there is no pair of classical orthogonal Latin squares.
Therefore, golden AME state cannot be unitarily equivalent to such an arrangement.
Cardinality of the golden AME(4,6) state in the OQLS form (Fig.~\ref{fig:AME_cards}) equals to the maximal value of $c=36$.
In the subsequent sections, we shall focus on the characterization of the cardinality.
This study shall lead us to the demonstration of the attainability of the maximal cardinality equaling the number of all elements of a given design.

\section{Cardinality of small quantum Latin squares}
Let us inspect the smallest dimensional quantum Latin squares.
\begin{remark}
There are no genuinely quantum QLS of dimensions 2 and 3.
\end{remark}
\begin{proof}
    In the case of dimension 2 every vector in the pattern is orthogonal to all but one other vectors; therefore, the cardinality must equal 2.
    Without loss of generality, as the first row in an QLS of dimension 3 we set vectors from the computational basis $\ket{1}$, $\ket{2}$, and $\ket{3}$.
    Then, every element in the second row must also be from the computational basis.
    Otherwise, at least one of these vectors would not satisfy the orthogonality conditions.
    Analogous reasoning for the third row shows that there are no genuinely quantum QLS of dimension 3. 
\end{proof}

The situation becomes more interesting in the case of $4\times 4$ quantum Sudoku designs, as we have seen already with SudoQ of non-minimal cardinality $c=6$, presented by arrangement~(\ref{eq:SudoQ_cardinality6}).
The most prominent example of a genuinely quantum SudoQ in this dimension is the symmetric arrangement possessing the maximal cardinality $c=16$,
\begin{equation}\label{eq:SudoQ_cardinality16}
     \begin{tabular}{!{\vrule width 1pt}c|c!{\vrule width 1pt}c|c!{\vrule width 1pt}}
    \noalign{\hrule height 1pt}
    $\ket{1}$ & $\ket{2}$ & $\ket{3}+\ket{4}$ & $\ket{3}-\ket{4}$ \tabularnewline
    \hline
    $\ket{3}$ & $\ket{4}$ & $\ket{1}-\ket{2}$ & $\ket{1}+\ket{2}$ \tabularnewline
    \noalign{\hrule height 1pt}
    $\ket{2}+\ket{4}$ & $\ket{1}-\ket{3}$ & $\ket{1}+\ket{2}+\ket{3}-\ket{4}$ & $\ket{1}-\ket{2}+\ket{3}+\ket{4}$ \tabularnewline
    \hline
    $\ket{2}-\ket{4}$ & $\ket{1}+\ket{3}$ & $\ket{1}+\ket{2}-\ket{3}+\ket{4}$ & $\ket{1}-\ket{2}-\ket{3}-\ket{4}$ \tabularnewline
    \noalign{\hrule height 1pt}
\end{tabular}.
\end{equation}

The SudoQ design presented above is of the maximal entropy in the sense that the sum of the Shannon entropies of coefficients of its elements, written in the computational basis, is maximal (see Appendix C in the full paper~\cite{Rajchel_SudoQ}).
Jerzy Paczos and Marcin Wierzbi{\'n}ski provided a full parametrization of $4\times 4$ SudoQ designs, together with the characterization of all possible cardinalities.
We summarize these results by the theorem for which proof is available in the complete paper~\cite{Rajchel_SudoQ}.

\begin{theorem}\label{thm:SudoQ_cardinalities4x4}
    A SudoQ design of size $4\times 4$ admits only $c=4,6,8$, and $16$ as possible cardinalities.
\end{theorem}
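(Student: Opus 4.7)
The plan is to prove the theorem in two stages: first establishing that each of $c\in\{4,6,8,16\}$ is realized by some $4\times 4$ SudoQ, and then ruling out all other cardinalities by a complete parametrization argument.

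For the achievability part, examples for $c=4$, $c=6$, and $c=16$ are already provided in the excerpt: the quantization of the classical Sudoku (\ref{eq:Sudoku}) gives $c=4$, arrangement (\ref{eq:SudoQ_cardinality6}) gives $c=6$, and arrangement (\ref{eq:SudoQ_cardinality16}) achieves the maximum $c=16$. For $c=8$, one constructs an explicit design by performing two independent non-trivial $U(2)$ rotations within appropriately chosen $2\times 2$ blocks of a quantized classical SudoQ, each rotation coupled through column constraints so that it introduces exactly one pair of new vectors (a state together with its orthogonal complement in the relevant two-dimensional subspace).

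For impossibility, the plan is to obtain a full parametrization of $4\times 4$ SudoQ designs modulo global unitaries. First I would use a global unitary to normalize the first row to the computational basis $\ket{1},\ket{2},\ket{3},\ket{4}$. The top two block constraints then split the second row: $(2,1),(2,2)$ must form an orthonormal basis of $\mathrm{span}\{\ket{3},\ket{4}\}$ and $(2,3),(2,4)$ must form an orthonormal basis of $\mathrm{span}\{\ket{1},\ket{2}\}$; each such basis is determined by a single $U(2)$ element. Propagating these restrictions through the column conditions and the bottom block conditions reduces rows~3 and~4 to depend on a small collection of $U(2)$ rotations acting on specific two-dimensional subspaces. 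Writing out this parametrization explicitly yields a finite-dimensional family of SudoQ designs, and the cardinality becomes a concrete function of the rotation angles appearing in it.

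The main obstacle will be the careful enumeration of which combinations of rotation angles yield which cardinalities, and in particular the exclusion of intermediate values. The key observation is that each non-trivial $U(2)$ rotation introduces its new vectors in pairs (a vector and its orthogonal complement within a 2D subspace), so increments to the cardinality come only in units of two; this immediately excludes $c=5,7,9,\ldots,15$. For the remaining even values $c\in\{10,12,14\}$, the argument relies on the coupling between rotations imposed by shared column and block constraints: these couplings force that either only a few entries are replaced, keeping $c\le 8$, or the replacement propagates through every block and the cardinality jumps to $c=16$. The delicate bookkeeping of how new vectors are shared between rows, columns, and blocks, and the verification that no unintended coincidence between distinct rotations can reduce the count to a forbidden intermediate value, will be the most technical part of the argument.
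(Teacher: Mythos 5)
The thesis itself does not contain a proof of this theorem: it explicitly states that the full parametrization and the characterization of cardinalities are due to Paczos and Wierzbi\'nski and that the proof is available in the complete joint paper~\cite{Rajchel_SudoQ}. Your overall strategy --- normalize the first row to the computational basis by a global unitary, observe that the block and column constraints force each remaining pair of entries to be an orthonormal basis of a specific two-dimensional subspace parametrized by a $U(2)$ element, propagate this through the grid, and then read off the cardinality as a function of the rotation parameters --- is exactly the ``full parametrization'' route the paper attributes to that reference, and the achievability examples for $c=4,6,16$ you point to (designs~(\ref{eq:SudoQ_cardinality4}), (\ref{eq:SudoQ_cardinality6}), and (\ref{eq:SudoQ_cardinality16})) are indeed the ones given in the text.

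However, as written your argument has a genuine gap at precisely the point where the theorem has content. The exclusion of $c=10,12,14$ is asserted (``the couplings force that either only a few entries are replaced \dots or the replacement propagates through every block'') but not demonstrated; this dichotomy is essentially a restatement of the theorem, and nothing in your sketch shows why a configuration with, say, three of the four blocks fully ``quantized'' and one left classical --- which would naively give an intermediate count --- cannot satisfy all the orthogonality constraints. Likewise, the parity argument for excluding odd values is not airtight: while new vectors within a single two-dimensional subspace do come in orthogonal pairs, vectors introduced by rotations acting on \emph{different} subspaces can coincide (two distinct two-dimensional subspaces of $\mathcal{H}_4$ may share a common ray), and such coincidences can change the total count by one. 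You flag this yourself as needing verification, but that verification, together with the explicit enumeration of which angle configurations are consistent, is the entire technical substance of the proof. To close the gap you would need to actually write down the finite list of $U(2)$ parameters surviving the constraint propagation (in the published version this reduces to a small number of independent $2\times 2$ unitaries), compute the cardinality in each branch of the resulting case analysis, and check the degenerate coincidences explicitly.
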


Furthermore, we showed that there exists a similar result like the one for classical Sudoku design that states that there are uniquely solvable grids using only 4 clues -- non-empty entries in the initial array.

\begin{proposition}
    Any incomplete quantum design of the maximal cardinality $c=16$ consisting of 4 clues, placed as
    \begin{equation}
    \begin{tabular}{!{\vrule width 1pt}c|c!{\vrule width 1pt}c|c!{\vrule width 1pt}}
    \noalign{\hrule height 1pt}
    $e_1$ &   &   &   \tabularnewline
    \hline
      &   & $f_3$ &   \tabularnewline
    \noalign{\hrule height 1pt}
      & $v_2$ &   &   \tabularnewline
    \hline
      &   &   & $u_4$ \tabularnewline
    \noalign{\hrule height 1pt}
    \end{tabular},
\end{equation}
admits a unique solution.        
\end{proposition}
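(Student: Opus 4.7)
The plan is to combine the rigidity of cardinality-$16$ SudoQ designs, which emerges from the classification behind Theorem~\ref{thm:SudoQ_cardinalities4x4}, with a propagation argument that uses each of the four clues to eliminate the residual freedom in one quadrant of the grid. Because the specified pattern places exactly one clue in every row, every column, and every $2\times 2$ block, each clue simultaneously fixes one basis vector in three of the twelve orthonormal bases that the SudoQ defines. My goal is to show that these twelve local constraints, together with the disjointness condition $c=16$ (each of the $16$ states appears exactly once), are enough to determine all remaining entries.

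First I would process the clue $e_1$ at position $(1,1)$. Since row~$1$, column~$1$, and block~$(1,1)$ must each form a basis containing $e_1$, the nine entries touching $e_1$ in these three bases lie in the hyperplane $e_1^{\perp}$. The maximality $c=16$ further forces the three non-$e_1$ entries of block~$(1,1)$ to reappear, as a set, inside the completions of the row-$1$ and column-$1$ bases, which pins down a partial identification of entries across the three bases. An identical analysis applied to $f_3$ at $(2,3)$, to $v_2$ at $(3,2)$, and to $u_4$ at $(4,4)$ then equips every row, every column, and every block with one known vector and a matching orthogonal-complement constraint on its completion. Here it is essential that the clue placement is \emph{transversal}, meeting every row, every column, and every block precisely once; any weaker pattern would leave some basis with no known vector at all.

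Finally I would verify that these constraints together determine each of the sixteen entries uniquely. The main obstacle lies at this step: one must show that no residual continuous deformation or discrete relabelling of the unknown vectors survives all four clue conditions simultaneously. My plan is to invoke the explicit parametrization of maximal-cardinality $4\times 4$ SudoQ designs established in proving Theorem~\ref{thm:SudoQ_cardinalities4x4}, translate the four clue conditions into algebraic equations on its parameters, and check that the system admits only the trivial (intended) solution. I expect this final consistency check to reduce, after using orthogonality within each block, to a small linear-algebraic verification on a handful of scalar parameters, which is the part where the tightness of the bound ``four clues'' manifests itself.
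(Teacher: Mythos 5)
The proposition's entire content is the \emph{uniqueness} of the completion, and that is precisely the step your proposal does not carry out. Everything up to your final paragraph is a restatement of the constraints (each clue is a member of a row basis, a column basis, and a block basis, and the clue pattern is a transversal), and the final paragraph says only that you \emph{would} substitute the four clues into the parametrization of maximal-cardinality $4\times 4$ designs and that you \emph{expect} the resulting system to have a unique solution. Until that computation is actually performed — exhibiting the parametrization, writing down the equations imposed by $e_1$, $f_3$, $v_2$, $u_4$, and showing that the solution set is a single point — there is no proof. For what it is worth, the thesis itself states this proposition without an in-text proof and defers to the companion paper, where the argument does go through the explicit parametrization underlying Theorem~\ref{thm:SudoQ_cardinalities4x4}; so your strategy is the natural one, but a strategy with its decisive step marked ``to be checked'' is not a proof.

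There are also local inaccuracies in the propagation step that would need repair before the plan could be executed. The entries ``touching'' $e_1$ through its row, column and block occupy seven distinct cells, not nine, because positions $(1,2)$ and $(2,1)$ are shared between the block and the row (respectively the column). More seriously, the claim that maximality ``forces the three non-$e_1$ entries of block $(1,1)$ to reappear, as a set, inside the completions of the row-$1$ and column-$1$ bases'' is confused: two of those three entries lie in row $1$ or column $1$ simply by virtue of their position, the third (at $(2,2)$) lies in neither, and cardinality $c=16$ means every state occurs exactly once in the grid, so nothing ``reappears'' anywhere. The orthogonality constraints you list are, by themselves, far from rigid — a single known vector in each basis generically leaves a continuum of completions — which is exactly why the whole burden of the proposition falls on the algebraic verification you postponed.
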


Consequently, we pose a conjecture that in analogy to the classical case, it is not possible to construct a uniquely solvable grid using fewer clues.
\begin{conjecture}
    Uniquely solvable grids of $4 \times 4$ SudoQ admit at least 4 clues.
\end{conjecture}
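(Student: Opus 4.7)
The plan is to proceed by contradiction, combining the equivariance of SudoQ designs under the natural $U(4)$ action with a case analysis on the placement of the three clues. The key observation is that if $U \in U(4)$ is a unitary whose induced action maps each clue to itself (up to a global phase that can be absorbed into the ray representative), then $\{U\ket{\psi_{ij}}\}$ is another valid SudoQ completion of the same incomplete grid. For the grid to be uniquely solvable, every such $U$ must in fact preserve every remaining cell of the completed grid as well --- a strong rigidity condition. The goal will be to produce, in every configuration of at most three clues, a $U$ that fixes the clues but demonstrably moves some uncovered cell.

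First I would parametrize the stabilizer subgroup $G \subset U(4)$ of three generic rays in $\mathbb{C}\mathrm{P}^3$ and record the dimension count: one ray is stabilized by a $U(1) \times U(3)$ subgroup, and imposing two further generic ray conditions cuts the dimension down to at least four, so $G$ contains at least a maximal torus $U(1)^4$. Next I would split the proof into three cases based on the incidence of the three clues with the SudoQ lines (rows, columns, and $2 \times 2$ blocks): \textbf{(A)} all three clues lie in a common line; \textbf{(B)} exactly two clues share a line; \textbf{(C)} no two clues share a line. In case (A), the three clues span a $3$-dimensional subspace whose orthogonal complement furnishes the unique fourth basis vector of the completed line, yet the twelve remaining cells admit a continuous family of completions obtained by varying within $G$ applied to any fixed completion; the orthogonality constraints of the other lines are preserved because $G \subset U(4)$ acts by isometries.

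For cases (B) and (C) the argument is more delicate, and here I would exploit the full parametrization of $4 \times 4$ SudoQ underlying Theorem~\ref{thm:SudoQ_cardinalities4x4}. One should show that, within each cardinality stratum $c \in \{4,6,8,16\}$, the locus of SudoQ compatible with three generic clues has positive dimension unless those clues already determine a full line. The case of classical ($c=4$) completions can be handled separately by invoking the classical result that $4 \times 4$ Sudoku requires at least four clues, reduced to the quantum setting through relabeling symmetries realized by permutation unitaries. For the genuinely quantum strata the stabilizer-action analysis applies directly, and one checks that the torus $G$ acts non-trivially on at least one uncovered cell because the SudoQ conditions on the remaining cells cannot simultaneously project onto the fixed-point subspace of $G$ without forcing additional incidences among the clues.

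The main obstacle I expect is the \emph{cross-stratum} comparison: even if uniqueness within each stratum is ruled out, one must separately exclude the pathological possibility that a unique completion exists in one stratum (say $c=16$) while all other strata are rendered infeasible by the three clues alone. A secondary difficulty is the treatment of degenerate clue placements where two or three clues span a strictly lower-dimensional subspace, since the stabilizer $G$ then jumps in dimension and orbit structures change abruptly; these measure-zero configurations likely require a dedicated combinatorial argument in the spirit of the Paczos--Wierzbi\'nski parametrization rather than the smooth Lie-theoretic one used in the generic case.
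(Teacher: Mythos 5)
The statement you are trying to prove is posed in the thesis as a \emph{conjecture}, not a theorem: the author explicitly writes that it is put forward ``in analogy to the classical case'' and lists the minimal-clue problem for quantum Sudoku among the open problems in the concluding chapter. There is therefore no proof in the paper to compare against; what you have written is an attempt at an open question, and it should be judged on its own merits.

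On those merits, the attempt has a concrete gap at its core. The stabilizer computation is wrong: for three \emph{generic} rays $[v_1],[v_2],[v_3]$ in $\mathbb{C}\mathrm{P}^3$ the conditions $Uv_i=\lambda_i v_i$ together with $\langle v_i,v_j\rangle=\bar\lambda_i\lambda_j\langle v_i,v_j\rangle$ force all $\lambda_i$ to coincide whenever the pairwise inner products are nonzero, so the projective stabilizer is only $\{P_V+e^{i\theta}P_{V^\perp}\}\cong U(1)$, where $V=\mathrm{span}(v_1,v_2,v_3)$ --- a one-parameter group, not something containing a maximal torus $U(1)^4$ (that larger torus appears only when the three clues are mutually orthogonal, the opposite of the generic situation). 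More seriously, even this residual $U(1)$ fixes every ray lying entirely inside $V$ or entirely inside $V^\perp$, so the whole strategy collapses precisely when some completion has all sixteen cells in $V\cup V^\perp$ --- which happens, for instance, for any classical completion with three computational-basis clues. In that regime you must fall back on the classical Shidoku lower bound, but the clues of a quantum grid need not be classical, and nothing in your sketch rules out a cunning choice of three non-classical clues whose unique completion is entirely supported on $V\cup V^\perp$ for the relevant stabilizer. Cases (B) and (C), the cross-stratum comparison you yourself flag, and the degenerate clue placements are all left as assertions rather than arguments. As it stands this is a plausible programme, not a proof, and the conjecture should still be regarded as open.
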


These results and remarks show that the understanding of low-dimensional SudoQ and quantum Latin squares is fairly comprehensive.
In the next section, we shall consider the case of general dimensions.

\section{Construction of SudoQ of the maximal cardinality}\label{sec:sudoQ_maximal_cardinality_general}
The most interesting problem from the point of view of the newly introduced notion of cardinality is whether it is useful in describing SudoQ designs or quantum Latin squares.
Therefore, we focused our research on answering the question of the existence of SudoQ designs of the maximal cardinality for every dimension.
This section shall be devoted to describing the path leading to the general construction.
Let us start by providing a useful notation.
Instead of treating SudoQ as an array of quantum states, one can understand this as a collection of vectors satisfying certain orthogonality conditions.

\begin{remark}\label{remark:SudoQ_notation}
    SudoQ of size $N^2$ is a set of $N^4$ vectors $\{\ket{v_{ijkl}\}}$, where $i,j,k,l \in \{1,...,N\}$, such that for a fixed pair of indices i,j or i,k or j,l the resulting $N^2$ vectors form a basis of the Hilbert space $\mathcal{H}_{N^2}$.
\end{remark}

Using the above notation, it is natural to construct a classical SudoQ via cyclic permutations of rows and columns.
This can be done in such a way that the element with indices $(i,j,k,l)$ will denote the entry $\ket{v_{ijkl}} = \ket{i\oplus j}\otimes \ket{k\oplus j}$, where $\oplus$ denotes addition modulo $N$.
The tensor product structure is trivial -- a computational basis of $\mathcal{H}_{N^2}$ is obtained by combining computational bases from two Hilbert spaces $\mathcal{H}_N$.
The construction presented above allows one to obtain a Sudoku design or, equivalently, a classical SudoQ design.
However, using the notation of sets of vectors it is possible to construct SudoQ designs of the maximal cardinality, employing families of unitary matrices of size $N$.

\begin{proposition}\label{prop:maximal_cardinality1}
    Consider $\{U_i\}^N_{i=1}$ and $\{V_i\}^N_{i=1}$, which are two families of unitary matrices of size $N$.
    Let the $k$-th rows of $U_i$ and $V_j$ be denoted by $\ket{u^{(i)}_{k}}$ and $\ket{w^{(j)}_k}$, respectively.
    Then, the following set of vectors forms a proper SudoQ design of size $N^2$,
    \begin{equation}\label{eq:proof_families_sudoku}
        \ket{v_{ijkl}} \coloneqq \ket{u^{(i)}_{j\oplus k}} \otimes \ket{w^{(j)}_{i\oplus l}},
    \end{equation}
    where $\oplus$ denotes addition modulo $N$.
\end{proposition}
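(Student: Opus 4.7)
The plan is to verify in turn the three orthogonality conditions of a SudoQ design listed in Remark~\ref{remark:SudoQ_notation}: fixing the pair $(i,j)$ (the block condition), fixing $(i,k)$ (the row condition), and fixing $(j,l)$ (the column condition). In each case I would show that the $N^2$ vectors $\ket{v_{ijkl}}$ obtained by letting the remaining two indices run over $\{1,\dots,N\}$ are mutually orthonormal, which forces them to form a basis of $\mathcal{H}_{N^2}$ since they have the right cardinality.

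The block condition is essentially trivial and is the right place to begin: with $i,j$ fixed, varying $k$ makes $j\oplus k$ sweep out all of $\{1,\dots,N\}$, and varying $l$ makes $i\oplus l$ do the same. Hence the resulting set is exactly $\{\ket{u^{(i)}_a}\otimes\ket{w^{(j)}_b}:a,b\in\{1,\dots,N\}\}$. Since $\{\ket{u^{(i)}_a}\}_a$ and $\{\ket{w^{(j)}_b}\}_b$ are orthonormal bases of $\mathcal{H}_N$ (they are the rows of the unitaries $U_i$ and $V_j$), their tensor products form an orthonormal basis of $\mathcal{H}_{N^2}$.

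The row and column conditions are the main obstacle, because after fixing $(i,k)$ or $(j,l)$ the two tensor factors are no longer varied independently: one of the two unitaries one is sampling rows of changes as the summation index runs. For the row condition, I would fix $(i,k)$ and observe the key structural fact that the first factor $\ket{u^{(i)}_{j\oplus k}}$ depends only on $j$, while the second factor $\ket{w^{(j)}_{i\oplus l}}$ depends on both $j$ and $l$. Thus, setting $\ket{\alpha_j}\coloneqq\ket{u^{(i)}_{j\oplus k}}$, the set decomposes as $\bigsqcup_{j}\{\ket{\alpha_j}\otimes\ket{w^{(j)}_m}:m=1,\dots,N\}$. Computing the inner product of two elements
\begin{equation}
\bigl\langle\alpha_j\otimes w^{(j)}_m\,\big|\,\alpha_{j'}\otimes w^{(j')}_{m'}\bigr\rangle=\langle\alpha_j|\alpha_{j'}\rangle\langle w^{(j)}_m|w^{(j')}_{m'}\rangle,
\end{equation}
orthogonality for $j\neq j'$ follows because $\{\ket{\alpha_j}\}_j=\{\ket{u^{(i)}_{j\oplus k}}\}_j$ is, with $(i,k)$ fixed, just a reindexing of the orthonormal row basis of $U_i$; orthogonality for $j=j'$ with $m\neq m'$ follows from orthonormality of the rows of $V_j$.

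The column condition is handled by an entirely symmetric argument after swapping the roles of the two tensor factors: with $(j,l)$ fixed, the second factor $\ket{w^{(j)}_{i\oplus l}}$ depends only on $i$, while the first factor varies with both $i$ and $k$. Grouping by $i$ and using the rows of $U_i$ for the inner sum, and the reindexed rows of $V_j$ for the outer sum, yields again $N^2$ mutually orthonormal vectors. Once all three conditions are verified, the tuple $\{\ket{v_{ijkl}}\}$ meets the definition of a SudoQ design per Remark~\ref{remark:SudoQ_notation}, completing the proof. I do not expect any issues beyond bookkeeping: the modular addition only serves to reshuffle indices within an already orthonormal basis, so no subtlety about the choice of $\oplus$ arises, and the argument works uniformly for every $N$.
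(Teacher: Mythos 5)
Your proof is correct and follows essentially the same route as the paper: verify the three basis conditions of Remark~\ref{remark:SudoQ_notation}, using that for fixed $(i,k)$ the first factor sweeps the rows of $U_i$ as $j$ varies while the second factor sweeps the rows of $V_j$ for each fixed $j$ (and symmetrically for $(j,l)$). Your explicit inner-product computation for the mixed cases $j\neq j'$ versus $j=j'$ is in fact a welcome elaboration of the step the paper states only as ``combining'' the two basis facts.
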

\begin{proof}
The proof is based on the demonstration that the appropriate sets form a basis, as per Remark~\ref{remark:SudoQ_notation}.
First of all, for any fixed parameters $i$ and $j$, the vectors $\{\ket{u^{(i)}_{j\oplus k}}\}_{k=1}^N$ form consecutive rows of a unitary matrix $U_i$; therefore, form an orthonormal basis of $\mathcal{H}_N$.
Applying similar reasoning to $\{\ket{w^{(j)}_{i\oplus l}}\}_{l=1}^N$, we obtain that $\{\ket{v_{ijkl}}\}$ form a basis of $\mathcal{H}_{N^2}$ for any fixed $i$ and $j$.

Further, for any fixed $i$ and $k$ the vectors $\{\ket{u^{(i)}_{j\oplus k}}\}_{j=1}^N$ form a basis of $\mathcal{H}_N$.
Combining it with the previously considered fact that for fixed $i$ and $j$ also $\{\ket{w^{(j)}_{i\oplus l}}\}_{l=1}^N$ form a basis, we conclude that $\{\ket{v_{ijkl}}\}$ form a basis of $\mathcal{H}_{N^2}$ for any fixed $i$ and $k$.
Analogous reasoning can be conducted to the final possible pair of fixed indices $j$ and $l$, and, as a result, the set of vectors is proved to be a valid SudoQ design.
\end{proof}

Finally, we shall present the main result of this section in the form of two statements that use the previous notation.
Whenever we refer to vectors from the family $U_i$ or $V_i$, we shall mean the $N$ bases formed by its rows, so in total $N^2$ vectors.
Two vectors are said to be equal if they differ only by a complex phase.

\begin{proposition}\label{prop:maximal_cardinality2}
    If $c_U$ and $c_V$ stand for the number of distinct vectors in the families of bases $\{U_i\}$ and $\{V_i\}$, then the cardinality of the SudoQ design constructed in Eq.~(\ref{eq:proof_families_sudoku}) reads $c=c_U c_V$.
\end{proposition}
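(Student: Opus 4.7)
The plan is to observe that every entry of the SudoQ design is a tensor product, and that the cardinality of a set of tensor products (counted up to irrelevant global phases) factorizes across the two tensor factors whenever every admissible combination is actually realized. So I would split the verification into three steps: identify the set of left factors, identify the set of right factors, and check that the map from pairs of distinct factors to entries of the design is a bijection.

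First, I would show that the collection of left factors $\{\ket{u^{(i)}_{j\oplus k}}\}_{i,j,k}$ coincides with the union $\bigcup_i \{\text{rows of }U_i\}$. Fixing $i$ and letting $j\oplus k$ range over $\{1,\dots,N\}$ as $j$ and $k$ vary, we recover all $N$ rows of $U_i$; letting $i$ vary over $\{1,\dots,N\}$ then recovers every vector of the $U$ family. By definition of $c_U$, this set contains exactly $c_U$ distinct vectors. The identical argument (with $i\oplus l$ in place of $j\oplus k$) gives $c_V$ distinct right factors.

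Next, I would argue realizability: given any distinct left factor $\ket{u}$, appearing as $\ket{u^{(i_0)}_{r_0}}$ for some choice $(i_0,r_0)$, and any distinct right factor $\ket{w} = \ket{w^{(j_0)}_{s_0}}$, the four-index entry $(i,j,k,l) = (i_0, j_0, r_0\ominus j_0, s_0\ominus i_0)$ satisfies $j\oplus k = r_0$ and $i\oplus l = s_0$, so $\ket{v_{ijkl}} = \ket{u}\otimes\ket{w}$. Crucially, the indices $i$ and $j$ of the SudoQ decouple into the two tensor factors, so the choices for the left and right factors are independent, and every pair in the Cartesian product of distinct factors arises as some entry of the SudoQ.

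Finally, I would invoke the elementary fact that two tensor products $\ket{a}\otimes\ket{b}$ and $\ket{a'}\otimes\ket{b'}$ are equal up to a global phase if and only if $\ket{a}=\ket{a'}$ and $\ket{b}=\ket{b'}$ (each up to phase). Combined with the two preceding steps, this yields a bijection between the set of distinct entries of the SudoQ design and the Cartesian product of the $c_U$ distinct vectors of the $U$ family with the $c_V$ distinct vectors of the $V$ family, giving $c = c_U c_V$. The only delicate point is the phase-counting convention, since a tensor decomposition is unique only up to a pair of reciprocal phases; this is an elementary observation but it is the step one could easily overlook, so I would state it explicitly as the justification for multiplicativity.
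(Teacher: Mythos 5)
Your proof is correct and follows essentially the same route as the paper's: the paper also rearranges indices to identify the set of entries with the full Cartesian product $\{\ket{u^{(i)}_{k}}\otimes\ket{w^{(j)}_{l}}\}_{ijkl}$ and then invokes the multiplicativity of distinctness under tensor products. You simply spell out the two steps the paper leaves implicit (the realizability of every pair via the substitution $k=r_0\ominus j_0$, $l=s_0\ominus i_0$, and the uniqueness of the tensor decomposition up to reciprocal phases), which is a welcome elaboration rather than a different argument.
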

\begin{proof}
First, let us note that by a rearrangement of indices it is easy to prove the equality of the following two sets
\begin{equation}
    \{\ket{v_{ijkl}}\}_{ijkl} = \{\ket{u^{(i)}_{k}}\otimes \ket{w^{(j)}_{l}}\}_{ijkl}.
\end{equation}

Therefore, using the properties of the tensor product, we conclude that if the numbers of distinct vectors in $\{U_i\}$ and $\{V_{i}\}$ families are $c_U$ and $c_V$ then the cardinality of the resulting SudoQ design equals to $c_U c_V$.
\end{proof}

In each dimension it is straightforward to create a valid SudoQ design of the maximal cardinality, as it is given by any two families of random unitary matrices with probability 1.

\begin{theorem}\label{thm:sudoQ_maximal_cardinality}
    SudoQ designs of the maximal cardinality exist for all dimensions $N^2$.    
\end{theorem}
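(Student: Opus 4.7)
The plan is to derive the theorem as a direct corollary of Propositions~\ref{prop:maximal_cardinality1} and~\ref{prop:maximal_cardinality2}, by exhibiting two families of unitary matrices of size $N$ whose rows are all pairwise distinct (as rays in $\mathcal{H}_N$). First I would observe that the maximal possible cardinality of a SudoQ of size $N^2$ is $N^4$, since the design contains exactly $N^4$ entries, and that by Proposition~\ref{prop:maximal_cardinality2} the construction in Eq.~(\ref{eq:proof_families_sudoku}) attains the value $c = c_U c_V$. Hence it suffices to produce two families $\{U_i\}_{i=1}^N$ and $\{V_i\}_{i=1}^N$ with $c_U = c_V = N^2$, i.e.\ families in which every one of the $N^2$ rows (across all $N$ matrices of a family) represents a distinct vector up to a global phase.

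The key step is to show that such families exist in every dimension $N$. I would argue this by a genericity (measure-theoretic) argument: draw the $N$ unitaries of a family independently with respect to the Haar measure on $U(N)$. For any fixed pair of rows, one coming from $U_i$ and the other from $U_{i'}$ with $(i,k)\neq(i',k')$, the condition that they coincide up to a phase cuts out a proper real-algebraic subvariety of $U(N)^N$, hence a set of Haar-measure zero. Since there are only finitely many such pairs (namely $\binom{N^2}{2}$), the union of these bad sets still has measure zero, so with probability one all $N^2$ rows of the family are pairwise non-parallel, giving $c_U = N^2$. Applying the same to the second family yields $c_V = N^2$, and combining through Proposition~\ref{prop:maximal_cardinality2} gives a SudoQ of cardinality $N^4$.

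The main obstacle, if one insists on an explicit rather than probabilistic construction, is to write down two concrete families whose $N^2$ rows are manifestly non-collinear. A clean explicit choice is to take $U_i \coloneqq F_N D_i$ for $i=1,\ldots,N$, where $F_N$ is the Fourier matrix of order $N$ (see Section~\ref{sec:sets_of_matrices}) and $D_i$ are carefully chosen diagonal unitaries (for instance $D_i = \mathrm{diag}(1, \omega^{\alpha_i}, \omega^{2\alpha_i}, \ldots)$ with generic exponents $\alpha_i$), and to verify directly from the unimodular structure of the entries that no two resulting rows are proportional. The verification reduces to showing that certain Vandermonde-type determinants do not vanish, which is a routine but mildly technical calculation. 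An analogous construction, independent of the first, provides the second family.

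In either approach, once both families with $N^2$ distinct rows are in hand, Proposition~\ref{prop:maximal_cardinality1} guarantees that Eq.~(\ref{eq:proof_families_sudoku}) defines a bona fide SudoQ design, and Proposition~\ref{prop:maximal_cardinality2} immediately yields cardinality $c = N^2 \cdot N^2 = N^4$, which is the maximum. This completes the proof of Theorem~\ref{thm:sudoQ_maximal_cardinality} in every dimension $N^2$.
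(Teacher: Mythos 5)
Your proposal is correct and follows essentially the same route as the paper: both reduce the theorem to Propositions~\ref{prop:maximal_cardinality1} and~\ref{prop:maximal_cardinality2} and then argue that two families of Haar-random unitaries have all $N^2$ rows pairwise distinct with probability one. Your version is in fact slightly more careful than the paper's one-line argument (you account for equality only up to a global phase and spell out the measure-zero union over finitely many row pairs), and the optional explicit Fourier-matrix construction is a nice bonus but not needed.
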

\begin{proof}
    The probability that rows of two random unitary matrices from the circular unitary ensemble are identical is equal to zero.
    Therefore, to prove the theorem, it suffices to take two families of random matrices, which do not share any rows with probability 1.
\end{proof}

As a result, we conclude that the cardinality of a SudoQ design is a non-trivial notion.
For every $N$ there exist designs of minimal cardinalities (classical SudoQ, $c=N^2$), as well as the maximal cardinalities ($c=N^4$).
The latter part of the chapter shall be devoted to the study of a different perspective on constructing SudoQ designs and its relation to mutually unbiased bases.

\section{SudoQ and mutually unbiased bases}
Another approach to the creation of the solutions admitting the maximal cardinality was initiated by Adam Burchardt.
He observed the similarities of the construction of SudoQ to the construction of another important quantum mechanical notion of mutually unbiased bases (MUBs).

\begin{definition}[MUBs]
    The bases $\mathcal{B}_1$,...,$\mathcal{B}_d$ of the Hilbert space $\mathcal{H}_N$ are called mutually unbiased if any two vectors $\ket{\phi}$ and $\ket{\psi}$ belonging to different bases $\ket{\phi} \in \mathcal{B}_i$ and $\ket{\psi} \in \mathcal{B}_j$ have the same product, $\braket{\phi|\psi} = \frac{1}{N}$.
\end{definition}

The name of mutually unbiased stems from the fact that if one possess multiple copies of a given quantum state, then performing the measurement in the particular basis does not give \emph{any} information about the projective measurements in the other bases.
The number of MUBs in a given dimension $N$ cannot exceed the value of $N+1$.
The existence and applications of MUBs have been recently widely studied; $N+1$ MUBs can be constructed in all prime dimensions $N$ as well as in all dimensions that are power of primes, $N^q$~\cite{Bandyopadhyay_2001}, e.g.\ using the Heisenberg-Weyl method~\cite{Brierley_2010,Hiesmayr_2021}.
Furthermore, applications of MUBs span various subfields of quantum information, e.g.\ vectors from the full set of MUBs form a projective 2-design~\cite{Welch_1974,Klappenecker_2005}, which is yet another connection to the field of quantum designs initiated by Zauner~\cite{Zauner_German}.
What is more, they can be used in quantum state tomography~\cite{Wooters_1987}, as well as in various cryptographic protocols~\cite{Cerf_2002}.

The relation of MUBs to the problem of constructing a valid SudoQ design can be seen by analyzing the Heisenberg-Weyl MUBs.
In order to create them, it is beneficial to introduce the generalized Pauli matrices of size $N$, acting as operators on the Hilbert space $\mathcal{H}_N$
\begin{equation}
    X = \sum^N_{j=1} \ket{j}\bra{j\oplus 1}, \quad Z = \sum^N_{j=1} \omega^j \ket{j}\bra{j},
\end{equation}
where $\omega$ stands for the $N$-th root of unity, $\omega = e^{2\pi i/N}$.
As before, addition inside the bra-ket notation is understood modulo $N$.
Then, $N(N+1)$ eigenvectors of the following $(N+1)$ operators
\begin{equation}
    Z, X, XZ, XZ^2, ..., XZ^{N-1}
\end{equation}
form the full set of $(N+1)$ mutually unbiased bases in the prime dimension $N$.

The eigenvectors of Weyl-Heisenberg operators can also be utilized to construct a SudoQ design of size $N^2$ of the maximal cardinality and of a particular symmetry, provided that the number $N$ is prime (for more details, see the joint paper~\cite{Rajchel_SudoQ}).
As an example, in Fig.~\ref{fig:SudoQ_9x9} we present a genuinely quantum $3^2\times 3^2 = 9\times 9$ SudoQ design of the maximal cardinality $c=81$, using eigenvectors of $Z$ (denoted by $\ket{i_1}$), of $X$ ($\ket{i_2}$), and of $XZ$ ($\ket{i_3}$)

\begin{equation}
\begin{split}
&\text{$Z$ eigenvectors:}\quad \ket{1_1}=\ket{1}, \quad
\ket{2_1}=\ket{2}, \quad
\ket{3_1}=\ket{3} ,\\
&\text{$X$:}\quad {\color{blue}\ket{1_2}}=\frac{1}{\sqrt{3}} (\ket{1}+\ket{2} +\ket{3}), \quad
{\color{blue}\ket{2_2}}=\frac{1}{\sqrt{3}} (\ket{1}+ \omega \ket{2} + \omega^2\ket{3}), \quad
{\color{blue}\ket{3_2}}=\frac{1}{\sqrt{3}} (\ket{1}+ \omega^2 \ket{2} + \omega\ket{3}) ,\\
&\text{$XZ$:}\quad {\color{red}\ket{1_3}}=\frac{1}{\sqrt{3}} (\ket{1}+\omega\ket{2} +\omega\ket{3}), \quad
{\color{red}\ket{2_3}}=\frac{1}{\sqrt{3}} (\ket{1}+ \omega^2 \ket{2} + \ket{3}), \quad
{\color{red}\ket{3_3}}=\frac{1}{\sqrt{3}} (\ket{1}+ \ket{2} + \omega^2 \ket{3}),
\end{split}
\end{equation}
where $\omega$ denotes the root of unity, $\omega= e^{2\pi i/3}$.
The above 9 vectors form eigenvectors of the Weyl-Heisenberg operators. Furthermore, they allow one to build identity matrix and two complex Hadamard matrices
\begin{equation}
\begin{bmatrix}
    1 & 0 & 0 \\
    0 & 1& 0 \\
    0 & 0 & 1
    \end{bmatrix},
    \quad\quad
\frac{1}{\sqrt{3}}\begin{bmatrix}
    1 & 1 & 1 \\
    1 & \omega & \omega^2 \\
    1 & \omega^2 & \omega
    \end{bmatrix},
\quad \quad
\frac{1}{\sqrt{3}}\begin{bmatrix}
    1 & 1 & 1 \\
    \omega & \omega^2 & 1 \\
    \omega & 1 & \omega^2
    \end{bmatrix}.
\end{equation} 

\begin{figure}[H]
    \includegraphics[scale=1.12]{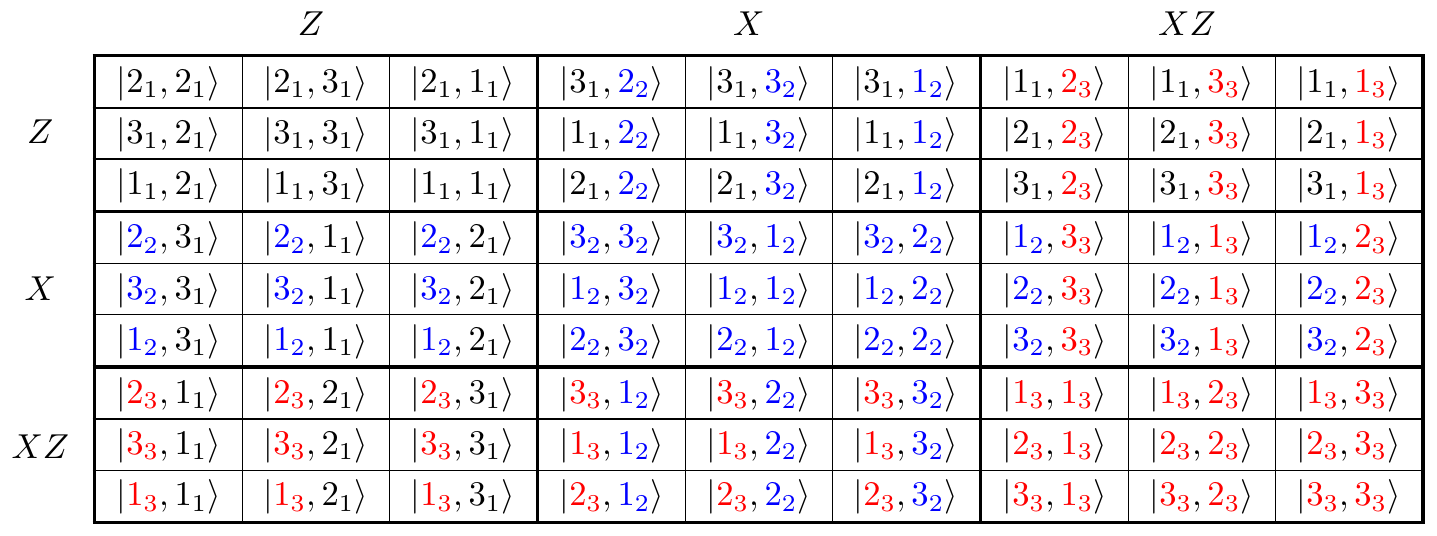}
    \caption{A $9\times 9$ genuinely quantum SudoQ of the maximal cardinality $c=81$, constructed using Weyl-Heisenberg operators, where the vectors forming the tensor products form three triples of mutually unbiased bases of size 9.
    This more scrupulous version of the photo from the beginning of the chapter (Fig.~\ref{fig:sudoQ_pen}) describes $3N^2 = 27$ orthogonal measurements.
    }
    \label{fig:SudoQ_9x9}
\end{figure}

Finally, let us note that the example of a SudoQ design of size 4 of the maximal cardinality, presented in design~(\ref{eq:SudoQ_cardinality16}), can also be constructed by means of Pauli operators, giving rise to two pairs of mutually unbiased bases of size 4.

\section{Conclusions}
The main achievement of our work was the introduction and characterization of genuinely quantum SudoQ, as well as its crucial characteristics, namely cardinality.
Furthermore, the connection to another type of quantum designs, i.e. mutually unbiased bases was discovered, together with the new application for the construction of MUBs.
Similarly as in the case of the classical Sudoku, investigation of quantum Sudoku designs posed many questions that have not yet been answered.
We discuss these questions in detail in Chapter~\ref{Summary}.
The author hopes that the SudoQ designs will be developed from the theoretical perspective in the future, while experimental features of quantum designs will find some applications in the laboratory.


\clearpage
\chapter{Concluding remarks}
\label{Summary}
\vspace{-1cm}
\rule[0.5ex]{1.0\columnwidth}{1pt} \\[0.2\baselineskip]

\section{Summary of the thesis}
The main goals of the research described in this thesis can be summarized in five points.
\begin{enumerate}
\renewcommand{\theenumi}{\Alph{enumi}} 
    \item Application of quantum combinatorics for finding designs in the form of pure states -- under this general name we encompass the whole area of quantum designs studied in this paper, e.g.\ the golden AME(4,6) state from Section~\ref{sec:analytical_AME} and the quantum Sudoku of the non-classical cardinality, given by Eq.~(\ref{eq:SudoQ_cardinality16}).
    This study established several novel ideas that might be useful for the research in other quantum setups (see Sections~\ref{sec:A_family}\,-\,\ref{sec:search_block_structure}).
    \item Introduction of the notion of \emph{genuinely quantum designs} -- the most important design is the absolutely maximally entangled state of four quhexes.
    Due to the non-existence of two orthogonal Latin squares of size 6 (see Theorem~\ref{theorem:OLS}) in this setup there are no classical counterparts, making it impossible to succeed in the search for AME(4,6) state without the new notion.    
    Furthermore, genuinely quantum designs can be helpful in devising quantum measurements of new properties. 
    This matter is discussed more thoroughly in Section~\ref{sec:quantum_sudoku}.
    \item Characterization of tripartite orthogonal gates by evaluating analytically their average entangling power, see Section~\ref{sec:average_e_p_tripartite_orthogonal}.
    \item A general numerical solution to the problem of unistochasticity in dimension 4 (Section~\ref{sec:Uffe_algorithm}).
    In addition, the proof that the analytical bracelet conditions for the circulant bistochastic matrices of the same size are sufficient for their unistochasticity, as presented in Section~\ref{sec:circulant_matrices}.
    \item Application of the newly found family of robust Hadamard matrices (Section~\ref{sec:robust_Hadamard_matrices}) that are relevant for the problem of unistochasticity in general dimensions.
    This also solves certain tasks in quantum information, e.g.\ proves the existence of equi-entangled bases discussed in Section~\ref{sec:equi-entangled}.
\end{enumerate}

Furthermore, the author notes that even with the broad areas of work covered by this thesis, it does not answer nor explain all open questions in the field of quantum mappings and designs, as shall be discussed in Section~\ref{sec:open_problems}.


\section{Open problems}\label{sec:open_problems}
The author believes that the following open questions remain the most interesting from the point of view of prospect research.
\begin{enumerate}
    \item Study of the unistochasticity using similar tools as introduced in Chapter~\ref{chapter_3}.
    As an example, robust Hadamard matrices might be applied to solve other problems in quantum information that rely on unistochasticity.
    In fact, a reverse problem might be interesting -- can we deduce the existence of robust Hadamard matrices given that a ray of the Birkhoff polytope is unistochastic?
    A positive answer to this question might mean that using the Birkhoff polytope, one can try to demonstrate the validity of the Hadamard conjecture, concerning the existence of Hadamard matrices in any dimension $N$ divisible by 4.
    \item Extension of the algorithm provided for the unistochasticity problem (see Section~\ref{sec:Uffe_algorithm}) to dimensions higher than 4.
    It is possible that a similar algorithm, employing the internal block structure exists -- though its principle might be based on less general assumptions than the particular instance of 4 blocks of size $2\times 2$.
    \item A more general version of Theorem~\ref{thm:circulant_size4}, concerning circulant matrices $C$ of arbitrary size.
    Employing a similar idea, i.e.\ the similarity between different elements of the product $CC^\dagger$, one may try to prove that also in the case of $5\times 5$ matrices the bracelet conditions are sufficient for unistochasticity.
    However, in this case, the task is much harder due to the existence of three distinct off-diagonal terms instead of two as in the $4\times 4$ case.
    \item Application of the setup for measuring entangling power (Chapter~\ref{chapter_4}) to other measures of multipartite entanglement.
    Furthermore, it is possible that using a similar rationale one can find also higher moments of statistical distributions of the entangling power $e_p$.
    \item From the experimental physics perspective, one is tempted to try to devise an experimental setup that is capable of measuring the average entangling power in a similar way as done by the Google team~\cite{Arute_2019}.
    It is possible that by going further in this direction it will be possible to verify quantum advantage in the domain of entanglement.
    \item The search for AME(4,6) state enabled us to find many intermediate solutions that might find their applications to study other quantum setups. 
    An obvious question arises of whether similar reasoning might be used to prove/disprove the existence of absolutely maximally entangled states for other numbers of parties.
    \item What is more, on the way to AME(4,6) state we have found a number of interesting families of unitary matrices and the corresponding states that possess a high degree of symmetry ($A$, $G$, and $W$ families).
    An open question remains, whether they can be of direct use in a laboratory due to the small number of non-zero elements -- the best classical setup with relatively low entangling power has 36 of them.
    These new families exhibit 40, 42, and 46 non-zero elements respectively, while the golden AME state has many more -- 112 non-zero terms.
    Depending on the application, it is possible that researchers may need to optimize the quality of an approximation versus the complexity degree of the resource under investigation, rendering novel families useful.
    A similar problem concerns the extent to which the $e_p/g_t$ plane is covered by unitary matrices -- in the case of smaller-sized matrices there are gaps near the AME state, compare with \cite{Jonnadula_2020}.
    \item The block-like structure and its application to study other multiunitary matrices.
    This structure, introduced in Section~\ref{sec:block-like_AME}, might prove relevant to other setups in which the existence of a highly symmetric multiunitary matrix is not decided.
    Since multiunitary matrices are important not only to the study of AME states but they lead to perfect tensors, this path could be tempting to future researchers. 
    \item The existence of the real AME(4,6) state.
    The golden AME(4,6) state found in this thesis is determined by a unitary matrix with complex phases.
    However, the intermediate states linked to the orthogonal matrices $A$, $G$, and $W$ were described by real coefficients.
    It is interesting to verify whether such a state exists also with only real coefficients since this might lead to easier experimental manipulation and would give more insight into the nature of genuinely quantum designs. 
    \item The application of the golden AME state.
    The new AME state provides simultaneously positive answers to several interesting questions from the perspective of quantum information, e.g.\ in encoding the system described by 6 internal classical states.
    Nonetheless, due to the novel nature of the genuinely quantum AME(4,6) state, we believe that it might be useful in other experimental tasks.
    \item The existence of nonequivalent AME(4,6) states.
    The numerical algorithm presented in Section~\ref{sec:Suhail_alg} allowed us to find various examples of AME(4,6) states.
    However, it is not clear whether they are equivalent, i.e.\ whether there exist LU operations transforming them into each other. 
    \item The applications of genuinely quantum designs in other contexts of quantum information.
    In this thesis, we studied their role mostly restricting to the case of AME states and the envisagement of experiments involving few distinct orthogonal states.
    Nonetheless, due to their versatile structure, it is possible that, similar to the case of classical orthogonal arrays, they shall play role in various quantum mechanical setups.
    \item The minimal number of clues in a quantum Sudoku.
    In 2014, it was proved by McGuire et al.~\cite{Mcguire_2014} that the minimal number of clues for the Sudoku of dimension 9 is 17.
    Nonetheless, a similar problem in the quantum domain, posed by Nechita and Pillet~\cite{Nechita_2020}, is still unanswered.
    \item The full characterization of cardinality for quantum Latin squares.
    In this thesis, we presented the result showing that in any dimension $N^2$ a SudoQ of the maximal cardinality exists (Theorem~\ref{thm:sudoQ_maximal_cardinality}).
    Nonetheless, apart from the $4\times 4$ case (Theorem~\ref{thm:SudoQ_cardinalities4x4}), a full description of admissible cardinalities in the general case of quantum Latin square of an arbitrary size $N$ is missing. 
\end{enumerate}

To summarize, the contribution presented in the thesis aimed not to find new physics but rather to widen our understanding of the already existing theories, especially quantum information with distinguishable particles.
Additionally, the research explored the branch of quantum combinatorics which is currently under development, while having a lot of theoretical as well as experimental prospects.
We hope that genuinely quantum designs introduced in this thesis shall prove useful in constructing new quantum protocols, fully utilizing \emph{quantumness}.


\clearpage

\printbibliography


\begin{appendices}

\chapter{Vector of derivatives of the average singular entropy}\label{app:hessian_s_e}
In Section~\ref{sec:sum_of_entropies_AME} we argued that matrices under consideration ($W$, $X_i$, $X_d$, and $A$) are local maxima of the average singular entropy.
Mathematical backgrounds for this conjecture are given by the present appendix, which includes evaluation of the derivative vector for an arbitrary matrix $X$ of size 36.
This analytical result reinforced by numerical calculations in Mathematica programming language, lead us to the conclusion that matrices of interest have vanishing derivatives. 

Recall the definition of the average singular entropy $s_e$ of a matrix $X$ 
\begin{equation}
    s_e(X) = \frac{1}{3} \bigg(E_S(X) + E_S(X^R) + E_S(X^\Gamma)\bigg),
\end{equation}
where $E_S(X)$ stands for the singular entropy of $X$, defined in Eq.~(\ref{eq:definition_singular_entropy}).

In order to evaluate the vector of derivatives $\nabla s_e(X)$ it will be beneficial to focus on the components of the function, namely singular entropies.
Thus, we shall start by investigating the vector of derivatives for the singular entropy.
Using notation similar to the one in Section~\ref{sec:local_maxima}, we need to evaluate the term
\begin{equation*}
    \frac{\mathrm{Tr}\big(X_\varepsilon X^\dagger_\varepsilon  X_\varepsilon X^\dagger_\varepsilon \big)}{\mathrm{Tr}^2\big(X_\varepsilon X^\dagger_\varepsilon \big)},
\end{equation*}
as well as the corresponding terms for reshuffling and partial transposition.
Note that the non-trivial denominator makes the calculations somewhat more complicated.
The singular entropy of the above term, up to the linear order, reads
\begin{equation}\label{eq:delta_singular_entropy}
    \begin{split}
    E_S(X_\varepsilon) 
    = \frac{N}{N-1} \bigg(1-\frac{\text{Tr}\big(XX^\dagger X X^\dagger+ \delta X X^\dagger X X^\dagger + X\delta X^\dagger X X^\dagger+ XX^\dagger\delta X X^\dagger+ XX^\dagger X \delta X^\dagger\big)}{\text{Tr}^2\big(XX^\dagger+\delta X X^\dagger+ X \delta X^\dagger \big)} \bigg).
    \end{split}
\end{equation}

Then, we simplify the denominator by the expansion correct to the linear order $\frac{1}{1+\epsilon} \approx 1-\epsilon$, leading to the formula valid while neglecting higher than linear terms in $\delta X$
\begin{equation}
    \frac{1}{\text{Tr}^2\big(XX^\dagger+\delta X X^\dagger+ X \delta X^\dagger \big)} 
    \approx \frac{1}{\text{Tr}^2\big(XX^\dagger\big)}\bigg(1-\frac{2\;\text{Tr}\big(\delta X X^\dagger+ X \delta X^\dagger \big)}{\text{Tr}\big(XX^\dagger\big)}\bigg).
\end{equation}

Combining this result with Eq.~(\ref{eq:delta_singular_entropy}), we arrive at
\begin{equation}
    \begin{split}
        E_S(X_\varepsilon) = E_S(X) + \frac{4N}{N-1}\; \mathrm{Re}\;\mathrm{Tr}\bigg( \frac{\delta X X^\dagger \big(\xi \mathbb{I}-XX^\dagger\big)}{\mathrm{Tr}^2\big(XX^\dagger \big)}\bigg),
    \end{split}
\end{equation}
where $\xi$ stands for $\frac{\text{Tr}\big(XX^\dagger X X^\dagger\big)}{\text{Tr}\big(XX^\dagger\big)}$

Utilizing any basis of the underlying set of complex matrices of size $N^2$, written as $\varepsilon_i M_i$ where $i \in \{1,...,2N^4\}$, the expression for the partial derivative of the singular entropy reads
\begin{equation}
    \nabla_i \; E_S(X) = \frac{4N}{N-1}\; \mathrm{Re}\;\mathrm{Tr}\bigg( \frac{M_i X^\dagger \big(\xi \mathbb{I}-XX^\dagger\big)}{\mathrm{Tr}^2\big(XX^\dagger \big)}\bigg), \,\,\, \mathrm{with }\,\,\, i = 1,...,2N^4.
\end{equation}

Extension of this formula to reshuffling and partial transposition is straightforward, yielding the ultimate formula for the partial derivatives of the average singular entropy
\begin{equation}
    \nabla_i\; s_e(X) = \frac{4N}{N-1}\; \mathrm{Re}\;\mathrm{Tr}\bigg( \frac{M_i X^\dagger \big(\xi \mathbb{I}-XX^\dagger\big)}{\mathrm{Tr}^2\big(XX^\dagger \big)} + \frac{M_i^R X^{R\dagger} \big(\xi_R \mathbb{I}-X^RX^{R\dagger}\big)}{\mathrm{Tr}^2\big(X^RX^{R\dagger} \big)} + \frac{M^\Gamma_i X^{\Gamma\dagger} \big(\xi_\Gamma \mathbb{I}-X^\Gamma X^{\Gamma\dagger}\big)}{\mathrm{Tr}^2\big(X^\Gamma X^{\Gamma\dagger} \big)}\bigg),
\end{equation}
where $\xi_R = \frac{\text{Tr}\big(X^R X^{R\dagger} X^R X^{R\dagger}\big)}{\text{Tr}\big(X^R X^{R\dagger}\big)} $ and $\xi_\Gamma = \frac{\text{Tr}\big(X^\Gamma X^{\Gamma\dagger} X^\Gamma X^{\Gamma\dagger}\big)}{\text{Tr}\big(X^\Gamma X^{\Gamma\dagger}\big)} $, which finalizes our search.
Now, we can move on to computing derivatives for the matrices of our interest.

In order to find partial derivatives, we must choose a particular basis $\{M_i\}_{i=1}^{2N^4}$.
Contrary to the case of unitary matrices, it is not possible to use the exponentiation of Hermitian matrices since this would lead to incomplete characterization of all directions.

The only assumption we made about the matrix $X$ of size $N^2$ was that it consists of complex numbers.
We should choose a basis that given real parameters $\varepsilon_i$ shall enable us to move in every direction in the set of complex matrices.
Therefore, the simplest possible choice involves $2N^4$ matrices of the form $M_j=\{\ket{k}\bra{l},\; i\ket{k}\bra{l}\}_{k,l=1}^{N^2}$, where $j\in \{1,...,N^4\}$ indexes the whole basis.
Nonetheless, the particular choice of the basis is irrelevant for analytical calculations while being necessary from the perspective of numerical computations.
The results of these computations, given in Table~\ref{table:4_categories}, show that this numerical technique allowed us to improve the quality of the approximation of the multiunitary matrix, starting with an exemplary matrix $G$.

\chapter{Block-like structure of the multiunitary matrix}\label{app:block_like}
Section~\ref{sec:Suhail_alg} introduced an iterative algorithm used to obtain a numerical multiunitary matrix, corresponding to an AME(4,6) state.
Then, in Section~\ref{sec:block-like_AME} it was argued that the particular form of this matrix allows reducing the number of parameters in the search for an analytic multiunitary matrix by a factor of 9.
This feat was accomplished using the remark that the non-zero vectors contained in one of the blocks defined in Table~\ref{tab:define_blocks} do not mix with the vectors from the other blocks.
By this we understand that each block of matrix $U$, its reshuffling $U^R$, or its partial transpose $U^\Gamma$ admits vectors only from a single set: AB1, AB5, AB9, CD1, CD5, CD9, EF1, EF5, or EF9.
Finally, it was explained that we determine only three blocks AB1, AB5, and AB9, and choose the other six to satisfy Eq.~(\ref{eq:block_equality}).
This procedure allows us to obtain a multiunitary matrix $U$, provided the three blocks AB satisfy the conditions given by Lemma~\ref{lemma:block_AME}.

Therefore, by optimizing only 24 vectors, instead of the full set of $24\times 3 = 72$ vectors, we might be able to obtain a simpler form of a multiunitary matrix.
The present appendix aims to visualize the proof of the Lemma~\ref{lemma:block_AME} by presenting the matrices $U^R$ and $U^\Gamma$; thereby, showing that the contents of blocks do not mix.
Tables~\ref{tab:U_Gamma} and \ref{tab:U_R} show these arrangements that, provided satisfying Eq.~(\ref{eq:block_equality}), prove the lemma.

\begin{table}[H]
$U^R=\begin{pmatrix}
\begin{tabular}{p{2cm}p{2cm}p{2cm}p{2cm}p{2cm}p{2cm}}
\cline{1-2}
\multicolumn{1}{|l|}{$a_1$} & \multicolumn{1}{l|}{$a_2$} &                       &                       &                       &                       \\ \cline{1-2}
\multicolumn{1}{|l|}{$b_1$} & \multicolumn{1}{l|}{$b_2$} &                       &                       &                       &                       \\ \cline{1-4}
                       & \multicolumn{1}{l|}{} & \multicolumn{1}{l|}{$c_1$} & \multicolumn{1}{l|}{$c_2$} &                       &                       \\ \cline{3-4}
                       & \multicolumn{1}{l|}{} & \multicolumn{1}{l|}{$d_1$} & \multicolumn{1}{l|}{$d_2$} &                       &                       \\ \cline{3-6} 
                       &                       &                       & \multicolumn{1}{l|}{} & \multicolumn{1}{l|}{$e_1$} & \multicolumn{1}{l|}{$e_2$} \\ \cline{5-6} 
                       &                       &                       & \multicolumn{1}{l|}{} & \multicolumn{1}{l|}{$f_1$} & \multicolumn{1}{l|}{$f_2$} \\ \cline{1-2} \cline{5-6} 
\multicolumn{1}{|l|}{$a_3$} & \multicolumn{1}{l|}{$a_4$} &                       &                       &                       &                       \\ \cline{1-2}
\multicolumn{1}{|l|}{$b_3$} & \multicolumn{1}{l|}{$b_4$} &                       &                       &                       &                       \\ \cline{1-4}
                       & \multicolumn{1}{l|}{} & \multicolumn{1}{l|}{$c_3$} & \multicolumn{1}{l|}{$c_4$} &                       &                       \\ \cline{3-4}
                       & \multicolumn{1}{l|}{} & \multicolumn{1}{l|}{$d_3$} & \multicolumn{1}{l|}{$d_4$} &                       &                       \\ \cline{3-6} 
                       &                       &                       & \multicolumn{1}{l|}{} & \multicolumn{1}{l|}{$e_3$} & \multicolumn{1}{l|}{$e_4$} \\ \cline{5-6} 
                       &                       &                       & \multicolumn{1}{l|}{} & \multicolumn{1}{l|}{$f_3$} & \multicolumn{1}{l|}{$f_4$} \\ \cline{3-6} 
                       & \multicolumn{1}{l|}{} & \multicolumn{1}{l|}{$a_5$} & \multicolumn{1}{l|}{$a_6$} &                       &                       \\ \cline{3-4}
                       & \multicolumn{1}{l|}{} & \multicolumn{1}{l|}{$b_5$} & \multicolumn{1}{l|}{$b_6$} &                       &                       \\ \cline{3-6} 
                       &                       &                       & \multicolumn{1}{l|}{} & \multicolumn{1}{l|}{$c_5$} & \multicolumn{1}{l|}{$c_6$} \\ \cline{5-6} 
                       &                       &                       & \multicolumn{1}{l|}{} & \multicolumn{1}{l|}{$d_5$} & \multicolumn{1}{l|}{$d_6$} \\ \cline{1-2} \cline{5-6} 
\multicolumn{1}{|l|}{$e_5$} & \multicolumn{1}{l|}{$e_6$} &                       &                       &                       &                       \\ \cline{1-2}
\multicolumn{1}{|l|}{$f_5$} & \multicolumn{1}{l|}{$f_6$} &                       &                       &                       &                       \\ \cline{1-4}
                       & \multicolumn{1}{l|}{} & \multicolumn{1}{l|}{$a_7$} & \multicolumn{1}{l|}{$a_8$} &                       &                       \\ \cline{3-4}
                       & \multicolumn{1}{l|}{} & \multicolumn{1}{l|}{$b_7$} & \multicolumn{1}{l|}{$b_8$} &                       &                       \\ \cline{3-6} 
                       &                       &                       & \multicolumn{1}{l|}{} & \multicolumn{1}{l|}{$c_7$} & \multicolumn{1}{l|}{$c_8$} \\ \cline{5-6} 
                       &                       &                       & \multicolumn{1}{l|}{} & \multicolumn{1}{l|}{$d_7$} & \multicolumn{1}{l|}{$d_8$} \\ \cline{1-2} \cline{5-6} 
\multicolumn{1}{|l|}{$e_7$} & \multicolumn{1}{l|}{$e_8$} &                       &                       &                       &                       \\ \cline{1-2}
\multicolumn{1}{|l|}{$f_7$} & \multicolumn{1}{l|}{$f_8$} &                       &                       &                       &                       \\ \cline{1-2} \cline{5-6} 
                       &                       &                       & \multicolumn{1}{l|}{} & \multicolumn{1}{l|}{$a_9$} & \multicolumn{1}{l|}{$a_{10}$} \\ \cline{5-6} 
                       &                       &                       & \multicolumn{1}{l|}{} & \multicolumn{1}{l|}{$b_9$} & \multicolumn{1}{l|}{$b_{10}$} \\ \cline{1-2} \cline{5-6} 
\multicolumn{1}{|l|}{$c_9$} & \multicolumn{1}{l|}{$c_{10}$} &                       &                       &                       &                       \\ \cline{1-2}
\multicolumn{1}{|l|}{$d_9$} & \multicolumn{1}{l|}{$d_{10}$} &                       &                       &                       &                       \\ \cline{1-4}
                       & \multicolumn{1}{l|}{} & \multicolumn{1}{l|}{$e_9$} & \multicolumn{1}{l|}{$e_{10}$} &                       &                       \\ \cline{3-4}
                       & \multicolumn{1}{l|}{} & \multicolumn{1}{l|}{$f_{9}$} & \multicolumn{1}{l|}{$f_{10}$} &                       &                       \\ \cline{3-6} 
                       &                       &                       & \multicolumn{1}{l|}{} & \multicolumn{1}{l|}{$a_{11}$} & \multicolumn{1}{l|}{$a_{12}$} \\ \cline{5-6} 
                       &                       &                       & \multicolumn{1}{l|}{} & \multicolumn{1}{l|}{$b_{11}$} & \multicolumn{1}{l|}{$b_{12}$} \\ \cline{1-2} \cline{5-6} 
\multicolumn{1}{|l|}{$c_{11}$} & \multicolumn{1}{l|}{$c_{12}$} &                       &                       &                       &                       \\ \cline{1-2}
\multicolumn{1}{|l|}{$d_{11}$} & \multicolumn{1}{l|}{$d_{12}$} &                       &                       &                       &                       \\ \cline{1-4}
                       & \multicolumn{1}{l|}{} & \multicolumn{1}{l|}{$e_{11}$} & \multicolumn{1}{l|}{$e_{12}$} &                       &                       \\ \cline{3-4}
                       & \multicolumn{1}{l|}{} & \multicolumn{1}{l|}{$f_{11}$} & \multicolumn{1}{l|}{$f_{12}$} &                       &                       \\ \cline{3-4}
\end{tabular}
\end{pmatrix}$
\caption{
A general form of $U^R$, which is the reshuffling of the numerical multiunitary matrix $U$ of size 36, see Table~\ref{tab:U_block_vectors}.
Vectors $a_1,...,f_{12}$ of length 6 are non-zero, every other blank vector consists of entries equal to 0.
}\label{tab:U_R}
\end{table}

\begin{table}[H]
$U^\Gamma=\begin{pmatrix}
\begin{tabular}{p{2cm}p{2cm}p{2cm}p{2cm}p{2cm}p{2cm}}
\cline{1-2}
\multicolumn{1}{|l|}{$a_1$} & \multicolumn{1}{l|}{$a_3$} &                       &                       &                       &                       \\ \cline{1-2}
\multicolumn{1}{|l|}{$a_2$} & \multicolumn{1}{l|}{$a_4$} &                       &                       &                       &                       \\ \cline{1-4}
                       & \multicolumn{1}{l|}{} & \multicolumn{1}{l|}{$a_5$} & \multicolumn{1}{l|}{$a_7$} &                       &                       \\ \cline{3-4}
                       & \multicolumn{1}{l|}{} & \multicolumn{1}{l|}{$a_6$} & \multicolumn{1}{l|}{$a_8$} &                       &                       \\ \cline{3-6} 
                       &                       &                       & \multicolumn{1}{l|}{} & \multicolumn{1}{l|}{$a_9$} & \multicolumn{1}{l|}{$a_{11}$} \\ \cline{5-6} 
                       &                       &                       & \multicolumn{1}{l|}{} & \multicolumn{1}{l|}{$a_{10}$} & \multicolumn{1}{l|}{$a_{12}$} \\ \cline{1-2} \cline{5-6} 
\multicolumn{1}{|l|}{$b_1$} & \multicolumn{1}{l|}{$b_3$} &                       &                       &                       &                       \\ \cline{1-2}
\multicolumn{1}{|l|}{$b_2$} & \multicolumn{1}{l|}{$b_4$} &                       &                       &                       &                       \\ \cline{1-4}
                       & \multicolumn{1}{l|}{} & \multicolumn{1}{l|}{$b_5$} & \multicolumn{1}{l|}{$b_7$} &                       &                       \\ \cline{3-4}
                       & \multicolumn{1}{l|}{} & \multicolumn{1}{l|}{$b_6$} & \multicolumn{1}{l|}{$b_8$} &                       &                       \\ \cline{3-6} 
                       &                       &                       & \multicolumn{1}{l|}{} & \multicolumn{1}{l|}{$b_9$} & \multicolumn{1}{l|}{$b_{11}$} \\ \cline{5-6} 
                       &                       &                       & \multicolumn{1}{l|}{} & \multicolumn{1}{l|}{$b_{10}$} & \multicolumn{1}{l|}{$b_{12}$} \\ \cline{5-6} 
                       &                       &                       & \multicolumn{1}{l|}{} & \multicolumn{1}{l|}{$c_9$} & \multicolumn{1}{l|}{$c_{11}$} \\ \cline{5-6} 
                       &                       &                       & \multicolumn{1}{l|}{} & \multicolumn{1}{l|}{$c_{10}$} & \multicolumn{1}{l|}{$c_{12}$} \\ \cline{1-2} \cline{5-6} 
\multicolumn{1}{|l|}{$c_1$} & \multicolumn{1}{l|}{$c_3$} &                       &                       &                       &                       \\ \cline{1-2}
\multicolumn{1}{|l|}{$c_2$} & \multicolumn{1}{l|}{$c_4$} &                       &                       &                       &                       \\ \cline{1-4}
                       & \multicolumn{1}{l|}{} & \multicolumn{1}{l|}{$c_5$} & \multicolumn{1}{l|}{$c_7$} &                       &                       \\ \cline{3-4}
                       & \multicolumn{1}{l|}{} & \multicolumn{1}{l|}{$c_6$} & \multicolumn{1}{l|}{$c_8$} &                       &                       \\ \cline{3-6} 
                       &                       &                       & \multicolumn{1}{l|}{} & \multicolumn{1}{l|}{$d_9$} & \multicolumn{1}{l|}{$d_{11}$} \\ \cline{5-6} 
                       &                       &                       & \multicolumn{1}{l|}{} & \multicolumn{1}{l|}{$d_{10}$} & \multicolumn{1}{l|}{$d_{12}$} \\ \cline{1-2} \cline{5-6} 
\multicolumn{1}{|l|}{$d_1$} & \multicolumn{1}{l|}{$d_3$} &                       &                       &                       &                       \\ \cline{1-2}
\multicolumn{1}{|l|}{$d_2$} & \multicolumn{1}{l|}{$d_4$} &                       &                       &                       &                       \\ \cline{1-4}
                       & \multicolumn{1}{l|}{} & \multicolumn{1}{l|}{$d_5$} & \multicolumn{1}{l|}{$d_7$} &                       &                       \\ \cline{3-4}
                       & \multicolumn{1}{l|}{} & \multicolumn{1}{l|}{$d_6$} & \multicolumn{1}{l|}{$d_8$} &                       &                       \\ \cline{3-4}
                       & \multicolumn{1}{l|}{} & \multicolumn{1}{l|}{$e_5$} & \multicolumn{1}{l|}{$e_7$} &                       &                       \\ \cline{3-4}
                       & \multicolumn{1}{l|}{} & \multicolumn{1}{l|}{$e_{6}$} & \multicolumn{1}{l|}{$e_{8}$} &                       &                       \\ \cline{3-6} 
                       &                       &                       & \multicolumn{1}{l|}{} & \multicolumn{1}{l|}{$e_{9}$} & \multicolumn{1}{l|}{$e_{11}$} \\ \cline{5-6} 
                       &                       &                       & \multicolumn{1}{l|}{} & \multicolumn{1}{l|}{$e_{10}$} & \multicolumn{1}{l|}{$e_{12}$} \\ \cline{1-2} \cline{5-6} 
\multicolumn{1}{|l|}{$e_1$} & \multicolumn{1}{l|}{$e_{3}$} &                       &                       &                       &                       \\ \cline{1-2}
\multicolumn{1}{|l|}{$e_{2}$} & \multicolumn{1}{l|}{$e_{4}$} &                       &                       &                       &                       \\ \cline{1-4}
                       & \multicolumn{1}{l|}{} & \multicolumn{1}{l|}{$f_{5}$} & \multicolumn{1}{l|}{$f_{7}$} &                       &                       \\ \cline{3-4}
                       & \multicolumn{1}{l|}{} & \multicolumn{1}{l|}{$f_{6}$} & \multicolumn{1}{l|}{$f_{8}$} &                       &                       \\ \cline{3-6} 
                       &                       &                       & \multicolumn{1}{l|}{} & \multicolumn{1}{l|}{$f_{9}$} & \multicolumn{1}{l|}{$f_{11}$} \\ \cline{5-6} 
                       &                       &                       & \multicolumn{1}{l|}{} & \multicolumn{1}{l|}{$f_{10}$} & \multicolumn{1}{l|}{$f_{12}$} \\ \cline{1-2} \cline{5-6} 
\multicolumn{1}{|l|}{$f_{1}$} & \multicolumn{1}{l|}{$f_{3}$} &                       &                       &                       &                       \\ \cline{1-2}
\multicolumn{1}{|l|}{$f_{2}$} & \multicolumn{1}{l|}{$f_{4}$} &                       &                       &                       &                       \\ \cline{1-2}
\end{tabular}
\end{pmatrix}$
\caption{
A general form of $U^\Gamma$, which is the partial transpose of the numerical multiunitary matrix $U$ of size 36, see Table~\ref{tab:U_block_vectors}.
Vectors $a_1,...,f_{12}$ of length 6 are non-zero, every other blank vector consists of entries equal to 0.
Here, we applied an additional transposition to keep the vectors in the form of rows.
}\label{tab:U_Gamma}
\end{table}

\end{appendices}




\end{document}